\documentclass[a4paper,11pt]{article}

\usepackage{authblk}

\usepackage[
    left=25mm,
    right=25mm,
    top=25mm,
    bottom=25mm
]{geometry}
\usepackage[x11names]{xcolor}
\definecolor{DarkRed}{rgb}{0.6471, 0.1098, 0.1882}

\usepackage{hyperref}
\hypersetup{
    colorlinks, citecolor=DarkRed, linkcolor=DarkRed
}

\usepackage{graphicx}
\usepackage{tikz}
\usepackage{tikz-cd}
\usepackage{quantikz}
\usepackage{yquant}
\usetikzlibrary{arrows.meta}
\usetikzlibrary{decorations.pathmorphing}
\usetikzlibrary{shapes.geometric}
\usetikzlibrary{shapes.symbols}
\usepackage{lscape}
\usepackage{float}
\usepackage{subcaption}
\usepackage{wrapfig}
\usepackage{amssymb}
\usepackage{amsfonts}
\usepackage{bm}
\usepackage{youngtab}
\usepackage{mathdots}
\usepackage{mathtools}
\usepackage{relsize}

\newcommand{\GL}{\mathrm{GL}}
\newcommand{\U}{\mathrm{U}}

\newcommand{\tr}{\mathrm{tr}}
\newcommand{\Span}{\mathrm{span}}

\newcommand{\Res}{\mathrm{Res}}
\newcommand{\Ind}{\mathrm{Ind}}
\newcommand{\Enc}{\mathrm{Enc}}

\newcommand{\Fou}{\mathfrak{F}}
\newcommand{\F}{\mathbb{F}}
\newcommand{\C}{\mathbb{C}}

\newcommand{\Hom}{\mathrm{Hom}}
\newcommand{\T}{\mathcal{T}}
\newcommand{\Rep}{\mathrm{R}}
\newcommand{\Path}{\mathcal{P}}

\newcommand{\pt}{\mathbin{\vdash}}

\newcommand{\N}{\mathcal{N}}
\newcommand{\act}[1]{{^{(#1)}}}
\newcommand{\inv}{^{-1}}
\newcommand{\idN}{1_N}

\newcommand{\idSn}{1_{S_n}}
\newcommand{\vla}
{{\boldsymbol{\lambda}}}

\newcommand{\vnu}{{\boldsymbol{\nu}}}
\newcommand{\vsig}{{\boldsymbol{\sigma}}}
\newcommand{\la}{\lambda}
\providecommand{\id}{\mathrm{id}}
\DeclareMathOperator{\tw}{\operatorname{twist}}

\newcommand{\hatN}{\widehat{N}}
\newcommand{\hatG}{\widehat{G}}

\newcommand{\orbreps}{{\mathcal{O}(\hatN)}}
\newcommand{\repO}{{r}_{\orbreps}}

\newcommand{\Inertia}[1]{I_{#1}}
\newcommand{\Little}[1]{H_{#1}}
\newcommand{\Transversal}[1]{\mathcal T_{#1}}
\newcommand{\TransversalElt}[1]{g_{#1}}

\newcommand{\ExtRep}{\overline{\Rep}}
\newcommand{\InfRep}{\widetilde{\Rep}}

\newcommand{\IsoComp}[1]{V(#1)}

\newcommand{\Z}{\mathbb{Z}}
\newcommand{\I}{\mathbb{I}}
\renewcommand{\P}{\mathcal{P}}

\newcommand{\tp}{^{\mathsf{T}}}
\newcommand{\x}{\otimes}

 \let\U\relax \DeclareMathOperator{\U}{U}  \DeclareMathOperator{\No}{N} \DeclareMathOperator{\St}{St}

\newcommand{\psm}[1]{\begin{psmallmatrix} #1 \end{psmallmatrix}}
\newcommand{\w}{\psm{0&1\\1&0}}

\DeclareMathOperator{\Stab}{Stab}
\DeclareMathOperator{\orb}{orb}
\NewDocumentCommand{\IndMap}{o}{\mathrm{U}_{\mathrm{Ind}\IfValueT{#1}{,#1}}}

\usepackage{amsthm}
\usepackage{zref-clever}
\zcsetup{nameinlink, cap, noabbrev}

\newcommand{\cref}[1]{\zcref{#1}}
\newcommand{\Cref}[1]{\zcref[S]{#1}}

\newtheoremstyle{plainstyle}{3pt}{3pt}{\normalfont}{}{\bfseries}{.}{ }{}

\theoremstyle{plain}
\newtheorem{definition}{Definition}[section]
\newtheorem{theorem}[definition]{Theorem}
\newtheorem{lemma}[definition]{Lemma}

\newtheorem{remark}[definition]{Remark}
\theoremstyle{plainstyle}
\newtheorem{observation}[definition]{Observation}

\AddToHook{env/definition/begin} {\zcsetup{countertype={definition=definition}}}
\AddToHook{env/observation/begin}{\zcsetup{countertype={definition=observation}}}
\AddToHook{env/theorem/begin}    {\zcsetup{countertype={definition=theorem}}}
\AddToHook{env/lemma/begin}      {\zcsetup{countertype={definition=lemma}}}
\AddToHook{env/corollary/begin}  {\zcsetup{countertype={definition=corollary}}}
\AddToHook{env/proposition/begin}{\zcsetup{countertype={definition=proposition}}}
\AddToHook{env/remark/begin}     {\zcsetup{countertype={definition=remark}}}

\zcRefTypeSetup{observation}{
  Name-sg = Observation , name-sg = observation ,
  Name-pl = Observations , name-pl = observations ,
}

\usepackage{array}
\usepackage{booktabs}

\usepackage{algorithm}
\usepackage{algpseudocode}

\usepackage[backend=biber,style=alphabetic]{biblatex}
\tikzset{snake it/.style={decorate, decoration=snake}}

\newcommand{\defeq}{\vcentcolon=}

\newcolumntype{M}[1]{>{\centering\arraybackslash}m{#1}}

\makeatletter
\pgfdeclareshape{reptensor}{
\savedmacro{\reptensor@width}{\def\reptensor@width{1.1cm}}
    \savedmacro{\reptensor@rectheight}{\def\reptensor@rectheight{0.6cm}}
    \savedmacro{\reptensor@arcHeight}{\def\reptensor@arcHeight{0.5cm}}

\anchor{center}{
        \pgfpoint{0}{-0.2*\reptensor@rectheight}
    }
    \anchor{west}{
        \pgfpoint{-0.5*\reptensor@width}{-0.2*\reptensor@rectheight}
    }
    \anchor{east}{
        \pgfpoint{0.5*\reptensor@width}{-0.2*\reptensor@rectheight}
    }
    \anchor{south}{
        \pgfpoint{0}{-0.8*\reptensor@rectheight}
    }
    \anchor{text}{
        \pgfmathsetlength{\pgf@x}{-0.5*\wd\pgfnodeparttextbox}
        \pgfmathsetlength{\pgf@y}{-0.5*\ht\pgfnodeparttextbox}
    }

\backgroundpath{
        \pgfpathmoveto{\pgfpoint{-0.5*\reptensor@width}{0.2*\reptensor@rectheight}}
        \pgfpathlineto{\pgfpoint{-0.5*\reptensor@width}{-0.8*\reptensor@rectheight}}
        \pgfpathlineto{\pgfpoint{0.5*\reptensor@width}{-0.8*\reptensor@rectheight}}
        \pgfpathlineto{\pgfpoint{0.5*\reptensor@width}{0.2*\reptensor@rectheight}}
        \pgfpatharc{0}{180}{0.5*\reptensor@width and 0.5*\reptensor@arcHeight + 0.2*\reptensor@rectheight}
        \pgfpathclose
    }
}
\makeatother

\usepackage{xstring} \usepackage{ifthen}

\newcommand{\drawsquare}[3][]{\StrBefore{#2}{,}[\x]
  \StrBehind{#2}{,}[\y]
  \def\option{#1}
  \def\labeltext{#3}
  \pgfmathsetmacro{\xl}{\x - \w/2}
  \pgfmathsetmacro{\xr}{\x + \w/2}
  \pgfmathsetmacro{\yt}{\y + \w/2}
  \pgfmathsetmacro{\yb}{\y - \w/2}
  \pgfmathsetmacro{\xlp}{\x - \w/2 + \p}
  \pgfmathsetmacro{\xrp}{\x + \w/2 - \p}
  \pgfmathsetmacro{\ytp}{\y + \w/2 - \p}
  \pgfmathsetmacro{\ybp}{\y - \w/2 + \p}

  \ifthenelse{\equal{\option}{dashed}}{\draw[dashed] (\xl,\yt) rectangle (\xr,\yb);
  }{\draw (\xl,\yt) rectangle (\xr,\yb);
  }

  \ifthenelse{\equal{\option}{cross}}{\draw (\xl,\yt) -- (\xr,\yb);
    \draw (\xr,\yt) -- (\xl,\yb);
  }{}

  \ifthenelse{\not\equal{\labeltext}{}}{\ifthenelse{\equal{\option}{cross}}{\draw[fill=white,draw=white] (\xlp,\ytp) rectangle (\xrp,\ybp);
    }{}
    \node at (\x,\y) {\(\labeltext\)};
  }{}
}

\newcommand\blfootnote[1]{\begingroup
  \renewcommand\thefootnote{}\footnote{#1}\addtocounter{footnote}{-1}\endgroup
}

\title{Quantum Fourier transform toolbox}

\author[1,2,*]{Carli Bruinsma}
\author[3,*]{Pietro M. Posta}
\author[1,4,*]{Joppe Stokvis}
\author[1,2,5]{\\Dmitry Grinko}
\author[1,2,5]{Maris Ozols}

\affil[1]{QuSoft, Amsterdam, The Netherlands}
\affil[2]{Institute for Logic, Language and Computation, University of Amsterdam, The Netherlands}
\affil[3]{Department of Mathematical Sciences, University of Copenhagen, Denmark}
\affil[4]{Mathematical Institute, Leiden University, The Netherlands}
\affil[5]{Korteweg-de Vries Institute for Mathematics, University of Amsterdam, The Netherlands}

\date{}

\begin{document}
\maketitle

\blfootnote{$^*$ These authors contributed equally}
\blfootnote{\raggedright Email addresses: \href{mailto:carlibruinsma@gmail.com}{carlibruinsma@gmail.com},
\href{mailto:pmp@math.ku.dk}{pmp@math.ku.dk},
\href{mailto:j.a.stokvis@math.leidenuniv.nl}{j.a.stokvis@math.leidenuniv.nl},
\href{mailto:d.grinko@uva.nl}{d.grinko@uva.nl}, and \href{mailto:marozols@gmail.com}{marozols@gmail.com}.}

\begin{abstract}
    Quantum Fourier transforms (QFTs) are essential primitives in quantum algorithms.
    While abelian groups admit efficient QFT circuits, with circuit size polynomial in the logarithm of the group order, efficient constructions are known for relatively few non-abelian families.
    We develop two new approaches to QFT circuit construction, based on Mackey theory and Clifford theory, respectively, and use them to show exponential improvement in circuit cost for specific group families.
    Using the Mackey-theoretic approach, we obtain explicit quantum circuits for the QFT over $\GL_2(\F_q)$ that scale polynomially in $\log q$, rather than polynomially in $q$.
    Using the Clifford-theoretic approach, we obtain QFT circuits for wreath products $F\wr S_n$, whose cost depends on the cost of a QFT over $F$ and the size of its representation registers.
    This removes the restriction $|F|=\operatorname{poly}(n)$ required by previous generic constructions and can yield exponential improvements when $F$ itself has an efficient QFT.
    Together, these methods provide new systematic tools to construct QFTs for broad classes of finite groups.
\end{abstract}

\tableofcontents

\section{Introduction}\label{chap: introduction}

Fourier analysis is one of the most fundamental tools in mathematics and its applications \cite{Terras1999}.
In particular, many quantum algorithms rely on it \cite{RevModPhys.82.1}.
A Quantum Fourier Transform (QFT) changes a quantum state so that hidden symmetries can be exposed through interference.
The abelian QFT is a central ingredient in Shor's factoring and discrete-logarithm algorithms \cite{Shor1994}.
Many problems, however, carry non-abelian symmetries, for which the corresponding transforms are more complicated.
Efficient circuits have been constructed for particular families, including certain metacyclic groups \cite{Hoyer97}, non-abelian $2$-groups with a cyclic normal subgroup of index two \cite{PRB99}, and the symmetric groups $S_n$ \cite{Beals,SnQFT_Speedup}.
Understanding how to implement these transforms is thus a basic question in quantum algorithm design: it often determines which symmetry-based computational ideas can be turned into efficient quantum circuits.
Efficient QFTs have previously been exploited in several different contexts.

\textit{Hidden subgroups and Fourier sampling.}
The QFT became central to quantum algorithms through the abelian hidden subgroup problem, which includes the periodicity problems behind factoring and discrete logarithms \cite{Shor1994,nonAbelianHSP,EttingerHoyerKnill04,RevModPhys.82.1}.
Historically, a major motivation for studying non-abelian QFTs was the hope that the same strategy would extend through weak and strong Fourier sampling \cite{Beals,HRT03,MRRS07}.
Weak sampling records only the irreducible-representation label and recovers hidden normal subgroups, while strong sampling also measures coordinates within a representation block and succeeds for some affine groups and hidden-shift problems \cite{HRT03,MRRS07}.
This approach did not produce a general algorithm for the non-abelian HSP.
In particular, even strong sampling of individual coset states fails for the symmetric-group instances associated with graph isomorphism \cite{FourierNotForHSPSn2005}.
Successful algorithms for other non-abelian groups use more elaborate measurements or exploit additional group structure, so an efficient QFT is only one part of an HSP algorithm \cite{Kuperberg2005,BaconChildsvanDam05,IvanyosSanselmeSantha,KroviRotteler08}.
Through its non-abelian variants, the HSP is also related to graph and code isomorphism, hidden shifts, and certain lattice problems \cite{EttingerHoyer1999Graph,DinhMooreRussell2011CodeEquivalence,Regev2004QuantumLattices,vDHIhiddenshift}.
The same group-action viewpoint also has implications in cryptographic constructions of quantum money \cite{BostanciNehoranZhandry2025}.

\textit{Representation-theoretic and algebraic quantities.}
Generalized phase estimation \cite{Harrow2005} uses QFTs to resolve representations into isotypic components and supports quantum algorithms for computing representation-theoretic multiplicities and characters \cite{PRXQuantum.5.010329,paperMultiplicities,BravyiCharacters}.
Closely related Schur transform methods apply this strategy to general plethysm and branching multiplicities \cite{christandl2026plethysmbqp}.
Moreover, QFTs are used in quantum algorithms for character sums and other arithmetic quantities \cite{vDS02,Bruin,Dam2004quantum}.

\textit{Quantum simulation and approximate QFTs.}
QFTs arise in digital simulations of gauge theories, where they connect useful descriptions of the system and motivate explicit circuits for the finite groups used in such models \cite{LammLawrenceYamauchi19,MurairiQFT,SigmaSU3_72}.
Previous work on the abelian QFT has produced low-cost, hardware-aware circuits and low-rank tensor-network representations of the transform \cite{Coppersmith1994,NamSuMaslov2020,BaumerSutterWoerner2025AQFT,DolgovKhoromskijSavostyanov2012,ChenStoudenmireWhite2023QFTEntanglement,ChenLindsey2026DFTMPO}.

\textit{Quantum machine learning.}
Recent work explores QFTs as inductive biases in quantum machine learning \cite{belis2026spectralmethods}.
Examples include spectral and IQP generative models, Fourier neural operators, and equivariant classifiers \cite{huang2026spectralborn,lerch2026iqpinitialization,jain2023quantumfouriernetworks,MarcandelliEtAl2025PartitionedQFNO,west2024rotational,ChirkovLobanov2026PixelTranslation,tuysuz2026quantumfouriergenerative,BanksEtAl2026QuditIQP}.
Other proposals use finite-group QFTs for group convolution or probabilistic models over permutations, although these works do not yet establish end-to-end quantum advantages \cite{CastelazoEtAl2022GroupConvolution,belis2026permutations}.

\subsection{Previous work}\label{subsec: intro known}

Classical fast Fourier transforms on finite groups were studied extensively in the past.
Early algorithms were developed based on the ideas of separation of variables and subgroup adapted bases \cite{Beth87,Clausen89,classicalFFT,MaslenRockmore97,MaslenRockmoreAdapted}.
An adapted basis records how an irreducible representation branches along a subgroup chain, often by a path in the corresponding Bratteli diagram.
In this basis the representation matrices needed by the recursion can become sparse and block structured.
The same architecture is natural for a quantum circuit because lower-level QFTs can be reused coherently.

The abelian case is by now standard: efficient approximate circuits are known for all finite abelian groups, while exact constructions for arbitrary orders rely on a gate model with efficiently computable parameterized rotations \cite{Coppersmith1994,CleveWatrous2000,Kitaev1995,MoscaZalka2004,NielsenChuang}.
The first non-abelian constructions treated certain metacyclic groups and the four families of non-abelian $2$-groups with a cyclic normal subgroup of index two \cite{Hoyer97,PRB99}.
Explicit circuits of size $O(\log^3 p)$ are also known for the finite Heisenberg groups of order $p^3$ \cite{RadhakrishnanRottelerSen2005}.
For $S_n$, Beals gave a subgroup-recursive circuit of size $\operatorname{poly}(n)$, which Kawano and Sekigawa later refined \cite{Beals,KS2013,SnQFT_Speedup}.
Concurrent work by a subset of the present authors gives an elementary-gate construction of the $S_n$ QFT and a detailed resource analysis \cite{BruinsmaGrinkoOzols2026}.
Moore, Rockmore, and Russell placed these constructions in a general subgroup-chain framework that also covers metabelian groups, Clifford-algebra groups, and wreath products $H\wr S_n$ when $|H|=\operatorname{poly}(n)$ \cite{GenQFT}.
For finite groups of Lie type, however, that framework gives circuits of size $q^{O(k)}$ at rank $k$, which is not polynomial in $\log |G|$ when $q$ grows \cite{GenQFT}.
Explicit fault-tolerant circuits and resource estimates are also available for several fixed finite subgroups of $\mathrm{SU}(2)$ and $\mathrm{SU}(3)$ \cite{MurairiQFT,SigmaSU3_72}.
The Fourier-transform viewpoint has recently been extended beyond group algebras to several semisimple algebras \cite{FoxmanNehoranDing}.

Representative circuit-size bounds are summarized in \cref{tab:_qft_landscape}.
Depth and space are omitted because most of the cited sources do not state them in a common model.

\begin{table}[!t]
\centering
\footnotesize
\setlength{\tabcolsep}{3pt}
\renewcommand{\arraystretch}{1.16}
\begin{tabular}{@{}>{\raggedright\arraybackslash}p{45mm}
                  >{\raggedright\arraybackslash}p{41mm}
                  >{\raggedright\arraybackslash}p{75mm}@{}}
\toprule
\textbf{Group or family} & \textbf{Circuit complexity} & \textbf{Comment} \\
\midrule
\multicolumn{3}{@{}l}{\textbf{Previous results}} \\
\addlinespace[2pt]
$\mathbb Z_{2^r}$
& $O\!\left(r\log(r/\varepsilon)\right)$ gates
& Operator-norm error at most $\varepsilon$; dyadic phase rotations have unit cost \cite{Coppersmith1994,CleveWatrous2000}. \\
four cyclic-index-two non-abelian $2$-group families, $|G|=2^m$
& $O(m^2)$ gates
& Exact with arbitrary one-qubit gates and CNOTs \cite{PRB99}. \\
finite Heisenberg groups $H_p$, $|H_p|=p^3$
& $O(\log^3 p)$  gates
& Explicit qubit circuits; the cyclic QFT subroutines may be exact or approximate, depending on the gate model \cite{RadhakrishnanRottelerSen2005}. \\
$S_n$
& $\widetilde O(n^3)$ gates
& Optimized Beals construction, with polylogarithmic dependence on inverse diamond-norm error \cite{BruinsmaGrinkoOzols2026}; see also \cite{Beals,SnQFT_Speedup,PRXQuantum.5.010329,paperMultiplicities}. \\
split abelian-by-abelian extensions; Clifford-algebra groups $\mathrm{CL}_n$; $H\wr S_n$
& $\operatorname{poly}(\log |G|)$ operations
& Polynomial-uniformity assumptions and the source's elementary-operation model; $|H|=\operatorname{poly}(n)$ for the wreath products \cite{GenQFT}. \\
linear and Lie-type families of rank $k$
& $q^{O(k)}$ operations
& Families treated in \cite{GenQFT}; for fixed $q$, this is subexponential in $|G|$. \\
fixed gauge-theory groups
& $c_0+c_1\log_2(1/\varepsilon)$ $T$ gates
& Fast circuits for $\mathbb{BT}$, $\mathbb{BO}$, $\Delta(27)$, $\Delta(54)$, and $\Sigma(36\times3)$; the constants are group dependent and $\varepsilon$ is the precision \cite{MurairiQFT}. \\
\addlinespace[3pt]
\multicolumn{3}{@{}l}{\textbf{Our results}} \\
\addlinespace[2pt]
$\GL_2(\F_q)$
& $\operatorname{poly}(\log q,\log(1/\varepsilon))$ gates
& Uniform circuit with operator-norm error at most $\varepsilon$; \cref{thm: GL2 QFT}. \\
$F\wr S_n$
& $\widetilde O(nC_F + n^3+n^2L_F)$ gates
& Given a uniform QFT over $F$; operator-norm error at most $\varepsilon$; \cref{thm:wreath-main}. \\
\bottomrule
\end{tabular}
\caption{Representative QFT complexity bounds.
Note that the cited works use different gate sets, uniformity assumptions, and error criteria.
In the last row, $C_F$ is the gate count of a uniform QFT over $F$ and $L_F$ is the number of qubits required to encode irreducible-representation of $F$.
$\widetilde O$ suppresses polylogarithmic factors in $n$, $1/\varepsilon$.}
\label{tab:_qft_landscape}
\end{table}

Most known non-abelian QFT constructions recurse along a subgroup chain.
For each inclusion $H\subset G$, the QFT over $H$ is followed by an induced transform that converts the $H$-adapted decomposition into the Fourier basis for $G$.
We study three implementations of this induced transform.
The direct subgroup-adapted construction serves as our baseline \cite{Beals,GenQFT}.
It was recently revisited in the context of the symmetric group \cite{PRXQuantum.5.010329,paperMultiplicities}.
Interestingly, we identify two new additional methods.
Our Mackey construction reorganizes the transform over double cosets and intersection subgroups.
When the subgroup is normal, the little-group construction instead uses orbits of irreducible representations and their inertia groups.
The three methods exploit different structural properties, so none is uniformly preferable.

\subsection{Main results}\label{subsec: intro results}

Mackey theory and Clifford theory describe how group representations behave under induction and restriction.
To our knowledge, they have not previously been developed as general circuit frameworks for QFTs.
We give a self-contained circuit formulation of subgroup induction, recover the direct subgroup-recursive construction \cite{Beals}, and derive two new implementations of the induced transform.
We then apply these constructions to $\GL_2(\F_q)$ and $F\wr S_n$, where generic subgroup recursion leaves specific computational bottlenecks.

\begin{theorem}[Mackey algorithm, informal]
For every subgroup inclusion satisfying the standing assumptions of \cref{sec: first induction}, and with the lower induced maps, untwisting maps, and adapted branching data appearing in the Mackey decomposition supplied, \cref{alg: Mackey induced transform} implements the induced map for every irreducible representation of the subgroup.
The circuit decomposes the calculation over the double cosets of $H$ in $G$, invokes smaller induced maps associated with the intersection subgroups, and finishes with explicitly defined unitary blocks on multiplicity spaces.
\end{theorem}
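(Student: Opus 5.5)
The plan is to reduce correctness of \cref{alg: Mackey induced transform} to an isomorphism of $G$-representations, namely the Mackey decomposition of the restriction of an induced representation, and then to check that each stage of the circuit realises one factor of that isomorphism. Fix the subgroup $H\subset G$ and an irreducible $\rho$ of $H$. The induced map is the unitary
\[
\IndMap[\rho]\colon \C[G/H]\otimes V_\rho \to \bigoplus_{\lambda} V_\lambda\otimes \Hom_H(\rho,\Res^G_H\lambda).
\]
This map intertwines the $G$-action on $\Ind_H^G\rho$ with the block-diagonal action, and it sends the transversal basis to the $H$-adapted Fourier basis. I would first fix a set of double coset representatives $s$ for $K\backslash G/H$, where $K$ is the auxiliary subgroup in the standing assumptions of \cref{sec: first induction}. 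I would then write $\C[G/H]=\bigoplus_s \C[K sH/H]$ and use $KsH/H\cong K/(K\cap sHs\inv)$. This identifies the input register with $\bigoplus_s \Ind_{K\cap sHs\inv}^{K}({}^s\rho|)$ as $K$-representations.

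The circuit is verified stage by stage.
\begin{enumerate}
\item The first stage writes the double coset label $s$ into a control register. This stage is a basis relabelling of the coset register, and it is unitary provided the transversal is chosen compatibly with the double cosets.
\item The second stage applies the untwisting map conditioned on $s$. It replaces ${}^s\rho$ restricted to $H_s=K\cap sHs\inv$ by its decomposition into irreducibles $\mu$ of $H_s$, together with multiplicity registers. I would check that this equals conjugation by $s$ followed by the branching isomorphism for $H_s\subset sHs\inv$, which is exactly the supplied adapted branching data.
\item The third stage applies the lower induced maps $\IndMap[\mu]$ for $H_s\subset K$, controlled on $(s,\mu)$. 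By induction in stages and naturality, the output carries the $K$-Fourier basis of $\Res^G_K\Ind^G_H\rho$.
\end{enumerate}

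At this point the state is expressed in a basis adapted to $K$. The target basis, however, is adapted to $G\supset H$, so the remaining step is the final unitary on multiplicity spaces. By Schur's lemma, any two intertwiners between the same decompositions of $\Res_K\Ind\rho$ into $K$-isotypic pieces differ by an operator of the form $\sum_\kappa \id_{V_\kappa}\otimes U_\kappa$, with $U_\kappa$ acting on the multiplicity space
\[
\bigoplus_\lambda \Hom_K(\kappa,\lambda)\otimes\Hom_H(\rho,\lambda)\cong\bigoplus_{s,\mu}\Hom_K(\kappa,\Ind\mu)\otimes(\text{mult.\ of }\mu).
\]
The blocks $U_\kappa$ are therefore well defined. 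I would identify them explicitly as the matrices of the Mackey isomorphism in the two adapted bases, and show they are unitary using the inner products induced by the orthonormal transversal bases.

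The main obstacle is this last identification. One must show that the explicitly defined blocks agree with the change of basis, and not just that some such unitary exists. In particular, the phases and normalisations must match so that the composite equals $\IndMap[\rho]$ exactly, rather than $\IndMap[\rho]$ up to an intertwiner. I would handle this by evaluating both sides on the vectors $e_{sH}\otimes v$ with $v$ a highest adapted basis vector, and using $G$-equivariance to extend.

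Unitarity of each stage then gives unitarity of the composite. Since every step was argued for an arbitrary irreducible $\rho$ of $H$, the algorithm implements the induced map for all of them.
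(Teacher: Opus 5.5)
Your skeleton agrees with the paper's. The relabelling $|t_\omega\omega\rangle\mapsto|t_\omega\rangle|\omega\rangle$, the untwisting, and the lower induced maps compose to a unitary $H$-intertwiner $N_\lambda$ into an $H$-adapted basis. Schur's lemma then forces $\U_{\Ind,\lambda}N_\lambda^\dagger=\bigoplus_{\widetilde\lambda}A_{\lambda,\widetilde\lambda}\otimes I_{V_{\widetilde\lambda}}$. Here I use the paper's notation: $\lambda\in\widehat H$, $\mu\in\widehat G$.

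There is no auxiliary subgroup $K$ in the standing assumptions. The algorithm uses $H\backslash G/H$, $H_\omega=H\cap\omega H\omega^{-1}$, and lower maps for $H_\omega\le H$, i.e. $K=H$. This matters for your Schur step, which needs the intermediate basis and the target Gelfand--Tsetlin basis to be adapted to the same subgroup. The target basis is adapted to $H$. For a general $K$, the discrepancy would also contain a change of basis inside each $V_\mu$ and would not act only on multiplicity spaces. So your sentence ``the target basis is adapted to $G\supset H$, so the remaining step is the final unitary on multiplicity spaces'' holds only for $K=H$.

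The real gap is elsewhere. Existence of \emph{some} multiplicity-space unitary is the easy half. The statement asserts, and the paper proves, that the \emph{explicitly defined} blocks of \cref{def: Mackey operators}, given by \cref{eq: corrected general A coefficient}, equal $\U_{\Ind,\lambda}N_\lambda^\dagger$. You leave this as the ``main obstacle'', and the method you sketch fails as stated:
\begin{itemize}
\item Before the last stage the circuit is only $H$-equivariant. Extending an equality from a few vectors by $G$-equivariance presupposes that $A_\lambda N_\lambda$ is a $G$-map, which is the conclusion.
\item A ``highest adapted basis vector'' has no meaning for an arbitrary finite group. Using $H$-equivariance instead, you would need all $e_\omega\otimes v$ with $v$ ranging over a basis of $V_\lambda$, and you would still have to compute.
\end{itemize}

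The missing computation is short. From the block form,
\[
[A_{\lambda,\widetilde\lambda}]^{\mu}_{\kappa,\omega}=\frac{1}{d_{\widetilde\lambda}}\sum_{Q}\langle\mu,\widetilde\lambda,Q|\,\U_{\Ind,\lambda}N_\lambda^\dagger\,|\kappa,\omega,\widetilde\lambda,Q\rangle .
\]
Then:
\begin{itemize}
\item Insert the explicit entries of $\U_{\Ind,\lambda}$ and $N_\lambda$ on $|t_\omega\omega,P\rangle$.
\item Factor $R_\mu(t_\omega\omega)$ as $R_{\widetilde\lambda}(t_\omega)$ on the $\widetilde\lambda$-block times $R_\mu(\omega)$, using adaptedness.
\item Collapse the $Q$-sum by unitarity of $R_{\widetilde\lambda}(t_\omega)$.
\item Note that the remaining summand no longer depends on $t_\omega$, so the sum over $\mathcal T_\omega$ gives the factor $[H:H_\omega]$.
\end{itemize}
This yields exactly the normalization $\sqrt{d_\mu[H:H_\omega]/(d_\kappa d_\lambda d_{\widetilde\lambda}[G:H])}$ and the path sum of \cref{eq: corrected general A coefficient}.

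Unitarity of the blocks then needs no separate inner-product argument, because $A_\lambda=\U_{\Ind,\lambda}N_\lambda^\dagger$ is a product of unitaries. Your Schur argument is still useful: it is the justification for the block form that the paper's partial-trace step takes for granted.
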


The exact circuit identity and the multiplicity-space blocks are given in \cref{thm: Mackey induced circuit}.
As an application of this framework, we consider $G = \GL_2(\F_q)$.
The Bruhat decomposition has only two relevant double cosets, and the required intersection subgroup is already present in the chain from the diagonal torus through the Borel subgroup.
This structure yields the following result.

\begin{theorem}[Efficient QFT over $\GL_2(\F_q)$]
For every prime power $q$ and every $\varepsilon>0$, there is a quantum circuit of size $\operatorname{poly}(\log q,\log(1/\varepsilon))$ that implements the QFT over $\GL_2(\F_q)$ to operator-norm error at most $\varepsilon$.
\end{theorem}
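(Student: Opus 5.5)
We build the QFT over $\GL_2(\F_q)$ along the subgroup chain
\[
T \subset B \subset G=\GL_2(\F_q),
\]
where $T\cong \F_q^\times\times\F_q^\times$ is the diagonal torus and $B=T\ltimes U$ is the Borel subgroup of upper triangular matrices, with $U\cong(\F_q,+)$. Following the standard subgroup-recursive architecture, the QFT over $G$ factors as the QFT over $B$, tensored with a coset register $|G/B|=q+1$, followed by the induced map $\IndMap[B\subset G]$. The QFT over $B$ factors in the same way through $T$, with coset register $B/T\cong U$. The QFT over $T$ is a product of two cyclic QFTs over $\Z_{q-1}$. Uniform approximate circuits of size $\operatorname{poly}(\log q,\log(1/\varepsilon))$ exist for these cyclic QFTs by the abelian results cited in the introduction. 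It therefore suffices to implement the two induced maps within that cost, and to encode group elements, cosets and irrep labels in $O(\log q)$ qubits with classically efficient arithmetic.

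\textbf{Step 1: the inclusion $T\subset B$.} Here $U$ is normal, so one could use the little-group method. Instead we apply the Mackey algorithm (\cref{thm: Mackey induced circuit}) directly. The $B$-irreps are:
\begin{itemize}
\item the $(q-1)^2$ characters $\chi_1\otimes\chi_2$ inflated from $T$;
\item the $q-1$ irreps $\chi\circ\det\otimes\Ind_{ZU}^B\psi$ of dimension $q-1$, where $Z$ is the centre and $\psi$ is a fixed nontrivial additive character.
\end{itemize}
The double coset space $T\backslash B/T$ has size $2$, indexed by $u=0$ and $u\neq0$. The intersection subgroups are $T$ and $Z$. The multiplicity-space blocks are therefore small explicit matrices, determined by additive Fourier data over $\F_q$ (a QFT over $\F_q\cong\Z_p^r$) and multiplicative data (a QFT over $\Z_{q-1}$). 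We then show that every untwisting map is an arithmetic permutation of basis labels: multiplication by $t\in T$ on $u\in\F_q$ is the map $u\mapsto t_1t_2^{-1}u$. A reversible classical circuit computes this in $\operatorname{poly}(\log q)$ gates.

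\textbf{Step 2: the inclusion $B\subset G$.} We use the Bruhat decomposition
\[
G=B\sqcup BwB,\qquad w=\w .
\]
There are exactly two double cosets, with intersection subgroups $B$ and $B\cap wBw^{-1}=T$. The Mackey algorithm then expresses $\IndMap[B\subset G]$ through the identity block (coset $B$), the induced map $\IndMap[T\subset B]$ from Step 1, and unitary blocks on multiplicity spaces. Recycling $\IndMap[T\subset B]$ is where the chain structure pays off.

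Multiplicities are bounded by the known character table of $\GL_2(\F_q)$: principal series, Steinberg twists, one-dimensional representations and cuspidal representations. Every restriction $\Res^G_B$ has multiplicity at most $2$, because $\Res^G_B$ of a principal series is $\chi_1\otimes\chi_2\oplus\chi_2\otimes\chi_1$ plus a $(q-1)$-dimensional piece, and similarly for the other types. So each multiplicity-space block acts on a space of dimension $O(1)$. Its entries are explicit functions of the characters involved, for example Gauss- or Kloosterman-type normalisations for cuspidal representations. These entries can be computed to precision $\varepsilon/\operatorname{poly}$ by classical arithmetic and applied as controlled few-qubit rotations. The untwisting map for $w$ acts on the $U$-coordinate by $u\mapsto -u^{-1}$, and on the $T$-label by swapping $\chi_1,\chi_2$. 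Both are reversible arithmetic in $\F_q$.

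\textbf{Main obstacle and error analysis.} The main obstacle is Step 2. There we must write the adapted branching data and multiplicity blocks of the Mackey decomposition for all four irrep families, including the cuspidal ones. We must also check that these data are uniformly computable in $\operatorname{poly}(\log q)$ time from the labels, with label registers encoding $\F_{q^2}^\times$ characters for the cuspidal case. Our first attempt is to realise cuspidal representations inside $\Ind_{ZU}^G\psi$ (the Kirillov model). In that model the restriction to $B$ is the single irrep $\chi\otimes\Ind_{ZU}^B\psi$, and the only nontrivial block is a scalar phase given by an epsilon factor. That phase can be computed classically in $\operatorname{poly}(\log q)$ time.

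Finally, the total error is bounded by the sum of the errors of the $O(1)$ approximate subroutines: the cyclic QFTs, the $\F_q$ QFT, and the approximately synthesised rotations. Allocating $\varepsilon/O(1)$ to each gives operator-norm error at most $\varepsilon$ and overall size $\operatorname{poly}(\log q,\log(1/\varepsilon))$.
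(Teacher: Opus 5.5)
\textbf{There is a genuine gap in Step~2.} Your reduction is the paper's: the chain $T\subset B\subset G$, the Mackey circuit for $B\subset G$ with cells $e,w$ and intersection subgroups $B,T$, and reuse of the $T\subset B$ induction. But the claim that makes Step~2 cheap is false.

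The Mackey blocks $A_{\lambda,\widetilde\lambda}$ are not sized by multiplicities in $\Res^G_B V_\mu$; those are all one, since the restriction is multiplicity-free. Instead $A_{\lambda,\widetilde\lambda}$ acts on $\C[\mathcal M_{\lambda,\widetilde\lambda}]$, whose dimension is the multiplicity of $\widetilde\lambda$ in $\Res^G_B\Ind_B^G V_\lambda$. Equivalently, it is the number of $G$-irreps lying over both $\lambda$ and $\widetilde\lambda$.

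Take $\lambda=\widetilde\lambda=\rho_\gamma$, your $\chi\circ\det\otimes\Ind_{ZU}^B\psi$. Then the block counts every $\mu$ whose restriction to $B$ contains $\rho_\gamma$:
\begin{itemize}
\item the cuspidals $\pi_\theta$ with $\theta|_{\F_q^*}=\chi_\gamma$;
\item the principal series $I_{\delta,\eta}$ with $\delta\eta=\gamma$;
\item the $\St_\delta$ with $\delta^2=\gamma$.
\end{itemize}
That is $q$ labels in all. The Mackey side agrees: one copy of $\rho_\gamma$ comes from the cell $e$, and one sits inside each $\Ind_T^B\chi_{\alpha,\alpha^{-1}\gamma}$, $\alpha\in X_q$, from the cell $w$.

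This $q\times q$ block has no zero entries, and it mixes cuspidal, principal and Steinberg rows. The epsilon-factor phases you expect as scalar blocks, one per cuspidal $\pi_\theta$, are in fact the entries $g_{q^2}(\theta\cdot(\chi_{\alpha^{-1}}\circ\No))$ of this single block. They vary with the source label $\chi_{\alpha,\alpha^{-1}\gamma}$. Synthesized as a generic unitary, the block costs on the order of $q^2$ gates. That is exactly the $\operatorname{poly}(q)$ bottleneck the theorem is meant to remove, so as written your argument gives nothing better than the generic subgroup-chain bound.

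The missing idea is a structured circuit for this block that is coherent in $\gamma$. The paper factors it as follows:
\begin{itemize}
\item a Gauss-phase diagonal $D_\gamma$;
\item an inverse multiplicative QFT producing $t\in\F_q$;
\item an isometry routing $t$ to the split torus $\F_q^*$ or the nonsplit torus $\ker\No\subset\F_{q^2}^*$, according to whether $X^2-X+t$ splits, storing the ratio of the ordered roots with character phases;
\item cyclic QFTs on both tori;
\item folding the pairs $\{\delta,\gamma\delta^{-1}\}$ and $\{\theta,\theta^q\}$ into single labels;
\item a reflection repairing the columns with $\alpha^2=\gamma$.
\end{itemize}
Correctness rests on Jacobi-sum, trace-fiber and Davenport--Hasse identities.

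\textbf{A second gap concerns the Gauss-sum phases.} You propose to compute Gauss-sum and epsilon-factor phases ``by classical arithmetic''. No $\operatorname{poly}(\log q)$ classical algorithm for general Gauss sums is known, and discrete logarithm reduces to estimating them. The paper instead imprints $g_q(\alpha)/\sqrt q$ coherently, using a gadget built from $\Fou_\times$, $\exp_g$, $\Fou_+$, $\log_g$ and controlled character phases.

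Your Step~1 silently needs the same phases: in the paper's basis, its $1\times 1$ Mackey blocks for $T\subset B$ are $g_q(\alpha\alpha'^{-1})/\sqrt q$. The paper avoids Mackey there altogether. It writes $\Fou_B$ directly as $\Fou_T$, together with $\Fou_+$, $\log_g$, $\Fou_\times$ on the $U$-coordinate, and a reversible relabeling.
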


This is proved in \cref{thm: GL2 QFT}.
In contrast, the general finite-group construction has a subgroup-index dependence that is polynomial in $q$ for this family \cite{GenQFT}.
The two Bruhat cells make the Mackey decomposition short, but they do not by themselves make the circuit efficient: one multiplicity block has dimension $q$, and synthesizing it as a dense unitary would cost polynomially many gates in $q$.

\begin{theorem}[Little group algorithm, informal]
Let $N$ be a normal subgroup of $G$, and suppose that every irreducible representation of $N$ extends to its inertia group.
Then \cref{alg:little-group-qft} implements the QFT over $G$ in the little group basis.
\end{theorem}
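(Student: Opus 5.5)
The plan is to verify that \cref{alg:little-group-qft} realises, step by step, the standard Clifford-theoretic chain of isomorphisms
\[
\C[G]\;\cong\;\Ind_N^G \C[N]\;\cong\;\bigoplus_{\rho\in\hatN}\Ind_N^G\bigl(\rho\otimes\rho^*\bigr)\;\cong\;\bigoplus_{\sigma\in\hatG}\sigma\otimes\sigma^*,
\]
and that each intermediate basis change is the one the algorithm applies. The argument rests on Clifford's theorem in the form available under the extension hypothesis. Fix orbit representatives $\rho\in\orbreps$ of the $G$-action on $\hatN$, with inertia groups $\Inertia{\rho}$, little groups $\Little{\rho}=\Inertia{\rho}/N$ and transversals $\Transversal{\rho}$ of $\Inertia{\rho}$ in $G$. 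Let $\ExtRep$ be an extension of $\rho$ to $\Inertia{\rho}$. The Gallagher/Clifford correspondence then says that the irreducibles of $G$ lying over the orbit of $\rho$ are exactly $\Ind_{\Inertia{\rho}}^G(\ExtRep\otimes\InfRep)$, where $\InfRep$ runs over irreducibles of $\Little{\rho}$ inflated to $\Inertia{\rho}$, and these are pairwise distinct. I will take this as given, since it is the classical theorem with the extension assumption built in. The main task is to show that the unitary produced by the algorithm sends $|g\rangle$ to $\sum_\sigma\sqrt{d_\sigma/|G|}\,\sum_{i,j}\sigma(g)_{ij}|\sigma,i,j\rangle$, written in the little group basis $|\rho,\tau,t,a,b\rangle$ built from transversal index $t$, $\rho$-indices and $\tau$-indices.

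The proof follows the stages of the algorithm. First, write $g=\TransversalElt{t}\,h\,n$ and check that the first stage (QFT over $N$ on the $n$ register, then relabelling the $\rho$ label to its orbit representative together with the transversal element) produces the $\C[N]$-Fourier coefficients, organised by orbit. Second, untwist with the extension: multiplying the $N$-Fourier block by $\ExtRep(h)$ converts coefficients $\rho(n)$ into $\ExtRep(hn)=\ExtRep(h)\rho(n)$. The identity that makes this valid is $\ExtRep(x)\rho(n)\ExtRep(x)^{-1}=\rho(xnx^{-1})$ for $x\in\Inertia{\rho}$. I will use it to show that the remaining dependence on $hN\in\Little{\rho}$ is only through a scalar coset register. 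Third, a QFT over $\Little{\rho}$ on that coset register yields $\InfRep(h)$ coefficients. Tensoring these gives the matrix entries of $\ExtRep\otimes\InfRep$ on $\Inertia{\rho}$. Fourth, the transversal register is turned into the block-index of the induced representation. The induced matrix $\Ind(\ExtRep\otimes\InfRep)(g)$ has a single nonzero block at the position $(t',t)$ with $\TransversalElt{t'}^{-1}g\TransversalElt{t}\in\Inertia{\rho}$, so the matrix coefficients in the transversal basis are obtained by a permutation plus the inertia-group data already computed.

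Once these four stages are in place, the conclusion reduces to a normalisation check and a dimension count. The normalisation check compares the amplitudes $\sqrt{d_\rho/|N|}\sqrt{d_\tau/|\Little{\rho}|}$ against $\sqrt{d_\sigma/|G|}$, using $d_\sigma=[G:\Inertia{\rho}]\,d_\rho d_\tau$. The dimension count is $\sum_\sigma d_\sigma^2=|G|$, which gives unitarity.

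I expect the main obstacle to be the second stage, and specifically the bookkeeping of the transversal. The product $g\,\TransversalElt{t}$ leaves the transversal, and the fourth stage has to absorb the resulting inertia-group element without introducing uncorrected phases. Projective issues do not arise, because genuine extensions are assumed. The risk is in the ordering: whether one conjugates $\rho$ by $\TransversalElt{t}$ or by $\TransversalElt{t}^{-1}$ must match the convention chosen for the induced basis. My first attempt will fix $\rho^{(t)}(n)=\rho(\TransversalElt{t}^{-1}n\TransversalElt{t})$ and verify the block formula directly on the basis vectors $|\TransversalElt{t}hn\rangle$.
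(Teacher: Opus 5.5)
Comparing amplitudes directly with $\sqrt{d_{(\sigma,\eta')}/|G|}$ times matrix coefficients of $\Ind_{\Inertia{\sigma}}^G(\ExtRep_\sigma\otimes\InfRep_{\eta'})$ is a legitimate alternative to the paper's main argument. The paper only asserts that coefficient identity, and then pushes $L(x)$ and $R(x)$ through every step. The trouble is that the stages you verify are not those of \cref{alg:little-group-qft}, and the mismatch hides the two ideas the algorithm relies on. Below, $\sigma$ denotes your orbit representative $\rho$.

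First, $g$ cannot be factored as $\TransversalElt{t}hn$ relative to $\Inertia{\sigma}$ before $\Fou_N$ is applied, because $\sigma$ is an output of $\Fou_N$. The algorithm instead factors $g=nh$ with $h\in\Transversal{N}$, independently of any orbit. The block of $\Fou_N$ with label $\la=\act{t_L}\sigma$ then carries $\Rep_\la(n)$ in the basis of $V_\la$, not $\Rep_\sigma(n)$. Your identity $\ExtRep(h)\Rep_\sigma(n)=\ExtRep(hn)$ therefore says nothing about these blocks. What you are missing is step 4. The orbit-transport gates apply $U_{t_L,\sigma}^\dagger$ to the row register and $U_{t_L,\sigma}\tp$ to the column register. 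By \cref{eq:orbit-frame-intertwining-relation}, this turns the coefficient matrix into $U_{t_L,\sigma}^\dagger\Rep_\la(n)U_{t_L,\sigma}=\Rep_\sigma(t_L\inv nt_L)$.

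Second, the little group basis carries two transversal labels $t_L,t_R$, but your $|\rho,\tau,t,a,b\rangle$ carries one. Even if $a$ and $b$ each record a row and a column index, its dimension $\sum[G:\Inertia{\rho}]d_\rho^2d_\tau^2$ is strictly smaller than $|G|=\sum[G:\Inertia{\rho}]^2d_\rho^2d_\tau^2$ once any orbit is nontrivial. So your closing dimension count fails.

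In the algorithm the two labels arise as follows. $t_L$ is the orbit position of $\la$, so it ranges freely over $\Transversal{\sigma}$ through the $\la$-superposition produced by $\Fou_N$. $t_R$ comes from the two-sided factorization $h=t_Lat_R\inv$ of step 5, which gives $b:=t_L\inv gt_R=(t_L\inv nt_L)a\in\Inertia{\sigma}$. The correction element $a$ depends on $t_L$, so it is only available after this factorization, not before the transversal bookkeeping as in your stage 2. Step 6 then applies $\ExtRep_{\sigma^*}(a\inv)=\ExtRep_\sigma(a)\tp$ to the column register, giving $\Rep_\sigma(t_L\inv nt_L)\ExtRep_\sigma(a)=\ExtRep_\sigma(b)$. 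Since $aN=bN$, step 8 supplies $\InfRep_{\eta'}(b)$. Together these give exactly the $(t_L,t_R)$ block of the induced matrix.

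You also need \cref{obs:factor-coset-suffices} to justify uncomputing the $N$-component of $a$ in step 7 without leaving garbage. With these pieces inserted, your amplitude computation, together with your normalization identity and \cref{eq: Fourier transform}, gives a complete proof that is shorter than the paper's equivariance bookkeeping. As written, however, it checks a different pipeline from the one in the statement, and one that is not even well defined as an algorithm.
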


The circuit uses the QFT over $N$, QFTs over the little groups, reversible orbit and factorization procedures, controlled orbit-transport intertwiners, and controlled extension matrices.
Correctness is proved in \cref{thm:little-group-qft-correctness}.

As an application of this framework, we consider $N=F^n$ inside $F\wr S_n$, where $F$ is a finite group for which an efficient QFT is known.
The little groups in this case are Young subgroups and the required extensions are explicit.
We write $C_F$ and $D_F$ for the circuit cost and depth of a QFT over $F$, and $L_F$ for the number of qubits required to encode irreducible-representation registers and elements of $\widehat F$.

\begin{theorem}[QFT over a permutation wreath product, informal]
For every $n\geq 1$ and $\varepsilon>0$, the little group circuit implements the QFT over $F\wr S_n$ to operator-norm error at most $\varepsilon$ with gate count
$\widetilde O\!\left(nC_F + n^3+n^2L_F\right)$
and depth $\widetilde O(D_F + n^3+n^2\log(2+L_F))$.
\end{theorem}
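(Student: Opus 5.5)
The plan is to instantiate the little group algorithm (\cref{alg:little-group-qft}, correct by \cref{thm:little-group-qft-correctness}) with $G=F\wr S_n$ and $N=F^n$, and then bound the cost of each subroutine. We first check the hypothesis that every irreducible representation of $N$ extends to its inertia group. An irreducible of $F^n$ is a tensor product $\rho_1\otimes\cdots\otimes\rho_n$ with $\rho_i\in\widehat F$. The group $S_n$ permutes the factors, so the inertia group is $F^n\rtimes S_{\mu}$, where $S_\mu$ is the Young subgroup stabilising the multiset $(\rho_1,\dots,\rho_n)$. We make the ordering canonical by sorting the $\rho_i$ with respect to a fixed total order on $\widehat F$, so that equal labels are contiguous. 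The extension is the standard one: $F^n$ acts factorwise, and $S_\mu$ permutes tensor factors. These factors are identical within each block, so this is a genuine representation. The little groups are therefore $S_\mu\cong S_{\mu_1}\times\cdots\times S_{\mu_k}$, and their irreducibles are tuples of Young diagrams.

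Next we cost each stage of \cref{alg:little-group-qft}.
\begin{enumerate}
\item \emph{QFT over $N$.} This is $n$ parallel QFTs over $F$, giving $nC_F$ gates and depth $D_F$.
\item \emph{Orbit and factorization procedures.} We reversibly sort the register $(\rho_1,\dots,\rho_n)$ of $L_F$-bit labels. The output is the sorted tuple, which is the orbit representative and determines $\mu$, together with a permutation register coset representative in $S_n/S_\mu$. A sorting network on $n$ keys of $L_F$ bits costs $O(n^2L_F)$ gates with a straightforward reversible bubble or insertion network. We need care to make it reversible on the transversal: ties must be broken stably, so that the coset representative is the minimal-length one.
\item \emph{Orbit-transport intertwiners.} These are controlled permutations of the $n$ tensor factors of $F$-irrep registers, costing $O(n^2L_F)$ gates as controlled swaps (or $\widetilde O(n L_F)$ with a better network).
\item \emph{QFTs over the little groups $S_{\mu_1}\times\cdots\times S_{\mu_k}$.} These are controlled on $\mu$. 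We implement them by embedding into a single $S_n$-type Beals circuit restricted along the chain, or by running the $S_{m}$ QFTs of \cite{BruinsmaGrinkoOzols2026} on contiguous blocks with control on block boundaries. This gives total cost $\widetilde O(n^3)$, since $\sum\mu_i^3\le n^3$. Since block sizes vary, the practical approach is a uniform controlled version of the $\widetilde O(n^3)$ $S_n$ circuit, where transpositions crossing block boundaries are disabled.
\item \emph{Controlled extension matrices.} In the basis above these are permutations of tensor factors within blocks, so they cost $O(n^2L_F)$.
\end{enumerate}
Summing the stages gives $\widetilde O(nC_F+n^3+n^2L_F)$. For depth, the sorting and permutation networks have depth $\widetilde O(n^2\log(2+L_F))$, by comparing $L_F$-bit keys in depth $O(\log L_F)$ with a parallel comparator. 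Error accumulates additively over the $\operatorname{poly}(n)$ approximate components: we allocate $\varepsilon/\operatorname{poly}(n)$ to each, which contributes only polylog factors.

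The main obstacle is making the little-group QFT uniform over the varying Young subgroups $S_\mu$. A naive branching over all $\mu$ would be exponential. We therefore plan to show that the $S_n$ QFT circuit of \cite{BruinsmaGrinkoOzols2026}, built along the chain $S_1\subset\cdots\subset S_n$ from adjacent-transposition rotations, can be coherently controlled on the set of block boundaries. Each rotation gate is then replaced by its controlled version, which is identity when the transposition crosses a boundary and resets the Young-Yamanouchi recursion at each boundary. This yields the QFT over $S_\mu$ in the adapted basis at the same asymptotic cost.
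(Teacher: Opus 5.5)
Your reduction to \cref{alg:little-group-qft}, the extension check, and the costs you assign to $\Fou_F^{\otimes n}$, the stable sort, the tensor-factor permutations and the extension gates are essentially the paper's. The gap is the step you yourself single out: a single circuit implementing $\bigotimes_r\Fou_{S_{n_r}}$, coherently controlled on the profile, at cost $\widetilde O(n^3)$. The bound $\sum_r n_r^3\le n^3$ holds for a fixed profile. It does not hold for one circuit that must serve all profiles in superposition. The mechanism you sketch does not supply such a circuit: switching off boundary-crossing transpositions does not turn stage $k$ of the $S_n$ circuit into what is needed. Suppose global position $k$ lies in block $r$ at local index $j=k-(n_1+\cdots+n_{r-1})$. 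Then you need the induced transform for $S_{j-1}\subset S_j$ acting on block $r$ alone. Its path-extension isometry has amplitudes $\sqrt{d_\mu/(j\,d_\lambda)}$ for the \emph{local} shapes, rather than $\sqrt{d_\mu/(k\,d_\lambda)}$ for a global path of length $k-1$. Its Young--Yamanouchi rotations use local contents. All of this must be read off a window of the path register whose offset is in superposition. The ``reset'' you mention is where all the work lies. It means redesigning the stage circuits of \cite{BruinsmaGrinkoOzols2026} rather than merely controlling them, and ``same asymptotic cost'' is asserted, not shown. Your route may be completable, but as written it is a plan.

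The paper supplies the missing idea by reusing the stages as black boxes. For each $k=1,\dots,n$, exactly the factors with $n_r\ge k$ need stage $k$. Because $\sum_r n_r=n$, there are at most $\lfloor n/k\rfloor$ of them. The circuit therefore applies the \emph{unmodified} stage $k$ of the Beals $S_n$ circuit $\lfloor n/k\rfloor$ times. Each time, swaps controlled on $\vsig$ move the next qualifying factor into a fixed workspace and back. Since $k$ increases, each factor receives stages $1,\dots,n_r$ in order, and the same circuit serves every branch of $\vsig$. The cost is $\sum_k\lfloor n/k\rfloor\,\widetilde O(k^2)$ plus $\widetilde O(n^2)$ of swaps per $k$, i.e.\ $\widetilde O(n^3)$. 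A smaller omission: your second item only produces $(\vsig,t_L)$. The factorization $\pi\mapsto(\alpha,t_R)$ also requires forming $\vla_R=\act{\pi\inv}\vla$, a second stable sort giving $\pi_R$, and computing $\alpha=\pi_L\inv\pi\pi_R$. This costs $\widetilde O(n^2L_F)$ and does not affect the bound, but it must be included.
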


The complete statement and proof can be found in \cref{thm:wreath-main}.
The Beals generic construction gives an efficient QFT over $F\wr S_n$ under the condition $|F|=\operatorname{poly}(n)$ \cite{GenQFT}.
Our construction does not have this restriction, and our bound depends instead on the complexity of the QFT for the group $F$ and the size of its representation registers.
It therefore covers all cases in which $|F|$ is not polynomial in $n$ while $C_F$, $D_F$, and $L_F$ remain efficient in the chosen parameters.
For instance, taking $F=S_m$ gives a circuit polynomial in the two independent wreath-product parameters $m$ and $n$.

\subsection{Organization of the paper}\label{subsec: intro outline}

\cref{sec: preliminaries} introduces the specific representation-theoretic background and circuit conventions.
\cref{sec: first induction} develops subgroup induction and the induced transform, and \cref{sec: beals} describes the canonical Beals algorithm for implementing the induced transform.
\cref{sec: Mackey transform} describes the Mackey algorithm, followed by the $\GL_2(\F_q)$ application in \cref{sec: GL2}.
\cref{sec: clifford theory} develops the little group QFT, which is specialized to wreath products in \cref{sec: wreath}.

\section{Preliminaries and notation}\label{sec: preliminaries}
This section briefly introduces the background concepts central to this work. Throughout this manuscript $G$ will always denote a finite group, and representations will always be finite-dimensional complex unitary representations. We assume basic knowledge of the representation theory of finite groups and refer to \cite{serre} for an introduction.

\subsection{Regular representations and the Fourier transform}\label{subsec: Sn regular reps}
We introduce the Fourier transform of a finite group $G$ via the regular representations. It will be the basis transformations that diagonalizes both the left and right regular representation.
\begin{definition}
     Consider the left and right permutation actions of $G$ on itself, given by left and right multiplication. We define the left and right regular representations of $G$ as the corresponding permutation representation on the vector space $\C[G]$.

     Explicitly, on the group element basis
     \[
     \{\ket{g}\mid g\in G\}
     \]
     of $\C[G]$, define the \emph{left regular representation}
     \begin{equation}
     L : G  \to \U(\C[G]):\qquad
    L(h)\ket{g} = \ket{hg}
     \end{equation}
and the \emph{right regular representation}
\begin{equation}
R:G\to \U(\C[G]) :\qquad
     R(h) \ket{g} = \ket{g h^{-1}}.
\end{equation}
\end{definition}

The regular representations $L(h)$ and $R(h)$ are permutation matrices, and only have ones on their main diagonal if $h = id$.
Hence, the characters of the regular representations are $\chi_L(g) = \chi_R(g) = |G|\delta(g, id)$ for all $g\in G$.
By Maschke's theorem, we can decompose both of these representations into a direct sum of irreducible representations of $G$, with some multiplicity. Denote by $m_\lambda$ the multiplicity of $V_\lambda$ in $\C[G]$, which is equal as left and right regular representation.
It turns out that the multiplicity equals the dimension of the irrep, since
\begin{equation*}
    m_\lambda = \langle \chi_L, \chi_\lambda\rangle = \frac{1}{|G|}\sum_{g \in G} \chi_L(g) \overline{\chi_\lambda(g)} = \overline{\chi_\lambda(id)} = d_\lambda.
\end{equation*}

As the actions of the left and right regular representations commute, we can block-diagonalize them with the same basis transformation, denote it by $\Fou_G$, so that
\begin{align}\label{eq: Fourier transform block-diagononalizes}
    \Fou_G L(g)\Fou_G^\dagger &= \bigoplus_{\lambda \in \widehat{G}} \Rep_\lambda(g)\otimes \mathrm{I}_{d_\lambda} &&\text{and}& \Fou_G R(g)\Fou_G^\dagger = \bigoplus_{\lambda \in \widehat{G}} \mathrm{I}_{d_\lambda}\otimes \Rep_\lambda^{*}(g).
\end{align}

\begin{definition}
    The \emph{Fourier transform of $G$} is the basis transformation $\Fou_G$ that simultaneously block-diagonalizes the left and right regular representation. In other words, the Fourier transform decomposes the group algabra into simple modules as
    \begin{equation}\label{eq: group algebra Peter-Weyl decomposition}
    \C[G] \stackrel{\Fou_G}{\simeq} \bigoplus_{\lambda \in \widehat{G}} V_\lambda \otimes V_{\lambda}^*.
\end{equation}
\end{definition}

By fixing an explicit basis for each simple module of $G$, we can write an explicit formula for the Fourier transform.
\begin{lemma}
	  The quantum Fourier transform of $G$ is the unitary operator given by
	\begin{equation}\label{eq: Fourier transform}
		\Fou_G = \sum_{g\in G}\sum_{\lambda\in \widehat{G}}\sqrt{\frac{d_\lambda}{|G|}}\sum_{P, Q \in \Path(\lambda)} [R_{\lambda}(g)]_P^Q\ket{\lambda,Q,P}\bra{g}.
	\end{equation}
	Here $\Path(\lambda)$ denotes a chosen basis of $V_\lambda$.
\end{lemma}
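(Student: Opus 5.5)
We have to show two things: the operator $\Fou_G$ in \eqref{eq: Fourier transform} is unitary, and it block-diagonalizes the regular representations as in \eqref{eq: Fourier transform block-diagononalizes}. Both will follow from the Schur orthogonality relations for matrix coefficients of unitary irreps. Throughout, $\{\ket{P}\}_{P\in\Path(\lambda)}$ is an orthonormal basis of $V_\lambda$, so each $\Rep_\lambda(g)$ is a unitary matrix.

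\textbf{Step 1 (isometry).} Compute the inner product of two columns of $\Fou_G$, which is
\[
\bra{h}\Fou_G^\dagger\Fou_G\ket{g}=\sum_\lambda \frac{d_\lambda}{|G|}\sum_{P,Q}\overline{[\Rep_\lambda(h)]_P^Q}\,[\Rep_\lambda(g)]_P^Q .
\]
The inner sum is $\tr\bigl(\Rep_\lambda(h)^\dagger \Rep_\lambda(g)\bigr)$, which equals $\chi_\lambda(h^{-1}g)$ by unitarity. The whole expression is therefore $\frac{1}{|G|}\sum_\lambda d_\lambda\chi_\lambda(h^{-1}g)$, which is the character of the regular representation divided by $|G|$. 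That character was computed above to be $|G|\delta(g,h)$, so the expression equals $\delta(g,h)$. Hence $\Fou_G^\dagger\Fou_G=I$.

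\textbf{Step 2 (square).} The domain $\C[G]$ has dimension $|G|$. The codomain is spanned by the vectors $\ket{\lambda,Q,P}$, so it has dimension $\sum_\lambda d_\lambda^2$. Since $m_\lambda=d_\lambda$ and $\C[G]=\bigoplus_\lambda V_\lambda^{\oplus m_\lambda}$, we get $\sum_\lambda d_\lambda^2=|G|$. An isometry between spaces of equal finite dimension is unitary.

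\textbf{Step 3 (intertwining).} Apply $\Fou_G$ to $L(h)\ket{g}=\ket{hg}$. This gives coefficients $[\Rep_\lambda(hg)]_P^Q=\sum_{P'}[\Rep_\lambda(h)]_P^{P'}[\Rep_\lambda(g)]_{P'}^Q$. So $\Rep_\lambda(h)$ acts on the $P$-index and leaves $Q$ untouched, which is the form $\Rep_\lambda(h)\otimes I$ in the ordering $\ket{\lambda,Q,P}$ once the tensor factors are identified appropriately.

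For $R(h)\ket{g}=\ket{gh^{-1}}$, expand $[\Rep_\lambda(gh^{-1})]_P^Q=\sum_{Q'}[\Rep_\lambda(g)]_P^{Q'}[\Rep_\lambda(h^{-1})]_{Q'}^Q$. Here the $Q$-index is transformed by the matrix $\Rep_\lambda(h^{-1})^{\mathsf T}=\overline{\Rep_\lambda(h)}=\Rep_\lambda^*(h)$. This matches \eqref{eq: Fourier transform block-diagononalizes}.

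The main obstacle is purely bookkeeping. One must fix consistent conventions for which index $P$ or $Q$ is the row of $[\,\cdot\,]_P^Q$ and in which order the tensor factors appear in $\ket{\lambda,Q,P}$, so that the left action lands on the $V_\lambda$ factor and the right action on $V_\lambda^*$. The analytic content is entirely Step 1.
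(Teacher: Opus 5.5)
Your proof is correct in substance, but you reach unitarity by a different route than the paper. The paper applies Schur orthogonality of matrix coefficients, $\sum_{g}[R_\lambda(g)]_P^Q\overline{[R_\mu(g)]_{P'}^{Q'}}=\frac{|G|}{d_\lambda}\delta_{\lambda\mu}\delta_{QQ'}\delta_{PP'}$, which gives orthonormal \emph{rows}, i.e.\ $\Fou_G\Fou_G^\dagger=I$. You instead prove orthonormal \emph{columns}, $\Fou_G^\dagger\Fou_G=I$. You do this by collapsing the $(P,Q)$-sum to $\chi_\lambda(h^{-1}g)$ and then using $\sum_\lambda d_\lambda\chi_\lambda=\chi_L=|G|\delta(\cdot,e)$.

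Your version needs only the regular character already computed in the preliminaries, not the full matrix-coefficient orthogonality. Your Step 2 also makes explicit the count $\sum_\lambda d_\lambda^2=|G|$. The paper's ``thus unitary'' uses this count silently, since orthonormal rows alone only give a co-isometry. The paper's route, in exchange, exhibits the normalized matrix coefficients directly as an orthonormal family of functions on $G$.

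In Step 3 you have used the transpose of the paper's index convention, and your closing remark does not repair it. The paper fixes $[M]_P^Q=\bra{Q}M\ket{P}$, so $Q$ is the row index. With that convention, your expansion $\sum_{P'}[R_\lambda(h)]_P^{P'}[R_\lambda(g)]_{P'}^Q$ equals $\bra{Q}R_\lambda(g)R_\lambda(h)\ket{P}=[R_\lambda(gh)]_P^Q$, not $[R_\lambda(hg)]_P^Q$. Likewise, your right-action expansion computes $[R_\lambda(h^{-1}g)]_P^Q$.

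Your formulas are correct for the opposite convention $[M]_P^Q=\bra{P}M\ket{Q}$. Under that convention, however, the displayed formula defines a different operator: the paper's $\Fou_G$ followed by a swap of the $Q$ and $P$ registers. You then correctly find the left action on the $P$-slot and the right action on the $Q$-slot, which is the mirror image of \cref{eq: Fourier transform block-diagononalizes}. ``Identifying the tensor factors appropriately'' is exactly that swap, so it is not a free relabeling.

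The fix is to expand over the other index. The identity $[R_\lambda(hg)]_P^Q=\sum_{Q'}[R_\lambda(h)]_{Q'}^Q[R_\lambda(g)]_P^{Q'}$ shows that $h$ acts by $R_\lambda(h)$ on the $Q$-register. The identity $[R_\lambda(gh^{-1})]_P^Q=\sum_{P'}[R_\lambda(g)]_{P'}^Q[R_\lambda^*(h)]_{P'}^P$ shows that $h$ acts by $R_\lambda^*(h)$ on the $P$-register. Together these give the displayed block form verbatim, and this is exactly what the paper does.
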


\begin{proof}
We need to verify that the map in \cref{eq: Fourier transform} is unitary and that it simultaneously block-diagonalizes the left and right regular representations as in \cref{eq: Fourier transform block-diagononalizes}.  Note that \cref{eq: Fourier transform} depends on the chosen unitary realizations of the irreducible representations and the chosen orthonormal bases of their representation spaces.

We first prove unitarity.  Schur orthogonality gives
\begin{equation*}
    \sum_{g\in G}
    [R_\lambda(g)]_P^Q
    \overline{[R_\mu(g)]_{P'}^{Q'}}
    =
    \frac{|G|}{d_\lambda}
    \delta_{\lambda\mu}\delta_{QQ'}\delta_{PP'}.
\end{equation*}
As such, normalizing with the factors $\sqrt{d_\lambda/|G|}$ proves orthonormality for the rows of $\Fou_G$ and thus that the map $\Fou_G$ is unitary.

For the left regular action, $L(h)\ket g=\ket{hg}$ and $R_\lambda(hg)=R_\lambda(h)R_\lambda(g)$.  In matrix entries,
\begin{equation*}
    [R_\lambda(hg)]_P^Q
    =\sum_{Q'}[R_\lambda(h)]_{Q'}^Q[R_\lambda(g)]_P^{Q'}.
\end{equation*}
Substitution in \cref{eq: Fourier transform} shows that $h$ acts on the row register $Q$ via
\begin{equation*}
    \Fou_G L(h)\ket g
    =\left(\bigoplus_{\lambda\in\widehat G}
        R_\lambda(h)\otimes\mathrm I_{d_\lambda}\right)\Fou_G\ket g.
\end{equation*}
Similarly, $R(h)\ket g=\ket{gh^{-1}}$.  With $R_\lambda^*(h)=R_\lambda(h^{-1})^{\mathsf T}$, we have
\begin{equation*}
    [R_\lambda(gh^{-1})]_P^Q
    =\sum_{P'}[R_\lambda(g)]_{P'}^Q
      [R_\lambda^*(h)]_{P'}^P.
\end{equation*}
Thus $h$ acts on the column register $P$ via
\begin{equation*}
    \Fou_G R(h)\ket g
    =\left(\bigoplus_{\lambda\in\widehat G}
        \mathrm I_{d_\lambda}\otimes R_\lambda^*(h)\right)\Fou_G\ket g.
\end{equation*}
These identities hold for every basis state $\ket g$.  Hence they are operator identities, and multiplying them on the right by $\Fou_G^\dagger$ gives \cref{eq: Fourier transform block-diagononalizes}.
\end{proof}

\subsection{Some representation theory results}
Subgroups and induced representations are important ingredients for the QFT algorithms in this work. For a finite group $G$ with subgroup $H \subset G$, inducing is the canonical way of obtaining $\C[G]$-modules from $\C[H]$-modules.

\begin{definition}[Transversal]
    For a group $G$ with subgroup $H \subset G$, a \emph{(left) transversal} $\T$ is a subset of $G$ that contains exactly one representative of every (left) coset $gH = \{gh\,|\, h \in H\}$ of $H$ in $G$.
\end{definition}

\begin{definition}[Induced representation]\label{def: induced representation}
    Fix a transversal $\T$ for $H \subset G$ and consider a $\C[H]$-module $V$. The \emph{induced representation} is defined as the $\C[G]$-module
    \begin{equation}
        \Ind^G_H V := \C[\T] \otimes V,
    \end{equation}
    where the action of $g\in G$ is given by
    \begin{equation}
        \Ind^G_H R(g) (\ket{t} \otimes \ket{P}) = \ket{t_g}\otimes R(h_g)(\ket{P}),
    \end{equation}
    where $t_g \in \T, h_g \in H$ such that $gt=t_gh_g$.
\end{definition}

Equivalently, there are $|\T| = [G : H]$ copies of the $\C[H]$-module $V$ labeled with transversal elements and we define a $\C[G]$-action on this. Note that, by the definition of a transversal, the equation $g~t=t_g~h_g$ always has a solution.

\begin{theorem}[Frobenius reciprocity \protect{\cite[Section~7.1]{serre}}]\label{thm: Frobenius reciprocity}
    Let $H\leq G$ be finite groups, let $W$ be a $\C[H]$-module, and let $V$ be a $\C[G]$-module. There is a natural vector-space isomorphism
    \begin{equation}
        \operatorname{Hom}_{G}\!\left(\Ind_H^G W,V\right)
        \cong
        \operatorname{Hom}_{H}\!\left(W,\Res_H^G V\right).
        \label{eq: Frobenius reciprocity}
    \end{equation}
    There is also a character-theoretic formulation \cite[Section~7.2]{serre}.
\end{theorem}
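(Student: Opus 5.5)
We prove Frobenius reciprocity by writing down an explicit map $\Phi$ from right to left, an explicit map $\Psi$ from left to right, and checking that they are mutually inverse and natural. Throughout we use the concrete model $\Ind_H^G W=\C[\T]\otimes W$ of \cref{def: induced representation}, and we assume without loss of generality that the identity $1\in G$ lies in $\T$.

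\textbf{Restriction direction.} Given $\varphi\in\Hom_G(\Ind_H^G W,V)$, set $\Psi(\varphi)(w)=\varphi(\ket{1}\otimes w)$.
\begin{itemize}
\item For $h\in H$ we have $h\cdot 1=1\cdot h$, so $t_h=1$ and $h_h=h$. Hence $h$ acts on $\ket{1}\otimes w$ as $\ket{1}\otimes R_W(h)w$.
\item Therefore the subspace $\ket{1}\otimes W$ is $H$-stable and $H$-isomorphic to $W$.
\item Composing with the $G$-map $\varphi$ shows that $\Psi(\varphi)$ is $H$-equivariant into $\Res_H^G V$.
\end{itemize}

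\textbf{Extension direction.} Given $\psi\in\Hom_H(W,\Res V)$, define
\[
\Phi(\psi)(\ket{t}\otimes w)=R_V(t)\,\psi(w).
\]
To check $G$-equivariance, take $g\in G$ and write $gt=t_gh_g$. Then
\[
\Phi(\psi)\bigl(g\cdot(\ket t\otimes w)\bigr)=R_V(t_g)\,\psi\bigl(R_W(h_g)w\bigr)=R_V(t_g)R_V(h_g)\,\psi(w)=R_V(gt)\,\psi(w)=R_V(g)\,\Phi(\psi)(\ket t\otimes w).
\]
The second equality uses the $H$-equivariance of $\psi$.

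\textbf{Mutual inverses.} First, $\Psi\Phi(\psi)(w)=R_V(1)\psi(w)=\psi(w)$. For the other composite, observe that $\ket{t}\otimes w=t\cdot(\ket{1}\otimes w)$: indeed $t\cdot 1=t\cdot 1_H$ with $t\in\T$, so $t_t=t$ and $h_t=1$. Since every basis vector has this form, a $G$-map $\varphi$ is determined by its restriction to $\ket1\otimes W$. It follows that
\[
\Phi\Psi(\varphi)(\ket t\otimes w)=R_V(t)\,\varphi(\ket 1\otimes w)=\varphi(\ket t\otimes w).
\]
Both maps are visibly linear, so they are mutually inverse linear isomorphisms.

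\textbf{Naturality.} $\Psi$ is precomposition with the fixed $H$-embedding $W\hookrightarrow\ket1\otimes W$. So for $G$-maps $V\to V'$ (acting by postcomposition) and $H$-maps $W'\to W$, naturality reduces to associativity of composition, once we note that $\Ind$ acts on morphisms as $\mathrm{id}_{\C[\T]}\otimes f$.

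The main subtlety is the bookkeeping with the transversal: the identity $t\cdot 1 = t\cdot 1_H$ and the requirement $1\in\T$. If we prefer not to assume $1\in\T$, we replace $\ket{1}$ by $\ket{t_0}$, where $t_0$ is the representative of the coset $H$. We then define $\Psi(\varphi)(w)=\varphi\bigl(R_V(t_0)\inv$-twisted$\ldots\bigr)$; that is, we conjugate by $t_0^{-1}$, and we check equivariance with $h_g$ computed accordingly.

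Finally, the character-theoretic version follows by taking dimensions. Since $\dim\Hom_G(A,B)=\langle\chi_A,\chi_B\rangle$ for $G$-modules, the isomorphism yields $\langle\chi_{\Ind W},\chi_V\rangle_G=\langle\chi_W,\chi_{\Res V}\rangle_H$.
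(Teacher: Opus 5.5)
Your proof is correct. The paper gives no argument of its own here; it only cites Serre, where induction is realized as $\C[G]\otimes_{\C[H]}W$ and the isomorphism is the adjunction between extension and restriction of scalars. Your proof is the coordinate form of that adjunction in the paper's transversal model $\C[\T]\otimes W$. The identity-coset copy $\ket{1}\otimes W$ plays the role of $1\otimes W$, and $\Phi(\psi)(\ket t\otimes w)=R_V(t)\psi(w)$ is the unique $G$-equivariant extension of $\psi$ from that copy to its translates. Your route is self-contained and uses exactly the definition of $\Ind_H^G$ that the paper works with (which depends on $\T$), so every step reduces to the rule $gt=t_gh_g$. The abstract route gives independence of the transversal and naturality for free, whereas you have to handle both by hand, which you do.

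Your fallback for $1\notin\T$ is garbled as written: $R_V(t_0)\inv$ has to act outside $\varphi$, or equivalently $R_W(t_0)\inv$ inside. For $h\in H$ one has $ht_0=t_0(t_0\inv ht_0)$, so the line $\ket{t_0}\otimes W$ carries the twisted action $w\mapsto R_W(t_0\inv ht_0)w$. The right definition is
\[
\Psi(\varphi)(w)=R_V(t_0)\inv\varphi(\ket{t_0}\otimes w)=\varphi\bigl(\ket{t_0}\otimes R_W(t_0)\inv w\bigr).
\]
The two expressions agree because $t_0\inv\cdot(\ket{t_0}\otimes w)=\ket{t_0}\otimes R_W(t_0)\inv w$. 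Keeping $\Phi$ unchanged, $\Psi\Phi=\mathrm{id}$ is immediate, and $\Phi\Psi=\mathrm{id}$ follows from $\ket t\otimes w=(tt_0\inv)\cdot(\ket{t_0}\otimes w)$.

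Also, taking dimensions yields the character identity only for characters of actual representations. The version for arbitrary class functions in Serre's Section~7.2 then follows by linearity.
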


\subsection{Notation}\label{subsec: notation}
We use the following notation conventions.
\begin{itemize}
	\item For a group $G$ and irreducible representation label $\lambda\in\widehat G$, we let $V_{\lambda}$ denote the representation space. When a basis is specified, usually a subgroup adapted basis is assumed, we denote by $R_\lambda$ the action of $G$ on $V_{\lambda}$.
	\item For a representation space $V_{\lambda}$, we use $\P(\lambda)$ to denote the set of basis vectors. When we assume a Gelfand--Tsetlin basis $\P(\lambda)$ contains paths in the Bratelli diagram to $\lambda$. We use path truncation notation via $Q = Q'\to \lambda \in \Path(\lambda)$. See also \cref{subsec: bratteli}.
	\item For a unitary maps we will use two different notations throughout. For a map $M$ and basis states $P,Q$, the coefficient of mapping state $P$ to $Q$ is given by
	\begin{equation}
		M_{P,Q} = \bra{Q}M\ket{P} = [M]_P^Q.
	\end{equation}
	The latter notation is used when lengthy equations are involved, mostly in \cref{subsec: sub_adap bases,sec: Mackey transform}.
\end{itemize}
For a representation $\rho$ (not necessarily irreducible) of $H$ we define
	\begin{equation}
		\N(\rho) := \{\lambda \in \widehat{H}: V_{\lambda} \text{ irreducible part of } V_{\rho}\}.
	\end{equation}

	\begin{definition}
		Let $H\leq G$. For $\lambda\in\widehat H$ and $\mu\in\widehat G$, we define
		\begin{equation}
		\N_{H\leq G}^{+}(\lambda)
		:=
		\left\{
		\mu\in\widehat G
		\ \middle|\
		\lambda\in\N\!\left(\Res_H^G V_\mu\right)
		\right\}
		\end{equation}
		and
		\begin{equation}
		\N_{H\leq G}^{-}(\mu)
		:=
		\left\{
		\lambda\in\widehat H
		\ \middle|\
		\lambda\in\N\!\left(\Res_H^G V_\mu\right)
		\right\}.
		\end{equation}
		Thus
		\begin{equation}
		\mu\in\N_{H\leq G}^{+}(\lambda)
		\iff
		\lambda\in\N_{H\leq G}^{-}(\mu).
		\end{equation}
		In this case we say that $\mu$ \emph{lies over} $\lambda$. When the inclusion $H\leq G$ is clear, we abbreviate these sets to $\N^+(\lambda)$ and $\N^-(\mu)$.
	\end{definition}

\section{QFT using subgroups}
\label{sec: first induction}
In this section, we introduce a first induction relation that can be used for implementing quantum Fourier transforms for any finite group that has a chain of subgroups with multiplicity-free restrictions. The induction relation of this section is not new.
It is also used in the implementation of classical fast Fourier transforms, and quantum algorithms by, for example, Refs.\cite{Beals, GenQFT, SnQFT_Speedup}.

\subsection{Bratteli diagram and Gelfand--Tsetlin basis}
\label{subsec: bratteli}

Let
\begin{equation*}
    \{e\}=G_1\subset G_2\subset\cdots\subset G_n=G
\end{equation*}
be a chain of finite groups.  Its \emph{Bratteli diagram} is the graded directed multigraph whose vertices at level $m$ are the irreducible representation labels $\lambda\in\widehat{G_m}$.  The number of edges $\lambda\to\mu$ from $\lambda\in\widehat{G_{m-1}}$ to $\mu\in\widehat{G_m}$ is the branching multiplicity
\begin{equation*}
    \dim\Hom_{G_{m-1}}
    \!\left(V_\lambda,\Res_{G_{m-1}}^{G_m}V_\mu\right).
\end{equation*}
Thus, for a multiplicity-free chain, there is an edge $\lambda\to\mu$ precisely when $V_\lambda$ occurs in $\Res_{G_{m-1}}^{G_m}V_\mu$.

\sloppy For every edge, choose an isometric embedding of the corresponding copy of $V_\lambda$ into $\Res_{G_{m-1}}^{G_m}V_\mu$ .  Iterating these embeddings from the trivial representation at level $1$ gives, for each $\mu\in\widehat{G_m}$, an orthonormal \emph{Gelfand--Tsetlin basis} of $V_\mu$.  Its vectors are indexed by paths
\begin{equation*}
    P=(\lambda^{(1)}\to\lambda^{(2)}\to\cdots
    \to\lambda^{(m)}=\mu)
\end{equation*}
in the Bratteli diagram, and we write $\Path(\mu)$ for the set of such paths and $\ket{P}$ for the corresponding basis vector.  This basis is subgroup adapted: for each fixed tail $\lambda^{(k)}\to\cdots\to\lambda^{(m)}$, the span of the basis vectors having that tail is a $G_k$-invariant copy of $V_{\lambda^{(k)}}$.
The basis depends on the chosen embeddings, whereas its path labels and branching structure are determined by the subgroup chain.

\subsection{Fourier transform for a group and a subgroup}\label{subsec:F for group and subgroup}
Let $G$ be a finite group and consider a subgroup $H \subset G$.
Additionally, assume that the restrictions of all irreps of $G$ to $H$ are multiplicity-free, and that the irreducible representations are unitary and subgroup adapted. Let $\T$ be a left transversal, so that each group element can be written as $g = t~h$ with $t\in \T, h\in H$.

We can define the following encoding tensor for group elements.
\begin{definition}\label{def: encoding tensor}
    Let $\T$ be a left transversal for $H \subset G$.
    Define an ``encoding unitary'' as the linear map $\Enc : \C^{G} \to \C^{\T} \otimes \C^{H}$ acting on basis vectors $\ket{g}$ with $g \in G$ as
    \begin{equation}
        \Enc \ket{g} = \ket{t} \otimes \ket{h} \quad\text{ s.t. } \quad t \in \T,\,\,h \in H,\, g = th.
    \end{equation}
\end{definition}

$\Enc$ is a ``classical'' unitary operation, as it only relabels basis vectors.
Hence, the corresponding matrix is a permutation matrix.

Using the encoding, we can decompose the Fourier transform of $G$ into smaller parts. Firstly, after applying $\Enc$, we perform the Fourier transform on the subgroup $H$. This decomposes $\C[H]$ into irreducible parts. Secondly, we apply an operator denoted by $\U_{\Ind}$ to obtain the Fourier transform of $G$. The total transform is then given by
\begin{equation}
    \C[G] \xrightarrow{\Enc}
    \C[\T]\otimes \C[H] \xrightarrow{I_{\T} \otimes \Fou_H}
    \C[\T]\otimes \left(\bigoplus_{\lambda \in \widehat{H}} V_{\lambda}\otimes V_{\lambda}^*\right) \xrightarrow{\U_{\Ind}}
    \bigoplus_{\mu \in \widehat{G}} V_{\mu} \otimes V_{\mu}^*.
\end{equation}

We can push the decomposition of $\Fou_G$ even further and define $\U_{\Ind}$ as a sum of \emph{induced transforms}.

\begin{definition}[Induced transform]\label{def: IndMap}
    Let $\T$ be a left transversal for $H \subset G$ with assumptions as in the beginning of this section.
    For all $\lambda \in \widehat{H}$, define the linear map $\IndMap[\lambda] : \C[\T] \otimes V_\lambda \to \bigoplus_{\mu \in \N^+(\lambda)} V_\mu$ by its entries
    \begin{equation}
    [\IndMap[\lambda]]^{Q', \mu}_{t, P} = \sqrt{\frac{d_\mu}{d_\lambda|\T|}}\left[\Rep_{\mu}(t)\right]^{Q}_{P \to \mu}.
    \end{equation}
    We use indices $t \in \T$, $P \in \Path(\lambda)$ for basis vectors of $\C[\T] \otimes V_\lambda,~\mu \in  \N^+(\lambda)$ and $Q \in \Path(\mu)$.
\end{definition}
Note that we use path truncation notation $Q'$, so that $Q = Q' \to \mu$. This induced map definition indeed does what we want it to do by the following lemma.

\begin{lemma}\label{lemma: IndMap basis transform}
For every $\lambda\in\widehat H$, the induced transform is a unitary $G$-intertwiner
\begin{equation}
    \IndMap[\lambda]:\Ind_H^G V_\lambda
    \xrightarrow{\ \sim\ }
    \bigoplus_{\mu\in\N^+(\lambda)}V_\mu.
\end{equation}
Equivalently, for every $g\in G$,
\begin{equation}\label{eq: IndMap intertwines induced rep}
    \IndMap[\lambda]\bigl(\Ind_H^G\Rep_\lambda(g)\bigr)
    =\left(\bigoplus_{\mu\in\N^+(\lambda)}\Rep_\mu(g)\right)
      \IndMap[\lambda].
\end{equation}
\end{lemma}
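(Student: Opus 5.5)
Our plan is to prove the three properties separately: the intertwining identity \eqref{eq: IndMap intertwines induced rep}, the isometry property, and surjectivity (so that the isometry is unitary). The key computational fact is that, in a subgroup-adapted basis, for $h\in H$ and $P,P'\in\Path(\lambda)$ we have $[\Rep_\mu(h)]^{P'\to\mu}_{P\to\mu}=[\Rep_\lambda(h)]^{P'}_{P}$. Moreover, $\Rep_\mu(h)$ maps the span of $\{\ket{P\to\mu}\}_{P\in\Path(\lambda)}$ into itself. Both follow from the Gelfand--Tsetlin construction of \cref{subsec: bratteli}: the span of the vectors with last step $\lambda\to\mu$ is an $H$-invariant copy of $V_\lambda$, embedded via the chosen isometry. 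Multiplicity-freeness guarantees that this copy is unique.

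For the intertwining property, we fix $g$ and a basis vector $\ket t\otimes\ket P$. By \cref{def: induced representation}, the left-hand side sends it to $\ket{t_g}\otimes \Rep_\lambda(h_g)\ket P$. Applying $\IndMap[\lambda]$ gives the entries
\[
\sqrt{d_\mu/(d_\lambda|\T|)}\sum_{P'}[\Rep_\mu(t_g)]^Q_{P'\to\mu}[\Rep_\lambda(h_g)]^{P'}_P .
\]
Using the key fact, this sum becomes $[\Rep_\mu(t_g)\Rep_\mu(h_g)]^Q_{P\to\mu}=[\Rep_\mu(gt)]^Q_{P\to\mu}$. That equals $\sum_{Q''}[\Rep_\mu(g)]^{Q}_{Q''}[\Rep_\mu(t)]^{Q''}_{P\to\mu}$, which is exactly the corresponding entry of the right-hand side. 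The only delicate point is that the sum over $P'$ reconstructs the full product $\Rep_\mu(t_g)\Rep_\mu(h_g)$ restricted to the column $P\to\mu$. This holds because $\Rep_\mu(h_g)\ket{P\to\mu}$ stays inside the $\lambda$-block.

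For the isometry property, we compute the inner product of two columns of $\IndMap[\lambda]$, indexed by $(t,P)$ and $(s,P')$:
\[
\sum_{\mu\in\N^+(\lambda)}\frac{d_\mu}{d_\lambda|\T|}\,[\Rep_\mu(s^{-1}t)]^{P'\to\mu}_{P\to\mu}.
\]
Our plan is to rewrite this via Schur orthogonality or character theory. The entries $[\Rep_\mu(x)]^{P'\to\mu}_{P\to\mu}$ are matrix coefficients of the projection onto the $\lambda$-isotypic part. Summing $d_\mu\,\Rep_\mu(x)$ over all $\mu\in\widehat G$ with $P,P'$ fixed in the $\lambda$-block should give $\frac{|G|}{d_\lambda}$-normalized $\delta$-type behaviour. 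Concretely, we will use the Fourier inversion identity $\sum_{\mu}d_\mu\,\chi$-type sums $=|G|\delta_{x,e}$, applied to the function $x\mapsto \sum_{h\in H}\overline{[\Rep_\lambda(h)]^{P'}_{P}}\,\delta_x(h)$. Equivalently, we decompose $\delta_{x\in H}[\Rep_\lambda(x)]^{P'}_P$ in the Fourier basis of $G$ via the block structure. Note that $\mu\notin\N^+(\lambda)$ contributes zero. The result should be $\delta_{s^{-1}t\in H}\,[\Rep_\lambda(s^{-1}t)]^{P'}_P$, and since $s,t\in\T$ this reduces to $\delta_{st}\delta_{PP'}$. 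We expect this orthogonality computation to be the main obstacle, specifically getting the normalization $d_\lambda|\T|=d_\lambda|G|/|H|$ to match.

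For surjectivity, we compare dimensions. Frobenius reciprocity (\cref{thm: Frobenius reciprocity}) together with multiplicity-freeness gives $\Ind_H^G V_\lambda\cong\bigoplus_{\mu\in\N^+(\lambda)}V_\mu$ with each $\mu$ appearing once. Hence both sides have dimension $|\T|d_\lambda$, and the isometry is unitary.
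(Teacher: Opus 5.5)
Your intertwining computation is essentially the paper's proof. You expand the $(Q,\mu)$ entry of $\IndMap[\lambda]\,\Ind_H^G\Rep_\lambda(g)$, use the adapted basis to replace $[\Rep_\lambda(h_g)]^{U}_{P}$ by $[\Rep_\mu(h_g)]^{U\to\mu}_{P\to\mu}$, reassemble $[\Rep_\mu(gt)]^Q_{P\to\mu}$, and match the right-hand side.

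The difference is that the paper stops there and never verifies that $\IndMap[\lambda]$ is unitary; the word ``Equivalently'' in the statement hides this. Your isometry step and Frobenius-reciprocity dimension count actually prove it. Your orthogonality identity is exactly \cref{lemma:key_lemma_for_beals}, which the paper later derives \emph{from} unitarity of $\IndMap[\lambda]$. Proving it directly therefore also removes that circularity.

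Your sketch of the identity is loosely worded but sound. Fix $P,P'\in\Path(\lambda)$ and compare
\[
\phi(x)=\sum_{\mu\in\N^+(\lambda)}d_\mu[\Rep_\mu(x)]^{P'\to\mu}_{P\to\mu},
\qquad
\psi(x)=d_\lambda|\T|\,\mathbf 1_H(x)\,[\Rep_\lambda(x)]^{P'}_{P}
\]
against every matrix coefficient $\overline{[\Rep_\nu(x)]^a_b}$ of $G$. For $\phi$ use Schur orthogonality on $G$; for $\psi$ use it on $H$, together with the fact that $\Rep_\nu|_H$ is block diagonal in the Gelfand--Tsetlin basis. Both pairings give $|G|\,\delta_{a,P'\to\nu}\,\delta_{b,P\to\nu}\,\mathbf 1_{\{\nu\in\N^+(\lambda)\}}$. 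Completeness of matrix coefficients then gives $\phi=\psi$, and the normalisation $d_\lambda|\T||H|=d_\lambda|G|$ works out automatically.

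A cheaper route, once intertwining is known, is Schur's lemma. The operator $\IndMap[\lambda]\IndMap[\lambda]^\dagger$ commutes with $\bigoplus_\mu\Rep_\mu(g)$, and the $\mu$ are pairwise distinct, so it equals $\bigoplus_\mu c_\mu I_{V_\mu}$. The trace of its $\mu$-block is $\frac{d_\mu}{d_\lambda|\T|}\sum_{t,P}\|\Rep_\mu(t)\ket{P\to\mu}\|^2=d_\mu$, so every $c_\mu=1$. Combined with your dimension count, this gives unitarity without any orthogonality computation.
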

\begin{proof}
    Fix $g\in G$, $t\in\T$, and $P\in\Path(\lambda)$, and write the unique transversal factorization $gt=t_g h_g$, with $t_g\in\T h_g\in H.$ For $\mu\in\N^+(\lambda)$ and $Q\in\Path(\mu)$, the corresponding coefficient of the left-hand side of \cref{eq: IndMap intertwines induced rep} is
    \begin{align*}
        &\sqrt{\frac{d_\mu}{d_\lambda|\T|}}
        \sum_{U\in\Path(\lambda)}
        [\Rep_\mu(t_g)]^Q_{U\to\mu}
        [\Rep_\lambda(h_g)]^U_P \\
        &\qquad=
        \sqrt{\frac{d_\mu}{d_\lambda|\T|}}
        \sum_{U\in\Path(\lambda)}
        [\Rep_\mu(t_g)]^Q_{U\to\mu}
        [\Rep_\mu(h_g)]^{U\to\mu}_{P\to\mu} =
        \sqrt{\frac{d_\mu}{d_\lambda|\T|}}
        [\Rep_\mu(gt)]^Q_{P\to\mu}.
    \end{align*}
    In the first equality we used that the basis is subgroup adapted.  On the other hand, the corresponding coefficient of the right-hand side is
    \begin{align}
        \sqrt{\frac{d_\mu}{d_\lambda|\T|}}
        \sum_{S\in\Path(\mu)}
        [\Rep_\mu(g)]^Q_S[\Rep_\mu(t)]^S_{P\to\mu}
        =
        \sqrt{\frac{d_\mu}{d_\lambda|\T|}}
        [\Rep_\mu(gt)]^Q_{P\to\mu}.
    \end{align}
    This proves the intertwining relation.
\end{proof}

\begin{lemma}\label{lemma: induction step of first induction}
For every finite group $G$ and subgroup $H\subset G$ satisfying the assumptions above, the Fourier transform factors as
\begin{equation}\label{eq: QFT subgroup induction step}
    \Fou_G
    =\IndMap\bigl(\mathrm{I}_{\C[\T]}\otimes\Fou_H\bigr)\Enc.
\end{equation}
\end{lemma}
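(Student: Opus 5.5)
The plan is a direct matrix-entry computation: apply both sides of \cref{eq: QFT subgroup induction step} to a basis vector $\ket{g}$ and compare coefficients. Here $\IndMap$ is understood as the block operator $\bigoplus_{\lambda\in\widehat H}\IndMap[\lambda]$. Each block acts on $\C[\T]\otimes V_\lambda$ and leaves the dual factor $V_\lambda^*$ untouched. Its image is identified with the corresponding summands of $\bigoplus_\mu V_\mu\otimes V_\mu^*$ by sending the column path $R\in\Path(\lambda)$ to $R\to\mu\in\Path(\mu)$.

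\textbf{Step 1: expand the right-hand side.} Write $g=th$ with $t\in\T$ and $h\in H$, so that $\Enc\ket g=\ket t\otimes\ket h$. By the lemma giving \cref{eq: Fourier transform} for $H$,
\begin{equation*}
    (\mathrm I\otimes\Fou_H)\Enc\ket g=\sum_{\lambda\in\widehat H}\sqrt{\tfrac{d_\lambda}{|H|}}\sum_{P,R\in\Path(\lambda)}[\Rep_\lambda(h)]_R^P\,\ket t\otimes\ket{\lambda,P,R}.
\end{equation*}
Applying $\IndMap[\lambda]$ to the $\ket t\otimes\ket P$ part then gives the coefficient of $\ket{\mu,Q,R\to\mu}$:
\begin{equation*}
    \sqrt{\tfrac{d_\lambda}{|H|}}\sqrt{\tfrac{d_\mu}{d_\lambda|\T|}}\sum_{P}[\Rep_\mu(t)]^Q_{P\to\mu}[\Rep_\lambda(h)]^P_R .
\end{equation*}

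\textbf{Step 2: collapse the sum and compare.} Use subgroup adaptedness, exactly as in the proof of \cref{lemma: IndMap basis transform}, to replace $[\Rep_\lambda(h)]^P_R$ by $[\Rep_\mu(h)]^{P\to\mu}_{R\to\mu}$. Because $\Rep_\mu(h)$ is block diagonal in the adapted basis, the sum over $P$ becomes the full matrix product, and the coefficient equals $\sqrt{d_\mu/|G|}\,[\Rep_\mu(th)]^Q_{R\to\mu}$, using $|H||\T|=|G|$. Since $th=g$, this is precisely the $\Fou_G$ coefficient from \cref{eq: Fourier transform}, with column path $R\to\mu$.

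\textbf{Step 3: check the bookkeeping.} The pairs $(\lambda,R)$ with $\lambda\in\N^-(\mu)$ and $R\in\Path(\lambda)$ biject with $\Path(\mu)$ via $R\mapsto R\to\mu$. This bijection uses multiplicity-freeness: each $\mu$-path has a unique penultimate vertex and a unique edge into $\mu$. Hence every coefficient of $\Fou_G\ket g$ is produced exactly once, and no contributions land elsewhere. Both sides therefore agree on every basis vector.

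The main obstacle is not a computation but fixing the convention for how $\IndMap$ relabels the dual register. Step 3 is where it must be made consistent; once it is, everything else follows from the preceding lemmas.
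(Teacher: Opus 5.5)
Your proposal is correct and follows essentially the same route as the paper's proof. Both arguments write $g=th$, expand $(\mathrm{I}\otimes\Fou_H)\Enc\ket{g}$, apply $\IndMap$ via \cref{eq: def full IndMap}, and use subgroup adaptedness together with $|G|=|\T||H|$ to recover $\sqrt{d_\mu/|G|}\,[\Rep_\mu(th)]^Q_{R\to\mu}$; your Step 3 bookkeeping is the paper's observation that each output column path decomposes uniquely as $R=R'\to\mu$.
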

Here the full induction map also carries along the second Fourier path.  More precisely, for $P,R'\in\Path(\lambda)$, it is defined by
\begin{equation}\label{eq: def full IndMap}
    \IndMap\ket{t}\ket{\lambda,P,R'}
    =\sum_{\mu\in\N^+(\lambda)}\sum_{Q\in\Path(\mu)}
      [\IndMap[\lambda]]^{Q',\mu}_{t,P}
      \ket{\mu,Q,R'\to\mu}.
\end{equation}
\begin{proof}
    It is enough to compare the two sides on a group-basis vector.  Write $g=th$ with $t\in\T$ and $h\in H$.  After applying the encoding and the Fourier transform of $H$, we obtain
    \begin{equation*}
        \bigl(\mathrm{I}_{\C[\T]}\otimes\Fou_H\bigr)\Enc\ket{g}
        =\sum_{\lambda\in\widehat H}
         \sqrt{\frac{d_\lambda}{|H|}}
         \sum_{P,R'\in\Path(\lambda)}
         [\Rep_\lambda(h)]^P_{R'}
         \ket{t}\ket{\lambda,P,R'}.
    \end{equation*}
    Fix an output basis vector $\ket{\mu,Q,R}$, and write its second path uniquely as $R=R'\to\mu$, where $R'\in\Path(\lambda)$ and $\lambda\to\mu$.  By \cref{def: IndMap,eq: def full IndMap}, its coefficient after applying $\IndMap$ is
    \begin{align*}
        &\sqrt{\frac{d_\lambda}{|H|}}
        \sqrt{\frac{d_\mu}{d_\lambda|\T|}}
        \sum_{P\in\Path(\lambda)}
        [\Rep_\mu(t)]^Q_{P\to\mu}
        [\Rep_\lambda(h)]^P_{R'} \\
        &\qquad=
        \sqrt{\frac{d_\mu}{|G|}}
        \sum_{P\in\Path(\lambda)}
        [\Rep_\mu(t)]^Q_{P\to\mu}
        [\Rep_\mu(h)]^{P\to\mu}_{R'\to\mu} =
        \sqrt{\frac{d_\mu}{|G|}}
        [\Rep_\mu(th)]^Q_R.
    \end{align*}
    Here we used $|G|=|\T||H|$ and the subgroup adapted form of $\Rep_\mu(h)$.  The final expression is exactly the coefficient of $\ket{\mu,Q,R}$ in $\Fou_G\ket{g}$, proving \cref{eq: QFT subgroup induction step}.
\end{proof}

As we decomposed the Fourier transform into a product of unitary matrices, we can define a corresponding quantum circuit, as shown in \cref{fig: induction step of first induction}.

\begin{figure}[H]
\centering
\begin{quantikz}[row sep=0.35cm, column sep=0.45cm]
	\setwiretype{n}    &                       &                                   &               &                 &&                  &                                         & \gate[3][1.5cm]{\Fou_H}\gateoutput{$\widetilde R$}   &                  \setwiretype{q}                     & \rstick[2]{$\ket{R'}$} \\
	\setwiretype{n}    & \gate[3][1.5cm]{\Fou_G} & \rstick{$\ket{R'}$}  \qw &               & \setwiretype{n} &&                  &  \gate[3][1.5cm]{\Enc} \gateoutput{$h$} & \gateoutput{$\lambda$}                      \setwiretype{q}   & \ctrl{1}                                             &   \\
	\lstick{$\ket{g}$} &                       & \rstick{$\ket{\mu}$}              &\setwiretype{n}& \midstick{$=$}  &&\lstick{$\ket{g}$}&    \qw                                  & \gateoutput{$
		P$}                                  & \gate[2][1.5cm]{\mathrm{U}_{\mathrm{Ind},\lambda}}  \setwiretype{q} & \rstick{$\ket{\mu}$}                      &\setwiretype{n}&                                 \\
	\setwiretype{n}    &                       & \rstick{$\ket{Q'}$}  \qw &               & \setwiretype{n} &&                  &   \gateoutput{$t$}                      &                                             \setwiretype{q}   &                                                      & \rstick{$\ket{Q'}$}              &\setwiretype{n}&
\end{quantikz}
\caption{\cref{lemma: induction step of first induction} as a quantum circuit. The Fourier transform $\Fou_G$ is decomposed into a subgroup Fourier transform $\Fou_H$ and the induction map $\IndMap$.}
\label{fig: induction step of first induction}
\end{figure}

\subsection{Induction along a multiplicity-free chain}\label{subsec: First induction}

The induction along a subgroup can itself be repeated inductively along a chain of subgroups.
Suppose that $G$ has a multiplicity-free subgroup chain $\{e\}=G_1\subset\cdots\subset G_n=G$.  For $2\leq m\leq n$, let $\T_m$ be a left transversal for $G_{m-1}\subset G_m$, let $\Enc_m$ be the corresponding encoding unitary, and let $\IndMap[m]$ be the full induction map of \cref{eq: def full IndMap} for this inclusion.  Thus a group element has the recursive encoding
\begin{equation*}
    g=t^{(n)}t^{(n-1)}\cdots t^{(2)},
    \qquad t^{(m)}\in\T_m.
\end{equation*}

\begin{theorem}\label{thm: first induction}
Define $\mathcal Q_1=\Fou_{G_1}=\mathrm I$ and, recursively, define
\begin{equation}\label{eq: recursive subgroup-chain QFT}
    \mathcal Q_m
    =\IndMap[m]
     \bigl(\mathrm I_{\C[\T_m]}\otimes\mathcal Q_{m-1}\bigr)\Enc_m,
    \qquad 2\leq m\leq n.
\end{equation}
Then $\mathcal Q_n=\Fou_G$.  Its output basis is the Gelfand--Tsetlin basis indexed by pairs of paths in the Bratteli diagram.
The corresponding circuit is shown in \cref{fig: theorem first induction full circuit}.
\end{theorem}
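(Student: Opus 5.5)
The proof is an induction on $m$. The claim at level $m$ is that $\mathcal Q_m=\Fou_{G_m}$, where $\Fou_{G_m}$ is taken with respect to the Gelfand--Tsetlin bases obtained by restricting the chosen edge embeddings to the truncated chain $G_1\subset\cdots\subset G_m$. For the base case $m=1$, the group $G_1=\{e\}$ has only the trivial representation with a single path, so $\Fou_{G_1}=\mathrm I=\mathcal Q_1$.

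For the inductive step, assume $\mathcal Q_{m-1}=\Fou_{G_{m-1}}$ and apply \cref{lemma: induction step of first induction} to the inclusion $G_{m-1}\subset G_m$ with transversal $\T_m$. This gives
\[
\Fou_{G_m}=\IndMap[m]\bigl(\mathrm I_{\C[\T_m]}\otimes\Fou_{G_{m-1}}\bigr)\Enc_m .
\]
Substituting the inductive hypothesis then yields exactly the recursion \cref{eq: recursive subgroup-chain QFT}.

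Two hypotheses of that lemma must be checked before it can be applied. First, restrictions from $G_m$ to $G_{m-1}$ must be multiplicity-free; this is the standing assumption on the chain. Second, the irreps of $G_m$ must be realized in a basis adapted to $G_{m-1}$, with the basis of $V_\lambda$ for $\lambda\in\widehat{G_{m-1}}$ being the one used by $\Fou_{G_{m-1}}$. This is where the Gelfand--Tsetlin construction of \cref{subsec: bratteli} is used. Since the basis of $V_\mu$ is defined by iterating the edge embeddings, the vectors with last edge $\lambda\to\mu$ span the image of the fixed embedding of $V_\lambda$, and in this span they are exactly the images of the Gelfand--Tsetlin basis vectors of $V_\lambda$. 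Hence $[\Rep_\mu(h)]^{U\to\mu}_{P\to\mu}=[\Rep_\lambda(h)]^U_P$ for $h\in G_{m-1}$, and the matrix of $\Rep_\mu(h)$ has no entries between blocks with different last edges. This block-diagonal, consistent form is precisely what the proofs of \cref{lemma: IndMap basis transform,lemma: induction step of first induction} use.

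This consistency check is the only real subtlety. The lemma is stated for a single inclusion, so I must make sure that the bases at level $m-1$ used inside $\mathcal Q_{m-1}$ coincide with the truncated bases used to define $\IndMap[m]$. The recursive definition of Gelfand--Tsetlin bases guarantees this.

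Finally, the output labels are obtained by unwinding \cref{eq: def full IndMap}. Each application of $\IndMap[m]$ produces $\ket{\mu,Q,R'\to\mu}$, which extends both the first path (through $Q=Q'\to\mu$) and the carried second path $R'$ by one edge. Hence $\mathcal Q_n$ outputs $\ket{\mu,Q,R}$ with $Q,R\in\Path(\mu)$ for $\mu\in\widehat G$, which is the claimed pair-of-paths Gelfand--Tsetlin basis. The circuit is obtained by nesting the circuit of \cref{fig: induction step of first induction}.
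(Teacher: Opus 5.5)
Your proposal is correct and follows the paper's proof: induction on $m$ with the trivial base case $G_1=\{e\}$, then \cref{lemma: induction step of first induction} applied to $G_{m-1}\subset G_m$ together with the inductive hypothesis. Your extra checks, that the truncated Gelfand--Tsetlin bases used inside $\mathcal Q_{m-1}$ agree with those used to define $\IndMap[m]$ and how the pair-of-paths output labels arise, spell out points the paper leaves implicit.
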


\begin{proof}
    We argue by induction on $m$.  Since $G_1=\{e\}$, its group algebra is one-dimensional and $\mathcal Q_1=\Fou_{G_1}=\mathrm I$.  Suppose that $\mathcal Q_{m-1}=\Fou_{G_{m-1}}$.  Applying \cref{lemma: induction step of first induction} to the inclusion $G_{m-1}\subset G_m$ gives
    \begin{equation*}
        \Fou_{G_m}
        =\IndMap[m]
         \bigl(\mathrm I_{\C[\T_m]}\otimes\Fou_{G_{m-1}}\bigr)\Enc_m
        =\mathcal Q_m.
    \end{equation*}
    Hence the claim holds for every $m$, and in particular $\mathcal Q_n=\Fou_{G_n}=\Fou_G$.
\end{proof}

\begin{figure}[H]
\centering
\resizebox{\textwidth}{!}{\begin{quantikz}[row sep={-0.7cm}]
		&\gate[style=transparent][0cm][1cm]{\,}\setwiretype{n}    &                                &\gate[4][2cm]{\mathrm{U}_{\mathrm{Ind},\lambda}^{(1)}}\gateoutput{$\mu$} &\ctrl{3}  \setwiretype{q}                                                                          &                                                                                           &                                                                                           &                                                           &                                                                                                               &\rstick{$\ket{P_1}$}                                       &\gate[style=transparent][0.4cm]{\,}\setwiretype{n}&\rstick[16]{$\ket{P}$}\\
		&\gate[style=transparent][0cm][1cm]{\,}\setwiretype{n}    &                                &                                                        &                                                                                                   &                                                                                           &                                                                                           &                                                           &                                                                                                               &                                                           &\setwiretype{n}&\\
		&\gate[style=transparent][0cm][1cm]{\,}\setwiretype{n}    &                                &                                                        &                                                                                                   &                                                                                           &                                                                                           &                                                           &                                                                                                               &                                                           &\setwiretype{n}&\\
		\lstick[24]{$\ket{g}$}  &\gate[style=transparent][0cm][1cm]{\,}\setwiretype{n}    &\lstick{$\ket{t_1 = e}$}        &\gateinput{$t$}\setwiretype{q}                          &\gate[4][2cm]{\mathrm{U}_{\mathrm{Ind},\lambda}^{(2)}}\gateoutput{$\mu$}\setwiretype{n}    &\ctrl{3}\setwiretype{q}                                                                    &                                                                                           &                                                           &                                                                                                               &\rstick{$\ket{P_2}$}                                       &\setwiretype{n}&\\
		&\gate[style=transparent][0cm][1cm]{\,}\setwiretype{n}    &                                &                                                        &                                                                                                   &                                                                                           &                                                                                           &                                                           &                                                                                                               &                                                           &\setwiretype{n}&\\
		&\gate[style=transparent][0cm][1cm]{\,}\setwiretype{n}    &                                &                                                        &                                                                                                   &                                                                                           &                                                                                           &                                                           &                                                                                                               &                                                           &\setwiretype{n}&\\
		&\gate[style=transparent][0cm][1cm]{\,}\setwiretype{n}    &\lstick{$\ket{t^{(2)}}$}        &\setwiretype{q}                                         &                 \gateinput{$t$}   \gateoutput{$Q'$}                                              &\gate[5][2cm]{\mathrm{U}_{\mathrm{Ind},\lambda}^{(3)}}\gateinput{$P$}\gateoutput{$\mu$} \setwiretype{q} &\ctrl{3}                                                                                   &                                                           &                                                                                                               &\rstick{$\ket{P_3}$}                                       &\setwiretype{n}&\\
		&\gate[style=transparent][0cm][1cm]{\,}\setwiretype{n}    &                                &                                                        &                                                                                                   &                                                                                           &\setwiretype{n}                                                                            &                                                           &                                                                                                               &                                                           &\setwiretype{n}&\\
		&\gate[style=transparent][0cm][1cm]{\,}\setwiretype{n}    &                                &                                                        &                                                                                                   &                                                                                           &                                                                                           &                                                           &                                                                                                               &                                                           &\setwiretype{n}&\\
		&\gate[style=transparent][0cm][1cm]{\,}\setwiretype{n}    &                                &                                                        &                                                                                                   &                                                               \gateoutput[2]{$Q'$}          &\gate[6][2cm]{\mathrm{U}_{\mathrm{Ind},\lambda}^{(4)}}\gateinput[2]{$P$}\gateoutput{$\mu$}  \setwiretype{q}  &                                                           &                                                                                                               &\rstick{$\ket{P_4}$}                                       &\setwiretype{n}&\\
		&\gate[style=transparent][0cm][1cm]{\,}\setwiretype{n}    &\lstick{$\ket{t^{(3)}}$}        &\setwiretype{q}                                         &                                                                                                   &                                       \gateinput{$t$}                                  &                                                        \setwiretype{q}  &\setwiretype{n}                                            &                                                                                                               &                                                           &\setwiretype{n}&\\
		&\gate[style=transparent][0cm][1cm]{\,}\setwiretype{n}    &                                &                                                        &                                                                                                   &                                                                                           &                                                                                           &\setwiretype{n}                                            &                                                                                                               &                                                           &\setwiretype{n}&\\
		&\gate[style=transparent][0cm][1cm]{\,}\setwiretype{n}    &                                &                                                        &                                                                                                   &                                                                                           &                                                      \gateoutput[3]{$Q'$}                   &\gate[12,style=transparent][1cm]{\ddots}\setwiretype{q}    &\setwiretype{n}                                                                                                &\rstick{$\,\,\,\vdots$}                                    &\setwiretype{n}&\\
		&\gate[style=transparent][0cm][1cm]{\,}\setwiretype{n}    &                                &                                                        &                                                                                                   &                                                                                           &                                                                                                  &\setwiretype{q}                                            &\setwiretype{n}                                                                                                &                                                           &\setwiretype{n}&\\
		&\gate[style=transparent][0cm][1cm]{\,}\setwiretype{n}    &\lstick{$\ket{t^{(4)}}$}        &\setwiretype{q}                                         &                                                                                                   &                                                                                           &                                     \gateinput{$t$}                                    &\setwiretype{q}                                            &\setwiretype{n}                                                                                                &                                                           &\setwiretype{n}&\\
		&\gate[style=transparent][0cm][1cm]{\,}\setwiretype{n}    &                                &                                                        &                                                                                                   &                                                                                           &                                                                                           &                                                           &\ctrl{3}\setwiretype{q}                                                                                        &\rstick{$\ket{P_{n-1}}$}                                   &\setwiretype{n}&\\
		&\gate[style=transparent][0cm][1cm]{\,}\setwiretype{n}    &                                &                                                        &                                                                                                   &                                                                                           &                                                                                           &                                                           &\setwiretype{n}                                                                                                &                                                           &\setwiretype{n}&\\
		&\gate[style=transparent][0cm][1cm]{\,}\setwiretype{n}    &                                &                                                        &                                                                                                   &                                                                                           &                                                                                           &                                                           &\setwiretype{n}                                                                                                &                                                           &\setwiretype{n}&\\
		&\gate[style=transparent][0cm][1cm]{\,}\setwiretype{n}    &                                &                                                        &                                                                                                   &                                                                                           &                                                                                           &                                                           &\gate[9][2cm]{\mathrm{U}_{\mathrm{Ind},\lambda}^{(n)}}\gateinput[6]{$P$}\gateoutput{$\mu$}              \setwiretype{q} &\rstick{$\ket{\mu} = \ket{Q_n}$}                           &\setwiretype{n}&\\
		&\gate[style=transparent][0cm][1cm]{\,}\setwiretype{n}    &                                &                                                        &                                                                                                   &                                                                                           &                                                                                           &                                                           &                                                                                                               &                                                           &\setwiretype{n}&\\
		&\gate[style=transparent][0cm][1cm]{\,}\setwiretype{n}    &                                &                                                        &                                                                                                   &                                                                                           &                                                                                           &                                                           &                                                                                                                        &                                                           &\setwiretype{n}&\\
		&\gate[style=transparent][0cm][1cm]{\,}\setwiretype{n}    &                                &                                                        &                                                                                                   &                                                                                           &                                                                                           &                                                           &                                        \gateoutput[6]{$Q'$}          \setwiretype{q} &                                                           &\setwiretype{n}&\rstick[6]{$\ket{Q'}$}\\
		&\gate[style=transparent][0cm][1cm]{\,}\setwiretype{n}    &                                &                                                        &                                                                                                   &                                                                                           &                                                                                           &                                                           &                                                                                      \setwiretype{q} &\rstick[3, brackets=none]{$\,\,\vdots$}    \setwiretype{n} &\setwiretype{n}&\\
		&\gate[style=transparent][0cm][1cm]{\,}\setwiretype{n}    &                                &                                                        &                                                                                                   &                                                                                           &                                                                                           &                                                           &                                                                      \setwiretype{q} &\setwiretype{n}                                            &\setwiretype{n}&\\
		&\gate[style=transparent][0cm][1cm]{\,}\setwiretype{n}    &                                &                                                        &                                                                                                   &                                                                                           &                                                                                           &                                                           &                                                                                                                   &  \setwiretype{q}                                          &\setwiretype{n}&\\
		&\gate[style=transparent][0cm][1cm]{\,}\setwiretype{n}    &                                &                                                        &                                                                                                   &                                                                                           &                                                                                           &                                                           &                                                                                                                   & \setwiretype{q}                                           &\setwiretype{n}&\\
		&\gate[style=transparent][0cm][1cm]{\,}\setwiretype{n}    &\lstick{$\ket{t^{(n)}}$}        &\setwiretype{q}                                         &                                                                                                   &                                                                                           &                                                                                           &                                                           &                                       \gateinput{$t$}                                                                  &                                  \setwiretype{q}     &\setwiretype{n}&
	\end{quantikz}
}
\caption{Quantum circuit for the Fourier transform $\Fou_G$. Group elements $g \in G$ are encoded with transversal elements $t^{(m)} \in T_m$ as $g = t^{(n)}\dots t^{(3)}t^{(2)}$. For Fourier basis vectors, we let $P = P_1 \to P_2 \to \dots \to P_n$ and $Q = Q_1 \to Q_2 \to \dots \to Q_n$ encodes paths in the Bratteli diagram, where $P_m, Q_m \in \widehat{G}_m$.}

\label{fig: theorem first induction full circuit}
\end{figure}

\section{Beals' algorithm for induced transform}\label{sec: beals}

One way of implementing the induced transform of \cref{def: IndMap} is via the algorithm proposed by Ref.~\cite{Beals}. First implemented in the QFT for the symmetric group, this approach has been generalized and extended to other non-abelian groups by Ref.~\cite{GenQFT}. In this section we develop a representation-theoretic analysis and explanation of this general method.

Let $n=[G:H]$, and fix a left transversal $\mathcal T=\{\tau_1,\ldots,\tau_n\}$ for $H$ in $G$.
Recall that the induced transform has an irrep label register $\ket{\lambda}$ as control, and acts on two other registers, together encoding elements of the induced module $\Ind^G_H V_\lambda$.
The first register is the transversal register, with a basis indexed by $\T$.
We extend this register with to also contain a special symbol $\ket{\star}$.
The second register stores vectors in $V_\lambda$, with basis vectors $\ket{P' \to \lambda}$ indexed by paths $P \in \Path(\lambda)$ in the Bratteli diagram with the shape from the control register as end point.
As the output will contain paths that are one edge longer that the paths at the input, we add an extra register which contains the new end points $\ket{\mu}$ with $\mu \in \N^+(\lambda)$ after applying the induced transform.
We initialize this register as $\ket{0}$, and denote by $\ket{\perp}$ any state in the orthogonal complement of this initial state.
In this encoding, the induced map $\IndMap$ from \cref{def: IndMap} maps the basis vectors at the input to
\begin{equation}\label{eq: Beals transform indmap action}
    \IndMap \ket{\tau_k}\ket{P}\ket{0}\ket{\lambda} = \ket{\star}\ket{\Psi_{k, P}}\ket{\lambda},
\end{equation}
where we define
\begin{equation}
    \ket{\Psi_{k, P}} \defeq \sum_{\mu \in \N^+(\lambda)}\sum_{Q \in \Path(\mu)} \sqrt{\frac{d_\mu}{n d_\lambda}} [\Rep_\mu(\tau_k)]^{Q}_{P \to \mu} \ket{Q'}\ket{\mu}.
\end{equation}
Here, the registers $\ket{Q'}\ket{\mu}$ encode the path $Q = Q' \to \mu$.
Also, note how we uncomputed the transversal register as the state $\ket{\star}$.

The algorithm uses three different quantum gates, which we will now define.

\begin{definition}\label{def: U gate Beals}
    For $\lambda\in\widehat H$ and $P \in \Path(\lambda)$, denote by $\ket{\Phi_{P,\lambda}}$ the state
    \begin{equation}\label{eq:path extension superposition state}
        \ket{\Phi_{P,\lambda}}
        \defeq
        \sum_{\mu\in\N^+(\lambda)}
        \sqrt{\frac{d_\mu}{n d_\lambda}}
        \ket{P}\ket{\mu},
    \end{equation}
    where $\ket{P}\ket{\mu}$ encodes the path extension $P \to \mu$.
    Define $U$ to act on basis vectors
    \begin{align}
        U\ket{\star}\ket{P}\ket{0}\ket{\lambda}
        &= \ket{\star}\ket{\Phi_{P,\lambda}}\ket{\lambda}
    \end{align}
    and let it act as the identity when the transversal register is not $\ket{\star}$.
    For inputs spanned by $\ket{\star}\ket{P}\ket{\perp}\ket{\lambda}$, $U$ may act in any way that makes it a unitary matrix.
    Hence, $U$ is an isometry.
\end{definition}

\begin{definition}
    The gate $V_k$ swaps the special symbol $\star$ and the transversal element $\tau_k$ when the path extension register is in its initial state $\ket{0}$.
    \begin{align}
        V_k \ket{\star}\ket{P' \to \lambda}\ket{0}\ket{\lambda} &\defeq \ket{\tau_k}\ket{P' \to \lambda}\ket{0}\ket{\lambda} \\
        V_k \ket{\tau_k}\ket{P' \to \lambda}\ket{0}\ket{\lambda} &\defeq  \ket{\star}\ket{P' \to \lambda}\ket{0}\ket{\lambda}.
    \end{align}
    When the path extension register is in some state $\ket{\perp}$ in the orthogonal complement of $\ket{0}$, the gate $V_k$ acts as identity.
\end{definition}

\begin{definition}
    The gate $R(\tau_k)$ applies irreducible representations of the transversal element $\tau_k$ on the path register.
    For $P \in \Path(\mu)$ and $\mu \in \widehat{G}$, the action on basis vectors is given by
    \begin{equation}
    	R(\tau_k) \ket{\star}\ket{P'}\ket{\mu} = \sum_{M \in \Path(\mu)} [\Rep_{\mu }(\tau_k)]^{M}_{P} \ket{\star}\ket{M'}\ket{\mu}.
    \end{equation}
    When the transversal register is not $\ket{\star}$, it acts as identity.
\end{definition}

The following lemma will play a key role in proving the correctness of Beals' construction.

\begin{lemma}\label{lemma:key_lemma_for_beals}
For every $g \in G$ and $P, Q \in \Path(\lambda)$ with $\lambda \in \widehat{H}$, it holds that
\begin{equation}
        \sum_{\mu \in \N^+(\lambda)} \frac{d_\mu}{n d_\lambda} [\Rep_\mu(g)]^{Q \to \mu}_{P\to \mu} = [\Ind^G_H R_\lambda(g)]^{e,Q}_{e,P}.
\end{equation}
In particular, if $g \notin eH$, then
\begin{equation}
        \sum_{\mu \in \N^+(\lambda)} \frac{d_\mu}{n d_\lambda} [R_\mu(g)]^{Q}_{P} = 0.
\end{equation}
\end{lemma}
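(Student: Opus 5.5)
The plan is to use the unitarity of the induced transform $\IndMap[\lambda]$ from \cref{lemma: IndMap basis transform}. Assume, as the notation $[\cdot]^{e,Q}_{e,P}$ suggests, that the identity $e$ is the representative of the trivial coset in $\T$. The intertwining relation \cref{eq: IndMap intertwines induced rep} together with unitarity gives
\begin{equation*}
    \Ind_H^G\Rep_\lambda(g)
    =\IndMap[\lambda]^\dagger
     \left(\bigoplus_{\mu\in\N^+(\lambda)}\Rep_\mu(g)\right)
     \IndMap[\lambda].
\end{equation*}
I will take the $(e,P)\to(e,Q)$ matrix entry of both sides. The left-hand side is exactly the quantity on the right of the claimed identity.

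For the right-hand side, I first read off the relevant column of $\IndMap[\lambda]$ from \cref{def: IndMap}. Since $\Rep_\mu(e)$ is the identity, the column indexed by $(e,P)$ has entries
\begin{equation*}
    [\IndMap[\lambda]]^{S',\mu}_{e,P}
    =\sqrt{\tfrac{d_\mu}{d_\lambda|\T|}}\,\delta_{S,P\to\mu}.
\end{equation*}
So $\IndMap[\lambda]\ket{e,P}=\sum_{\mu}\sqrt{d_\mu/(n d_\lambda)}\,\ket{P\to\mu}$, with one term in each block $V_\mu$, $\mu\in\N^+(\lambda)$. The same holds for $Q$, and the entries are real, so taking the adjoint introduces no conjugation.

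Sandwiching $\bigoplus_\mu\Rep_\mu(g)$ between these two vectors gives
\begin{equation*}
    \sum_{\mu\in\N^+(\lambda)}\frac{d_\mu}{n d_\lambda}\,[\Rep_\mu(g)]^{Q\to\mu}_{P\to\mu},
\end{equation*}
because the direct sum is block diagonal and there are no cross-terms between different $\mu$. This is the first claim.

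For the second claim, I read off the entry from \cref{def: induced representation}. We have $\Ind\Rep_\lambda(g)\ket{e}\ket{P}=\ket{t_g}\Rep_\lambda(h_g)\ket{P}$ with $g\cdot e=t_gh_g$. If $g\notin eH=H$, then $t_g\neq e$, so the $\ket{e}$-component vanishes and the entry is $0$.

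The statement writes $[R_\mu(g)]^{Q}_{P}$ in the "in particular" clause. I read this as shorthand for the extended paths $P\to\mu$ and $Q\to\mu$, matching the first display.

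The only real subtlety is bookkeeping: the choice $e\in\T$, and the fact that unitarity of $\IndMap[\lambda]$ licenses the conjugation formula. The latter is exactly what \cref{lemma: IndMap basis transform} asserts. Everything else is a direct computation.
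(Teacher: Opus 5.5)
Your proposal is correct and follows essentially the paper's proof. Both use the unitarity and intertwining property of $\IndMap[\lambda]$ from \cref{lemma: IndMap basis transform}, together with the column formula $\IndMap[\lambda]\ket{e,P}=\sum_{\mu\in\N^+(\lambda)}\sqrt{d_\mu/(n d_\lambda)}\,\ket{P\to\mu}$ (valid because $\Rep_\mu(e)$ is the identity), to evaluate the $(e,P)\to(e,Q)$ entry of $\Ind_H^G\Rep_\lambda(g)$. Both then read off that this entry vanishes for $g\notin H$ directly from \cref{def: induced representation}. Your remarks that $e$ must lie in $\T$, and that $[\Rep_\mu(g)]^Q_P$ is shorthand for the extended paths $P\to\mu$ and $Q\to\mu$, match the conventions the paper uses implicitly.
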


\begin{proof}
    Recall from \cref{lemma: IndMap basis transform} that $\IndMap[\lambda]$ implements the basis transformation
    \begin{equation*}
        \Ind_H^G R_\lambda \cong \bigoplus_{\mu \in \N^+(\lambda)} R_\mu.
    \end{equation*}
    Using \cref{def: IndMap} of $\IndMap[\lambda]$, we see that the vectors supported on the identity coset decompose as
    \begin{equation*}
        \IndMap[\lambda]\ket{e,Q}
        =
        \sum_{\mu \in \N^+(\lambda)}
        \sqrt{\frac{d_\mu}{n d_\lambda}}\ket{Q \to \mu}.
    \end{equation*}
    Therefore, for every $g \in G$,
    \begin{align*}
        [\Ind_H^G R_\lambda(g)]^{e,Q}_{e,P} &= \bra{e, Q}\Ind_H^G R_\lambda(g)\ket{e,P}\\
        &= \left(\bra{e, Q}\mathrm{U}_{\Ind,\lambda}^\dagger\right)\left(\IndMap[\lambda]\Ind_H^G R_\lambda(g)\ket{e,P}\right)\\
        &= \left(\sum_{\mu \in \N^+(\lambda)}\sqrt{\frac{d_\mu}{n d_\lambda}}\bra{Q\to\mu} \right)\left(\sum_{\widetilde{\mu} \in \N^+(\lambda)}\sum_{\widetilde{Q} \in \Path(\widetilde{\mu})} \sqrt{\frac{d_{\widetilde{\mu}}}{nd_{\lambda}}}[\Rep_{\widetilde{\mu}}(g)]^{\widetilde{Q}'\to \widetilde{\mu}}_{P \to \widetilde{\mu}}\ket{\widetilde{Q}'\to \widetilde{\mu}}\right)\\
        &= \sum_{\mu \in \N^+(\lambda)}\frac{d_\mu}{n d_\lambda} [\Rep_{\mu}(g)]^{Q\to \mu}_{P \to \mu}.
    \end{align*}
    This proves the first identity.
    It remains to compute the same matrix coefficient directly from the induced action.
    Write
    \begin{equation*}
        g e = t h,
    \end{equation*}
    where $t$ is the chosen representative of the coset $gH$ and $h \in H$.
    By \cref{def: induced representation} of the induced representation,
    \begin{equation*}
        \Ind_H^G \Rep_\lambda(g)\ket{e,P}
        =
        \sum_{\widetilde{P} \in \Path(\lambda)} [\Rep_\lambda(h)]^{\widetilde{P}}_{P}\ket{t,\widetilde{P}}.
    \end{equation*}
    Hence
    \begin{equation}
        \bra{e,Q}\Ind_H^G \Rep_\lambda(g)\ket{e,P}
        =
        \begin{cases}
            [\Rep_\lambda(g)]^{Q}_{P}, & g\in eH,\\
            0, & g\notin eH.
        \end{cases}
    \end{equation}
    In particular, if $g \notin eH$, then
    \begin{equation*}
        \sum_{\mu \in \N^+(\lambda)}
        \frac{d_\mu}{n d_\lambda} [\Rep_\mu(g)]^{Q}_{P}
        =
        [\Ind_H^G R_\lambda(g)]^{e,Q}_{e,P}
        =
        0
    \end{equation*}
    as desired.\qedhere
\end{proof}

\begin{theorem}\label{thm:beals-induction-transform}
    The induced transform can be decomposed as
    \begin{equation}\label{eq:beals algorithm}
        \IndMap = \prod_{k=1}^n R(\tau_k) U V_k U^\dagger R^\dagger(\tau_k).
    \end{equation}
    The circuit implementing $\IndMap$ is shown in \cref{fig:beals-induction-circuit}.
\end{theorem}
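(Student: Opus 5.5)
We verify the identity \cref{eq:beals algorithm} on the input basis states $\ket{\tau_j}\ket{P}\ket{0}\ket{\lambda}$, with the irrep register $\ket{\lambda}$ acting only as a control throughout. Write $W_k \defeq R(\tau_k) U V_k U^\dagger R^\dagger(\tau_k)$. The goal is to show that the product $W_n\cdots W_1$ sends $\ket{\tau_j}\ket{P}\ket{0}\ket{\lambda}$ to $\ket{\star}\ket{\Psi_{j,P}}\ket{\lambda}$, as required by \cref{eq: Beals transform indmap action}. Since these images form an orthonormal family (by unitarity of $\IndMap[\lambda]$, \cref{lemma: IndMap basis transform}), this determines $\IndMap$ on the relevant subspace.

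The first step is to record how $W_k$ acts on the two kinds of states that occur. On a state $\ket{\tau_j}\ket{P}\ket{0}\ket{\lambda}$, the gates $R^\dagger(\tau_k)$ and $U^\dagger$ act trivially, because the transversal register is not $\star$. The gate $V_k$ acts trivially when $j\neq k$, so $W_k$ is the identity in that case. When $j=k$, $V_k$ produces $\ket{\star}\ket{P}\ket{0}\ket{\lambda}$. Then $U$ yields $\ket{\star}\ket{\Phi_{P,\lambda}}\ket{\lambda}$, and $R(\tau_k)$ yields $\ket{\star}\ket{\Psi_{k,P}}\ket{\lambda}$, directly from the definitions.

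The second step is the main one. We must show that $W_k$ fixes every state $\ket{\star}\ket{\Psi_{j,P}}\ket{\lambda}$ with $j\neq k$; this is where \cref{lemma:key_lemma_for_beals} enters. Apply $R^\dagger(\tau_k)$ to $\ket{\Psi_{j,P}}$. The result is a vector in $\bigoplus_\mu V_\mu$ whose component along $\ket{\star}\ket{\Phi_{Q,\lambda}}\ket{\lambda}$ equals
\begin{equation*}
\sum_{\mu}\frac{d_\mu}{nd_\lambda}\,[\Rep_\mu(\tau_k^{-1}\tau_j)]^{Q\to\mu}_{P\to\mu}.
\end{equation*}
Because $\tau_k^{-1}\tau_j\notin H$ for $j\neq k$, \cref{lemma:key_lemma_for_beals} shows this coefficient vanishes. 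Hence $R^\dagger(\tau_k)\ket{\Psi_{j,P}}$ is orthogonal to the range of $U$ restricted to $\ket{\star}\ket{\cdot}\ket{0}$. Consequently $U^\dagger$ maps it into $\ket{\star}\ket{\cdot}\ket{\perp}$, where $V_k$ acts as the identity, and the remaining gates then undo $U^\dagger R^\dagger(\tau_k)$. The delicate point is the freedom in defining $U$ on the $\perp$-inputs: we need $U$ to be a genuine unitary with $U^\dagger$ preserving the orthogonal complement of $\mathrm{range}(U|_{\ket{0}})$ inside the $\perp$-sector. This holds automatically, since a unitary maps the orthogonal complement of a subspace onto the orthogonal complement of its image.

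The last step is bookkeeping on the ordering of the product. Take an input $\ket{\tau_j}\ket{P}\ket{0}$. The factors $W_k$ with $k<j$ act as the identity, since the register still holds $\tau_j\neq\tau_k$. The factor $W_j$ converts the state to $\ket{\star}\ket{\Psi_{j,P}}$, and every factor $W_k$ with $k>j$ fixes it by the second step. The result follows in either ordering convention, and linearity finishes the proof.
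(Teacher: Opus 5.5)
Your proposal is correct and follows the paper's proof essentially step by step. The factors with non-matching index act trivially on $\ket{\tau_\ell}\ket{P}\ket{0}$, the matching factor produces $\ket{\star}\ket{\Psi_{\ell,P}}$, and the remaining factors fix this state because its image under $R^\dagger(\tau_k)$ has zero overlap with every $\ket{\star}\ket{\Phi_{Q,\lambda}}$ by \cref{lemma:key_lemma_for_beals}. You prove this invariance for all $k\neq j$, not only for $k>j$ as the paper does, which makes your argument independent of the ordering convention for the product.
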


\begin{figure}[H]
\centering
	{\small
		\begin{quantikz}[column sep=0.42cm]
		\lstick{$\ket{\lambda}$}
        &
        &
		&\ctrl{1}
		&\ctrl{1}
		&\gate[style=transparent]{\dots}
		&\ctrl{1}
		&\rstick{$\ket{\lambda}$}\\
        &\setwiretype{n}
        &
		\lstick{$\ket{0}$}
		&\gate[3]{W_1}\setwiretype{q}
		&\gate[3]{W_2}
		&\gate[style=transparent]{\dots}
		&\gate[3]{W_n}
		& \rstick{$\ket{\mu}$}\\
        \lstick{$\ket{P'\to\lambda}$}
		&
        &
        &
		&
		&\gate[style=transparent]{\dots}
		&
		& \rstick{$\ket{Q'}$}\\
        \lstick{$\ket{\tau_k}$}
		&
        &
        &
		&
		&\gate[style=transparent]{\dots}
		&
		&\rstick{$\ket{\star}$}
		\end{quantikz}

		\vspace{0.8cm}

		\begin{quantikz}[row sep=0.38cm,column sep=0.31cm]
			\lstick{$\ket{\lambda}$}
			&
            &\gategroup[4,steps=5]{$W_k$}
			&\ctrl{1}
			&
			&\ctrl{1}
			&
			&
            &\rstick{$\ket{\lambda}$}\\
            \lstick{$\ket{\mu}$, $\ket{0}$ or $\ket{\perp}$}
			&
            &\gate[2]{R^\dagger(\tau_k)}
			&\gate{U^\dagger}
			&\ctrl{2}\gategroup[1, style={transparent, inner sep=0cm},label style={yshift=-0.4cm}]{$\scriptstyle 0$}
			&\gate{U}
			& \gate[2]{R(\tau_k)}
			&
            &\rstick{$\ket{\mu}$, $\ket{0}$ or $\ket{\perp}$}\\
            \lstick{$\mathcal P$}
			&
            &
			&
			&
			&
			&
			&
            &\rstick{$\mathcal P$}\\
			\lstick{$T$}
            &
			& \ctrl{-1}\gategroup[1, style={transparent, inner sep=0cm},label style={label position=below,yshift=0cm}]{$\scriptstyle \star$}
			& \ctrl{-2}\gategroup[1, style={transparent, inner sep=0cm},label style={label position=below,yshift=0cm}]{$\scriptstyle \star$}
			& \gate[1]{V_k}
			& \ctrl{-2}\gategroup[1, style={transparent, inner sep=0cm},label style={label position=below,yshift=0cm}]{$\scriptstyle \star$}
			& \ctrl{-1}\gategroup[1, style={transparent, inner sep=0cm},label style={label position=below,yshift=0cm}]{$\scriptstyle \star$}
			&
            & \rstick{$T$}
		\end{quantikz}
}
	\caption{Circuit of Beals' algorithm for the induced transform from \cref{thm:beals-induction-transform}.
        The top circuit applies $W_1,W_2,\ldots,W_n$ from left to right.
        The lower circuit expands one block on the transversal register $T$, the path register $\mathcal P$ possibly extended with an extra node $\mu$ in the register above, and the remembered $H$-irrep label $\lambda$.
        On an input with transversal label $\tau_\ell$, only $W_\ell$ acts nontrivially.
        It replaces that basis state by the required superposition over extended paths, while the remaining blocks act as the identity by \cref{lemma:key_lemma_for_beals}.
}
	\label{fig:beals-induction-circuit}
\end{figure}

\begin{proof}
    Fix $\ell \in [n]$, $\lambda \in \widehat{H}$ and $P \in \Path(\lambda)$, which corresponds to the full set of basis vectors $\ket{\tau_\ell, P, 0, \lambda}$ spanning the domain of $\IndMap$.
    For the rest of this proof, we will not write the control register $\ket{\lambda}$.
    Recall from \cref{eq: Beals transform indmap action} that the induced transform maps this input to $\ket{\star}\ket{\Psi_{\ell, P}}$.
    We prove the theorem by showing that the decomposition on the RHS of \cref{eq:beals algorithm} also maps the input vectors to this state.

    In this proof, we use that the decomposition of $\IndMap$ only acts non-trivially when we are at step $k = \ell$ in the loop.
    Hence, we start by showing that
    \begin{equation}\label{eq: proof of Beals k lt l}
       \prod_{k=1}^{\ell-1} R(\tau_k) U V_k U^\dagger R^\dagger(\tau_k) \ket{\tau_\ell, P, 0} = \ket{\tau_\ell, P, 0}.
    \end{equation}
    Then, we show that at step $k = \ell$, we get the state
    \begin{equation}\label{eq: proof of Beals k eq l}
        R(\tau_\ell) U \mathrm{V}_\ell U^\dagger R^\dagger(\tau_\ell) \ket{\tau_\ell, P, 0}  = \ket{\star}\ket{\Psi_{\ell, P}}.
    \end{equation}
    Lastly, we show that
    \begin{equation}\label{eq: proof of Beals k gt l}
       \prod_{k=\ell+1}^{n} R(\tau_\ell) U V_k U^\dagger R^\dagger(\tau_k) \ket{\star}\ket{\Psi_{\ell, P}} = \ket{\star}\ket{\Psi_{\ell, P}}.
    \end{equation}

    First, to prove \cref{eq: proof of Beals k lt l}, it suffices to show that for all $k < \ell$, we have
    \begin{equation}\label{eq: Beals proof step k lt l}
         R(\tau_k) U V_k U^\dagger R^\dagger(\tau_k) \ket{\tau_\ell, P, 0} = \ket{\tau_\ell, P, 0}.
    \end{equation}
    We will follow the action of all gates on the LHS on the state $\ket{\tau_\ell, P, 0}$ in order.
    Recall that $R^\dagger(\tau_k)$ only acts non-trivially when the transversal register is $\ket{\star}$, which is not the case.
    Then, $U^\dagger$ acts trivially as well for the same reason.
    $V_k$ only acts on the subspace spanned by $\ket{\star}$ and $\ket{\tau_k}$, while $\ket{\tau_\ell}$ is in the orthogonal complement because $k < \ell$.
    Using the same arguments as before, $U$ and $R(\tau_k)$ also act trivially, which proofs \cref{eq: Beals proof step k lt l}.

    Second, to prove \cref{eq: proof of Beals k eq l}, note that the gates $R^\dagger(\tau_k)$ and $U^\dagger$ act trivially again.
    However, $\ket{\tau_k}$ is now changed to $\ket{\star}$ by $V_k$.
    Then, $U$ changes the state to
    \begin{equation*}
       U V_\ell U^\dagger R^\dagger(\tau_\ell) \ket{\tau_\ell, P, 0} = \ket{\star}\ket{\Phi_{P,\lambda}},
    \end{equation*}
    with $\ket{\Phi_{P,\lambda}}$ from \cref{eq:path extension superposition state}.
    Finally, $R(\tau_\ell)$ applies irreps at $\tau_\ell$ on the path register, obtaining the state $R(\tau_\ell)\ket{\star}\ket{\Phi_{P,\lambda}} = \ket{\star}\ket{\Psi_{\ell, P}}$.

    Third, to prove \cref{eq: proof of Beals k gt l}, we need to show that for all $k > \ell$, we have
    \begin{equation}\label{eq: Beals proof step k gt l}
        R_{\tau_k} U \mathrm{V}_\ell U^\dagger R^\dagger(\tau_\ell) \ket{\star}\ket{\Psi_{\ell,P}} = \ket{\star}\ket{\Psi_{\ell,P}}.
    \end{equation}
    The gate $R(\tau_k)$ changes the state to
    \begin{align}\label{eq: Beals state k eq l after L dag}
        R^\dagger(\tau_k) \ket{\star}\ket{\Psi_{\ell,P}} &= \ket{\star} \sum_{\mu \in \N^+(\lambda)}\sum_{Q \in \Path(\mu)} \sqrt{\frac{d_\mu}{nd_\lambda}}[\Rep_\mu(\tau^{-1}_k \tau_\ell)]^Q_{P\to\mu}\ket{Q',\mu},
    \end{align}
    where we used that $\Rep_\mu$ is a $G$-homomorphism, and that irreps are unitary so that $R^\dagger(\tau_k) = R(\tau_k^{-1})$.

    Next, we show that $U V_k U^\dagger$ acts as identity on the state $R^\dagger(\tau_k) \ket{\star}\ket{\Psi_{\ell,P}}$.
    Let $\mathcal{A}_{k,\lambda}$ denote the subspace where where $UV_k U^\dagger$ acts non-trivially, so that it suffices to show that the projection of $R^\dagger(\tau_k) \ket{\star}\ket{\Psi_{\ell,P}}$ onto $\mathcal{A}_{k,\lambda}$ results in the zero vector.
    If $V_k$ acts trivially inside $UV_k U^\dagger$, then $U$ and $U^\dagger$ cancel.
    Hence, $\mathcal{A}_{k,\lambda}$ is exactly the preimage of $U^\dagger$ of the subspace where $V_k$ acts non-trivially, so that
    \begin{equation*}
        \mathcal{A}_{k,\lambda} = U \cdot \Span \{\ket{\star, V, 0} \mid V \in \Path(\lambda)\} \sqcup \{\ket{\tau_k, V, 0} \mid V \in \Path(\lambda)\}.
    \end{equation*}
    By inserting the action of $U$ on basis vectors from \cref{def: U gate Beals}, we find $\mathcal{A}_{k}$ as
    \begin{align*}
        \mathcal{A}_{k,\lambda} = \Span \{\ket{\star}\ket{\Phi_{V,\lambda}}\mid V \in \Path(\lambda)\} \sqcup \{\ket{\tau_k, V, 0} \mid V \in \Path(\lambda)\}
    \end{align*}

    Now, consider the projection of $R^\dagger(\tau_k)\ket{\star}\ket{\Psi_{\ell,P}}$ from \cref{eq: Beals state k eq l after L dag} onto $\mathcal{A}_k$.
    Note that the transversal register is in $\ket{\star}$, so that we may neglect the part of $\mathcal{A}_{k}$ where the transversal register is $\ket{\tau_k}$.
    Denote this subpace by $\mathcal{A}_k^\star \subset \mathcal{A}_k$, with corresponding projector
    \begin{equation*}
        \Pi_{\mathcal{A}_k^\star} = \ket{\star}\bra{\star}\otimes \sum_{V \in \Path(\lambda)}\ket{\Phi_{V,\lambda}}\bra{\Phi_{V,\lambda}}.
    \end{equation*}
    We find that
    \begin{align*}
        \Pi_{\mathcal{A}_k^\star} R^\dagger(\tau_k)\ket{\star}\ket{\Psi_{\ell,P}}
        &= \ket{\star}\sum_{\mu \in \N^+(\lambda)}\sum_{Q \in \Path(\lambda)} \frac{d_{\mu}}{nd_{\lambda}}[\Rep_\mu(\tau^{-1}_k \tau_\ell)]^{Q\to\mu}_{P\to\mu}\ket{Q}\ket{\mu}.
    \end{align*}
    As $k > \ell$, we know that $\tau^{-1}_k \tau_\ell \notin H$.
    Hence, by \cref{lemma:key_lemma_for_beals}, we find that
    \begin{equation*}
        \Pi_{\mathcal{A}_k} R^\dagger(\tau_k)\ket{\Psi_{\ell,P}} = 0,
    \end{equation*}
    so that $UV_kU^\dagger$ acts trivially.
    Finally, applying $R(\tau_k)$ transforms $R^\dagger(\tau_k)\ket{\Psi_{\ell,P}}$ back to $\ket{\Psi_{\ell,P}}$, showing that \cref{eq: Beals proof step k gt l} is correct, which concludes our proof.
\end{proof}

\section{Mackey algorithm for induced transform}\label{sec: Mackey transform}

The aim of this section is to present a new way of constructing $\U_{\Ind, \lambda}$ using Mackey theory. We give definitions of the smaller unitary operations, but refrain from implementation details as they differ case by case.

For subgroups $H,K\leq G$, the $(H,K)$-double coset of $g\in G$ is $HgK=\{hgk:h\in H,\ k\in K\}$. The double cosets partition $G$, and an $(H,K)$-double-coset transversal is a set containing one representative of each part.

\begin{lemma}[Mackey decomposition, \protect{\cite[ch.~7]{serre}}]\label{lemma: Mackey}
    Let $H$ and $K$ be subgroups of a finite group $G$, and let $\Omega\subseteq G$ contain exactly one representative of each $(H,K)$-double coset. For any $\C[K]$-module $V$, with corresponding matrix representation $\Rep:K\to\GL(V)$, there is an isomorphism of $\C[H]$-modules
    \begin{equation}
        \Res^G_H \Ind^G_K V \cong \bigoplus_{\omega \in \Omega} \Ind_{H_\omega}^H V^{\omega}.
    \end{equation}
    Here $H_\omega\defeq H\cap\omega K\omega^{-1}$, and $V^\omega$ is the $\C[H_\omega]$-module with underlying vector space $V$ and action
    \begin{equation}
        \Rep^\omega(h)\ket{P}
        \defeq \Rep(\omega^{-1}h\omega)\ket{P},
        \qquad h\in H_\omega.
    \end{equation}
\end{lemma}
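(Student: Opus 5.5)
We will prove the Mackey decomposition by writing down an explicit $H$-equivariant linear isomorphism, working with the concrete model of the induced module from \cref{def: induced representation}. Recall that $\Ind_K^G V=\C[\T]\otimes V$, where $\T$ is a left transversal for $K$ in $G$. Equivalently, $\Ind_K^G V$ is the direct sum over the left cosets $gK$ of copies $gV$ of $V$, and $G$ permutes these copies according to its action on $G/K$.

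The first step is to decompose this coset set under the left action of $H$. The $H$-orbits on $G/K$ are exactly the images of the double cosets $H\omega K$, for $\omega\in\Omega$. Hence, as an $H$-module,
\[
\Res^G_H\Ind^G_K V=\bigoplus_{\omega\in\Omega}W_\omega,
\qquad
W_\omega\defeq\bigoplus_{gK\subseteq H\omega K} gV .
\]
Each $W_\omega$ is $H$-stable, because $H$ maps the cosets inside $H\omega K$ to cosets inside $H\omega K$.

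The second step identifies each summand. The stabilizer in $H$ of the coset $\omega K$ is
\[
\{h\in H: h\omega K=\omega K\}=H\cap\omega K\omega^{-1}=H_\omega .
\]
So the orbit map $hH_\omega\mapsto h\omega K$ is a bijection $H/H_\omega\to H\omega K/K$. We then choose the transversal $\T$ compatibly with this bijection. Fix a left transversal $S_\omega$ of $H_\omega$ in $H$, and take the elements $s\omega$ with $s\in S_\omega$ as the representatives of the cosets in $H\omega K$. Changing $\T$ only changes $\Ind_K^G V$ up to isomorphism, so this choice is allowed.

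With this transversal, define
\[
\Phi_\omega:\C[S_\omega]\otimes V^\omega\to W_\omega,\qquad \ket{s}\otimes\ket{P}\mapsto\ket{s\omega}\otimes\ket{P}.
\]
This is a linear bijection, since both sides are indexed by the same cosets with the same fiber $V$.

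The key verification is that $\Phi_\omega$ intertwines the two actions of $h\in H$. For $s\in S_\omega$, factor $hs=s'h'$ with $s'\in S_\omega$ and $h'\in H_\omega$. Then
\[
h(s\omega)=s'h'\omega=(s'\omega)(\omega^{-1}h'\omega),
\qquad \omega^{-1}h'\omega\in K,
\]
where $\omega^{-1}h'\omega\in K$ because $h'\in\omega K\omega^{-1}$. Thus the factorization demanded by \cref{def: induced representation} for $G\supseteq K$ is $t_h=s'\omega$ and $k_h=\omega^{-1}h'\omega$. The induced action therefore sends $\ket{s\omega}\otimes\ket{P}$ to $\ket{s'\omega}\otimes \Rep(\omega^{-1}h'\omega)\ket{P}$. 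This is exactly $\Phi_\omega$ applied to the action of $\Ind_{H_\omega}^H V^\omega$ on $\ket{s}\otimes\ket{P}$, namely $\ket{s'}\otimes\Rep^\omega(h')\ket{P}$.

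As a preliminary check, $\Rep^\omega$ is a representation of $H_\omega$: conjugation by $\omega^{-1}$ maps $H_\omega$ homomorphically into $K$. Summing the maps $\Phi_\omega$ over $\omega\in\Omega$ then gives the claimed isomorphism of $\C[H]$-modules.

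The main obstacle is bookkeeping rather than mathematics. We must make sure the factorization $h(s\omega)=t_h k_h$ uses the transversal we actually fixed, and that the dimensions match, i.e. $|H\omega K/K|=[H:H_\omega]$. The compatible choice of transversal in the second step handles both. As a sanity check, comparing total dimensions gives $\sum_\omega [H:H_\omega]=[G:K]$.
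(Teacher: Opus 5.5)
Your proof is correct, and it is the standard argument behind the reference the paper gives in place of a proof (Serre, Chapter~7). That argument splits $\Ind_K^G V$ along the $H$-orbits on $G/K$, which are the double cosets $H\omega K$, and identifies each block as induced from the stabilizer $H_\omega$ of $\omega K$. The only difference is that you make the identification explicit by choosing the compatible transversal $\bigsqcup_{\omega\in\Omega}S_\omega\omega$, instead of invoking the abstract characterization of induced modules. This is exactly the transversal $\mathcal T=\bigsqcup_{\omega\in\Omega}\mathcal T_\omega\omega$ the paper fixes afterwards, so your $\bigoplus_\omega\Phi_\omega^{-1}$ is precisely the relabeling $\ket{t}\mapsto\ket{t_\omega}\ket{\omega}$ performed by $\Enc$ in the Mackey circuit.
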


For the remainder of this section we specialize to $K=H$. We fix a double-coset transversal $\Omega$ for $H\backslash G/H$, put $H_\omega=H\cap\omega H\omega^{-1}$, choose a left transversal $\mathcal T_\omega$ for $H_\omega$ in $H$, and use the resulting left transversal
\begin{equation*}
    \mathcal T=\bigsqcup_{\omega\in\Omega}\mathcal T_\omega\omega
\end{equation*}
for $H$ in $G$. For the rest of the section we also assume that the branching decompositions $\Ind_H^G V_\lambda$ and $\Ind_{H_\omega}^H V_\kappa$, and the irreducible decompositions of the twisted $H_\omega$-modules $V_\lambda^\omega$, are multiplicity-free. We also fix Gelfand--Tsetlin bases for these decompositions.

\subsection{Circuit for \texorpdfstring{$\U_{\Ind, \lambda}$}{U_ind, lambda}}
\label{subsec: Mackey circuit}

As a map of $\C[G]$-modules there is an isomorphism $\Ind_H^G V_\lambda \xrightarrow{\sim}{} \bigoplus_{\mu\in \N^+(\lambda)} V_{\mu}$. When restricted to a map of $\C[H]$-modules, we can apply Mackey decomposition to obtain a commutative diagram of isomorphic $\C[H]$-modules given below.
For $\widetilde\lambda\in\widehat H$, define the two multiplicity-label sets
\begin{align}
    \mathcal M_{\lambda,\widetilde\lambda}
    &\defeq \N^+(\lambda)\cap\N^+(\widetilde\lambda),\\
    \mathcal K_{\lambda,\widetilde\lambda}
    &\defeq
    \left\{(\omega,\kappa):\omega\in\Omega,
    \ \kappa\in\N(\lambda^\omega)\cap
    \N^-_{H_\omega\leq H}(\widetilde\lambda)\right\}.
    \label{eq: Mackey multiplicity label sets}
\end{align}
\begin{figure}[H]
    \centering
    \begin{tikzcd}[column sep=3cm, row sep=1.2cm, nodes={inner sep=3pt}]
    \setwiretype{n}
        \Res^{G}_{H}\Ind^{G}_{H} V_\lambda
        \dar{\text{Mackey}}
        \rar{\U_{\Ind, \lambda}}
        &
        \displaystyle
        \bigoplus_{\widetilde{\lambda} \in \widehat{H}}
        \C[\mathcal M_{\lambda,\widetilde\lambda}]
        \otimes V_{\widetilde{\lambda}}
        \\
    \setwiretype{n}
        \displaystyle
        \bigoplus_{\omega \in \Omega} \Ind^{H}_{H_\omega} V_\lambda^{\omega}
        \dar{\oplus_{\omega} \U_{\tw}^\omega}
        &
        \displaystyle
        \bigoplus_{\widetilde{\lambda} \in \widehat{H}}
        \C[\mathcal K_{\lambda,\widetilde\lambda}]
        \otimes V_{\widetilde{\lambda}}
        \uar{A_\lambda}
        \\
    \setwiretype{n}
        \displaystyle
        \bigoplus_{\omega \in \Omega} \bigoplus_{\kappa \in \N(\lambda^\omega)} \Ind_{H_\omega}^H V_{\kappa}
        \rar{\oplus_{\omega}\oplus_{\kappa} \U_{\Ind, \kappa}^{\omega}}
        &
        \displaystyle
        \bigoplus_{\omega \in \Omega}
        \bigoplus_{\kappa \in \N(\lambda^\omega)}
        \bigoplus_{\widetilde{\lambda} \in \N^+(\kappa)}
        V_{\widetilde{\lambda}}
        \uar{}
    \end{tikzcd}
    \caption{A diagram of $\C[H]$-modules suggesting an implementation of $\U_{\Ind, \lambda}$}
    \label{diag: C[H]-modules}
\end{figure}

The top arrow is the map we want to construct. The same map (as $\C[H]$-modules) is obtained by following the other path in the diagram.
\begin{itemize}
    \item First we apply Mackey decomposition to obtain a direct sum over all double coset representatives of twisted representations.
    \item Since each $V_{\lambda}^\omega$ is a representation of $H_\omega$, it decomposes into irreducible parts via an intertwiner we call $\U_{\tw}^\omega$.
    \item After this ``untwisting'' we apply the induction operations on the smaller subgroups $H_{\omega}$ for each $\omega\in \Omega$.
    \item Rewriting the obtained module gives a decomposition over $\widetilde{\lambda}\in \widehat{H}$ with certain multiplicities.
    \item The middle-right module is equivalent to the target of $\U_{\Ind,\lambda}$, so there is an intertwiner $A_\lambda$ between them. By Schur's lemma it acts only on the multiplicity spaces
    \begin{equation}
        A_\lambda=
        \bigoplus_{\widetilde\lambda\in\widehat H}
        A_{\lambda,\widetilde\lambda}\otimes I_{V_{\widetilde\lambda}}.
    \end{equation}
\end{itemize}

We use the steps and accompanying diagram of \cref{diag: C[H]-modules} and turn them into a quantum algorithm. Each of the operators in the algorithm corresponds to a step in \cref{diag: C[H]-modules}. The operators need to be defined formally, which is done below. Note that we need define the $A$-matrix in such a way that we get an equivalence of $\C[G]$-modules and not just over the subalgebra.

\begin{definition}\label{def: Mackey operators}
    We define the following operators.
    \begin{itemize}
    \item The encoding of a transversal element into its uniquely determined double-coset and inner-transversal labels is
    \begin{align}
         \Enc:\C[\mathcal T]
         &\longrightarrow
         \bigoplus_{\omega\in\Omega}
         \C[\mathcal T_\omega]\otimes\C\ket{\omega},\nonumber\\
        \ket{t}
        &\longmapsto \ket{t_\omega}\ket{\omega},
        \qquad t=t_\omega\omega.
    \end{align}

    \item For each $\omega \in \Omega$, we define $\U_{\tw}^\omega$ as the unitary operator that decomposes the twisted representation $V_\lambda^{\omega}$ into irreducible parts. It is the intertwiner between the $\C[H_\omega]$-modules $V_{\lambda^\omega} \xrightarrow{\sim} \oplus_{\kappa \in \N(\lambda^\omega)} V_{\kappa}$. If we denote by $R_{\rho}$ the action of representation $\rho$ on $V_{\rho}$, then $\U_{\tw}^\omega$ satisfies
    \begin{equation}
        \U_{\tw}^\omega~R_\lambda^\omega~{\U_{\tw}^\omega}^\dagger  =\sum_{\kappa\in \N(\lambda^\omega)} \ket{\kappa}\bra{\kappa} \otimes R_\kappa.
    \end{equation}
    The total twisting operator is given by
    \begin{equation}\U_{\tw} = \sum_{\lambda \in \widehat{H}}\sum_{\omega\in \Omega} \ket{\lambda}\bra{\lambda}\otimes \U_{\tw}^\omega \otimes \ket{\omega}\bra{\omega}.
    \end{equation}
    \item Controlled on $(\omega,\kappa)$, the induction operator $\U_{\Ind,\kappa}^{\omega}$ is the induced transform of \cref{def: IndMap} for the inclusion $H_\omega\leq H$.
    \item Lastly, the operator $A=\sum_{\lambda,\widetilde\lambda\in\widehat H} |\lambda\rangle\langle\lambda|\otimes A_{\lambda,\widetilde\lambda}\otimes |\widetilde\lambda\rangle\langle\widetilde\lambda|$ is defined on multiplicity spaces by
    \begin{align}
        \left[A_{\lambda,\widetilde\lambda}\right]^{\mu}_{\kappa,\omega}
        &=\sqrt{\frac{d_\mu[H:H_\omega]}
        {d_\kappa d_\lambda d_{\widetilde\lambda}[G:H]}}
        \sum_{\substack{P\to\lambda\\
                         \widetilde P\to\kappa\to\widetilde\lambda}}
        \left[R_\mu(\omega)\right]^
        {\widetilde P\to\kappa\to\widetilde\lambda\to\mu}_
        {P\to\lambda\to\mu}
        \left[(U_{\tw}^{\omega})^\dagger\right]^P_{
        \widetilde P,\kappa}.
        \label{eq: corrected general A coefficient}
    \end{align}
    \end{itemize}
\end{definition}

\begin{algorithm}[H]
\caption{Mackey implementation of the induced transform}
\label{alg: Mackey induced transform}
On input $\ket{\lambda}\ket{P}\ket{t}$, where $\lambda\in\widehat H$, $P\in\Path(\lambda)$, and $t\in\mathcal T$, perform the following operations.
\begin{enumerate}
    \item Apply $\Enc$ to compute the unique factorization $t=t_\omega\omega$ by
    \begin{equation*}
        \ket{t}\longmapsto\ket{t_\omega}\ket{\omega}.
    \end{equation*}
    \item Controlled on $(\lambda,\omega)$, apply $\U_{\tw}^\omega$ to decompose the twisted restriction
    \begin{equation*}
        \ket{P}\longmapsto
        \sum_{\kappa,\widetilde P}
        [\U_{\tw}^\omega]^{\widetilde P,\kappa}_{P}
        \ket{\kappa}\ket{\widetilde P}.
    \end{equation*}
    \item Controlled on $(\omega,\kappa)$, apply the smaller induced transform $\U_{\Ind,\kappa}^{\omega}$ to $\ket{t_\omega}\ket{\widetilde P}$. Denote its output by $\ket{\widetilde\lambda}\ket{Q}$.
    \item Permute the registers as in \cref{fig: Mackey circuit} and, controlled on $(\lambda,\widetilde\lambda)$, apply $A_{\lambda,\widetilde\lambda}$ via
    \begin{equation*}
        \ket{\kappa}\ket{\omega}
        \longmapsto
        \sum_{\mu\in\mathcal M_{\lambda,\widetilde\lambda}}
        [A_{\lambda,\widetilde\lambda}]^{\mu}_{\kappa,\omega}
        \ket{\mu}.
    \end{equation*}
\end{enumerate}
The output labels $\ket{\mu}\ket{Q\to\widetilde\lambda\to\mu}$ encode the target subgroup adapted basis, while $\lambda$ is retained as a control label.
\end{algorithm}

The quantum circuit of this algorithm is given below. We prove that it indeed implements the induced transform.

\begin{figure}[H]
\centering
    \begin{quantikz}[wire types={q,n,q,n,n,q,n}]
    \lstick{\ket{\lambda}} & & & \ctrl{1} & & & & & \ctrl{1} & \rstick{\ket{\lambda}}
    \\
    & & & \gate[3]{\U_{\tw}^{\omega}} & \wire[l][1]["\ket{\kappa}"{above,pos=0.5}]{q} \setwiretype{q} & \ctrl{2} & & & \gate[3]{A_{\lambda, \widetilde{\lambda}}} & \setwiretype{n}
    \\
    \lstick{\ket{P}}& & & & \setwiretype{n} & & & & & \setwiretype{q} \rstick{\ket{\mu}}
    \\
    & & & & \wire[l][1]["\ket{\widetilde{P}}"{above,pos=0.5}]{q} \setwiretype{q} & \gate[2]{\U_{\Ind, \kappa}^{\omega}} & \wire[l][1]["\ket{\widetilde{\lambda}}"{above,pos=0.5}]{q} & \permute{2,4} & & \setwiretype{n}
    \\
    & \gate[3]{\Enc} & \wire[l][1]["\ket{t_\omega}"{above,pos=0.4}]{q}\setwiretype{q} & & & & \wire[l][1]["\ket{Q}"{above,pos=0.5}]{q} & & \ctrl{-1} & \rstick{\ket{\widetilde{\lambda}}}
    \\
    \lstick{\ket{t}}& & \setwiretype{n} & & & & & & &
    \\
    & & \wire[l][1]["\ket{\omega}"{above,pos=0.5}]{q}\setwiretype{q} & \ctrl{-3} & & \ctrl{-2} & & \permute{-2} & & \rstick{\ket{Q}}
    \end{quantikz}
    \caption{Circuit to implement $\U_{\Ind, \lambda}$ via Mackey decomposition.}
    \label{fig: Mackey circuit}
\end{figure}

\begin{theorem}\label{thm: Mackey induced circuit}
    \cref{alg: Mackey induced transform}, and equivalently the circuit in \cref{fig: Mackey circuit}, correctly implements $\U_{\Ind,\lambda}$. Moreover, every $A_{\lambda,\widetilde\lambda}$ is unitary on the corresponding multiplicity space.
\end{theorem}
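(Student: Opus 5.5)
The plan is to compare the circuit with $\U_{\Ind,\lambda}$ coefficient by coefficient on basis states, as in the proofs of \cref{lemma: IndMap basis transform} and \cref{lemma: induction step of first induction}. The main tool will be that $\U_{\Ind,\lambda}$ is a $G$-intertwiner. Write $C_\lambda$ for the composite of \cref{alg: Mackey induced transform}. The target coefficient, from \cref{def: IndMap}, is
\begin{equation*}
\bra{Q\to\widetilde\lambda\to\mu}\U_{\Ind,\lambda}\ket{t_\omega\omega,P}=\sqrt{\tfrac{d_\mu}{d_\lambda[G:H]}}\,[R_\mu(t_\omega\omega)]^{Q\to\widetilde\lambda\to\mu}_{P\to\lambda\to\mu}.
\end{equation*}
First factor $R_\mu(t_\omega\omega)=R_\mu(t_\omega)R_\mu(\omega)$. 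Since $t_\omega\in H$ and the basis is subgroup adapted, $R_\mu(t_\omega)$ is block diagonal: it acts as $R_{\widetilde\lambda}(t_\omega)$ on the $\widetilde\lambda$-block and preserves the last edge $\widetilde\lambda\to\mu$. The target coefficient therefore becomes
\begin{equation*}
\sum_{S\in\Path(\widetilde\lambda)}[R_{\widetilde\lambda}(t_\omega)]^{Q}_{S}\,[R_\mu(\omega)]^{S\to\mu}_{P\to\lambda\to\mu}.
\end{equation*}

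Next, compute the circuit coefficient. Steps 1--3 send $\ket{t,P}$ to
\begin{equation*}
\sum_{\kappa,\widetilde P}[\U_{\tw}^\omega]^{\widetilde P,\kappa}_P\,\sqrt{\tfrac{d_{\widetilde\lambda}}{d_\kappa[H:H_\omega]}}\,[R_{\widetilde\lambda}(t_\omega)]^{Q}_{\widetilde P\to\kappa\to\widetilde\lambda}\,\ket{\omega,\kappa,\widetilde\lambda,Q}.
\end{equation*}
Step 4 then multiplies by $[A_{\lambda,\widetilde\lambda}]^\mu_{\kappa,\omega}$. Substitute \eqref{eq: corrected general A coefficient} and collect the scalar prefactors:
\begin{equation*}
\sqrt{\tfrac{d_{\widetilde\lambda}}{d_\kappa[H:H_\omega]}}\cdot\sqrt{\tfrac{d_\mu[H:H_\omega]}{d_\kappa d_\lambda d_{\widetilde\lambda}[G:H]}}=\tfrac{1}{d_\kappa}\sqrt{\tfrac{d_\mu}{d_\lambda[G:H]}}.
\end{equation*}
This leaves an extra factor $1/d_\kappa$ together with an extra sum over the dummy variables $P',\widetilde P'$ appearing in $A$. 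The key step is to show that this extra sum, divided by $d_\kappa$, collapses onto the specific $P,\widetilde P$ of the input. To do this, observe that the map $\Phi:V_\kappa\to V_\mu$ defined by
\begin{equation*}
\ket{\widetilde P}\mapsto\sum_{P'}[(\U_{\tw}^{\omega})^\dagger]^{P'}_{\widetilde P,\kappa}\,R_\mu(\omega)\ket{P'\to\lambda\to\mu}
\end{equation*}
is an $H_\omega$-intertwiner. This holds because $R_\mu(\omega)$ intertwines $V_\lambda^\omega\subset\Res V_\mu$ twisted by $\omega$ with $R_\mu$ restricted to $H_\omega$, and $\U_{\tw}^\omega$ is an intertwiner. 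Projecting onto the copy of $V_\kappa$ inside $V_{\widetilde\lambda}\subset V_\mu$ indexed by the path $\cdots\to\kappa\to\widetilde\lambda\to\mu$ gives a scalar multiple of the identity, by Schur's lemma and multiplicity-freeness. The trace formula, which is what the sum in $A$ computes, identifies that scalar as $d_\kappa$ times the coefficient. Hence $[A]\cdot d_\kappa^{-1}$ equals the individual matrix element, uniformly in $\widetilde P$. Summing over $\kappa,\widetilde P$ and using $\sum_\kappa\sum_{\widetilde P}\ket{\widetilde P\to\kappa\to\widetilde\lambda}\bra{\ldots}=\mathrm{I}_{V_{\widetilde\lambda}}$ then reconstructs
\begin{equation*}
\sum_S[R_{\widetilde\lambda}(t_\omega)]^Q_S\,[R_\mu(\omega)]^{S\to\mu}_{P\to\lambda\to\mu}.
\end{equation*}
Here the input vector $\U_{\tw}^\omega\ket P$ combined with the inverse $(\U_{\tw}^\omega)^\dagger$ inside $\Phi$ gives back $\ket P$. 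This is the identity proved in the first paragraph, so $C_\lambda=\U_{\Ind,\lambda}$.

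Unitarity of $A_{\lambda,\widetilde\lambda}$ then follows formally. The full circuit equals the unitary $\U_{\Ind,\lambda}$, and $\Enc$, $\U_{\tw}$ and $\U^\omega_{\Ind,\kappa}$ are unitary. Hence $A=\U_{\Ind,\lambda}(\ldots)^\dagger$ is unitary on the image of those operations. By \cref{lemma: Mackey}, that image is all of $\bigoplus_{\widetilde\lambda}\C[\mathcal K_{\lambda,\widetilde\lambda}]\otimes V_{\widetilde\lambda}$, and since $A$ has the form $\bigoplus_{\widetilde\lambda}A_{\lambda,\widetilde\lambda}\otimes\mathrm I$, each block $A_{\lambda,\widetilde\lambda}$ is unitary. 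Because both sides are unitary, dimension counting gives $|\mathcal K_{\lambda,\widetilde\lambda}|=|\mathcal M_{\lambda,\widetilde\lambda}|$, so $A_{\lambda,\widetilde\lambda}$ is square.

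I expect the main obstacle to be the Schur-lemma collapse: verifying that $\Phi$ is an $H_\omega$-intertwiner with the correct twisting direction ($\omega^{-1}h\omega$ versus $h$), and checking that the normalization in \eqref{eq: corrected general A coefficient} matches exactly. If the bookkeeping fails, I would first test the formula on the identity double coset $\omega=e$, where $H_e=H$, $\U_{\tw}^e=\mathrm I$ and $A$ is a single entry.
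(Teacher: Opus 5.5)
Your argument is correct, but it runs in the opposite direction from the paper's. The paper proves $A_\lambda=\U_{\Ind,\lambda}N_\lambda^\dagger$, where $N_\lambda$ denotes the first three steps. It writes out the entries of $\U_{\Ind,\lambda}$ and of $N_\lambda$. It then extracts $A_{\lambda,\widetilde\lambda}$ as a partial trace over the $V_{\widetilde\lambda}$ register; this presupposes, via Schur's lemma for $H$ as in the commutative-diagram discussion, that $\U_{\Ind,\lambda}N_\lambda^\dagger$ already has the form $\bigoplus_{\widetilde\lambda}X_{\widetilde\lambda}\otimes I$. Finally it collapses the $Q$-sum by unitarity of $R_{\widetilde\lambda}(t_\omega)$, and the $t_\omega$-sum becomes the factor $[H:H_\omega]$.

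You instead check $A_\lambda N_\lambda=\U_{\Ind,\lambda}$ entrywise at fixed $t_\omega$. There $R_{\widetilde\lambda}(t_\omega)$ passes through untouched and $[H:H_\omega]$ cancels in the prefactors. All the work moves to Schur's lemma for $H_\omega$: $\Pi_\kappa\Phi_\kappa=c_\kappa I$, where $\Phi_\kappa$ is your map on the $\kappa$-summand and $\Pi_\kappa$ projects $V_\mu$ onto the copy of $V_\kappa$ spanned by paths ending in $\kappa\to\widetilde\lambda\to\mu$. Hence the trace sum in \cref{eq: corrected general A coefficient} equals $d_\kappa c_\kappa$. Your route gives the entries of $A$ a clean meaning as normalized $H_\omega$-Schur scalars, and it never needs the global block structure of $\U_{\Ind,\lambda}N_\lambda^\dagger$. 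The paper's computation is shorter once that structure is granted.

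One step you should state explicitly. Your claim that $\U^\omega_{\tw}$ followed by its adjoint gives back $\ket P$ identifies the $\kappa$ of the untwisting register with the $\kappa$ of the output path. That identification also needs $\Pi_\kappa\Phi_{\kappa'}=0$ for $\kappa'\neq\kappa$, so that $[R_\mu(\omega)]^{\widetilde P\to\kappa\to\widetilde\lambda\to\mu}_{P\to\lambda\to\mu}=c_\kappa[\U^\omega_{\tw}]^{\widetilde P,\kappa}_P$. This is the same Schur lemma applied to inequivalent irreducibles, so it costs nothing. Your unitarity argument is the same as the paper's.
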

\begin{proof}
	Let $N_{\lambda}$ denote the first three steps of \cref{alg: Mackey induced transform}, namely the circuit up to, but not including $A_\lambda$. By the standing assumptions, $N_\lambda$ and $\U_{\Ind,\lambda}$ are unitary. It therefore suffices to prove
	\begin{equation}
		A_\lambda=\U_{\Ind,\lambda}N_\lambda^\dagger,
		\label{eq: Mackey target identity}
	\end{equation}
	because right-multiplication by $N_\lambda$ then gives $A_\lambda N_\lambda=\U_{\Ind,\lambda}$. Equality in \cref{eq: Mackey target identity} also proves, rather than assumes, that $A_\lambda$ is unitary. Since $A_\lambda$ is block diagonal as in \cref{def: Mackey operators}, every block $A_{\lambda,\widetilde\lambda}$ is unitary on its multiplicity space.

    We now compute the matrix entries of the right-hand side of \cref{eq: Mackey target identity}. Fix $\omega\in\Omega$, $\kappa\in\widehat{H_\omega}$, $\widetilde\lambda\in\widehat H$, and $\mu\in \widehat{G}$.  Let $t_\omega$ run through the chosen transversal $\mathcal T_\omega$ of $H_\omega$ in $H$.  For $P\in\Path(\lambda)$ and $Q\in\Path(\widetilde\lambda)$, the definition of the direct induced transform gives
    \begin{align*}
        &\langle\mu,\widetilde\lambda,Q|
        \U_{\Ind,\lambda}|t_\omega\omega,P\rangle \\
        &\qquad=
        \sqrt{\frac{d_\mu}{d_\lambda[G:H]}}\,
        [R_\mu(t_\omega\omega)]^{
            Q\to\widetilde\lambda\to\mu}_{
            P\to\lambda\to\mu}.
    \end{align*}
    On the other hand, untwisting followed by induction from $H_\omega$ to $H$ gives
    \begin{align*}
        &\langle\kappa,\omega,\widetilde\lambda,Q|
        N_\lambda|t_\omega\omega,P\rangle \\
        &\qquad=
        \sqrt{\frac{d_{\widetilde\lambda}}
        {d_\kappa[H:H_\omega]}}
        \sum_{\widetilde P\in\Path(\kappa)}
        [R_{\widetilde\lambda}(t_\omega)]^Q_{
            \widetilde P\to\kappa\to\widetilde\lambda}
        [U_{\tw}^{\omega}]^{\widetilde P,\kappa}_P.
    \end{align*}

		As $A_{\lambda}$ does not depend on the lowest register in \cref{fig: Mackey circuit} ($V_{\widetilde{\lambda}}$ labeled $\ket{Q}$), we take a partial trace over it, giving
    \begin{equation}\label{eq: A partial trace}
        \left[A_{\lambda,\widetilde\lambda}\right]^{\mu}_{\kappa,\omega}
        =\frac1{d_{\widetilde\lambda}}
        \sum_Q
        \langle\mu,\widetilde\lambda,Q|
        \U_{\Ind,\lambda} N_\lambda^\dagger|
        \kappa,\omega,\widetilde\lambda,Q\rangle.
    \end{equation}
    Substituting these two matrix-element formulas into \cref{eq: A partial trace} yields
    \begin{align*}
        \left[A_{\lambda,\widetilde\lambda}\right]^{\mu}_{\kappa,\omega}
        &=
        \frac1{d_{\widetilde\lambda}}
        \sqrt{\frac{d_\mu}{d_\lambda[G:H]}}
        \sqrt{\frac{d_{\widetilde\lambda}}
        {d_\kappa[H:H_\omega]}} \\
        &\quad{}\cdot
        \sum_{\substack{Q,\ t_\omega\in\mathcal T_\omega\\
                        P\in\Path(\lambda),\
                        \widetilde P\in\Path(\kappa)}}
        [R_\mu(t_\omega\omega)]^{
            Q\to\widetilde\lambda\to\mu}_{
            P\to\lambda\to\mu}
        \overline{
        [R_{\widetilde\lambda}(t_\omega)]^Q_{
            \widetilde P\to\kappa\to\widetilde\lambda}}
        [(U_{\tw}^{\omega})^\dagger]^P_{
            \widetilde P,\kappa},
    \end{align*}
   Where the sum is over four variables. Since the bases are subgroup adapted, $R_\mu(t_\omega)$ acts as $R_{\widetilde\lambda}(t_\omega)$ on $V_{\widetilde\lambda}$ so that
    \begin{align*}
        [R_\mu(t_\omega\omega)]^{
            Q\to\widetilde\lambda\to\mu}_{
            P\to\lambda\to\mu}
        &=
        \sum_{R\in\Path(\widetilde\lambda)}
        [R_{\widetilde\lambda}(t_\omega)]^Q_R
        [R_\mu(\omega)]^{
            R\to\widetilde\lambda\to\mu}_{
            P\to\lambda\to\mu}.
    \end{align*}
    For each $t_\omega$, unitarity of $R_{\widetilde\lambda}(t_\omega)$ gives
    \begin{align*}
        \sum_Q
        [R_{\widetilde\lambda}(t_\omega)]^Q_R
        \overline{
        [R_{\widetilde\lambda}(t_\omega)]^Q_{
            \widetilde P\to\kappa\to\widetilde\lambda}}
        =
        \delta_{R,\,
        \widetilde P\to\kappa\to\widetilde\lambda}.
    \end{align*}
    Hence the sums over $Q$ and $R$ collapse.  The remaining summand no longer depends on $t_\omega$, thus that sum gives a factor $|\mathcal T_\omega|=[H:H_\omega]$.  The total scalar coefficient is therefore
    \begin{align*}
        &\frac{[H:H_\omega]}{d_{\widetilde\lambda}}
        \sqrt{\frac{d_\mu d_{\widetilde\lambda}}
        {d_\lambda d_\kappa[G:H][H:H_\omega]}}
        \\
        &\qquad=
        \sqrt{\frac{d_\mu[H:H_\omega]}
        {d_\kappa d_\lambda d_{\widetilde\lambda}[G:H]}}.
    \end{align*}
    The remaining path sum is
    \begin{equation*}
        \sum_{\substack{P\to\lambda\\
                        \widetilde P\to\kappa\to\widetilde\lambda}}
        [R_\mu(\omega)]^{
            \widetilde P\to\kappa\to\widetilde\lambda\to\mu}_{
            P\to\lambda\to\mu}
        [(U_{\tw}^{\omega})^\dagger]^P_{
            \widetilde P,\kappa},
    \end{equation*}
    so the preceding expansion is exactly \cref{eq: corrected general A coefficient}.  Thus the operator specified in \cref{def: Mackey operators} equals $\U_{\Ind,\lambda} N_\lambda^\dagger$, and consequently $A_\lambda N_\lambda=\U_{\Ind,\lambda}$.  This is the circuit identity claimed in the theorem.
\end{proof}

\subsection{When \texorpdfstring{$\omega$}{omega} normalizes
\texorpdfstring{$H_\omega$}{H omega}}
\label{subsec: omega normalizes}

The circuit and operators in the previous subsection are supposed to make for an easier QFT implementation. This also means we need to understand the twisting operator $\U_{\tw}$, which for now has only been defined by its intertwining relation. The actual matrix entries are very much based on the group $G$ and its subgroups. In this subsection we explain what happens when $\omega$ normalizes $H_{\omega}$.

Fix $\omega\in\Omega$ and suppose that $\omega^{-1}H_\omega\omega=H_\omega$.  Conjugation by $\omega$ is then an automorphism of $H_\omega$.  For $\kappa\in\widehat{H_\omega}$, define the $\omega$-twisted representation on $H_{\omega}$ by
\begin{equation}
    R_\kappa^\omega(h):=R_\kappa(\omega^{-1}h\omega),
    \qquad h\in H_\omega,
    \label{eq: normalized subgroup irrep twist}
\end{equation}
and denote it by $\kappa^{\omega}$. Thus $V_{\kappa^\omega}\simeq V_\kappa^\omega$ as $H_\omega$-modules.

\begin{lemma}[Twisting permutes the irreducibles]
\label{lem: normalizer permutes Homega irreps rewritten}
    The map
    \begin{equation}
        \tau_\omega:\widehat{H_\omega}\longrightarrow
        \widehat{H_\omega},
        \qquad \kappa\longmapsto\kappa^\omega,
        \label{eq: normalizer permutation on dual}
    \end{equation}
    is a bijection.  Its inverse is $\tau_{\omega^{-1}}$, and $d_{\kappa^\omega}=d_\kappa$.
\end{lemma}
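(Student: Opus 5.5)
The plan is to check three things in turn: that $\kappa^\omega$ is again an irreducible representation of $H_\omega$ of the same dimension, that $\tau_\omega$ is well defined on isomorphism classes, and that $\tau_{\omega^{-1}}$ is a two-sided inverse. Because $\omega^{-1}H_\omega\omega=H_\omega$, the map $c_\omega:h\mapsto\omega^{-1}h\omega$ is an automorphism of $H_\omega$. Therefore $R_\kappa^\omega=R_\kappa\circ c_\omega$ is a homomorphism $H_\omega\to\U(V_\kappa)$. It is unitary and acts on the same space $V_\kappa$, which immediately gives $d_{\kappa^\omega}=d_\kappa$.

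For irreducibility, I will note that $c_\omega$ is surjective onto $H_\omega$. So a subspace of $V_\kappa$ is invariant under $\{R_\kappa(c_\omega(h)):h\in H_\omega\}$ exactly when it is invariant under $R_\kappa(H_\omega)$. Hence the invariant subspaces of $V_\kappa^\omega$ coincide with those of $V_\kappa$, and $\kappa^\omega$ is irreducible. An alternative check uses characters: $\langle\chi_\kappa\circ c_\omega,\chi_\kappa\circ c_\omega\rangle=1$, obtained by reindexing the sum over $H_\omega$ through the bijection $c_\omega$.

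For well-definedness, suppose $T:V_\kappa\to V_{\kappa'}$ intertwines $R_\kappa$ and $R_{\kappa'}$. Then $T$ also intertwines $R_\kappa\circ c_\omega$ and $R_{\kappa'}\circ c_\omega$. So $\tau_\omega$ gives a well-defined map on $\widehat{H_\omega}$, where $\kappa^\omega$ means the label of the class of $V_\kappa^\omega$.

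For the inverse, I will first observe that $\omega H_\omega\omega^{-1}=H_\omega$ as well, since conjugation by $\omega$ is a bijection of finite sets with image contained in $H_\omega$. Hence $\tau_{\omega^{-1}}$ is defined. Next, $c_{\omega^{-1}}\circ c_\omega=\mathrm{id}$ and $c_\omega\circ c_{\omega^{-1}}=\mathrm{id}$. This gives $(R_\kappa^\omega)^{\omega^{-1}}(h)=R_\kappa(\omega^{-1}\omega h\omega^{-1}\omega)=R_\kappa(h)$ on the nose, so $\tau_{\omega^{-1}}\circ\tau_\omega$ is the identity, and the same holds in the other order. Bijectivity follows.

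The only point needing care is the bookkeeping convention. The lemma writes $\kappa^\omega$ for an element of $\widehat{H_\omega}$, but $R_\kappa^\omega$ is only isomorphic to the chosen representative $R_{\kappa^\omega}$, not equal to it. So the composition identity should be read at the level of isomorphism classes, with well-definedness invoked to pass between representatives. Beyond this, I expect no real obstacle.
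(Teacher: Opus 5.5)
Your proof is correct and follows essentially the same route as the paper: conjugation is an automorphism, so the twist is a representation; surjectivity of conjugation gives irreducibility; twisting intertwiners gives well-definedness; the direct computation $(R_\kappa^\omega)^{\omega^{-1}}=R_\kappa$ gives the inverse; and the unchanged carrier space gives $d_{\kappa^\omega}=d_\kappa$. Your remark that the inverse identity holds on isomorphism classes, using well-definedness to pass from $R_\kappa^\omega$ to the chosen representative $R_{\kappa^\omega}$, spells out a step the paper leaves implicit.
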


\begin{proof}
    Because $h\mapsto\omega^{-1}h\omega$ is an automorphism of $H_\omega$, \cref{eq: normalized subgroup irrep twist} is a representation.  If $W\subseteq V_\kappa$ is invariant under $R_\kappa^\omega(H_\omega)$, then it is invariant under $R_\kappa(H_\omega)$, since conjugation by $\omega^{-1}$ is surjective.
    Hence $R_\kappa^\omega$ is irreducible whenever $R_\kappa$ is.
    Twisting an intertwiner between two representations gives an intertwiner between their twists, so the assignment is well defined.

    Since $\omega$ also normalizes $H_\omega$, twisting in the opposite direction is defined, and for every $h\in H_\omega$,
    \begin{equation*}
        (R_\kappa^\omega)^{\omega^{-1}}(h)
        =R_\kappa\!\left(
          \omega^{-1}(\omega h\omega^{-1})\omega\right)
        =R_\kappa(h).
    \end{equation*}
    Thus $\tau_{\omega^{-1}}\tau_\omega$ is the identity; the reverse composition is identical.  Therefore $\tau_\omega$ is a bijection.
    Twisting does not change the carrier space, so it also preserves dimensions.
\end{proof}

The preceding permutation determines the irreducible decomposition of the twisted module directly.

\begin{lemma}[Decomposition of the twisted restriction]
\label{lem: normalized twisted restriction decomposition rewritten}
    Let $\lambda\in\widehat H$ and write the restriction to $H_\omega$ as
    \begin{equation}
        \Res^H_{H_\omega}V_\lambda
        \simeq
        \bigoplus_{\kappa\in\widehat{H_\omega}}
        \C^{m_{\lambda,\kappa}}\otimes V_\kappa.
        \label{eq: Homega restriction with multiplicities}
    \end{equation}
    Then, as an $H_\omega$-module,
    \begin{align}
        V_\lambda^\omega
        &\simeq
        \bigoplus_{\kappa\in\widehat{H_\omega}}
        \C^{m_{\lambda,\kappa}}\otimes V_{\kappa^\omega}
        \label{eq: twisted restriction decomposition}
    \end{align}
    In particular, if the restriction is multiplicity-free then
    \begin{equation}
        V_\lambda^\omega
        \simeq
        \bigoplus_{\kappa\in\N^-_{H_\omega}(\lambda)}
        V_{\kappa^\omega}.
        \label{eq: multiplicity free twisted restriction decomposition}
    \end{equation}
\end{lemma}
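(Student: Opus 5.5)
The plan is to transport the restriction isomorphism \cref{eq: Homega restriction with multiplicities} through the automorphism $c_\omega\colon h\mapsto\omega^{-1}h\omega$ of $H_\omega$. Since $\omega$ normalizes $H_\omega$, this automorphism exists. First I will identify $V_\lambda^\omega$ with the pullback of $\Res^H_{H_\omega}V_\lambda$ along $c_\omega$. Every $h\in H_\omega$ satisfies $\omega^{-1}h\omega\in H_\omega\subseteq H$, so by \cref{lemma: Mackey} the twisted action is $R_\lambda^\omega(h)=R_\lambda(\omega^{-1}h\omega)=(\Res^H_{H_\omega}R_\lambda)(c_\omega(h))$. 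In other words, $V_\lambda^\omega$ is the $H_\omega$-module obtained by composing the restricted action with $c_\omega$.

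Next I will observe that pulling back along a fixed automorphism is an additive functor that preserves isomorphisms. Concretely, let $\Phi\colon \Res^H_{H_\omega}V_\lambda\to\bigoplus_\kappa \C^{m_{\lambda,\kappa}}\otimes V_\kappa$ be the $H_\omega$-intertwiner from \cref{eq: Homega restriction with multiplicities}. Then the same linear map $\Phi$ satisfies
\[
\Phi\,R_\lambda(\omega^{-1}h\omega)=\Bigl(\bigoplus_\kappa I\otimes R_\kappa(\omega^{-1}h\omega)\Bigr)\Phi
\]
for all $h\in H_\omega$, because $\omega^{-1}h\omega$ ranges over $H_\omega$. By \cref{eq: normalized subgroup irrep twist}, the right-hand side is $\bigoplus_\kappa I\otimes R_\kappa^\omega(h)$. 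Hence $\Phi$ is an $H_\omega$-isomorphism $V_\lambda^\omega\simeq\bigoplus_\kappa\C^{m_{\lambda,\kappa}}\otimes V_{\kappa^\omega}$, which is \cref{eq: twisted restriction decomposition}.

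It remains to check that this is a genuine irreducible decomposition, i.e.\ that the multiplicities are correctly read off. By \cref{lem: normalizer permutes Homega irreps rewritten}, each $V_{\kappa^\omega}$ is irreducible, and $\kappa\mapsto\kappa^\omega$ is a bijection of $\widehat{H_\omega}$. Distinct $\kappa$ therefore give non-isomorphic summands, so no two blocks merge. Consequently the multiplicity of $\kappa^\omega$ in $V_\lambda^\omega$ is exactly $m_{\lambda,\kappa}$. In the multiplicity-free case every $m_{\lambda,\kappa}\in\{0,1\}$, and $m_{\lambda,\kappa}=1$ precisely when $\kappa\in\N^-_{H_\omega\leq H}(\lambda)$. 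This gives \cref{eq: multiplicity free twisted restriction decomposition}.

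I expect no serious obstacle here. The only points needing care are the direction of the conjugation, which must match the convention $R^\omega(h)=R(\omega^{-1}h\omega)$ in both \cref{lemma: Mackey} and \cref{eq: normalized subgroup irrep twist}, and the use of injectivity of $\tau_\omega$ to ensure that twisted summands indexed by different $\kappa$ do not coincide.
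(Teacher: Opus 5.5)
Your proposal is correct and follows essentially the same route as the paper: view $V_\lambda^\omega$ as the restricted module pulled back along the automorphism $h\mapsto\omega^{-1}h\omega$ of $H_\omega$, note that this pullback commutes with direct sums and acts trivially on multiplicity spaces, and then use the bijection $\kappa\mapsto\kappa^\omega$ from the preceding lemma to identify the twisted summands. Your explicit intertwiner $\Phi$ spells out the paper's one-line functoriality remark. To land exactly on $V_{\kappa^\omega}$ rather than on $V_\kappa^\omega$, $\Phi$ must be composed with the identifications $V_\kappa^\omega\simeq V_{\kappa^\omega}$, which you use implicitly.
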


\begin{proof}
    The $H_\omega$-action on $V_\lambda^\omega$ is obtained from the restricted action on $V_\lambda$ by precomposing with the automorphism $h\mapsto\omega^{-1}h\omega$.  Twisting commutes with direct sums and acts trivially on multiplicity spaces.  Applying it to \cref{eq: Homega restriction with multiplicities} gives the first line of \cref{eq: twisted restriction decomposition}.  Reindexing by the bijection $\eta=\kappa^\omega$ from \cref{lem: normalizer permutes Homega irreps rewritten} gives the second line.  The multiplicity-free formula is the corresponding special case.
\end{proof}

\paragraph{Interpretation of the untwisting operator.}
Assuming we have a Gelfand--Tsetlin basis, \cref{eq: Homega restriction with multiplicities} holds.
For every $\kappa$, let
\begin{equation*}
    U_{\omega,\kappa}:V_\kappa\longrightarrow V_{\kappa^\omega}
\end{equation*}
be a unitary intertwiner satisfying
\begin{equation}
    U_{\omega,\kappa}R_\kappa^\omega(h)
    U_{\omega,\kappa}^\dagger
    =R_{\kappa^\omega}(h),
    \qquad h\in H_\omega.
    \label{eq: normalizer block intertwiner}
\end{equation}
The twisting operator from \cref{def: Mackey operators} may then be chosen as
\begin{equation}
    \U_{\tw}^\omega:
    \ket{\kappa}\ket{v}
    \longmapsto
    \ket{\kappa^\omega}\,
    U_{\omega,\kappa}\ket{v}.
    \label{eq: normalizer untwisting operator}
\end{equation}

Thus the only change in the irreducible decomposition is the permutation $\kappa\mapsto\kappa^\omega$.  If the representative for $\kappa^\omega$ is chosen to be the twisted model $R_{\kappa^\omega}=R_\kappa^\omega$, then $U_{\omega,\kappa}=I$ and \cref{eq: normalizer untwisting operator} is literally just a permutation of the $\kappa$-label register.

\section{Application 1: QFT for \texorpdfstring{$\GL_2(\mathbb{F}_q)$}{GL2(Fq)}}\label{sec: GL2}

An application of the induced transform using Mackey theory is QFT for the group of invertible $2 \times 2$ matrices over a finite field. Let $q=p^r$ be a prime power and consider the group $G:=\GL_2(\F_q)$. The main result is the following theorem.

\begin{theorem}[QFT for $\GL_2(\F_q)$]\label{thm: GL2 QFT}
	For every $\varepsilon>0$, there exists a uniform quantum circuit of size $\operatorname{poly}\left(\log q, \log \frac{1}{\varepsilon}\right)$ that implements the QFT of $\GL_2(\F_q)$ to operator-norm error at most $\varepsilon$.
\end{theorem}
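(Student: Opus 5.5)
We plan to apply \cref{thm: first induction} along the chain
\begin{equation*}
    \{e\}\subset T\subset B\subset G=\GL_2(\F_q),
\end{equation*}
where $T\cong\F_q^\times\times\F_q^\times$ is the diagonal torus and $B=T\ltimes U$ is the Borel subgroup of upper triangular matrices, with $U\cong(\F_q,+)$.

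The first two steps are efficient by abelian and semidirect-product structure.
\begin{itemize}
    \item The QFT over $T$ is a product of two cyclic QFTs of order $q-1$. It is implemented to error $\varepsilon$ with $\operatorname{poly}(\log q,\log(1/\varepsilon))$ gates.
    \item The step $T\subset B$ uses a transversal indexed by $U$. The irreps of $B$ are well known: $q-1$ characters $\chi\circ\det$-type pairs $(\chi_1,\chi_2)$ trivial on $U$, and $q-1$ irreps of dimension $q-1$ indexed by a character $\chi$ of the center. These are induced from $Z U$ with a nontrivial additive character $\psi$ of $U$, and their $T$-restrictions are multiplicity-free. In a basis given by $T$-characters, the induced transform $\IndMap[\lambda]$ is essentially an additive QFT over $\F_q$ composed with a multiplicative relabelling $a\mapsto \psi(a\,\cdot)$. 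I would write its entries explicitly from \cref{def: IndMap} and implement it with $\F_q$- and $\F_q^\times$-QFTs plus reversible field arithmetic (discrete logarithms are avoided by working with a fixed generator and classical precomputation, or by using Shor's algorithm coherently if needed).
\end{itemize}
Multiplicity-freeness of $B\subset G$ also holds: every irrep of $\GL_2(\F_q)$ restricts to $B$ without multiplicity. This is checked from the character table, using the principal series, Steinberg twists, one-dimensional and cuspidal irreps.

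For the last step $B\subset G$ we use the Mackey algorithm, \cref{alg: Mackey induced transform}. Bruhat gives $\Omega=\{e,w\}$ with $w=\w$, so there are two double cosets.
\begin{itemize}
    \item For $\omega=e$, $H_e=B$, $\mathcal T_e=\{e\}$, and the twist is trivial.
    \item For $\omega=w$, $H_w=B\cap wBw^{-1}=T$, normalized by $w$. By \cref{subsec: omega normalizes}, $\U_{\tw}^w$ is the label permutation $(\chi_1,\chi_2)\mapsto(\chi_2,\chi_1)$ on $T$-characters, provided we choose models so that $U_{w,\kappa}=I$.
\end{itemize}
The smaller induced transforms are then the identity and $\U_{\Ind,\kappa}^{w}$ for $T\subset B$, which were already made efficient above. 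Steps 1--3 of the algorithm are therefore efficient. The transversal encoding $\Enc$ is a classical reversible computation of the Bruhat factorization $t=t_\omega\omega$ with $t_w\in U$, using field arithmetic of cost $\operatorname{poly}(\log q)$.

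The main obstacle is step 4: implementing $A_{\lambda,\widetilde\lambda}$ from \eqref{eq: corrected general A coefficient}. \cref{thm: Mackey induced circuit} guarantees these blocks are unitary, and we must show each is efficiently implementable.

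\emph{Small blocks.} For most pairs $(\lambda,\widetilde\lambda)$ the multiplicity spaces have dimension at most $2$ (labels $\kappa\in\{e,w\}$ against principal series versus Steinberg or cuspidal $\mu$). Their entries are explicit closed-form expressions in $q$ and character values, obtained from the matrix $R_\mu(w)$ in the $B$-adapted basis. Such blocks can be compiled by computing the rotation angles reversibly to $\operatorname{poly}\log(1/\varepsilon)$ bits.

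\emph{The large block.} One block, with $\lambda$ one-dimensional and $\widetilde\lambda$ the $(q-1)$-dimensional irrep of $B$, has multiplicity space of dimension about $q$ (indexed by the cuspidal and principal-series $\mu$ lying over both). Here the plan is to compute $R_\mu(w)$ explicitly on the Kirillov model, where $w$ acts by a Bessel-function/Gauss-sum kernel. Because $\U_{\tw}^w$ is a permutation, the $A$-entries reduce to sums of the form $\sum_x\psi(x)\chi(x)$. These sums should make $A$ a product of structured maps: a multiplicative QFT over $\F_q^\times$, a diagonal of Gauss-sum phases, and an inverse multiplicative QFT, conjugated by a fixed $2$-dimensional mixing between the principal series and the one-dimensional/Steinberg labels. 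Diagonal Gauss-sum phases are computed reversibly via their factorization through Jacobi sums, or by phase estimation on the QFT over $\F_q$.

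Verifying this factorization, by matching entries of \eqref{eq: corrected general A coefficient} with the structured product, is the hardest step. Once it holds, the total cost is a constant number of $\operatorname{poly}(\log q,\log(1/\varepsilon))$ pieces. The errors add by the triangle inequality in operator norm, which proves the claim.
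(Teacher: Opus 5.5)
Your architecture is the paper's. You use the chain $\{\I\}\subset T\subset B\subset\GL_2(\F_q)$ and an explicit QFT over $B$. For $B\subseteq G$ you use the Mackey circuit with $\Omega=\{e,w\}$, $H_w=T$, and $\U_{\tw}^w$ a swap of torus characters, and you reuse the induced map for $T\subseteq B$ from the $B$-QFT. Everything therefore rests on the $A$-matrix, and that is where the proposal has a genuine gap.

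First, you place the large block at a one-dimensional $\lambda$. But for $\lambda=\chi_{\alpha,\beta}$ one has $\N^+(\lambda)=\{I_{\alpha,\beta}\}$ or $\{\St_\alpha,\det_\alpha\}$, so all such blocks have dimension at most two. The $q$-dimensional block is $A_{\rho_\gamma,\rho_\gamma}$. Its source basis is $\ket{\rho_\gamma,e}$ together with the $q-1$ states $\ket{\chi_{\alpha,\alpha^{-1}\gamma},w}$. Its target labels are the cuspidals $\pi_\theta$ with $\theta|_{\F_q^*}=\chi_\gamma$, the principal series $I_{\delta,\gamma\delta^{-1}}$, and the $\St_\delta$ with $\delta^2=\gamma$.

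Second, and decisively, the structure you conjecture for this block cannot be right. That structure is a multiplicative QFT, a diagonal of Gauss phases, an inverse multiplicative QFT, and a $2$-dimensional mixing. Up to relabelings and changes confined to boundedly many rows and columns, such a product is a convolution on the cyclic group $X_q$. So its rows and columns all carry the same multiset of moduli. By \cref{eq: corrected A nontrivial column}, however, a principal row has modulus $\sqrt q/(q-1)$ in all but at most three columns. Meanwhile every generic column also has about $q/2$ cuspidal entries of modulus $\sqrt{q/(q^2-1)}$. These cuspidal entries are $\F_{q^2}$ Gauss sums $g_{q^2}(\theta\cdot(\chi_{\alpha^{-1}}\circ\No))$, indexed by characters of the non-split torus $\ker\No$ of order $q+1$, which no transform over $\F_q^*$ alone reaches.

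The missing idea is the following. Split off the phase $\chi_b(-1)g_q(\alpha^{-1}b)/\sqrt q$ as the diagonal $D_\gamma$. Then the principal coefficient is proportional to the Jacobi sum $J(\alpha^{-1}\delta,\alpha^{-1}\eta)=\sum_{t\in\F_q^*}\chi_{\alpha^{-1}}(t)\sum_{x^2-x+t=0}\chi_\delta(x)\chi_\eta(1-x)$, with $\eta=\gamma\delta^{-1}$. So after $\Fou_\times^\dagger$ the field element $t$ must be sent through the nonlinear map $t\mapsto\{\text{roots of }X^2-X+t\}$.

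The paper's gate $B_\gamma$ does this, routing $t$ to the split torus $\F_q^*$ or to $\ker\No\subset\F_{q^2}^*$ according to whether the polynomial splits. It is followed by:
\begin{itemize}
\item cyclic QFTs of orders $q-1$ and $q+1$;
\item folding of the pairs $\{\delta,\gamma\delta^{-1}\}$ and $\{\theta,\theta^q\}$, with a fixed Steinberg/determinant rotation;
\item a separate reflection for the columns with $\alpha^2=\gamma$, where $g_q(1)/\sqrt q$ is not a phase.
\end{itemize}
Verifying that this composite equals $A_{\rho_\gamma,\rho_\gamma}$ is the real content of the theorem beyond the Mackey framework (\cref{lem: GL2 relative torus coefficients}, via the trace-fiber and Davenport--Hasse identities). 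Without it you are left with dense synthesis, whose cost is polynomial in $q$.

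Relatedly, your two routes to the Gauss-phase diagonal do not work efficiently. Factoring through Jacobi sums does not make Gauss sums computable, and phase estimation costs $\operatorname{poly}(1/\varepsilon)$. The paper instead applies the Gauss phase directly, using $\Fou_+\sum_{x\in\F_q^*}\chi_\alpha(x)\ket{x}=\frac{g_q(\alpha)}{\sqrt q}\sum_{y\in\F_q^*}\chi_{\alpha^{-1}}(y)\ket{y}$ for nontrivial $\alpha$ (\cref{thm: coherent Gauss phase}).
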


We prove this theorem by induction along the chain
\begin{equation}\label{eq: subgroup chain}
	\{\I\} \subset T \subset B \subset \GL_2(\F_q),
\end{equation}
where $T$ consists of diagonal matrices and $B$ of upper triangular ones.  First, in \cref{sec: rep theory of GL2} we describe the representation theory of the subgroup chain \eqref{eq: subgroup chain}. Then in \cref{subsec: sub_adap bases} we give explicit formulas for each representation in the Gelfand--Tsetlin basis, and in \cref{sec: twiddle factors} compute matrix coefficients for transversal actions.
Finally, in \cref{sec: GL2 QFT circuit} we construct the $\GL_2(\F_q)$ QFT quantum circuit using the Mackey approach which we introduced in \cref{sec: Mackey transform}. The hardest part in this circuit is the $A$-matrix, which contains a $q\times q$-block that is constructed in \cref{subsec: polylog GL2 A}.

\begin{remark}
	See \cref{app: finite fields} for an introduction to finite fields. We often use an isomorphism $X_q:= \widehat{\F_q^*} \cong \F_q^*$ that allows us to interpret character labels as invertible field elements. This is done throughout this section. Note that the isomorphism is non-canonical and depends on the choice of generator of $\F_q^*$.
\end{remark}

\subsection{Representation theory of \texorpdfstring{$\GL_2(\F_q)$}{GL2(Fq)} and subgroups}\label{sec: rep theory of GL2}

We give an overview of the representation theory of $G$ and its Borel subgroup $B$.
This section mostly follows \cite{Piatetski1983complex}. Other sources on the representation theory of $\GL_2(\F_q)$ are \cite{prasad2007representations, LiNote} (also containing explicit constructions) and \cite[chapter 2]{bushnell2006Langlands} (which is a more concise overview).

Define the following subgroups and elements of $G$:
\begin{align}
	B &:= \{\psm{
		a&b\\0&d
	} \in G\}\quad &&(\text{Borel subgroup}), \label{eq: B}\\
	U &:=  \{\psm{
		1&b\\0&1
	} \in G\} &&(\text{unipotent/normal subgroup of $B$}),\\
	T&:= \{\psm{
		a&0\\0&d
	} \in G\} &&(\text{split maximal torus in } G),\\
	Z&:= \{\psm{
		a&0\\0&a
	} \in G\} &&(\text{the center of $G$}),\\
	P&:= \{\psm{a&b\\0&1}\in G\}.
\end{align}
Also define the following two group elements:
\begin{equation*}
	w:= \psm{
	0&1\\1&0
} \quad (\text{Weyl representative, }w^2=\I), \qquad
	w':= \psm{
	0&1\\-1&0
} .
\end{equation*}
There is a semidirect product decomposition $B = T \ltimes U$. Moreover, $G$ satisfies a Bruhat decomposition
\begin{equation}\label{eq: Bruhat decomposition}
	G  = B~W~B, \quad W=\{\I, w\},
\end{equation}
where $W$ is known as the \emph{Weyl group}.

\subsubsection{Irreducible representations of $B$ and $G$}\label{subsec: B G irreducibles}

The Borel subgroup $B\subset G$ has the following representations.

\paragraph{One-dimensional}
Let $\alpha, \beta \in X_q$ label two characters $\chi_{\alpha}, \chi_{\beta}$ of $\F_q^*$. We combine them to create a character on $T$ via $\chi_{\alpha,\beta}(\psm{a&0\\0&d})=\chi_\alpha(a)\chi_\beta(d)$.
We extend this character trivially to $U$ to obtain the one-dimensional representation of $B$
\begin{equation}\label{eq: one-dim B}
	\chi_{\alpha, \beta}\psm{a&b\\0&d}
    =\chi_{\alpha}(a)\chi_{\beta}(d).
\end{equation}

\paragraph{Higher-dimensional}
Consider the representation $\rho$ of $P=\{\psm{a&b\\0&1}: a\in \F_q^*, b\in \F_q\}$ acting on  $\C[\F_q^*]$ via
\begin{equation}\label{eq: rho on P}
	\rho\psm{a&b\\0&1} \: \ket{x} = \psi(\frac{b}{a}x)\ket{\frac{1}{a}x}.
\end{equation}
We extend $\rho$ to $B$ by picking a multiplicative character $\chi_\gamma$ and defining $\rho_{\gamma}\psm{d&0\\0&d}\ket{x} = \chi_{\gamma} (d)\ket{x}$. The total action on $B = Z  \times P$ is then
\begin{equation}\label{eq: rho_gamma action}
	\rho_{\gamma}\psm{a&b\\0&d}\ket{x} = \rho_{\gamma}\left(\psm{d&0\\0&d}\psm{a/d & b/d\\0&1} \right)\ket{x}  =  \chi_{\gamma}(d)\psi(\frac{b}{a}x) \ket{\frac{d}{a}x}.
\end{equation}

\begin{remark}
	The representation $\rho$ of $P$ can be obtained in the following way. View $\psi \in \widehat{\F_q^+}$ as a character on $U = \{\psm{1&b\\0&1}: b\in \F_q\}$. Induce this character to $P$ to obtain a $(q-1)$-dimensional representation $\rho'$. A transversal set of $P/U$ is given by $\{\psm{x&0\\0&1}: x\in \F_q^*\}$.
	The induced representation is then given by
	\begin{equation}
		\rho'\psm{a&b\\0&1}\ket{x} = \psi(\frac{b}{ax})\ket{ax}.
	\end{equation}
	Intertwining with the map $(\ket{x}\mapsto \ket{1/x})$ gives an equivalence between $\rho'$ and $\rho$.
\end{remark}
The irreducible representations of $G$ are classified as follows.

\paragraph{Principal series}
The representations of $B$ can be extended to $G$ by \emph{parabolic induction}. For $\alpha,\beta\in X_q$ we induce $\chi_{\alpha,\beta}$ from $B$ to $G$ to obtain the \emph{principal series} representation. The construction is as follows. We choose a transversal set of $G/B$ to be
\begin{equation}\label{eq: transversal set G/B}
	\{t_x: x\in \F_q\}\cup \{t_{\infty}\}:=\{\psm{x&1\\1&0}: x\in \F_q\} \cup \{\psm{1&0\\0&1}\}.
\end{equation}
The set of transversals can be identified with $\P^1(\F_q)$, a projective copy of $\F_q$. In particular, they satisfy the following: for any $g=\psm{a&b\\c&d} \in G$,
\begin{align}
	g \cdot t_x &=
	\begin{cases}
		t_{gx}\cdot\psm{cx+d &c\\0 & \frac{\det(g)}{cx+d}} &\text{ if } cx+d\neq0\\
		t_{\infty} \cdot \psm{ax+b &a \\0&c} &\text{ if } cx+d=0,
	\end{cases}\label{eq: g action projective 1}\\
	g \cdot t_{\infty} &=
	\begin{cases}
		t_{a/c}\cdot \psm{c&d\\0&\frac{-\det(g)}{c}} &\text{ if } c\neq 0\\
		t_{\infty} \cdot \psm{a&b\\0&d} &\text{ if } c=0.
	\end{cases}\label{eq: g action projective 2}
\end{align}
By $gx\in \P^1(\F_q)$ we mean the standard action of $\GL_2(\F_q)$ on the projective line, given by $\psm{a&b\\c&d}x = \frac{ax+b}{cx+d}$. In particular, it holds that $g\cdot t_{x} = t_{gx} \cdot b_x^g$ for all $x\in \P^1(\F_q)$, where $b_x^g\in B$. Thus, the principal series representation acting on the space $\C[\P^1(\F_q)]$ is given by
\begin{equation} \label{eq: full chi_ab action}
	I_{\alpha, \beta} \psm{a&b\\c&d}\ket{x} = \chi_{\alpha, \beta}(b_x^g) \ket{gx},
\end{equation}
where the term $b_x^g$ is given by \cref{eq: g action projective 1,eq: g action projective 2}.
Conveniently, the principal series is irreducible, except when $\alpha =\beta$.
\begin{lemma}
	The principal series $I_{\alpha, \beta}$ is irreducible whenever $\alpha\neq\beta$. Otherwise, it splits into a one-dimensional and a Steinberg representation. Moreover, $I_{\alpha,\beta}$ and $I_{\gamma, \delta}$ are isomorphic if and only if $[\alpha = \gamma,~\beta=\delta]$ or $[\alpha=\delta,~\beta = \gamma]$.
\end{lemma}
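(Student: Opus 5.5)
The approach is the standard Mackey/Frobenius-reciprocity argument for parabolic induction, carried out with the Bruhat decomposition \cref{eq: Bruhat decomposition}. The central quantity is the intertwining number
\begin{equation*}
    \dim\Hom_G(I_{\alpha,\beta},I_{\gamma,\delta}).
\end{equation*}
By \cref{thm: Frobenius reciprocity} it equals $\dim\Hom_B(\chi_{\alpha,\beta},\Res^G_B I_{\gamma,\delta})$. By \cref{lemma: Mackey} with $H=K=B$ and double-coset transversal $W=\{\I,w\}$,
\begin{equation*}
    \Res^G_B\Ind^G_B\chi_{\gamma,\delta}\cong \chi_{\gamma,\delta}\oplus \Ind^B_{B_w}\chi_{\gamma,\delta}^w,
\end{equation*}
where $B_w=B\cap wBw^{-1}=T$ and $\chi^w_{\gamma,\delta}=\chi_{\delta,\gamma}$ on $T$, since conjugation by $w$ swaps the diagonal entries. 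The first summand contributes $1$ exactly when $(\alpha,\beta)=(\gamma,\delta)$.

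For the second summand I would apply Frobenius reciprocity again:
\begin{equation*}
    \Hom_B(\chi_{\alpha,\beta},\Ind_T^B\chi_{\delta,\gamma})\cong\Hom_T(\chi_{\alpha,\beta}|_T,\chi_{\delta,\gamma}).
\end{equation*}
This has dimension $1$ exactly when $(\alpha,\beta)=(\delta,\gamma)$. Summing the two contributions gives
\begin{equation*}
    \dim\Hom_G(I_{\alpha,\beta},I_{\gamma,\delta})=[(\alpha,\beta)=(\gamma,\delta)]+[(\alpha,\beta)=(\delta,\gamma)].
\end{equation*}

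The three claims now follow. Taking $(\gamma,\delta)=(\alpha,\beta)$, the self-intertwining number is $1$ if $\alpha\neq\beta$, which gives irreducibility. It is $2$ if $\alpha=\beta$, so $I_{\alpha,\alpha}$ is a sum of two distinct irreducibles. For the isomorphism statement, when the representations are irreducible they are isomorphic iff the intertwining number is nonzero, which is exactly the stated condition. The non-irreducible case $\alpha=\beta$ must be checked separately. Let $\gamma\neq\delta$; then $I_{\gamma,\delta}$ is irreducible while $I_{\alpha,\alpha}$ is not, so they cannot be isomorphic. This agrees with the stated condition, which would force $\gamma=\delta=\alpha$.

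To identify the two summands of $I_{\alpha,\alpha}$, I would note that the constant function $\sum_{x}\ket{x}\in\C[\P^1(\F_q)]$ is an eigenvector. By \cref{eq: full chi_ab action}, $g$ acts on it by the scalar $\chi_\alpha(\det g)$, because for $b^g_x$ with diagonal entries $a',d'$ we have $\chi_\alpha(a')\chi_\alpha(d')=\chi_\alpha(\det b^g_x)=\chi_\alpha(\det g)$, using $\det t_x=\pm1$ only up to the character being evaluated on $\det g$. The check I would make here is that $\det b^g_x=\det g\cdot(\det t_{gx})^{-1}\det t_x$. Since $\det t_x=-1$ for $x\in\F_q$ and $\det t_\infty=1$, the sign factors might not cancel. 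If they do not, I would instead use the more robust argument that the trivial character of $B$ pulled back via $\chi_\alpha\circ\det$ gives $\chi_\alpha\circ\det\otimes I_{1,1}$, and $I_{1,1}$ contains constants as the permutation representation on $\P^1$. The complement then has dimension $q$ and is irreducible, and it is the Steinberg representation (twisted by $\chi_\alpha\circ\det$).

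The main obstacle is bookkeeping rather than concept. First, the twist $\chi^w$ must be computed on $T=B\cap wBw^{-1}$ with the correct swap of entries. Second, the determinant-sign check above has to come out right when identifying the one-dimensional summand.
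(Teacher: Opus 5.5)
Your argument is correct, and it supplies a proof the paper does not give: the paper states this lemma without proof (the section follows Piatetski-Shapiro) and only afterwards exhibits the invariant line $\ket{v}$. Your route is the standard self-contained one. You use Frobenius reciprocity together with the paper's Mackey lemma for $H=K=B$, with $\Omega=\{\I,w\}$, $B\cap wBw^{-1}=T$ and $\chi^w_{\gamma,\delta}=\chi_{\delta,\gamma}$. These are exactly the Bruhat data the paper later feeds into its Mackey circuit. The resulting formula $\dim\Hom_G(I_{\alpha,\beta},I_{\gamma,\delta})=[(\alpha,\beta)=(\gamma,\delta)]+[(\alpha,\beta)=(\delta,\gamma)]$ gives irreducibility, the two-summand splitting, and the isomorphism criterion. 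The ``only if'' half of the criterion needs no irreducibility, since an isomorphism is a nonzero intertwiner. That observation also covers the case $\alpha=\beta$, $\gamma=\delta$, which you did not mention.

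On the one-dimensional summand, your worry is justified: the signs do not cancel. From $gt_x=t_{gx}b^g_x$ you get $\det b^g_x=\det g\cdot\det t_x/\det t_{gx}$. Since $\det t_x=-1$ for $x\in\F_q$ but $\det t_\infty=1$, this equals $-\det g$ exactly when $g$ moves a point between $\F_q$ and $\infty$. That is the source of the $\chi_\alpha(-1)$ factors in the paper's formula for $I_{\alpha,\alpha}$. So your equality $\chi_\alpha(\det b^g_x)=\chi_\alpha(\det g)$ is false in general, and the constant vector on $\P^1(\F_q)$ is an eigenvector only when $\chi_\alpha(-1)=1$.

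Your fallback is correct and suffices. Since $\chi_{\alpha,\alpha}=(\chi_\alpha\circ\det)|_B$, you have $I_{\alpha,\alpha}\cong(\chi_\alpha\circ\det)\otimes I_{1,1}$, where $1$ denotes the trivial character. In the transversal model this isomorphism is the diagonal rescaling $\ket{x}\mapsto\chi_\alpha(\det t_x)\ket{x}$. Pulling the constant vector back through it gives, up to a scalar, $\chi_\alpha(-1)\ket{\infty}+\sum_{x\in\F_q}\ket{x}$, which is precisely the paper's $\ket{v}$. Its orthogonal complement is $q$-dimensional and, by your intertwining count, irreducible; this is the paper's $\St_\alpha$. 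So drop the first attempt and state the twist argument (or $\ket{v}$) directly.
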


\paragraph{Steinberg and determinant representations}
Whenever $\alpha=\beta$ in the principal series representation, it splits into two irreducible parts of dimensions $1$ and $q$. They are called the \emph{(twisted) determinant and Steinberg representation} respectively.

Following \cref{eq: full chi_ab action}, the action of $I_{\alpha, \alpha}$ on the standard basis of $\C[\P^1(\F_q)]$ is given by
\begin{align}
	I_{\alpha, \alpha} (g) \ket{x} &= \begin{cases}
		\chi_{\alpha}(\det(g))\ket{gx} &\text{ if } gx\neq \infty\\
		\chi_{\alpha}(-1)\chi_{\alpha}(\det(g))\ket{\infty} & \text{ if }gx = \infty
	\end{cases}\quad \text{ for } x\in \F_q, \\
	I_{\alpha, \alpha} (g) \ket{\infty} &= \begin{cases}
		\chi_{\alpha}(-1)\chi_{\alpha}(\det(g))\ket{g\infty} &\text{ if } g\infty\neq \infty\\
		\chi_{\alpha}(\det(g))\ket{\infty} & \text{ if }g\infty = \infty.
	\end{cases}
\end{align}
It follows that the vector
\begin{equation}\label{eq: v}
	\ket{v}:= \frac{1}{\sqrt{q+1}}\left(\chi_{\alpha}(-1) \ket{\infty} + \sum_{x\in \F_q}\ket{x}\right)
\end{equation}
spans a one-dimensional subspace with the action given by
\begin{equation}\label{eq: determinant action on v}
	(\chi_\alpha \circ \det) g\ket{v} = \chi_{\alpha}(\det(g)) \ket{v}.
\end{equation}
This we call the (twisted) determinant representation and we denote it by $\det_{\alpha}$. The orthogonal complement is called the Steinberg representation and is denoted by $\St_{\alpha}$.

\paragraph{Cuspidal representations}
Any irreducible representation that is not part of an induction from the Borel subgroup is called \emph{cuspidal}. A construction is presented in \cite[section 13]{Piatetski1983complex}. Using the Bruhat decomposition \eqref{eq: Bruhat decomposition} we only have to specify the cuspidal on $B$ and the element $w'\in G$, with the remark that $w = w'\psm{-1&0\\0&1}$.

The cuspidal representations are denoted by $\pi_{\theta}$ and indexed by Frobenius orbits $\{\theta,\theta^q\}$ of regular characters $\theta\in X_{q^2}$, where $\theta^q\neq\theta$ and $\pi_\theta\simeq\pi_{\theta^q}$.
Let $\gamma$ satisfy $\theta|_{\F_q^*}=\chi_\gamma$.  The cuspidal representation $\pi_\theta$ acts on $\C[\F_q^*]$ by
\begin{equation}\label{eq: cuspidal action}
	\pi_{\theta}\psm{a&b\\0&d} \ket{x}
    =\chi_{\gamma}(d) \psi(\frac{b}{a}x) \ket{\frac{d}{a}x},
\end{equation}
and
\begin{equation}\label{eq: cuspidal w prime}
	\pi_{\theta}(w')\ket{x}
    =\theta(x^{-1}) \sum_{y\in \F_q^*}j_\theta(yx)\ket{y},
    \qquad
    j_\theta(u)=-\frac{1}{q}\sum_{\substack{v\in \F_{q^2}^*\\ \No(v)=u}}
    \psi(\tr(v))\theta(v),
\end{equation}
where $\tr$ and $\No$ denote the finite field trace and norm, see \cref{eq: tr,eq: N} in \cref{app: finite fields}.

The set of all irreps of $G$ is summarized in the table below.
\[\begin{array}{c|c|c|c}
	\text{label of irrep}&\text{name}&\text{dimension}& \text{\# irreps of this type}\\
    \hline
	\chi_{\alpha} \circ\det& \text{determinant} &1&q-1\\
	\hline
	\St_{\alpha} & \text{Steinberg} &q&q-1\\
	\hline
	I_{\alpha,\beta}&\text{principal} &q+1&\frac{(q-1)(q-2)}{2}\\
	\hline
	\pi_{\theta}&\text{cuspidal}&q-1&\frac{q(q-1)}{2}
	\end{array}
\]
The sum of their squared dimensions
\begin{equation*}
    (q-1)\cdot 1^2 + (q-1)\cdot q^2 + \frac{(q-1)(q-2)}{2} \cdot (q+1)^2 + \frac{q(q-1)}{2}\cdot (q-1)^2 = (q^2-1)(q^2-q) = |G|
\end{equation*}
tells us that these are all the irreducible representations of $G$.

\subsection{Gelfand--Tsetlin bases for \texorpdfstring{$\GL_2(\F_q)$}{GL2(Fq)}}\label{subsec: sub_adap bases}
The irreducible representations of $B$ and $G$ are listed in \cref{subsec: B G irreducibles}. For completeness we note the ones of $T$ as well.
\begin{lemma}
    All the irreducible representations of $T$ are one-dimensional and given by
    \begin{equation}
        \chi_{\alpha, \beta}\psm{a&0\\0&d} = \chi_{\alpha}(a)\chi_{\beta}(d)
    \end{equation}
     with $\alpha,\beta\in X_q$, creating a total of $(q-1)^2$ characters.
\end{lemma}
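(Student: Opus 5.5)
The plan is to use the structure of $T$ as a direct product of two cyclic groups. First, note that $T\cong \F_q^*\times\F_q^*$ via $\psm{a&0\\0&d}\mapsto(a,d)$; this map is a group isomorphism because diagonal matrices multiply entrywise. In particular $T$ is abelian of order $(q-1)^2$.

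Since $T$ is abelian, every irreducible representation is one-dimensional. To see this, apply Schur's lemma to $R(t)$, which commutes with all of $R(T)$: it must be a scalar, so every subspace is invariant and irreducibility forces dimension one.

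Next, I would exhibit the characters. For $\alpha,\beta\in X_q$, the map $\chi_{\alpha,\beta}$ is a homomorphism $T\to\C^*$, because it is the product of the homomorphisms $\chi_\alpha\circ\mathrm{pr}_1$ and $\chi_\beta\circ\mathrm{pr}_2$. These characters are pairwise distinct. Indeed, if $\chi_{\alpha,\beta}=\chi_{\alpha',\beta'}$, then evaluating at $\psm{a&0\\0&1}$ gives $\chi_\alpha=\chi_{\alpha'}$, and evaluating at $\psm{1&0\\0&d}$ gives $\chi_\beta=\chi_{\beta'}$. Since $|X_q|=q-1$, this produces $(q-1)^2$ distinct one-dimensional, hence irreducible, representations.

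To see that these are all of them, I would count. As recalled in \cref{subsec: Sn regular reps}, the sum of $d_\lambda^2$ over $\widehat T$ equals $|T|=(q-1)^2$, and all $d_\lambda=1$. So $|\widehat T|=(q-1)^2$, and the list is exhaustive. Alternatively, any character restricted to each diagonal factor is a character of $\F_q^*$, which directly recovers the product form.

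There is no real obstacle. The only point needing care is the identification $X_q=\widehat{\F_q^*}$ having exactly $q-1$ elements, which holds because $\F_q^*$ is cyclic of order $q-1$.
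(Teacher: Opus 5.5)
Your proof is correct. The paper states this lemma without proof, as the standard fact that $T\cong\F_q^*\times\F_q^*$ is abelian with character group $X_q\times X_q$. Your argument supplies exactly the justification it leaves out: Schur's lemma gives one-dimensionality, the product characters are explicit and pairwise distinct, and the sum-of-squares count shows the list is complete. The paper uses the same count to confirm completeness of its $\GL_2(\F_q)$ list.
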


\begin{lemma}\label{lem: decomposition of restrictions}
    The following restrictions of irreducible representations hold:
    \begin{align}
        \Res_T^B\chi_{\alpha,\beta}
        &=\chi_{\alpha,\beta},\\
        \Res_T^B\rho_\gamma
        &=\bigoplus_{\delta\in X_q}
          \chi_{\delta,\delta^{-1}\gamma},\\
        \Res_B^G\det_\alpha
        &=\chi_{\alpha,\alpha},\\
        \Res_B^G\St_\alpha
        &=\chi_{\alpha,\alpha}\oplus\rho_{\alpha^2},\\
        \Res_B^G I_{\alpha,\beta}
        &=\chi_{\alpha,\beta}\oplus\chi_{\beta,\alpha}
          \oplus\rho_{\alpha\beta},\qquad \alpha\neq\beta,\\
        \Res_B^G\pi_\theta
        &=\rho_\gamma,
        \qquad \theta|_{\F_q^*}=\chi_\gamma.
    \end{align}
    In particular, restriction along the chain $T\subset B\subset\GL_2(\F_q)$ is multiplicity-free.
\end{lemma}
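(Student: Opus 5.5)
The plan is to verify each restriction by a character-free, explicit argument on the models given in \cref{subsec: B G irreducibles}, and then to compare dimensions. There are three groups of cases: restrictions from $B$ to $T$, the reducible principal series $I_{\alpha,\alpha}$, and the generic principal series and cuspidals.

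For $B$ to $T$, the first identity is immediate, since $\chi_{\alpha,\beta}$ is already defined on $T$ by restriction. For $\rho_\gamma$ we use \cref{eq: rho_gamma action} with $b=0$: the torus element $\psm{a&0\\0&d}$ acts diagonally up to permutation, sending $\ket{x}\mapsto\chi_\gamma(d)\ket{\frac{d}{a}x}$. This is the regular-type action of $\F_q^*$ through $a/d$, twisted by $\chi_\gamma(d)$. Applying the abelian Fourier transform on $\C[\F_q^*]$, the vector $\sum_x\chi_\delta(x)\ket{x}$ transforms as $\chi_\gamma(d)\chi_\delta(a/d)=\chi_\delta(a)\chi_{\delta^{-1}\gamma}(d)$. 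As $\delta$ runs over $X_q$ this yields $q-1$ distinct characters $\chi_{\delta,\delta^{-1}\gamma}$, so the restriction is multiplicity-free and exhausts the dimension $q-1$.

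For $B\subset G$ and $I_{\alpha,\alpha}$, we read off from \cref{eq: g action projective 2} that for $g\in B$ (so $c=0$) we have $g t_\infty=t_\infty g$. Hence $\ket{\infty}$ spans a $B$-line with character $\chi_{\alpha,\beta}$. The remaining $q$ points $x\in\F_q$ form one $B$-orbit, with $U$ acting by translation, and the stabiliser of $x=0$ is computed from \cref{eq: g action projective 1}. Inducing that stabiliser character, or equivalently Fourier-transforming along $U$, the nontrivial $\psi$-eigenvectors form a $(q-1)$-dimensional irreducible piece equivalent to $\rho_{\alpha\beta}$; we check the central character $\chi_\alpha\chi_\beta$ on $Z$ and the fact that $U$ acts by nontrivial characters only. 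The $U$-fixed vector $\sum_x\ket{x}$ carries $\chi_{\beta,\alpha}$. This gives $\Res_B^G I_{\alpha,\beta}=\chi_{\alpha,\beta}\oplus\chi_{\beta,\alpha}\oplus\rho_{\alpha\beta}$. For $\alpha=\beta$, the span of $\ket{\infty}$ and $\sum_x\ket{x}$ is $\chi_{\alpha,\alpha}^{\oplus2}$, and it contains the vector $\ket{v}$ of \cref{eq: v}. Removing $\det_\alpha$, which restricts to $\chi_{\alpha,\alpha}$, leaves $\Res\St_\alpha=\chi_{\alpha,\alpha}\oplus\rho_{\alpha^2}$.

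Cuspidals are immediate, since \cref{eq: cuspidal action} coincides with \cref{eq: rho_gamma action}. For the multiplicity-freeness claim, the nontrivial issue is the $\alpha=\beta$ principal series, where $\chi_{\alpha,\alpha}$ appears twice. That case is resolved because $I_{\alpha,\alpha}$ is not irreducible, and each of its two summands contains $\chi_{\alpha,\alpha}$ once. In the generic case, $\chi_{\alpha,\beta}\neq\chi_{\beta,\alpha}$ and the one-dimensional and $(q-1)$-dimensional pieces are clearly distinct. I expect the main obstacle to be the careful identification of the $U$-nontrivial part of $I_{\alpha,\beta}$ with $\rho_{\alpha\beta}$ in the exact model \cref{eq: rho_gamma action}, keeping track of the cocycle $b^g_x$. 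If this gets messy, I would instead argue by irreducibility and central character: $B$ has a unique irreducible of dimension $q-1$ with a given central character $\chi_\gamma$.
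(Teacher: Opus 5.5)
Your proof is correct, but it argues differently from the paper, which disposes of the lemma in one line by appealing to inner products of characters and to the Mackey decomposition of \cref{lemma: Mackey}.

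For the principal series, the paper's route reads $\Res_B^G\Ind_B^G\chi_{\alpha,\beta}\cong\chi_{\alpha,\beta}\oplus\Ind_T^B\chi_{\beta,\alpha}$ over the two Bruhat cells, followed by Frobenius reciprocity against $\Res_T^B\rho_\gamma$. Your step of inducing the stabiliser character of $x=0$ is exactly this computation in disguise. The $B$-orbit $\F_q\subset\P^1(\F_q)$ is the cell $BwB$, and its stabiliser is $T=B\cap wBw^{-1}$.

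Your explicit-model argument uses the multiplicative Fourier diagonalization on $T$, the $B$-lines spanned by $\ket{\infty}$ and $\sum_x\ket{x}$, and the central character $\chi_{\alpha\beta}$ on the $U$-nontrivial complement. Its advantage is that it exhibits the invariant subspaces themselves, essentially the Gelfand--Tsetlin bases the paper only builds afterwards in \cref{subsec: sub_adap bases}. It also needs no character values of the cuspidal representations, which the paper never tabulates. The character route is shorter and does not depend on the chosen models.

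Your fallback is the cleanest way to finish, and it makes a separate irreducibility check unnecessary. A $(q-1)$-dimensional $B$-module with no $U$-fixed vector contains no $\chi_{\alpha,\beta}$, since these are trivial on $U$. It is therefore a single $\rho_\gamma$, and the central character forces $\gamma=\alpha\beta$.

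One small caveat: for $q=2$ the pieces $\chi_{\alpha,\alpha}$ and $\rho_{\alpha^2}$ are both one-dimensional. In the multiplicity-freeness check you should therefore separate them by the action of $U$ rather than by dimension.
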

\begin{proof}
    All these relations can be checked by taking inner products of characters and/or using Mackey decomposition in \cref{lemma: Mackey}.
\end{proof}

Now that we know the restrictions are multiplicity free, we need to find subgroup-adapted bases for each irreducible representation of $G$.

\subsubsection{From $T$ to $B$}\label{sec: T2B}

We need to study the restriction of $\rho_{\gamma}$ to $T$. By \cref{eq: rho_gamma action}, it follows that $\Res_T^B \rho_{\gamma} \psm{a&0\\0&d}$ equals a permutation matrix $\{x \mapsto \frac{d}{a}x\}$ and a global scaling by $\chi_{\gamma} (d)$. To diagonalize this matrix, consider the multiplicative Fourier basis given by
\begin{equation}\label{eq: mult fourier basis}
    \ket{w_{\delta}} := \frac{1}{\sqrt{q-1}} \sum_{x\in \F_q^*} \chi_{\delta}(x) \ket{x} \text{ for all } \delta\in \F_q^*.
\end{equation}
The action of $\rho_{\gamma}$ restricted to $T$ is then diagonalized and given by
\begin{align*}
    \rho_{\gamma} \psm{a&0\\0&d} \ket{w_\delta} &= \frac{1}{\sqrt{q-1}} \sum_{x\in \F_q^*} \chi_{\gamma}(d) \chi_\delta(x) \ket{\frac{d}{a}x} = \chi_{\gamma}(d)\frac{1}{\sqrt{q-1}}\sum_{y\in \F_q^*} \chi_\delta(\frac{a}{d}y) \ket{y}\\
    &=\chi_{\gamma}(d)\chi_{\delta}(\frac{a}{d}) \ket{w_\delta} = \chi_{\delta}(a)\chi_{\delta^{-1}\gamma}(d) \ket{w_\delta}.
\end{align*}

\subsubsection{From $B$ to $G$}

\paragraph{Principal series} We start by studying the principal series representation $I_{\alpha, \beta}$.  Recall by \cref{lem: decomposition of restrictions} that the restriction to $B$ needs to give two one-dimensional subspaces and one space of dimension $q-1$. The basis used in \cref{eq: full chi_ab action} gives a permutation of the basis elements by $x\mapsto \frac{ax+b}{d}$, which is sparse but not subgroup adapted. The permutation, as in \cref{sec: T2B}, directs us toward the Fourier basis.
\begin{itemize}
    \item Note that $B$ is the stabilizer of $\infty$ under the standard action. Therefore $\ket{\infty}$ spans an invariant subspace such that
    \begin{equation*}
        I_{\alpha, \beta} (g) \ket{\infty } = \chi_{\alpha, \beta}(g)\ket{\infty} \quad \text{ for all } g\in B.
    \end{equation*}
    \item Consider the additive (inverse) Fourier basis of $\F_q$ given by
    \begin{equation*}
        \ket{v_{t}}:= \frac{1}{\sqrt{q}}\sum_{x\in \F_q} \overline{\psi(tx)} \ket{x} \quad \text{ for all } t\in \F_q.
    \end{equation*}
    The action of $B$ on this basis is given by
    \begin{align}\label{eq: principal series on basis}
        I_{\alpha, \beta} \psm{a&b\\0&d} \ket{v_t}
        &= \frac{1}{\sqrt{q}} \sum_{x\in \F_q} \chi_{\alpha}(d) \chi_{\beta}(a) \overline{\psi(tx)} \ket{\frac{ax+b}d} \nonumber\\
        &= \chi_{\beta}(a) \chi_{\alpha}(d) \frac{1}{\sqrt{q}} \sum_{y\in \F_q} \overline{\psi(t \frac{dy-b}{a})} \ket{y}\nonumber\\
        &= \chi_{\beta}(a) \chi_{\alpha}(d) \psi(t \frac{b}{a}) \ket{v_{\frac{d}{a}t}}.
    \end{align}
    Thus $\ket{v_0}$ gives us another one-dimensional invariant subspace.
    \item The representation given by \cref{eq: principal series on basis} is equivalent to the one given in \cref{eq: rho_gamma action} after the basis change $\ket{v_t'} := \chi_{\beta^{-1}}(t)\ket{v_t}$ for all $t\in \F_q^*$. Indeed, the action on this basis is given by
    \begin{align*}
        I_{\alpha, \beta} \psm{a&b\\0&d} \ket{v_t'} &= \chi_{\beta^{-1}}(t) \chi_{\beta}(a) \chi_\alpha(d) \psi(t \frac{b}{a}) \ket{v_{\frac{d}{a}t}}\\
        &= \chi_{\alpha\beta}(d) \psi(t \frac{b}{a}) \chi_{\beta^{-1}}(t\frac{d}{a})\ket{v_{\frac{d}{a}t}}\\
        &= \chi_{\alpha\beta}(d)\psi(t \frac{b}{a}) \ket{v_{\frac{d}{a}t}'}.
    \end{align*}
\end{itemize}
Thus the Gelfand--Tsetlin basis for $I_{\alpha, \beta}$ is
\begin{align}\label{eq: GT principal}
    \{\ket{\infty}, \ket{v_0}, \ket{w_\delta}: \delta\in \F_q^*\}, \text{ with } \ket{w_\delta} &= \frac{1}{\sqrt{q-1}}\sum_{t\in \F_q^*} \chi_\delta(t)\chi_{\beta^{-1}}(t) \;\ket{v_t},\\
    \ket{v_t} &= \frac{1}{\sqrt{q}}\sum_{x\in \F_q} \overline{\psi(tx)}\;\ket{x}.\nonumber\end{align}

\paragraph{Steinberg and determinant representations}
We find a GT basis for both of these representations.
Recall from \cref{eq: v} the vector
\begin{equation}
    \ket{v}:= \frac{1}{\sqrt{q+1}}\left(\chi_{\alpha}(-1) \ket{\infty} + \sum_{x\in \F_q}\ket{x}\right) = \frac{1}{\sqrt{q+1}}\left(\chi_\alpha(-1)\ket{\infty} + \sqrt{q}\,\ket{v_0}\right)
\end{equation}
spans a one-dimensional subspace for the determinant representation, which is trivially subgroup adapted. The space of the Steinberg representation can then be spanned by
\begin{equation}
    \ket{v^\perp}:=\frac{1}{\sqrt{q+1}}\left(\sqrt{q}\,\chi_\alpha(-1)\ket{\infty} - \ket{v_0}\right) \text{ and } \ket{w_{\delta}}, \; \delta \in \F_q^*.
\end{equation}
Thus the GT basis for $(\chi_\alpha\circ\det)\oplus\St_\alpha$ is almost the same as for $I_{\alpha, \beta}$, but with a different combination of $\ket{\infty}$ and $\ket{v_0}$.

\paragraph{Cuspidal representations} The cuspidal representation $\pi_{\theta}$ is irreducible when restricted to $B$. The action on $\C[\F_q^*]$ is given by \cref{eq: cuspidal action}, which corresponds precisely to \cref{eq: rho_gamma action}. The Gelfand--Tsetlin basis is thus given by the multiplicative Fourier basis from \cref{eq: mult fourier basis}.

\subsection{Action of transversals}\label{sec: twiddle factors}
This subsection computes, in the Gelfand--Tsetlin bases fixed above, the representation matrices required by the two induction steps in the subgroup chain.
For \(T\subset B\), these are the matrices of all chosen transversal representatives.
For \(B\subset G\), the Mackey circuit of \cref{alg: Mackey induced transform} only requires the matrices of the nontrivial double-coset representative \(w\), since every non-identity transversal element has the form \(u_xw\) with \(u_x\in B\). We calculate the explicit coefficients for each irrep action at these transversal elements. The subsection also makes use of Gauss sums, which are defined in \cref{subsec: Gauss sums}.

\subsubsection{From \texorpdfstring{$T$}{T} to \texorpdfstring{$B$}{B}}\label{subsubsec: from T to B}
We only have to study the transversal actions for $\rho_\gamma$ defined in \cref{eq: rho_gamma action}, as the one-dimensional ones are already given above. The natural transversal set for $B/T$ is given by
\begin{equation*}
    u_b:= \psm{1&b\\0&1} \text{ for all } b \in \F_q.
\end{equation*}
The Gelfand--Tsetlin basis is given by \cref{eq: mult fourier basis}, on which the transversal set acts by
\begin{equation*}
    \rho_\gamma \psm{1&b\\0&1} \ket{w_\delta}
    =\frac{1}{\sqrt{q-1}}
      \sum_{x\in\F_q^*}\chi_\delta(x)\psi(bx)\ket{x}.
\end{equation*}
It follows that
\begin{align}\label{eq: twiddle factor B}
    \braket{w_\beta}{\rho_\gamma\psm{1&b\\0&1}| w_\alpha}
    &=\frac{1}{q-1}\sum_{x\in\F_q^*}
      \chi_{\alpha\beta^{-1}}(x)\psi(bx) \nonumber\\
    &=\frac{1}{q-1}G_q(\alpha\beta^{-1},b),
\end{align}
where $G_q$ is a Gauss sum, see \cref{def: Gauss sum} in \cref{subsec: Gauss sums}.

\subsubsection{From \texorpdfstring{$B$}{B} to \texorpdfstring{$G$}{G}}
The second set of transversal actions is a bit harder to compute. Using the transversal set of \cref{eq: transversal set G/B}, we can write each element as
\begin{equation*}
    t_x = \psm{1&x\\0&1} \cdot \psm{0&1\\1&0}.
\end{equation*}
Since the first part is an element of $B$, this will act as a block diagonal matrix on the Gelfand--Tsetlin basis, and the precise action is described in the previous section. In particular, we only have to study the representation action of the element $w=\psm{0&1\\1&0}$ on the Gelfand--Tsetlin basis. Since $w^2=\I$ it holds that $\braket{b'}{\rho(w)| b} = \overline{\braket{b}{\rho(w)| b'}}$ for any irreducible representation $\rho$ and normalized vectors $\ket{b},\ket{b'}$, so that we don't have to do some of the calculations double.

\paragraph{The principal series}\label{par: twiddle principal}
For the principal series, we can use \cref{eq: g action projective 1,eq: g action projective 2} to compute the transversal actions. On the initial basis, the action is given by
\begin{align*}
    &I_{\alpha, \beta} (w)\,\ket{0} = \ket{\infty}, \\
    &I_{\alpha, \beta} (w) \,\ket{\infty} = \ket{0}, \\
    &I_{\alpha, \beta} (w) \,\ket{x} = \chi_{\alpha}(x)\chi_{\beta}(-1/x)\ket{1/x} \quad \text{ for } x\in \F_q^*.
\end{align*}
We can then compute the inner product between basis elements to derive the explicit matrix. These derivations are done in \cref{app: twiddle calcs} and the resulting action for $I_{\alpha, \beta}$ with $\alpha\neq \beta$ is a sparse matrix given by the following.
\begin{itemize}
    \item The basis vectors $\{\ket{\infty}, \ket{v_0}, \ket{w_{\alpha}}, \ket{w_\beta}\}$ span an invariant subspace, where the action is given by
    \begin{align*}
I_{\alpha,\beta}(w)\ket{\infty} &= \frac{1}{\sqrt{q}}\ket{v_0} + \frac{\sqrt{q-1}}{\sqrt{q}}\ket{w_{\beta}},\\
I_{\alpha,\beta}(w) \ket{v_0} &=\frac{1}{\sqrt{q}}\ket{\infty} + \frac{\sqrt{q-1}}{q} \chi_{\beta}(-1)g_q(\alpha^{-1}\beta)\ket{w_\alpha},\\
I_{\alpha,\beta}(w) \ket{w_\alpha} &=\frac{\sqrt{q-1}}{q} \chi_{\beta^{-1}}(-1)\overline{g_q(\alpha^{-1}\beta)} \ket{v_0} + \frac{1}{q} \chi_{\alpha^{-1}}(-1) g_q(1) g_q(\alpha\beta^{-1}) \ket{w_{\beta}}\\
        &=\chi_{\alpha^{-1}}(-1)g_q(\alpha\beta^{-1})\left(\frac{\sqrt{q-1}}{q} \ket{v_0} - \frac{1}{q} \ket{w_{\beta}}\right),\\
I_{\alpha,\beta}(w) \ket{w_\beta} &= \frac{\sqrt{q-1}}{\sqrt{q}} \ket{\infty} - \frac{1}{q} \chi_{\beta^{-1}}(-1)g_q(\alpha^{-1}\beta) \ket{w_\alpha},
    \end{align*}
    where $g_q(\alpha) := G_q(\alpha,1)$ is a Gauss sum, see \cref{def: Gauss sum} in \cref{subsec: Gauss sums}.
    \item For the rest of the basis, consisting of $\{\ket{w_\delta}: \delta\notin\{\alpha,\beta\}\}$, the action by $I_{\alpha,\beta}$ is a phase-permutation matrix given by
    \begin{align*}
        I_{\alpha, \beta} (w) \ket{w_\delta}
        =\frac{1}{q}\chi_{\delta^{-1}}(-1)
        g_q(\delta\alpha^{-1})g_q(\delta\beta^{-1})
        \ket{w_{\alpha\beta\delta^{-1}}}.
    \end{align*}
\end{itemize}
Equivalently, for arbitrary $\delta,\epsilon\in X_q$,
\begin{equation}
 \langle w_\epsilon|I_{\alpha,\beta}(w)|w_\delta\rangle
 =\frac{1}{q}\chi_{\delta^{-1}}(-1)
   g_q(\delta\alpha^{-1})g_q(\delta\beta^{-1})
   \mathbf 1_{\{\epsilon=\alpha\beta\delta^{-1}\}}.
\end{equation}

\paragraph{Steinberg}\label{par: twiddle Steinberg}
The Steinberg and determinant representation are a little bit easier than the principal series. It holds that
\begin{align*}
    &I_{\alpha, \alpha} (w)\,\ket{0} = \ket{\infty}, \\
    &I_{\alpha, \alpha} (w) \,\ket{\infty} = \ket{0}, \\
    &I_{\alpha, \alpha} (w) \,\ket{x} = \chi_{\alpha}(-1)\ket{1/x},
\end{align*}
and indeed using the vector $\ket{v}$ from \cref{eq: v} we find that
\begin{equation*}
    I_{\alpha, \alpha}(w) \ket{v} = I_{\alpha, \alpha}(w) \frac{1}{\sqrt{q+1}} \left(\chi_{\alpha}(-1) \ket{\infty} + \sum_{x\in \F_q}\ket{x}\right) = \chi_{\alpha}(-1)\ket{v}
\end{equation*}
since $\chi_{\alpha}(-1)^2 = \chi_{\alpha}((-1)^2) = 1.$
On the GT basis for the Steinberg representation the action can be computed from the inner products in \cref{app: twiddle calcs}. We obtain
\begin{align*}
I_{\alpha, \alpha}(w)\ket{v^\perp} &=-\frac{1}{q}\chi_{\alpha}(-1)\ket{v^\perp}+ \chi_{\alpha}(-1)\frac{\sqrt{q^2-1}}{q}\ket{w_{\alpha}},\\
I_{\alpha, \alpha}(w)\ket{w_{\alpha}} &=\frac{\sqrt{q^2-1}}{q}\chi_{\alpha}(-1)\ket{v^\perp} + \frac{1}{q}\chi_{\alpha}(-1)\ket{w_{\alpha}},\\
\St_{\alpha}(w)\ket{w_{\delta}}
    &=\frac{1}{q}\chi_{\delta^{-1}}(-1)
      g_q(\delta\alpha^{-1})^2
      \ket{w_{\alpha^2\delta^{-1}}}
      \quad \text{for all } \delta\neq\alpha.
\end{align*}
For arbitrary $\delta,\epsilon\in X_q$, the coefficient between the $\rho_{\alpha^2}$ paths is
\begin{equation}\label{eq: corrected Steinberg matrix coefficient}
 \langle w_\delta|\St_\alpha(w)|w_\epsilon\rangle
 =\frac{\chi_{\epsilon^{-1}}(-1)}q
   g_q(\epsilon\alpha^{-1})^2
   \mathbf 1_{\{\delta=\alpha^2\epsilon^{-1}\}}.
\end{equation}

\paragraph{Cuspidal}
Next, consider a cuspidal representation $\pi_\theta$ with $\theta|_{\F_q^*}=\chi_\gamma$.  Its GT basis is the multiplicative Fourier basis of $\C[\F_q^*]$.  Since $w=w'\psm{-1&0\\0&1}$, equations \cref{eq: cuspidal action,eq: cuspidal w prime} give
\begin{align}
    \braket{w_\epsilon}{\pi_\theta(w)|w_\delta}
    &=-\frac{\chi_\delta(-1)}{q}
      g_{q^2}\!\left(\theta\cdot(\chi_{\epsilon^{-1}}\circ\No)\right)
      \mathbf 1_{\{\epsilon=\delta^{-1}\gamma\}}.
    \label{eq: corrected cuspidal twiddle}
\end{align}
Thus $\pi_\theta(w)$ permutes the GT basis by $\ket{w_\delta}\mapsto\ket{w_{\delta^{-1}\gamma}}$, with the displayed Gauss-sum phase. The minus sign comes from the normalization of $j_\theta$ in \cref{eq: cuspidal w prime}.

\subsection{The QFT circuit}\label{sec: GL2 QFT circuit}

\subsubsection{Fourier transforms of \texorpdfstring{$\F_q$}{Fq} and Gauss sums}
To construct a quantum circuit for the QFT of $G$, we must use the additive and multiplicative descriptions of a field element interchangeably.  Fix a generator $g\in\F_q^*$.  We write $\ket{a}_+$ when $a\in\F_q$ is stored in a polynomial-basis field encoding, and write $\ket{a}_\times$ when a nonzero element $a=g^d$ is stored by its exponent $d\in\Z/(q-1)\Z$.  A regime flag distinguishes these encodings; it is suppressed from the notation.  The state $\ket{0}_+$ is used as a zero element of the field and is never passed to a discrete-logarithm routine.

On nonzero basis states, define the representation-changing maps by
\begin{align}
    \exp_g\ket{g^d}_\times&=\ket{g^d}_+,
    \label{eq: exp operator}\\
    \log_g\ket{a}_+&=\ket{a}_\times.
    \label{eq: log operator}
\end{align}
They fix the zero and are extended to mutually inverse unitaries on the tagged work register.

\begin{lemma}\label{lem: add and mult of Fq}
    For every $\varepsilon>0$, there are uniform quantum circuits of size $\operatorname{poly}(\log q,\log(1/\varepsilon))$ which, with all ancillary registers returned to zero, implement to operator-norm error at most $\varepsilon$ the transforms
    \begin{align}
        \Fou_+\ket{x}_+
        &=
        \frac1{\sqrt q}\sum_{y\in\F_q}\psi_x(y)\ket{y}_+,
        \label{eq: efficient additive Fq QFT}\\
        \Fou_\times\ket{a}_\times
        &=
        \frac1{\sqrt{q-1}}\sum_{\alpha\in X_q}
        \chi_\alpha(a)\ket{\alpha}_\times,
        \qquad a\in\F_q^*,
        \label{eq: efficient multiplicative Fq QFT}
    \end{align}
    as well as the conversion maps \cref{eq: exp operator} and \cref{eq: log operator}.  The multiplicative transform is defined to fix $\ket{0}_+$.
\end{lemma}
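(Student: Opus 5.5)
The plan is to reduce each of the four maps to standard abelian QFTs and standard quantum number-theoretic subroutines, each of which is known to cost $\operatorname{poly}(\log q,\log(1/\varepsilon))$.

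\emph{Additive transform.} Write $q=p^r$ and fix a polynomial basis, so that $\F_q\cong(\Z/p\Z)^r$ as additive groups. Every additive character has the form $\psi_x(y)=\psi_1(xy)$ with $\psi_1(z)=\exp(2\pi i\,\mathrm{Tr}_{\F_q/\F_p}(z)/p)$. Since $(x,y)\mapsto \mathrm{Tr}(xy)$ is a nondegenerate $\F_p$-bilinear form, there is an invertible $\F_p$-linear map $M$ (the dual-basis change) with $\mathrm{Tr}(xy)=\langle Mx,y\rangle$ for the coordinate dot product. Thus $\Fou_+$ factors as two steps.
\begin{itemize}
\item First, a reversible classical map $\ket{x}\mapsto\ket{Mx}$. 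This is a Gaussian-elimination-computable permutation with $\operatorname{poly}(r\log p)$ gates, made clean by computing into a fresh register and uncomputing the input with $M^{-1}$.
\item Second, $r$ parallel copies of the cyclic QFT over $\Z/p\Z$.
\end{itemize}
The cyclic QFT for arbitrary modulus $p$ is implemented to error $\varepsilon/r$ in $\operatorname{poly}(\log p,\log(r/\varepsilon))$ gates by the Kitaev / Hales–Hallgren / Mosca–Zalka constructions. These are cited in the paper as \cite{Kitaev1995,MoscaZalka2004,CleveWatrous2000}.

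\emph{Conversion maps.} The map $\exp_g$ is modular exponentiation in $\F_q$: compute $g^d$ by repeated squaring into an ancilla, which is classical and reversible with $\operatorname{poly}(\log q)$ gates. To make it an in-place unitary, I also compute $d$ back from $g^d$ and erase it. That requires a discrete-logarithm circuit, which I would implement by Shor's algorithm over $\F_q^*$. Its phase estimation is exact up to an error that can be driven below $\varepsilon$ at cost $\operatorname{poly}(\log q,\log(1/\varepsilon))$, with its garbage uncomputed by running it in reverse. Then $\log_g=\exp_g^{-1}$ is obtained as the inverse circuit. On the tagged register, I control everything on the regime flag and the predicate ``$a\neq 0$'', so that $\ket{0}_+$ is fixed. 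I expect this step to be the main obstacle. A coherent discrete log is only approximately clean, so I must bound the leftover entanglement with ancillas: the error on each basis state must be small \emph{uniformly}, and it must convert to an operator-norm bound. I would control this with the standard ``compute, copy, uncompute'' argument, where per-input error $\delta$ yields operator error $O(\sqrt\delta)$, and choose $\delta=\varepsilon^2$.

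\emph{Multiplicative transform.} In exponent encoding, $\ket{a}_\times=\ket{d}$ with $a=g^d$, and the characters are $\chi_\alpha(g^d)=e^{2\pi i \alpha d/(q-1)}$ under the fixed identification $X_q\cong\Z/(q-1)\Z$. Hence $\Fou_\times$ is exactly the cyclic QFT over $\Z/(q-1)\Z$, again available to error $\varepsilon$ in $\operatorname{poly}(\log q,\log(1/\varepsilon))$. I control it on the regime flag so that $\ket{0}_+$ is fixed.

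\emph{Conclusion.} The errors add by the triangle inequality, so running every subroutine at precision $\varepsilon/4$ gives the claim. Uniformity follows because all circuits are generated from $p$, $r$, an irreducible polynomial and the generator $g$ by classical polynomial-time procedures.
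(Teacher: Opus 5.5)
Your proposal is correct and follows the same route as the paper. In both, the additive transform is the QFT of $(\F_q,+)$ under the trace pairing, which your dual-basis map $M$ followed by $r$ copies of the $\Z/p\Z$ QFT makes explicit. The multiplicative transform is the cyclic QFT over $\Z/(q-1)\Z$ via $g$, and the conversion maps use reversible exponentiation together with an amplified coherent Shor/Mosca--Zalka discrete logarithm.

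Two points in your version fill in details the paper's short proof leaves implicit. First, the in-place $\exp_g$ itself needs the discrete logarithm to uncompute the exponent. Second, uniform per-input error converts to operator-norm error because the discrete-log circuit acts block-diagonally in its input register.
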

\begin{proof}
    The additive transform is the QFT of the finite abelian group $(\F_q,+)$, with the trace pairing identifying $\F_q$ with its character group, and is efficiently implementable by \cite[Section~5]{Kitaev1995}.  The generator $g$ identifies $\F_q^*$ with the cyclic group $\Z/(q-1)\Z$, so the multiplicative transform is an efficient cyclic QFT.

    Reversible exponentiation implements $\exp_g$, while a reversible, amplified version of Shor's discrete-logarithm algorithm \cite{Shor1994, MoscaZalka2004} implements $\log_g$ to error at most $\varepsilon$ with clean ancillas in polynomial time.  The regime flag handles the zero element of the field without invoking discrete logarithm.
\end{proof}

Recall from \cref{def: Gauss sum} that a Gauss sum contains both an additive and multiplicative character. A quantum algorithm for Gauss sum estimation is given in \cite{vDS02} and was later reformulated in \cite{Bruin}. This latter version lends itself well to our purpose, and we present it here in an altered version. The algorithm makes use of an ancilla and requires the following operator.
\begin{lemma}[character phases]\label{lem: controlled character phase}
	The operator
    \begin{equation}
		U_{cp}\ket{\alpha}\ket{x}
        =\chi_\alpha(x)\ket{\alpha}\ket{x},
        \qquad \alpha\in X_q,\ x\in\F_q^*,
	\end{equation}
    and defined to be the identity when $\alpha=0$, can, for every $\varepsilon>0$, be implemented to operator-norm error at most $\varepsilon$ with size $\operatorname{poly}(\log q,\log(1/\varepsilon))$.
\end{lemma}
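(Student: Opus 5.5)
The plan is to reduce the controlled character phase to a phase kickback driven by modular arithmetic on exponents. Fix the generator $g$ of $\F_q^*$ used in the identification $X_q\cong\F_q^*$, so that a character label $\alpha$ is stored as an exponent $k\in\Z/(q-1)\Z$ (via the $\times$-encoding) and $\chi_\alpha(g^d)=e^{2\pi i kd/(q-1)}$. On input $\ket{\alpha}\ket{x}$ with $x\in\F_q^*$ stored in the additive encoding $\ket{x}_+$, we first apply $\log_g$ from \cref{lem: add and mult of Fq}, obtaining $\ket{x}_\times=\ket{d}$ with $x=g^d$. If the label register is already in the multiplicative encoding, no conversion is needed there. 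When $\alpha=0$, or when $x=0$ in the additive regime, the regime flag controls everything below off, which gives the required identity action.

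The core step is the phase $\ket{k}\ket{d}\mapsto e^{2\pi i kd/(q-1)}\ket{k}\ket{d}$. Here $k$ and $d$ are integers in $[0,q-2]$ with $O(\log q)$ bits. I will compute $m=kd \bmod (q-1)$ reversibly into a clean ancilla using a standard modular multiplication circuit of size $\operatorname{poly}(\log q)$. Then I apply the diagonal phase $\ket{m}\mapsto e^{2\pi i m/(q-1)}\ket{m}$, and uncompute $m$. The phase gate factorizes over the bits $m_j$ of $m=\sum_j 2^j m_j$ as a product of single-qubit rotations $\mathrm{diag}(1,e^{2\pi i 2^j/(q-1)})$. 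Each rotation angle is an efficiently computable rational multiple of $2\pi$. Approximating each of the $O(\log q)$ rotations to error $\varepsilon/O(\log q)$ via Solovay--Kitaev, or exactly in the gate model with computable rotations, costs $\operatorname{polylog}(\log q/\varepsilon)$ each. Finally we apply $\exp_g$ to restore $\ket{x}_+$.

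For the error analysis, the only non-exact pieces are $\log_g$, which has error $\le\varepsilon/3$ with clean ancillas by \cref{lem: add and mult of Fq}, and the rotations, whose total error is $\le\varepsilon/3$. The inverse $\exp_g$ is exact. Errors add by subadditivity of the operator norm along the product. The main obstacle is ensuring that the discrete-logarithm step leaves no garbage, since otherwise the kicked-back phase would be entangled with junk and would not be a clean diagonal unitary. This is why I rely on the clean-ancilla, amplified version in \cref{lem: add and mult of Fq}. Uncomputing $\log_g$ coherently afterwards, via $\exp_g$, which is exact reversible exponentiation, keeps all ancillas returned to zero up to the stated error.
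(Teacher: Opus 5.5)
Your argument is correct and has the same skeleton as the paper's proof. Both write $\chi_\alpha(x)=\omega_N^{kd}$ with $N=q-1$, compute $t=kd\bmod N$ reversibly into a clean register, imprint a phase that depends only on $t$, and uncompute.

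The difference is the phase primitive. The paper uses phase kickback: it prepares the eigenstate $\ket{\widetilde 1}=N^{-1/2}\sum_s\omega_N^{-s}\ket{s}$ with an inverse cyclic QFT and adds $t$ into it modulo $N$. That needs only a modular adder plus a QFT already supplied by \cref{lem: add and mult of Fq}, and the same eigenstate can be reused across calls, for example both applications of $U_{cp}$ in \cref{thm: coherent Gauss phase}. You instead apply $\omega_N^{t}=\prod_j(\omega_N^{2^j})^{t_j}$ directly as $O(\log q)$ single-qubit phase gates. This is more elementary: there is no auxiliary state and no QFT of the non-power-of-two order $q-1$, and the only approximation comes from rotation synthesis. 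You could even drop the modular multiplication and use the $O(\log^2 q)$ controlled phases $\omega_N^{2^{i+j}k_id_j}$ on bit pairs of $k$ and $d$.

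Two small corrections. First, your $\log_g/\exp_g$ wrapper is unnecessary where the lemma is used. In \cref{thm: coherent Gauss phase} the work register is in the $\times$-encoding whenever $U_{cp}$ acts, and the paper's proof takes $x$ to be stored by its exponent $d$.

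Second, if you do convert, your claim that $\exp_g$ is exact is not justified. The in-place map $\ket{d}\mapsto\ket{g^d}$ must erase $d$. A polynomial-size classical reversible circuit for it, run backwards, would solve discrete logarithms classically, so it too needs a quantum discrete-log uncomputation. This does not break your bound. You can give $\exp_g$ its own share of the error budget. Better, compute $d$ out of place into an ancilla, apply the phase, and uncompute $d$ with the inverse of the same circuit. If that circuit approximates $\ket{x}_+\ket{0}\mapsto\ket{x}_+\ket{d}$ within $\delta$, the total error is at most $2\delta$ plus the rotation error.
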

\begin{proof}

    Put $N=q-1$ and $\omega_N=e^{2\pi i/N}$.  In the exponent encodings, let $k,d\in\Z/N\Z$ be the stored labels of $\chi_\alpha$ and $x=g^d$, respectively, with the convention $\chi_\alpha(g^d)=\omega_N^{kd}$.  We apply the phase kickback trick.  Prepare a register
    \begin{equation*}
        \ket{\widetilde 1}
        =\frac1{\sqrt N}\sum_{s\in\Z/N\Z}\omega_N^{-s}\ket{s}
    \end{equation*}
    via inverse additive Fourier transform. Reversibly compute $t=kd\bmod N$ in a clean register and add $t$ modulo $N$ to the $\ket{\widetilde{1}}$ register.  Since
    \begin{equation*}
        \operatorname{Add}_t\ket{\widetilde 1}
        =\omega_N^t\ket{\widetilde 1},
    \end{equation*}
    this multiplies the input by $\omega_N^{kd}=\chi_\alpha(x)$ without changing either input register.
    Uncomputing $t$ and unpreparing $\ket{\widetilde 1}$ returns all ancillas to zero. Note that the regime flag bypasses the entire construction on the zero element.
\end{proof}

\begin{theorem}[Gauss phase]\label{thm: coherent Gauss phase}
	\sloppy For every $\varepsilon>0$ there is a circuit over $\C[\F_q]$ of size $\operatorname{poly}(\log q,\log(1/\varepsilon))$ that is $\varepsilon$-close in operator norm to
	\begin{equation}\label{eq: coherent Gauss phase}
		\Gamma_q\ket{\alpha}=\begin{cases}
			\ket{0}, & \alpha=0,\\
			-\ket{1}, & \alpha=1,\\
			\dfrac{g_q(\alpha)}{\sqrt{q}}\ket{\alpha},
            & \alpha\in X_q\setminus\{1\}.
		\end{cases}
	\end{equation}
\end{theorem}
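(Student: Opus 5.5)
Following van Dam--Seroussi and its reformulation in \cite{Bruin}, the plan is to realise $\Gamma_q$ as a diagonal phase $\alpha\mapsto g_q(\alpha)/\sqrt q$ on $X_q\setminus\{1\}$. This phase is obtained by an eigenvalue-style computation: a Fourier-basis state of $\F_q$ is transported through the additive and multiplicative transforms of \cref{lem: add and mult of Fq}, picking up the character phase of \cref{lem: controlled character phase} along the way. The underlying identity is that, for a nontrivial $\alpha$, the vector $\ket{\chi_\alpha}:=\frac{1}{\sqrt{q-1}}\sum_{x\in\F_q^*}\overline{\chi_\alpha(x)}\ket{x}_+$ is mapped by $\Fou_+$ to $\frac{g_q(\alpha)}{\sqrt q}\ket{\chi_\alpha}$, up to a conjugation convention. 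This follows from the substitution $y\mapsto y/x$ in $\sum_x\overline{\chi_\alpha(x)}\psi(xy)$. The $x=0$ component contributes nothing because $\sum_x\overline{\chi_\alpha(x)}=0$. Since $|g_q(\alpha)|=\sqrt q$ for nontrivial $\alpha$, this is a pure phase.

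The circuit on input $\ket{\alpha}$ works as follows.
\begin{enumerate}
\item Prepare an ancilla in $\frac1{\sqrt{q-1}}\sum_{x\in\F_q^*}\ket{x}_\times$, using the cyclic QFT on the uniform state or a Hadamard-type preparation.
\item Apply $U_{cp}^\dagger$ controlled on $\alpha$. This produces $\frac1{\sqrt{q-1}}\sum_x\overline{\chi_\alpha(x)}\ket{x}_\times$.
\item Convert the ancilla to the additive encoding with $\exp_g$ and apply $\Fou_+$. This multiplies the state by $g_q(\alpha)/\sqrt q$ and leaves it unchanged otherwise.
\item Undo the conversion and $U_{cp}^\dagger$, returning the ancilla to the uniform state, and unprepare it.
\end{enumerate}
Because the ancilla is an exact eigenvector for each fixed $\alpha$, the phase kicks back onto $\ket\alpha$ and the ancilla returns cleanly. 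I would verify the Gauss sum convention of \cref{def: Gauss sum} against $\psi_x(y)$ so that the phase comes out as $g_q(\alpha)$ and not its conjugate. If it comes out conjugated, I would use $\Fou_+^\dagger$ or $U_{cp}$ instead.

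Two special labels need separate handling. For $\alpha=1$, the trivial character, the ancilla is the uniform state on $\F_q^*$. This is not an eigenvector of $\Fou_+$, since its $\ket{0}_+$ overlap is nonzero. Its Gauss sum is $g_q(1)=\sum_{x\neq0}\psi(x)=-1$, matching the required output $-\ket1$, which is also $-1\cdot$ phase rather than $g_q(1)/\sqrt q$. So the branch $\alpha\in\{0,1\}$ should be detected by a comparator: on $\alpha=1$ apply a fixed $-1$ phase, on $\alpha=0$ (the regime-flag zero) apply the identity, and in both cases bypass the eigenvector construction. This makes $\Gamma_q$ a well-defined diagonal unitary.

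For errors, each subroutine ($\Fou_+$, $\exp_g/\log_g$, $U_{cp}$, and the state preparation) is implemented to error $\varepsilon/c$ for a constant $c$. Operator-norm errors add under composition of unitaries, and each piece has size $\operatorname{poly}(\log q,\log(1/\varepsilon))$. The main obstacle is ensuring the ancilla is exactly disentangled for every $\alpha$ simultaneously. A clean proof should show that the ideal circuit equals $\Gamma_q\otimes I$ on $\ket\alpha\otimes\ket{0}$ exactly, including the $\ket{0}_+$ component arising from $\Fou_+$. This holds because $\ket{\chi_\alpha}$ has no overlap with the row $\bra{0}_+\Fou_+$, and I would check it via the vanishing character sum.
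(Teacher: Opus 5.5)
Your central identity is false, and the circuit built on it does not return the ancilla. Carry out the substitution $z=xy$ that you propose. For $y\neq0$ it gives $\sum_{x\in\F_q^*}\overline{\chi_\alpha(x)}\psi(xy)=\chi_\alpha(y)\,g_q(\alpha^{-1})$, and hence
\[
\Fou_+\,\frac{1}{\sqrt{q-1}}\sum_{x\in\F_q^*}\overline{\chi_\alpha(x)}\ket{x}_+
=\frac{g_q(\alpha^{-1})}{\sqrt q}\cdot\frac{1}{\sqrt{q-1}}\sum_{y\in\F_q^*}\chi_\alpha(y)\ket{y}_+ .
\]
So $\Fou_+$ exchanges the $\chi_{\alpha^{-1}}$-weighted and $\chi_\alpha$-weighted vectors rather than fixing them. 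For $\alpha^2\neq1$ the output is even orthogonal to the input, since $\sum_x\chi_{\alpha^2}(x)=0$. This is not a convention issue about $g_q(\alpha)$ versus its conjugate.

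Your step~4 therefore fails. After $\log_g$, undoing $U_{cp}^\dagger$ multiplies by $\chi_\alpha(y)$ once more, leaving the ancilla proportional to $\sum_y\chi_{\alpha^2}(y)\ket{y}$. That state depends on $\alpha$, so the ancilla stays entangled with the label register and unpreparing does not clear it. Your fallbacks do not help. $\Fou_+^\dagger$ also swaps $\alpha$ and $\alpha^{-1}$, since its kernel is $\psi(-xy)$. Starting with $U_{cp}$ and undoing with $U_{cp}^\dagger$ reproduces the same defect, now with $\chi_{\alpha^{-1}}^2$.

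The missing idea, which is exactly the paper's circuit, is to exploit this swap: apply the same character phase twice, not a phase followed by its inverse.
Start from the uniform state $\Fou_\times\ket{1}_\times$.
Applying $U_{cp}$ gives $\sum_x\chi_\alpha(x)\ket{x}$.
Then $\exp_g$ and $\Fou_+$ give $\frac{g_q(\alpha)}{\sqrt q}\frac{1}{\sqrt{q-1}}\sum_{y\neq0}\chi_{\alpha^{-1}}(y)\ket{y}_+$; the $y=0$ amplitude vanishes by the character sum, as you observed.
After $\log_g$, a second $U_{cp}$ cancels $\chi_{\alpha^{-1}}(y)$ and restores the uniform state, which $\Fou_\times^\dagger$ returns to $\ket{1}_\times$.
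This orientation also produces $g_q(\alpha)$ itself rather than $g_q(\alpha^{-1})=\chi_\alpha(-1)\overline{g_q(\alpha)}$. Your treatment of $\alpha=0$ and $\alpha=1$ via the flag and a fixed $-1$ phase, and your error accounting, match the paper.
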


\begin{proof}
    First detect the zero using the regime flag.  On that branch perform no operation, so $\Gamma_q\ket{0}=\ket{0}$.  On the multiplicative branch, apply the phase $-1$ conditional on $\alpha=1$.
    For $\alpha\in X_q\setminus\{1\}$, initialize a work register in the multiplicative identity state and apply the following circuit:

\begin{figure}[H]
\centering
\begin{quantikz}[column sep=0.32cm]
  \lstick{$\ket{\alpha}$}
  & \qw
  & \gate[2]{U_{cp}}
  & \qw
  & \qw
  & \qw
  & \gate[2]{U_{cp}}
  & \qw
  & \rstick{$\dfrac{g_q(\alpha)}{\sqrt q}\ket{\alpha}$}\qw\\
  \lstick{$\ket{1}_\times$}
  & \gate{\Fou_\times}
  &
  & \gate{\exp_g}
  & \gate{\Fou_+}
  & \gate{\log_g}
  &
  & \gate{\Fou_\times^\dagger}
  & \rstick{$\ket{1}_\times$}\qw\\
\end{quantikz}
\caption{Circuit implementing the Gauss phase on the nontrivial multiplicative-character branch.
The work register is returned to $\ket{1}_\times$.}
\label{fig: GL2 coherent Gauss phase circuit}
\end{figure}

    Each $U_{cp}$ acts jointly on the displayed registers and leaves the $\alpha$-register unchanged.
    After the first controlled phase, exponentiation, and the additive Fourier transform, the work state is
    \begin{equation*}
      \frac1{\sqrt{q(q-1)}}
      \sum_{y\in\F_q}
      \left(\sum_{x\in\F_q^*}\chi_\alpha(x)\psi(xy)\right)|y\rangle_+.
    \end{equation*}
    The coefficient at $y=0$ vanishes.  For $y\neq0$ the substitution $z=xy$ gives
    \begin{equation*}
      \sum_{x\in\F_q^*}\chi_\alpha(x)\psi(xy)
      =\chi_{\alpha^{-1}}(y)g_q(\alpha).
    \end{equation*}
    Hence the work state is
    \begin{equation*}
      \frac{g_q(\alpha)}{\sqrt q}\,
      \frac1{\sqrt{q-1}}
      \sum_{y\in\F_q^*}\chi_{\alpha^{-1}}(y)|y\rangle_+.
    \end{equation*}
    The logarithm and second controlled phase cancel $\chi_{\alpha^{-1}}(y)$, and $\Fou_\times^\dagger$ returns the work register to its initial state.  Since $|g_q(\alpha)|=\sqrt q$, the remaining factor is a phase.  Every gate in this sequence is controlled on the nonzero regime flag, so the zero sentinel is unchanged and is never passed to $\log_g$ or $\exp_g$.  The error bound follows from \cref{lem: add and mult of Fq} and composition of a constant number of approximate unitaries.
\end{proof}

\subsubsection{Circuit for the QFT of \texorpdfstring{$B$}{B}}
A circuit for the QFT of $B$ uses three input registers for $a,d\in \F_q^*$ and $b\in \F_q$. Together they encode the group element $g(a,d,b) = \psm{1&b\\0&1}\psm{a&0\\0&d}$. We will use a naive encoding for the Fourier register, so some clean registers are needed. For labeling the representation of $B$ we use $\C[\F_q]^{\otimes3}$, where $\ket{0, \alpha,\beta}$ labels the linear representation $\chi_{\alpha,\beta}$ and $\ket{\gamma, 0,0}$ labels $\rho_\gamma$.
\begin{itemize}
	\item Start with an input state $\ket{a,d}\ket{b}$ and clean register $\ket{0,0}\ket{0,0}$. Assume that $\ket{b} = \ket{b}_+$ is encoded additively. The total starting state is then
	\begin{equation*}
		\ket{a,d}\ket{b,0,0}\ket{0,0},
	\end{equation*}
	which will be transformed into the Fourier basis with basis states of the form $\ket{\chi}\ket{\lambda}\ket{\tilde{\chi}}$.
	\item Perform the QFT over $T$ on the first two registers.
	\item Apply $\Fou_+$ over the $b$-register, followed by $\log_g$ and $\Fou_{\times}$. Ignoring the clean register in notation, we obtain the state
	\begin{align*}
		&\frac{1}{(q-1)\sqrt{q}}\sum_{\alpha, \beta\in X_q} \chi_{\alpha}(a)\chi_{\beta}(d)\left(\ket{\alpha,\beta}\ket{0}_+ + \frac{1}{\sqrt{q-1}}\sum_{\substack{t\in \F_q^*\\ \eta\in \F_q^*}}\psi(bt)\chi_{\eta}(t) \ket{\alpha, \beta}\ket{\eta}_{\times}\right).\\
		=&\frac{1}{(q-1)\sqrt{q}}\sum_{\alpha, \beta\in X_q} \chi_{\alpha}(a)\chi_{\beta}(d)\left(\ket{\alpha,\beta}\ket{0}_+ + \frac{1}{\sqrt{q-1}}\sum_{\substack{t\in \F_q^*\\ \delta\in \F_q^*}}\psi(bt)\chi_{\alpha\delta^{-1}}(t) \ket{\alpha, \beta}\ket{\alpha\delta^{-1}}_{\times}\right).
	\end{align*}

	\item Lastly, we need to use arithmetic over finite fields to reorder all the information into the correct registers. Specifically, we apply the following transformations
	\begin{align}
		\label{eq: Ub twid first} \operatorname{U_{B, transv}}: &\ket{\alpha, \beta}\ket{0,0,0} \ket{0,0}\mapsto \ket{\alpha, \beta}\ket{0,\alpha,\beta} \ket{\alpha,\beta}\\
		&\label{eq: Ub twid sec}\ket{\alpha,\beta}\ket{ \alpha\delta^{-1},0,0}\ket{0,0} \mapsto \ket{\alpha,\beta}\ket{\alpha\beta, 0, 0}\ket{\delta, \delta^{-1}\alpha\beta}
	\end{align}
\end{itemize}
The quantum circuit looks as follows.

\begin{figure}[H]
\centering
\resizebox{\textwidth}{!}{\begin{tabular}{c@{\qquad}c@{\qquad}c}
\begin{quantikz}[row sep=0.34cm,column sep=0.42cm]
\lstick{$\ket{a,d}$}
 & \gate[3]{\Fou_B}
 & \rstick{$\ket{\chi}$}\qw
\\
\lstick{$\ket{b,0,0}$}
 &
 & \rstick{$\ket{\lambda}$}\qw
\\
\lstick{$\ket{0,0}$}
 &
 & \rstick{$\ket{\widetilde\chi}$}\qw
\end{quantikz}
&
\Large $=$
&
\begin{quantikz}[row sep=0.42cm,column sep=0.23cm]
\lstick{$\ket{a}$}
 & \gate{\Fou_\times}
 &
 &
 & \ctrl{2}
 & \rstick[2]{$\ket{\chi}$}\qw
\\
\lstick{$\ket{d}$}
 & \gate{\Fou_\times}
 &
 &
 & \ctrl{1}
 &
\\
\lstick{$\ket{b}$}
 & \gate{\Fou_+}
 & \gate{\log_g}
 & \gate{\Fou_\times}
 & \gate[5]{U_{B,\mathrm{transv}}}
 & \rstick[3]{$\ket{\lambda}$}\qw
\\
\lstick{$\ket{0}$}
 &
 &
 &
 &
 &
\\
\lstick{$\ket{0}$}
 &
 &
 &
 &
 &
\\
\lstick{$\ket{0}$}
 &
 &
 &
 &
 & \rstick[2]{$\ket{\widetilde\chi}$}\qw
\\
\lstick{$\ket{0}$}
 &
 &
 &
 &
 &
\end{quantikz}
\end{tabular}}
\caption{The Fourier transform $\Fou_B$ with the input and Fourier-basis output registers used in the text, together with its decomposition into finite-field Fourier transforms and the reversible relabeling $U_{B,\mathrm{transv}}$.}
\label{fig: GL2 B QFT decomposition}
\end{figure}

\begin{remark}
	The first two registers undergo the Fourier transform of $T \cong (\F_q^*)^2$.
\end{remark}

\begin{lemma}
	The operator $\operatorname{U_{B, transv}}$ can be efficiently implemented.
\end{lemma}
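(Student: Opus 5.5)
The map $\operatorname{U_{B, transv}}$ acts only on basis states and is completely specified by \cref{eq: Ub twid first,eq: Ub twid sec}. I will therefore implement it as a reversible classical computation on labels, using arithmetic in $\Z/(q-1)\Z$ together with the regime flag that separates the additive zero $\ket{0}_+$ from multiplicatively encoded elements. In the exponent encoding, characters of $\F_q^*$ multiply by adding exponents. So products and inverses such as $\alpha\beta$, $\delta^{-1}$ and $\delta^{-1}\alpha\beta$ are additions and negations modulo $q-1$, each of cost $O(\log^2 q)$ gates. Copying a register is a CNOT fan-out. No approximation is involved, so the circuit is exact with size $\operatorname{poly}(\log q)$.

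The construction branches on the regime flag of the third register, which holds either $\ket{0}_+$ or $\ket{\alpha\delta^{-1}}_\times$.

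\textbf{Zero branch, \cref{eq: Ub twid first}.} Copy $\alpha,\beta$ into the second and third slots of the $\lambda$-register. Copy them again into the $\widetilde\chi$-register. Finally set the flags so that the first $\lambda$-slot is read as the label $0$.

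\textbf{Nonzero branch, \cref{eq: Ub twid sec}.} Let $\eta=\alpha\delta^{-1}$ be the stored value.
\begin{enumerate}
\item Compute $\delta=\alpha\eta^{-1}$ into the first $\widetilde\chi$ slot, by subtracting exponents with $\alpha$ as a control.
\item Compute $\delta^{-1}\alpha\beta$ into the second $\widetilde\chi$ slot from $\delta$, $\alpha$ and $\beta$.
\item Overwrite $\eta$ in place by $\alpha\beta$, using the reversible update $\eta\mapsto \eta+(\beta+\alpha-\alpha)$ realised as $\eta\mapsto \eta+ \delta + \beta$ in exponents. This works because $\eta\delta=\alpha$, so $\eta+\delta+\beta$ equals $\alpha+\beta$ in exponents. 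It is an in-place addition controlled on registers that are not modified.
\end{enumerate}
These steps do not require uncomputing any garbage, because the new content is a function of registers that are kept.

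The main point to check is that the whole map is a bijection between the two labelled images and their complements. This is what ensures that extending it to a unitary is possible, and that it is the reversible circuit itself, rather than merely a map on the specified inputs. I will verify this by exhibiting the inverse circuit: from $(\alpha,\beta,\alpha\beta,\delta,\delta^{-1}\alpha\beta)$ one recovers $\eta=\alpha\delta^{-1}$ by the reverse in-place subtraction, and then erases the two $\widetilde\chi$ slots.

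On all other basis states the map is defined by the same controlled operations, so it is automatically a permutation of basis states. The regime-flag bookkeeping ensures that the labels of $\chi_{\alpha,\beta}$ and of $\rho_{\alpha\beta}$ never collide. The expected obstacle is only this encoding and flag consistency; the total cost is $O(\log^2 q)$.
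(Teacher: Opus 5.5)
Your proposal is correct and follows essentially the same route as the paper. You branch on the regime flag, copy $(\alpha,\beta)$ on the additive-zero branch, and on the multiplicative branch realize $\ket{a,b}\ket{a-d,0,0}\ket{0,0}\mapsto\ket{a,b}\ket{a+b,0,0}\ket{d,a+b-d}$ by reversible arithmetic modulo $q-1$; the paper just cites Beauregard's modular arithmetic for this last step, whereas you spell out the in-place additions.

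One slip: the parenthetical $\eta\mapsto\eta+(\beta+\alpha-\alpha)$, as written, would give exponent $a-d+b$. The update you actually use, $\eta\mapsto\eta+\delta+\beta$, is the correct one.
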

\begin{proof}
	Recall that the register for $\F_q$ actually contains three parts. One for both its additive and multiplicative structure, and an indicator which of the two you need to look at.

	When the control qubit is $0$, it is in the additive structure and can be ignored by $\operatorname{U_{B, transv}}$. The operation we then need to perform is to duplicate $\ket{\alpha, \beta}$ into a clean register as in \cref{eq: Ub twid first}

	When the control qubit is 1, we have performed $\Fou_\times$ and we are in the multiplicative part of the register. We can therefore view $\operatorname{U_{B, twid}}$ as an operator on $\C[\F_q^*]^{\otimes 5}$, leaving the additive part of $\C[\F_q]$ alone and also ignoring the other two parts of the middle. Write $\alpha = g^a, \beta = g^b, \delta = g^d$ and using the identification $\F_q^* \cong \Z/(q-1)\Z$ we need to construct an operator $U_{B, \operatorname{transv}}'$ that maps
	\begin{equation*}\ket{a, b} \ket{a-d,0,0}\ket{0,0}
    \mapsto \ket{a,b}\ket{a+b,0,0}\ket{d,a+b-d}.\end{equation*}
	This is efficiently implementable by \cite{beauregard2003quantum}.
\end{proof}

\begin{theorem}
\label{thm: efficient B QFT}
	For every $\varepsilon>0$, there is a quantum circuit of size $\operatorname{poly}(\log q,\log(1/\varepsilon))$ that implements the Fourier transform $\Fou_B$ of
	\begin{equation}
		B=\left\{\psm{a&b\\0&d}:a,d\in\F_q^*,\ b\in\F_q\right\}
	\end{equation}
	in the representation bases fixed above, to operator-norm error at most $\varepsilon$.
\end{theorem}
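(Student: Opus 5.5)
The approach is to verify that the circuit of \cref{fig: GL2 B QFT decomposition} implements $\Fou_B$ exactly when every subroutine is exact. We then bound its cost and error using \cref{lem: add and mult of Fq} and the efficiency lemma for $\operatorname{U_{B, transv}}$. Concretely, the plan is to instantiate \cref{lemma: induction step of first induction} for the inclusion $T\subset B$, taking the transversal $\{u_b\}$ and writing $g(a,d,b)=u_b\cdot\operatorname{diag}(a,d)$. By that lemma it then suffices to check two things. First, the two $\Fou_\times$ gates implement $\Fou_T$; this is the remark that $T\cong(\F_q^*)^2$. Second, the remaining gates $\Fou_+$, $\log_g$, $\Fou_\times$ and $\operatorname{U_{B, transv}}$ together implement the full induced map $\IndMap$ of \cref{eq: def full IndMap}.

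To verify the second point, I would compare amplitudes directly with \cref{def: IndMap}, using the restrictions of \cref{lem: decomposition of restrictions}. Fix a character $\chi_{\alpha,\beta}$ of $T$. It sits over two kinds of $B$-irreps:
\begin{itemize}
\item the one-dimensional $\chi_{\alpha,\beta}$, with $d_\mu=1$;
\item the representation $\rho_{\alpha\beta}$, with $d_\mu=q-1$, whose Gelfand--Tsetlin vector over $\chi_{\alpha,\beta}$ is $\ket{w_\alpha}$ from \cref{sec: T2B}.
\end{itemize}
For the one-dimensional target, the required coefficient is $\sqrt{1/q}\,\chi_{\alpha,\beta}(u_b)=1/\sqrt q$. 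This matches the $\ket{0}_+$ branch of $\Fou_+$. For the $\rho_{\alpha\beta}$ target, the required coefficient is $\sqrt{(q-1)/q}\,\langle w_\delta|\rho_{\alpha\beta}(u_b)|w_\alpha\rangle$. By \cref{eq: twiddle factor B} this equals $\frac{1}{\sqrt{q(q-1)}}G_q(\alpha\delta^{-1},b)$. The nonzero branch of $\Fou_+$, followed by $\log_g$ and $\Fou_\times$, produces exactly $\frac{1}{\sqrt{q(q-1)}}\sum_t\psi(bt)\chi_\eta(t)$ on $\ket{\eta}_\times$. Setting $\eta=\alpha\delta^{-1}$ makes this the same Gauss sum, as in the displayed state. $\operatorname{U_{B, transv}}$ then only relabels $(\alpha,\beta,\eta)$ into $(\lambda;\chi,\widetilde\chi)=(\rho_{\alpha\beta};\chi_{\delta,\delta^{-1}\alpha\beta},\chi_{\alpha,\beta})$. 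This relabeling is a bijection because $\delta=\alpha\eta^{-1}$, so the path $Q$ and the carried path $R'\to\mu$ come out in the prescribed encoding.

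One detail needs care, and I expect it to be the main obstacle: conventions. The phase $\psi(bt)$ versus $\overline{\psi(bt)}$, and $\chi_\eta$ versus $\chi_{\eta^{-1}}$, must agree with the sign conventions of $\ket{v_t}$ and $\ket{w_\delta}$. If they do not, I would fix the mismatch by inserting a negation $t\mapsto -t$ or an inversion on the exponent register. Both are efficient reversible maps, so the asymptotic claim is unaffected. A second point is unitarity on the unused part of the label encoding. Once the amplitudes above match, \cref{lemma: IndMap basis transform} guarantees that the image is orthonormal, so extending to a unitary on the full registers is automatic.

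For cost, the circuit uses a constant number of gates. These are $\Fou_\times$ (three times), $\Fou_+$, $\log_g$, and $\operatorname{U_{B, transv}}$, where the last is modular arithmetic over $\Z/(q-1)\Z$. Each is $\operatorname{poly}(\log q,\log(1/\varepsilon'))$ with error $\varepsilon'$ by \cref{lem: add and mult of Fq} and the preceding lemma. Errors of unitaries add under composition in operator norm, so choosing $\varepsilon'=\varepsilon/6$ gives total error at most $\varepsilon$.
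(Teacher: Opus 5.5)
Your proposal is correct and takes essentially the paper's route. The paper's proof also rests on the displayed state computation, in which the Gauss-sum amplitude matches the induced map for $T\subset B$, together with the constant-count cost bound from \cref{lem: add and mult of Fq} and the lemma for $U_{B,\mathrm{transv}}$; your explicit appeal to \cref{lemma: induction step of first induction} makes the role of that calculation precise, and the conventions you were worried about already agree, since $\Fou_+$ followed by $\log_g$ and $\Fou_\times$ puts amplitude exactly $G_q(\eta,b)/\sqrt{q(q-1)}$ on $\ket{\eta}_\times$, with $G_q$ as in \cref{def: Gauss sum}.
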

\begin{proof}
	The calculation above and \cref{fig: GL2 B QFT decomposition} express $\Fou_B$ as two multiplicative Fourier transforms implementing $\Fou_T$, followed by one additive Fourier transform, the conversion $\log_g$, a third multiplicative Fourier transform, and the relabeling $U_{B,\mathrm{twid}}$.  By \cref{lem: add and mult of Fq}, each Fourier transform and $\log_g$ has a uniform circuit of size $\operatorname{poly}(\log q,\log(1/\varepsilon))$.  The preceding lemma provides an efficient implementation of $U_{B,\mathrm{twid}}$.
\end{proof}

\subsubsection{The untwisting map}

For the Mackey circuit associated with $B\subseteq G$, the Bruhat decomposition gives the two double-coset representatives $e$ and $w$, with
\begin{equation*}
	H_e=B,\qquad H_w=B\cap wBw^{-1}=T.
\end{equation*}
The identity representative gives $\U_{\tw}^e=I$.  For the non-trivial cell, $w^{-1}Tw=T$ and
\begin{equation*}
	w^{-1}\psm{a&0\\0&d}w=\psm{d&0\\0&a},
\end{equation*}
so we are in the normalizer situation of \cref{subsec: omega normalizes}.  Twisting therefore does not require a new irreducible decomposition: it only permutes the characters of $T$ by
	\begin{equation*}
	\chi_{\alpha,\beta}^w=\chi_{\beta,\alpha}.
	\end{equation*}

In particular, using $\Res_T^B\rho_\gamma=\bigoplus_{\delta\in X_q} \chi_{\delta,\delta^{-1}\gamma}$ from \cref{lem: decomposition of restrictions}, the $\delta$-summand is sent to the $\delta^{-1}\gamma$-summand.  With the one-dimensional block intertwiners chosen to be the identity, the untwisting operator is the permutation
\begin{equation}
	\U_{\tw}^w\ket{w_\delta}
	=\ket{\chi_{\delta^{-1}\gamma,\delta}}.
	\label{eq: GL2 untwisting convention}
\end{equation}
Equivalently, setting $\delta=\alpha^{-1}\gamma$ gives $\U_{\tw}^w\ket{w_{\alpha^{-1}\gamma}} =\ket{\chi_{\alpha,\alpha^{-1}\gamma}}$.
Thus $\U_{\tw}$ is implemented exactly by a cell-controlled swap of the two character coordinates, or by the reversible character operation $\delta\mapsto\delta^{-1}\gamma$ on a $\rho_\gamma$-path.  Its circuit size is polynomial in $\log q$.

\subsubsection{Circuit for the \texorpdfstring{$A$}{A}-matrix}

The hardest part of the circuit to implement is the $A$-matrix.  We use the double-coset and untwisting conventions from the previous subsection.
Fix a total order $\prec$ on $X_q$ and represent a principal series by $I_{\alpha,\beta}$ with $\alpha\prec\beta$.
Define the unit-modulus phase
\begin{equation}\label{eq: mixed A phase}
    \zeta_{\alpha,\beta}
    :=\chi_\beta(-1)\frac{g_q(\alpha^{-1}\beta)}{\sqrt q}.
\end{equation}
Recall from \cref{def: Mackey operators} that the $A$-matrix is block diagonal, depending on the inputs $\lambda$ and $\widetilde{\lambda}$ and given by
\begin{equation*}
A=\sum_{\lambda,\widetilde\lambda\in\widehat B}
\ket{\lambda}\!\bra{\lambda}\otimes
A_{\lambda,\widetilde\lambda}\otimes
\ket{\widetilde\lambda}\!\bra{\widetilde\lambda}.
\end{equation*}
For most pairs $(\lambda, \widetilde\lambda)$ the multiplicity space is one- or two-dimensional.
Each display below specifies one block $A_{\lambda, \widetilde\lambda}$. Its source basis is labeled $\ket{\kappa, \omega}$, where $\kappa$ is an intermediate character and $\omega$ a double coset representative. Note that the same $\ket{\kappa, \omega}$ can appear in several blocks.
\begin{align}
 &A_{\chi_{\alpha,\beta}, \chi_{\alpha, \beta}}|\chi_{\alpha,\beta},e\rangle
    \longmapsto |I_{\alpha,\beta}\rangle,
    &&\alpha\prec\beta,\\
 &A_{\chi_{\alpha,\beta}, \chi_{\beta, \alpha}}|\chi_{\beta,\alpha},w\rangle
    \longmapsto |I_{\alpha,\beta}\rangle,
    &&\alpha\prec\beta,\\
 &A_{\chi_{\alpha,\alpha}, \chi_{\alpha, \alpha}}|\chi_{\alpha,\alpha},e\rangle
    \longmapsto
      \sqrt{\frac q{q+1}}|\St_\alpha\rangle
      +\frac1{\sqrt{q+1}}|\det_\alpha\rangle,\\
 &A_{\chi_{\alpha,\alpha}, \chi_{\alpha, \alpha}}|\chi_{\alpha,\alpha},w\rangle
    \longmapsto
      \chi_\alpha(-1)\left(
      -\frac1{\sqrt{q+1}}|\St_\alpha\rangle
      +\sqrt{\frac q{q+1}}|\det_\alpha\rangle\right).
\end{align}
For $\alpha\prec\beta$, the mixed blocks are
\begin{align}
 A_{\chi_{\alpha,\beta},\rho_{\alpha\beta}}:
 |\chi_{\beta,\alpha},w\rangle
    &\longmapsto |I_{\alpha,\beta}\rangle,\\
 A_{\chi_{\beta,\alpha},\rho_{\alpha\beta}}:
 |\chi_{\alpha,\beta},w\rangle
    &\longmapsto \zeta_{\alpha,\beta}|I_{\alpha,\beta}\rangle,\\
 A_{\rho_{\alpha\beta},\chi_{\alpha,\beta}}:
 |\chi_{\alpha,\beta},w\rangle
    &\longmapsto |I_{\alpha,\beta}\rangle,\\
 A_{\rho_{\alpha\beta},\chi_{\beta,\alpha}}:
 |\chi_{\beta,\alpha},w\rangle
    &\longmapsto \overline{\zeta_{\alpha,\beta}}
       |I_{\alpha,\beta}\rangle.
\end{align}
The cases where $\alpha=\beta$ are
\begin{align}
 A_{\chi_{\alpha,\alpha},\rho_{\alpha^2}}:
 |\chi_{\alpha,\alpha},w\rangle
    &\longmapsto\chi_\alpha(-1)|\St_\alpha\rangle,\\
 A_{\rho_{\alpha^2},\chi_{\alpha,\alpha}}:
 |\chi_{\alpha,\alpha},w\rangle
    &\longmapsto\chi_\alpha(-1)|\St_\alpha\rangle.
\end{align}
It remains to describe $A_{\rho_\gamma,\rho_\gamma}$.  Its rows are indexed by
\begin{align}
 \mathcal C_\gamma
   &=\{\{\theta,\theta^q\}:\theta^q\neq\theta,
       \ \theta|_{\F_q^*}=\chi_\gamma\},\\
 \mathcal P_\gamma
   &=\{\{\delta,\eta\}:\delta\eta=\gamma,\ \delta\neq\eta\},\\
 \mathcal S_\gamma
   &=\{\delta:\delta^2=\gamma\},
\end{align}
corresponding respectively to paths via $\pi_\theta$, $I_{\delta,\eta}$, and $\St_\delta$.  If $s_\gamma=|\mathcal S_\gamma|$, then
\begin{equation*}
 |\mathcal C_\gamma|=\frac{q+1-s_\gamma}{2},\qquad
 |\mathcal P_\gamma|=\frac{q-1-s_\gamma}{2},\qquad
 |\mathcal S_\gamma|=s_\gamma,
\end{equation*}
so these sets contain exactly $q$ labels in total.  Write
\begin{equation}
\label{eq: A_gamma col indexing}
\ket{c_\infty}:=\ket{\rho_\gamma,e},
 \qquad
 \ket{c_\alpha}:=\ket{\chi_{\alpha,\alpha^{-1}\gamma},w},
 \quad \alpha\in X_q.
\end{equation}
Using the generator $g$ fixed above, identify $X_q$ with $\F_q^*$ by matching exponent labels and regard this source basis as one $\F_q$-register via $\ket{c_\infty}\leftrightarrow\ket{0}$ and $\ket{c_\alpha}\leftrightarrow\ket{\alpha}$.  The first identification is canonical and the nonzero identification uses the fixed generator.
The first column (corresponding to the $\ket{0}$-input) is
\begin{align}
 A_{\rho_\gamma,\rho_\gamma}\ket{c_\infty}
  ={}&\frac1{\sqrt{q+1}}
       \sum_{\{\theta,\theta^q\}\in\mathcal C_\gamma}|\pi_\theta\rangle
   +\frac1{\sqrt{q-1}}
       \sum_{\{\delta,\eta\}\in\mathcal P_\gamma}|I_{\delta,\eta}\rangle
       \nonumber\\
   &+\sqrt{\frac q{(q-1)(q+1)}}
       \sum_{\delta\in\mathcal S_\gamma}|\St_\delta\rangle.
 \label{eq: corrected A identity column}
\end{align}
For the other columns, put $b=\alpha^{-1}\gamma$. Then the $A$-matrix acts as
\begin{align}
 A_{\rho_\gamma,\rho_\gamma}\ket{c_\alpha}
 ={}&-\frac{\chi_b(-1)}{\sqrt{q(q^2-1)}}
       \sum_{\{\theta,\theta^q\}\in\mathcal C_\gamma}
       g_{q^2}\!\left(\theta\cdot(\chi_{\alpha^{-1}}\circ\No)\right)
       |\pi_\theta\rangle
       \nonumber\\
 &+\frac{\chi_b(-1)}{(q-1)\sqrt q}
       \sum_{\{\delta,\eta\}\in\mathcal P_\gamma}
       g_q(\alpha^{-1}\delta)g_q(\alpha^{-1}\eta)
       |I_{\delta,\eta}\rangle
       \nonumber\\
 &+\frac{\chi_b(-1)}{(q-1)\sqrt{q+1}}
       \sum_{\delta\in\mathcal S_\gamma}
       g_q(\alpha^{-1}\delta)^2|\St_\delta\rangle.
 \label{eq: corrected A nontrivial column}
\end{align}

The first sum is independent of the chosen representative of the orbit $\{\theta,\theta^q\}$ because the norm character is invariant under Frobenius, and the substitution $z\mapsto z^q$ leaves the $\F_{q^2}$ Gauss sum unchanged.
These coefficients follow from \cref{eq: corrected general A coefficient} and the transversal actions computed above. Note that we don't have to check unitarity as this also follows from \cref{thm: Mackey induced circuit}.

All blocks other than $A_{\rho_\gamma,\rho_\gamma}$ have dimension at most two and are implementable in $\operatorname{poly}(\log q)$ gates using reversible character arithmetic, one controlled rotation, and the Gauss phase of \cref{thm: coherent Gauss phase}.
The hardest part is given by the $q\times q$-unitary from \cref{eq: corrected A identity column,eq: corrected A nontrivial column}.  Treating this unitary as an unstructured matrix and synthesizing it by Givens rotations would cost $O(q^2\operatorname{poly}(\log q,\log(1/\varepsilon)))$ gates for each fixed $\gamma$ \cite[Section~4.5]{NielsenChuang}.  An efficient way to construct these bigger blocks is treated in the next subsection.

\begin{theorem}[Efficient implementation of the total $A$-matrix]
\label{thm: efficient GL2 total A matrix}
	For every $\varepsilon>0$, the total $A$-matrix in the Mackey circuit for $B\subseteq\GL_2(\F_q)$ has an implementation to operator-norm error at most $\varepsilon$ using $\operatorname{poly}(\log q,\log(1/\varepsilon))$ gates.
\end{theorem}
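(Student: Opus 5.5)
The proof splits into a dispatch step, the small blocks, and the single hard family $A_{\rho_\gamma,\rho_\gamma}$. Since $A$ is block diagonal and controlled on $(\lambda,\widetilde\lambda)$, I first compute reversibly, in an ancilla, a classifier that records which case of the list preceding the theorem the pair $(\lambda,\widetilde\lambda)$ belongs to. The case is determined by the type flags $\ket{0,\alpha,\beta}$ versus $\ket{\gamma,0,0}$ and by comparisons such as $\alpha\prec\beta$, $\alpha=\beta$, and $\widetilde\lambda=\chi_{\beta,\alpha}$. Each case then gets its own controlled unitary, and the classifier is uncomputed at the end. The one-dimensional blocks are permutations of labels, possibly with a phase $\chi_\alpha(-1)$ or $\zeta_{\alpha,\beta}$. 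Such a block is implemented by reversible exponent arithmetic together with the Gauss phase of \cref{thm: coherent Gauss phase} and \cref{lem: controlled character phase}. The two-dimensional blocks, indexed by $\chi_{\alpha,\alpha}$, each consist of one fixed rotation with angle $\arctan\sqrt q$ followed by a $\chi_\alpha(-1)$ phase. All of this costs $\operatorname{poly}(\log q,\log(1/\varepsilon))$. By \cref{thm: Mackey induced circuit}, it suffices to produce any circuit whose action agrees with the stated columns, because unitarity is already guaranteed.

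For the $q\times q$ block $A_{\rho_\gamma,\rho_\gamma}$, I first rewrite the nontrivial columns \cref{eq: corrected A nontrivial column} as a multiplicative Fourier transform. Expanding each Gauss sum gives
\[
g_q(\alpha^{-1}\delta)g_q(\alpha^{-1}\eta)=\sum_{u\in\F_q^*}\chi_{\alpha^{-1}}(u)\sum_{xy=u}\chi_\delta(x)\chi_\eta(y)\psi(x+y).
\]
Similarly, the $\F_{q^2}$ Gauss sum becomes
\[
g_{q^2}\!\left(\theta\cdot(\chi_{\alpha^{-1}}\circ\No)\right)=\sum_{u\in\F_q^*}\chi_{\alpha^{-1}}(u)\sum_{\No(v)=u}\theta(v)\psi(\tr v).
\]
The factor $\chi_b(-1)$ with $b=\alpha^{-1}\gamma$ is a controlled character phase. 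Hence, on the span of $\{\ket{c_\alpha}\}$, the block factors as a $\chi_b(-1)$ phase, then $\Fou_\times^\dagger$ from the $\alpha$-label to a variable $u\in\F_q^*$ (with $\ket{0}$ fixed, as in \cref{lem: add and mult of Fq}), then a ``Bessel map''
\[
B_\gamma:\ket{u}\mapsto\sum_\mu J_\mu(u)\ket{\mu}.
\]
Here $J_\mu$ is the split Bessel function $\sum_{xy=u}\chi_\delta(x)\chi_\eta(y)\psi(x+y)$ for principal and Steinberg rows, and the nonsplit Bessel function $\sum_{\No v=u}\theta(v)\psi(\tr v)$ for cuspidal rows, each with the normalizations read off from \cref{eq: corrected A nontrivial column}.

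I then implement $B_\gamma$ together with the column $\ket{c_\infty}$ by a two-branch construction. The cuspidal branch works as follows.
\begin{itemize}
\item From $\ket{u}$, prepare the uniform superposition over the $q+1$ preimages $v$ with $\No(v)=u$. This uses a fixed preimage, obtained through discrete-log arithmetic in $\F_{q^2}^*$ with the fixed generator, times a uniform state on the norm-one subgroup, which is cyclic of order $q+1$ and so admits a cyclic QFT.
\item Apply the phase $\psi(\tr v)$.
\item Apply the QFT over $\F_{q^2}^*$, restricted to characters with $\theta|_{\F_q^*}=\chi_\gamma$, to produce the $\theta$ label.
\item Fold the Frobenius orbit $\{\theta,\theta^q\}$ by a controlled Hadamard-type symmetrization.
\end{itemize}
The split branch is analogous: it uses the fiber $\{(x,uy^{-1})\}$ in the torus $(\F_q^*)^2$, the phase $\psi(x+y)$, and a QFT over the fiber yielding $(\delta,\eta)$ with $\delta\eta=\gamma$. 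The unordered pairs are then folded, and the fixed points $\delta=\eta$ give $\mathcal S_\gamma$. A single $\operatorname{poly}(\log q)$-size rotation between a flag qubit selecting the branch and the $\ket{c_\infty}$ direction accounts for the weights $1/\sqrt{q\pm1}$ and for the first column \cref{eq: corrected A identity column}. Degenerate rows such as $\delta=\eta$ and the orbit-folding signs would be checked against the explicit coefficients.

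The main obstacle I expect is that each branch above is only an isometry into a larger space, with garbage such as the torus element or the unfolded orbit label. The two branches must be coherently combined so that their images are orthogonal and the garbage is uncomputed exactly. The mathematical input is the orthogonality relations of Bessel functions, equivalently the unitarity of the Kirillov-model matrix of $w$, which \cref{thm: Mackey induced circuit} already guarantees. If the direct uncomputation fails, my fallback is to realize $B_\gamma$ as $\bigoplus_\mu R_\mu(w)$ restricted to the relevant multiplicity space, implemented in the Gelfand--Tsetlin bases by the phase-permutations of \cref{sec: twiddle factors} conjugated by the $T\subset B$ transforms, since those matrices are sparse with Gauss-sum phases. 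In either case the final bound is a constant number of subroutines, each of size $\operatorname{poly}(\log q,\log(1/\varepsilon))$ by \cref{lem: add and mult of Fq} and \cref{thm: coherent Gauss phase}, so the operator-norm errors add up to at most $\varepsilon$ after rescaling.
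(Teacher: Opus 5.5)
Your dispatch step and your handling of the one- and two-dimensional blocks agree with the paper's proof. The paper likewise computes a block tag, treats those blocks by relabeling, the phases $\chi_\alpha(-1)$ and $\zeta_{\alpha,\beta}$ and one rotation, and reduces everything to the $q\times q$ family $A_{\rho_\gamma,\rho_\gamma}$. The gap is in that block.

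Your rewriting of the nontrivial columns as a $\chi_b(-1)$ phase, then $\Fou_\times^\dagger$, then a Bessel kernel is a correct identity. The proposed realization of the kernel, however, is not an isometry with the right amplitudes, and the defect is not leftover garbage.

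\emph{The projection problem.} Preparing $\frac{1}{\sqrt{q+1}}\sum_{\No v=u}\psi(\tr v)\ket{v}$ and applying the QFT over $\F_{q^2}^*$ puts amplitude $\frac{1}{\sqrt{(q+1)(q^2-1)}}\sum_{\No v=u}\theta(v)\psi(\tr v)$ on \emph{every} character $\theta$. Because $\psi(\tr v)$ is not $\F_q^*$-equivariant, the $q+1$ characters with $\theta|_{\F_q^*}=\chi_\gamma$ carry only a $\frac{1}{q-1}(1+O(q^{-1/2}))$ fraction of the norm. The kernel you need sends $\ket{u}$ to amplitude $-\frac{1}{\sqrt{q(q+1)}}\sum_{\No v=u}\theta(v)\psi(\tr v)$ on $\ket{\pi_\theta}$, which is larger by a factor of order $\sqrt q$. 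So ``restricting'' the QFT to that sector is a projection, not a unitary step.

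\emph{A fiber QFT does not repair this.} Characters of the antidiagonal torus, resp.\ of the subgroup $\ker\No$, only detect $\delta\eta^{-1}$, resp.\ $\theta|_{\ker\No}$. For odd $q$ the required labels therefore map $2$-to-$1$ onto half of the fiber characters, and the QFT puts nonzero weight on the other half.

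\emph{The branch weights vary with $u$.} Even with a correct parametrization, the cuspidal versus principal/Steinberg weight of the column indexed by $u$ carries a correction given by Kloosterman-type sums in $u$. So no fixed flag rotation can merge the two branches.

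\emph{The fallback fails as well.} The rows of $A_{\rho_\gamma,\rho_\gamma}$ are different irreps $\mu$, and each entry is one coefficient of a different $R_\mu(w)$, so the block is dense. Meanwhile $\bigoplus_\mu R_\mu(w)$ is block diagonal in $\mu$ and presupposes the $\mu$-decomposed basis that $A$ is meant to produce.

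The missing idea, which is how the paper builds the $A_\gamma$ circuit, is to first strip off $g_q(\alpha^{-1}b)/\sqrt q$ as a diagonal phase $D_\gamma$. This uses $g_q(\xi)g_q(\zeta)=g_q(\xi\zeta)J(\xi,\zeta)$ and $g_{q^2}(\Theta)=g_q(\alpha^{-1}b)\sum_{\tr z=1}\Theta(z)$. After $\Fou_\times^\dagger$, the remaining kernel is a Jacobi kernel on the line $x+y=1$ rather than a Bessel kernel on the hyperbola $xy=u$ or the norm fiber. Over each $t=xy$ there are exactly two points, the roots of $X^2-X+t$, either both in $\F_q$ or Frobenius conjugate in $\F_{q^2}$.

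Hence the router $B_\gamma$ is an honest isometry whose branch is a deterministic function of $t$. It sends $t$ to a symmetrized superposition of the two root ratios, with character phases: $x/y$ and $y/x$ in $\F_q^*$, or $z/z^q$ and $z^q/z$ in $\ker\No\cong\F_{q^2}^*/\F_q^*$. The torus QFTs on these quotient coordinates, with the lift $\theta=\theta_0\nu$, label exactly the characters with central character $\chi_\gamma$ and yield the Jacobi-sum coefficients. $R_{\mathrm{fix},\gamma}$ then folds the orbits.

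The price is that $g_q(1)/\sqrt q$ is not a phase, so the columns with $\alpha^2=\gamma$ need the extra reflection $R_{\mathrm{exc},\gamma}$. Your scheme would avoid that case if it worked. As written, though, it does not establish the $q\times q$ block, and so does not prove the theorem.
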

\begin{proof}
	Recall from \cref{def: Mackey operators} that the total operator is the controlled direct sum
	\begin{equation*}
		A=\sum_{\lambda,\widetilde\lambda\in\widehat B}
		\ket{\lambda}\!\bra{\lambda}\otimes
		A_{\lambda,\widetilde\lambda}\otimes
		\ket{\widetilde\lambda}\!\bra{\widetilde\lambda}.
	\end{equation*}
	The registers entering $A$ also contain the double-cell label $\omega$ and the intermediate character label $\kappa$.  From $(\lambda,\widetilde\lambda,\kappa,\omega)$ we reversibly compute a control indicating which of the block families displayed above applies.  The required tests consist only of distinguishing a one-dimensional label $\chi_{\alpha,\beta}$ from a label $\rho_\gamma$, comparing character labels, and computing their products and inverses. This gives a fixed size of the control register and the checks therefore use $\operatorname{poly}(\log q)$ gates.

	On the one- and two-dimensional families, the formulas above are implemented by reversible relabeling, the phases $\chi_\alpha(-1)$ and $\zeta_{\alpha,\beta}$, and the displayed two-dimensional determinant--Steinberg rotation.  Character arithmetic is reversible, the Gauss-sum factor in $\zeta_{\alpha,\beta}$ follows from \cref{thm: coherent Gauss phase}, and the rotation can be synthesized with $\operatorname{poly}(\log q,\log(1/\varepsilon))$ gates.
	Assigning a sufficiently small constant fraction of $\varepsilon$ to each approximate primitive makes every such block $\varepsilon$-accurate.
	The only remaining family is $A_{\gamma}:=A_{\rho_\gamma,\rho_\gamma}$.  By \cref{thm: efficient GL2 hard A circuit} below, a single circuit implements this $q\times q$ block with $\gamma$ retained in its label register as a control.  Thus the construction does not iterate over the $q-1$ possible values of $\gamma$.

	Apply the circuits for these constantly many families controlled by the block tag and then uncompute the tag.
\end{proof}

\subsection{The \texorpdfstring{$A_\gamma$}{A-gamma}-block}
\label{subsec: polylog GL2 A}

The $A_{\gamma}$ blocks are the only blocks in the $A$-matrix whose dimension grows with $q$.  For fixed $\gamma$, the input basis is $\ket{c_\infty},\ket{c_\alpha}$, $\alpha\in X_q$, and the target basis is indexed by $\mathcal C_\gamma,\mathcal P_\gamma,\mathcal S_\gamma$, as above.  We will show that there exists an efficient circuit producing the matrix columns in \cref{eq: corrected A identity column,eq: corrected A nontrivial column}. Recall that the columns of $A_\gamma$ are indexed by \cref{eq: A_gamma col indexing}, while its rows are indexed by the irreps of $G$.

\begin{remark}
When $q=2$, necessarily $\gamma=1$.  The set $\mathcal C_1$ consists of the unique nontrivial Frobenius orbit $\{\theta,\theta^2\}\subset X_4$, while $\mathcal P_1=\varnothing$ and $\mathcal S_1=\{1\}$.  Moreover, $g_2(1)=-1$ and, with the convention of \cref{def: Gauss sum}, $g_4(\theta)=2$.  Indeed, if $u$ generates $\F_4^*$, then $\tr(1)=0$, $\tr(u)=\tr(u^2)=1$, and $\theta(u)+\theta(u^2)=-1$, so $g_4(\theta)=1-\theta(u)-\theta(u^2)=2$.  Therefore \cref{eq: corrected A identity column,eq: corrected A nontrivial column} give, in the ordered source basis $(\ket{c_\infty},\ket{c_1})$ and target basis $(\ket{\pi_\theta},\ket{\St_1})$,
\begin{equation}
	A_1=
	\begin{pmatrix}
		1/\sqrt3&-\sqrt{2/3}\\
		\sqrt{2/3}&1/\sqrt3
	\end{pmatrix}.
	\label{eq: GL2 q2 hard A block}
\end{equation}
This constant-size rotation can be synthesized directly.  Hence, for the remainder of this subsection we may assume that $q>2$.
\end{remark}

Looking at \cref{eq: corrected A identity column,eq: corrected A nontrivial column}, for each input state we need to create a superposition over the same three sums, where the normalization is slightly different for the special symbol input $\ket{c_{\infty}}$. Beyond creating this superposition, we also need to add local phases that depend on the input. The total circuit will look as follows and we briefly explain the role of each gate.

\begin{figure}[H]
\centering
{\scriptsize
\begin{quantikz}[row sep=0.28cm,column sep=0.25cm]
\lstick{$\ket\gamma$}
 & \ctrl{1}
 & \rstick{$\ket\gamma$}
\\
\lstick{$\ket c$}
 & \gate[3]{A_\gamma}
 & \rstick{$\ket\mu$}
\\
\lstick{$\ket0_\tau$}
 &
 & \rstick{$\ket0_\tau$}
\\
\lstick{$\ket0_{\mathsf a}$}
 &
 & \rstick{$\ket0_{\mathsf a}$}
\end{quantikz}
\quad=\quad
\begin{quantikz}[row sep=0.28cm,column sep=0.27cm]
\lstick{$\ket\gamma$}
 & \ctrl{1}
 & \qw
 & \ctrl{1}
 & \qw
 & \ctrl{1}
 & \ctrl{1}
 & \rstick{$\ket\gamma$}
\\
\lstick{$\ket c$}
 & \gate[3]{D_\gamma}
 & \gate{\Fou_\times^\dagger}
 & \gate[3]{B_\gamma}
 & \gate[2]{\Fou_\times\oplus\Fou_{\ker\No}}
 & \gate[3]{R_{\mathrm{fix},\gamma}}
 & \gate[3]{R_{\mathrm{exc},\gamma}}
 & \rstick{$\ket\mu$}
\\
\lstick{$\ket0_\tau$}
 &
 & \qw
 &
 &
 &
 &
 & \rstick{$\ket0_\tau$}
\\
\lstick{$\ket0_{\mathsf a}$}
 &
 & \qw
 &
 & \qw
 &
 &
 & \rstick{$\ket0_{\mathsf a}$}
\end{quantikz}
}
\caption{Circuit for $A_\gamma$ with its ancillary registers displayed.
The register $\tau$ stores the split/nonsplit branch between $B_\gamma$ and $R_{\mathrm{fix},\gamma}$.
The ancilla register $\mathsf a=(\mathsf a_D,\mathsf a_B,\mathsf a_{\mathrm{fix}}, \mathsf a_{\mathrm{exc}})$ contains the character and Gauss-phase work, the discriminant or trace tests and roots, the orbit-partner data, and the exceptional-column preparation work.
Each part of the circuit returns the part of $\mathsf a$ that it uses to zero, so the bundle can be reused.}
\label{fig: GL2 hard A circuit}
\end{figure}

\begin{itemize}
	\item The gate $D_\gamma$ adds a Gauss-sum phase. This is a diagonal gate and does not work toward creating the correct superposition.
	\item Fourier transform changes the input character label into a sum of field elements $t$.
	\item This sum over $t$'s we will divide into two routes by looking at the polynomial $X^2-X+t$. If the polynomial splits over $\F_q$, we call it the split case and these will become the principal series. If the polynomial is irreducible, we call it non-split and they contribute to cuspidal representations. The gate $B_{\gamma}$ sends the $t$'s to their correct groups.
	\item After the branching, we apply a Fourier transform on both branches to help create the correct superposition and phases.
	\item On each branch, i.e. split and non-split, labels need to be combined into the correct representation. The idea of this step is that the principal series $I_{\alpha, \beta}$ and $I_{\beta, \alpha}$ are the same representation and thus need the same label. Similarly, $\pi_{\theta}$ and $\pi_{\theta^q}$ need the same label. The gate $R_{\mathrm{fix}, \gamma}$ fixes these labels.
	\item Lastly, the gate $R_{\mathrm{exc}, \gamma}$ deals with the exceptional (or edge) cases, i.e. when $\alpha = \beta$ for the principal series, to ensure these end up with the correct phases and normalization. Note that for even characteristic, the edge cases will be slightly different and no repeated root can occur.
\end{itemize}

The branching gate maps a $q$-dimensional input space into a larger labeled space.  We therefore use extra registers and extend its action to a unitary on the full register.  All equalities below are for inputs with these extra registers in the zero state.
In the next subsections we give formal definitions of all the gates, show that they form a circuit for $A_\gamma$, and prove that each gate is efficiently implementable.

\subsubsection{Definition of the gates}

For the rest of the subsection, when the input is $\ket{c_\alpha}$, write $b=\alpha^{-1}\gamma$ as in \cref{eq: corrected A nontrivial column}.

\begin{definition}[The phase gate]
	Define
	\begin{align}
		D_\gamma\ket{c_\infty}&=\ket{c_\infty},\\
		D_\gamma\ket{c_\alpha}
		&=\begin{cases}
			\displaystyle
			\chi_b(-1)\frac{g_q(\alpha^{-1}b)}{\sqrt q}\ket{c_\alpha}
			&\alpha\ne b,\\[2mm]
			\ket{c_\alpha}&\alpha=b.
		\end{cases}
		\label{eq: GL2 generic diagonal}
	\end{align}
\end{definition}
If $\alpha\ne b$, then $\alpha^{-1}b$ is nontrivial and the coefficient in \cref{eq: GL2 generic diagonal} has absolute value one.  Thus $D_\gamma$ is unitary.

\begin{definition}
The gate $\Fou_\times^\dagger$ is the inverse Fourier transform, i.e. the adjoint of \cref{eq: efficient multiplicative Fq QFT}. It fixes $\ket{c_\infty}=\ket0$.
\end{definition}

To define the branching gate, we fix the character coordinates used on the nonsplit branch.  Let $h$ generate $\F_{q^2}^*$ such that $g=h^{q+1}$ is the generator of $\F_q^*$ fixed at the start of the section.
The map $z\mapsto z/z^q$ gives an identification $\F_{q^2}^*/\F_q^*\simeq\ker\No.$ If $\chi_\gamma(g)=e^{2\pi i k/(q-1)}$ for some $0\leq k\leq q-2$, choose $\theta_0$ by
\begin{equation*}
 \theta_0(h)=e^{2\pi i k/(q^2-1)}
\end{equation*}
so that $\theta_0|_{\F_q^*}=\chi_\gamma$.  Every character of $\F_{q^2}^*$ restricting to $\chi_\gamma$ is uniquely of the form $\theta_0\nu$ for some $\nu\in\widehat{\ker\No}$ pulled back along $z\mapsto z/z^q$.  Using these coordinates, Frobenius acts by
\begin{equation}
 (\theta_0\nu)^q
 =\theta_0\bigl[(\theta_0^q\theta_0^{-1})\nu^{-1}\bigr].
 \label{eq: GL2 affine Frobenius coordinate}
\end{equation}

\begin{definition}[The quadratic branching gate] Define the branching map $B_{\gamma}: \C[\F_q] \to \C^2\otimes \C[\F_{q^2}^*]$, given by
\begin{equation} \label{eq: GL2 split router}
 B_\gamma\ket t
 =\frac{1}{\sqrt{2}}\cdot \begin{cases}\chi_\gamma(y)\ket{\mathrm{sp},x/y}+\chi_\gamma(x)\ket{\mathrm{sp},y/x} &
 \substack{\text{ if $ X^2-X+t$ is reducible over $\F_q$} \\ \text{with roots $(x, y)$ and $ t\neq 0$} \\ \text{ ($t\neq 1/4$ for $\operatorname{char}(\F_q)$ odd)}}\\
 -\theta_0(z)\ket{\mathrm{ns},z/z^q}-\theta_0(z^q)\ket{\mathrm{ns},z^q/z} & \substack{\text{ if $ X^2-X+t$ is irreducible over $\F_q$} \\ \text{with roots $(z, z^q)$}}\\
\ket{\mathrm{sp},1}+\ket{\mathrm{ns},1} & \text{ if } t=0\\
\chi_\gamma(1/2)\left(\ket{\mathrm{sp},1}-\ket{\mathrm{ns},1}\right) & \text{ if } t=1/4 \text{ and } \operatorname{char}(\F_q) \text{ is odd}.
 \end{cases}
\end{equation}
\end{definition}

\begin{definition}
On the two branches, define
\begin{align}
 \Fou_\times\ket{\mathrm{sp},r}
 &=\frac1{\sqrt{q-1}}\sum_{\delta\in X_q}
   \chi_\delta(r)\ket{\mathrm{sp},\delta},\nonumber\\
 \Fou_{\ker\No}\ket{\mathrm{ns},r}
 &=\frac1{\sqrt{q+1}}\sum_{\nu\in\widehat{\ker\No}}
   \nu(r)\ket{\mathrm{ns},\theta_0\nu}.
 \label{eq: GL2 torus Fourier gates}
\end{align}
\end{definition}

On both branches, states can be combined in pairs that give the same representation label. For split items we pair them as $(\alpha, \alpha^{-1}\gamma)$ and the non-split ones are paired by $(\theta, \theta^q)$. Note that there are edge cases when the two elements are the same, which we treat separately from the \emph{non-fixed pairs}.

\begin{definition}[Fixing labels]
For a nonfixed split pair $(\delta, \delta^{-1}\gamma)$ and a nonfixed Frobenius pair  $(\theta, \theta^q)$, set
\begin{align}
 R_{\mathrm{fix},\gamma}
 \frac{\ket{\mathrm{sp},\delta}
      +\ket{\mathrm{sp},\delta^{-1}\gamma}}{\sqrt2}
 &=\ket{I_{\delta,\gamma\delta^{-1}}}
 &&\delta^2\ne\gamma,\nonumber\\
 R_{\mathrm{fix},\gamma}
 \frac{\ket{\mathrm{ns},\theta}+\ket{\mathrm{ns},\theta^q}}{\sqrt2}
 &=\ket{\pi_\theta}
 &&\theta^q\ne\theta.
 \label{eq: GL2 nonfixed orbit folding}
\end{align}
The antisymmetric combinations are sent to unused orthogonal labels.  For each $\delta^2=\gamma$, set
\begin{align}
 R_{\mathrm{fix},\gamma}\ket{\mathrm{sp},\delta}
 &=\sqrt{\frac{q+1}{2q}}\ket{\St_\delta}
   +\sqrt{\frac{q-1}{2q}}\ket{\det_\delta},\nonumber\\
 R_{\mathrm{fix},\gamma}\ket{\mathrm{ns},\chi_\delta\circ\No}
 &=\sqrt{\frac{q-1}{2q}}\ket{\St_\delta}
   -\sqrt{\frac{q+1}{2q}}\ket{\det_\delta}.
 \label{eq: GL2 fixed torus rotation}
\end{align}
\end{definition}
The symmetric and antisymmetric combinations form an orthonormal basis, and the matrix in \cref{eq: GL2 fixed torus rotation} is orthogonal.
Hence $R_{\mathrm{fix},\gamma}$ is unitary.

\begin{definition}[The exceptional correction]
For every $\alpha^2=\gamma$, let $R_{\mathrm{exc},\gamma}$ act on the ordered pair
\begin{equation}
 \left(A_\gamma\ket{c_\alpha},\ket{\det_\alpha}\right)
\end{equation}
by
\begin{equation}
 \begin{pmatrix}
 -\chi_\alpha(-1)/\sqrt q & \sqrt{(q-1)/q}\\
 \sqrt{(q-1)/q} & \chi_\alpha(-1)/\sqrt q
 \end{pmatrix}.
 \label{eq: GL2 exceptional reflection}
\end{equation}
On the orthogonal complement it acts as the identity.
\end{definition}

\subsubsection{Correctness of the circuit}

We first check that the branching formulas define an isometry.

\begin{lemma}[The branching gate is an isometry]
\label{lem: GL2 quadratic router isometry}
The images of the $q$ basis states under $B_\gamma$ are orthonormal.
\end{lemma}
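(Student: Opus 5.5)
The plan is to partition the $q$ input states $\ket t$, $t\in\F_q$, according to the case distinction in \cref{eq: GL2 split router}. Each image will be shown to have norm one, and images of distinct inputs will be shown orthogonal. The natural invariant is the \emph{support} of each image. In the split generic case this support is $\{(\mathrm{sp},x/y),(\mathrm{sp},y/x)\}$. In the nonsplit case it is $\{(\mathrm{ns},z/z^q),(\mathrm{ns},z^q/z)\}$. In the two special cases $t=0$ and $t=1/4$ it is $\{(\mathrm{sp},1),(\mathrm{ns},1)\}$. The key observation is that $t$ is recoverable from the label. Since $x+y=1$ and $xy=t$, the ratio $r=x/y$ satisfies $x=r/(1+r)$, $y=1/(1+r)$, hence $t=r/(1+r)^2$. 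Similarly, $z$ with $z+z^q=1$ is determined up to Frobenius by $z/z^q$, and $t=\No(z)$. Consequently, distinct generic inputs have disjoint supports, which gives orthogonality.

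For the norms in the generic cases, the two labels in each image must be distinct. In the split case $x/y=y/x$ forces $x=\pm y$. Here $x=y$ is the repeated root, occurring at $t=1/4$ in odd characteristic and excluded; $x=-y$ contradicts $x+y=1$; and $t\neq0$ rules out a zero root. In the nonsplit case $z/z^q=z^q/z$ forces $z^{2(q-1)}=1$, hence $z^q=\pm z$. The case $z^q=z$ contradicts irreducibility, and $z^q=-z$ gives $z+z^q=0\ne1$. Once the labels are distinct, the coefficients are characters, so they have modulus one. With the factor $1/\sqrt2$, each generic image therefore has norm one. For the special inputs, the labels $(\mathrm{sp},1)$ and $(\mathrm{ns},1)$ are distinct basis states. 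The label $1$ is never hit by a generic input, since $r=1$ forces $x=y$ and $z/z^q=1$ forces $z\in\F_q$. The images of $t=0$ and $t=1/4$ are $(e_1+e_2)/\sqrt2$ and $\chi_\gamma(1/2)(e_1-e_2)/\sqrt2$, which are orthonormal because $|\chi_\gamma(1/2)|=1$. In even characteristic only $t=0$ occurs.

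The remaining points are bookkeeping. One must confirm that the case split is exhaustive, so that each $t$ falls in exactly one case. One must also confirm that well-definedness does not depend on the ordering of the roots: swapping $(x,y)$ or $(z,z^q)$ merely permutes the two summands of each image. The step I expect to need the most care is the injectivity argument $t\mapsto$ support, particularly in the nonsplit branch. There the label lives in $\ker\No\subset\F_{q^2}^*$, and I need that $u=z/z^q$ determines the pair $\{z,z^q\}$ given the constraint $z+z^q=1$. From $z=uz^q$ this yields $z^q=1/(1+u)$. The case $u=-1$ cannot arise, because $z^q=-z$ is impossible here. So $z$ is determined, and orthogonality across the nonsplit inputs follows.
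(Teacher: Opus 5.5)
Your proof is correct and follows the same route as the paper: distinct inputs $t$ have disjoint supports, each generic image has norm one, and the two boundary vectors are orthonormal inside $\Span\{\ket{\mathrm{sp},1},\ket{\mathrm{ns},1}\}$, which no generic input reaches. You also supply two points the paper's short proof leaves implicit: the two labels in each generic image are distinct, and the ratio label $r$ recovers $t=r/(1+r)^2$ (the reconstruction the paper only uses later, when implementing $B_\gamma$), and it is this second fact that actually makes the supports disjoint.
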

\begin{proof}
The polynomial $X^2-X+t$ completely determines the roots, so different values of $t$ have disjoint support and each image in \cref{eq: GL2 split router} has norm one.  For odd $q$, the two boundary vectors in \cref{eq: GL2 split router} are orthonormal.  They are supported on $\Span\{\ket{\mathrm{sp},1},\ket{\mathrm{ns},1}\}$, which is not used by the other inputs.  For even $q$ there is one normalized boundary vector in this space.  Hence all $q$ images are orthonormal.
\end{proof}

For the middle part of the circuit, use the abbreviation
\begin{equation}
	\T_\gamma
	=R_{\mathrm{fix},\gamma}
	\bigl(\Fou_\times\oplus\Fou_{\ker\No}\bigr)
	B_\gamma\Fou_\times^\dagger.
	\label{eq: GL2 relative torus transform}
\end{equation}
It is denoted by $T_{\gamma}$ because is branches and acts on the two different tori (split and non-split) of $\GL_2(\F_q)$.

\begin{lemma}[Torus action]
\label{lem: GL2 relative torus coefficients}
On the special input symbol, the torus circuit acts as
\begin{align}
 \T_\gamma\ket{c_\infty}
 ={}&\frac1{\sqrt{q+1}}
   \sum_{\{\theta,\theta^q\}\in\mathcal C_\gamma}\ket{\pi_\theta}
  +\frac1{\sqrt{q-1}}
   \sum_{\{\delta,\eta\}\in\mathcal P_\gamma}
   \ket{I_{\delta,\eta}}\nonumber\\
 &+\sqrt{\frac q{(q-1)(q+1)}}
   \sum_{\delta\in\mathcal S_\gamma}\ket{\St_\delta}.
 \label{eq: GL2 relative identity column}
\end{align}
For input $\alpha\in X_q$ and output labels $I_{\delta,\eta}, \pi_{\theta}$, the coefficients are
\begin{align}
 \bra{I_{\delta,\eta}}\T_\gamma\ket{c_\alpha}
 &=\frac1{q-1}J(\alpha^{-1}\delta,\alpha^{-1}\eta),
 \label{eq: GL2 relative principal coefficient}\\
 \bra{\pi_\theta}\T_\gamma\ket{c_\alpha}
 &=-\frac1{\sqrt{(q-1)(q+1)}}
   \sum_{\tr(z)=1}\theta(z)\chi_{\alpha^{-1}}(\No z).
 \label{eq: GL2 relative cuspidal coefficient}
\end{align}
See \cref{lem: Gauss sum property} for the definition of the Jacobi sum $J$. If $\delta^2=\gamma$ and $\alpha\ne b$, then
\begin{equation}
 \bra{\St_\delta}\T_\gamma\ket{c_\alpha}
 =\frac{\sqrt q}{(q-1)\sqrt{q+1}}
   J(\alpha^{-1}\delta,\alpha^{-1}\delta),
 \qquad
 \bra{\det_\delta}\T_\gamma\ket{c_\alpha}=0.
 \label{eq: GL2 relative fixed coefficient}
\end{equation}
The coefficients on the unused antisymmetric labels are zero.
\end{lemma}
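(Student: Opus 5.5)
The proof is a direct computation: push each input basis state through the four gates of $\T_\gamma$ in order and read off coefficients. For $\ket{c_\infty}=\ket0$, $\Fou_\times^\dagger$ fixes $\ket0$. So we only need $B_\gamma\ket0=\tfrac1{\sqrt2}(\ket{\mathrm{sp},1}+\ket{\mathrm{ns},1})$. The torus transforms then give the uniform superpositions $\tfrac1{\sqrt{2(q-1)}}\sum_\delta\ket{\mathrm{sp},\delta}$ and $\tfrac1{\sqrt{2(q+1)}}\sum_\nu\ket{\mathrm{ns},\theta_0\nu}$. Applying $R_{\mathrm{fix},\gamma}$ requires the following bookkeeping.
\begin{itemize}
\item A nonfixed split pair $\{\delta,\delta^{-1}\gamma\}$ carries amplitude $\tfrac{2}{\sqrt{2(q-1)}}\cdot\tfrac1{\sqrt2}=\tfrac1{\sqrt{q-1}}$ on $\ket{I_{\delta,\eta}}$.
\item A nonfixed Frobenius pair likewise gives $\tfrac1{\sqrt{q+1}}$ on $\ket{\pi_\theta}$.
\item Antisymmetric combinations receive zero.
\item For each $\delta\in\mathcal S_\gamma$, the fixed split point $\ket{\mathrm{sp},\delta}$ and the fixed nonsplit point $\ket{\mathrm{ns},\chi_\delta\circ\No}$ (the Frobenius-fixed characters restricting to $\chi_\gamma$ are exactly $\chi_\delta\circ\No$ with $\delta^2=\gamma$) are combined by the rotation \cref{eq: GL2 fixed torus rotation}.
\end{itemize}
In the last case the $\det_\delta$ coefficient is $\tfrac1{\sqrt{2(q-1)}}\sqrt{\tfrac{q-1}{2q}}-\tfrac1{\sqrt{2(q+1)}}\sqrt{\tfrac{q+1}{2q}}=0$. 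The $\St_\delta$ coefficient is $\tfrac1{2\sqrt q}\bigl(\sqrt{\tfrac{q+1}{q-1}}+\sqrt{\tfrac{q-1}{q+1}}\bigr)=\sqrt{q/((q-1)(q+1))}$, which gives \cref{eq: GL2 relative identity column}.

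For $\ket{c_\alpha}$, write $\Fou_\times^\dagger\ket{c_\alpha}=\tfrac1{\sqrt{q-1}}\sum_{t\in\F_q^*}\overline{\chi_\alpha(t)}\ket t$ and expand each $t$ according to the cases of $B_\gamma$. I will parametrize $t$ by its roots: in the split case $t=xy$ with $x+y=1$, and in the nonsplit case $t=\No z$ with $\tr z=1$. The boundary cases $t=0$ and $t=1/4$ are absorbed as the degenerate root configurations $\{0,1\}$ and $x=y=1/2$; note that $t=0$ does not occur here since $t\in\F_q^*$. Each unordered root pair contributes both orderings with weight $1/\sqrt2$, so sums over unordered pairs become $\tfrac1{\sqrt2}\sum$ over ordered roots.

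On the split branch, $\Fou_\times$ followed by projection onto $\tfrac1{\sqrt2}(\ket{\mathrm{sp},\delta}+\ket{\mathrm{sp},\eta})$ with $\eta=\delta^{-1}\gamma$ produces
\[
\tfrac1{q-1}\sum_{x+y=1}\overline{\chi_\alpha(xy)}\,\chi_\gamma(y)\,\chi_\delta(x/y).
\]
Using $\chi_\gamma\chi_{\delta^{-1}}=\chi_\eta$, this collapses to $\tfrac1{q-1}\sum_{x+y=1}\chi_{\alpha^{-1}\delta}(x)\chi_{\alpha^{-1}\eta}(y)=\tfrac1{q-1}J(\alpha^{-1}\delta,\alpha^{-1}\eta)$. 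The normalization must be checked: the factors of $1/\sqrt2$ from $B_\gamma$ and from $R_{\mathrm{fix}}$ combine with the double counting of ordered roots to give exactly $1$. The same computation for the antisymmetric combination produces a sum that is odd under $\delta\leftrightarrow\eta$ and cancels. On the nonsplit branch, the identical computation with $\nu(z/z^q)\theta_0(z)=\theta(z)$ gives $-\tfrac1{\sqrt{(q-1)(q+1)}}\sum_{\tr z=1}\theta(z)\chi_{\alpha^{-1}}(\No z)$. Here Frobenius invariance of $\tr$ and $\No$ makes the two orbit representatives contribute equally.

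The main obstacle is the fixed points $\delta^2=\gamma$. There, the split fixed point and the nonsplit fixed point $\chi_\delta\circ\No$ each carry a single (not paired) amplitude, and the $t=1/4$ boundary term with its $\chi_\gamma(1/2)$ and relative sign must be reconciled. My plan is to compute the two amplitudes $a_{\mathrm{sp}}=\tfrac1{(q-1)\sqrt2}J(\alpha^{-1}\delta,\alpha^{-1}\delta)$ and $a_{\mathrm{ns}}=-\tfrac1{\sqrt{2(q-1)(q+1)}}\sum_{\tr z=1}\chi_{\alpha^{-1}\delta}(\No z)$. I will then use the point count over the conic: for a nontrivial character $\chi$ ($\alpha\ne b$ means $\alpha^{-1}\delta\ne1$), $\sum_{x+y=1}\chi(xy)+\sum_{\tr z=1}\chi(\No z)=\sum_{t}\chi(t)\cdot(\text{number of }(x,y)\text{ or }z)$. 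Because each $t$ has exactly two preimages counted across both branches, this gives $\sum_{\tr z=1}\chi(\No z)=-J(\chi,\chi)$. Substituting into the rotation should make the $\det_\delta$ component vanish and the $\St_\delta$ component equal $\tfrac{\sqrt q}{(q-1)\sqrt{q+1}}J$, as in \cref{eq: GL2 relative fixed coefficient}. I expect the care needed here to lie in the characteristic-$2$ versus odd distinction and in the $t=1/4$ sign convention, which I would verify case by case.
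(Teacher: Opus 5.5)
Your proposal is correct. For the $\ket{c_\infty}$ column, the generic principal and cuspidal coefficients, and the vanishing antisymmetric amplitudes, it is the paper's own computation: push the state through the gates, read the boundary input as a degenerate root configuration, then fold pairs using the symmetry of $J$ and Frobenius invariance.

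The one genuinely different step is the fixed-row identity $\sum_{\tr z=1}(\chi_\xi\circ\No)(z)=-J(\xi,\xi)$, where $\xi=\alpha^{-1}\delta$.

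\textbf{The paper's route.} It splits $g_{q^2}(\chi_\xi\circ\No)$ into trace fibers, $g_{q^2}(\chi_\xi\circ\No)=S_0+g_q(\xi^2)S_1$, with $S_0=0$ because $\xi^2\neq1$. It then uses the Gauss--Jacobi relation $g_q(\xi)^2=J(\xi,\xi)g_q(\xi^2)$ and Davenport--Hasse lifting, $g_{q^2}(\chi_\xi\circ\No)=-g_q(\xi)^2$.

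\textbf{Your route.} Your conic count replaces all of this. Every $t\in\F_q$ has exactly two preimages under the two maps $x\mapsto x(1-x)$ on $\F_q$ and $z\mapsto\No z$ on the trace-one fiber, taken together. These are the two split roots, or the two conjugate roots, or, for $t=1/4$ in odd characteristic, the point $1/2$ counted once in each. Hence the two sums add to $2\sum_t\chi_\xi(t)=0$.

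\textbf{Comparison.} Your argument is elementary and uniform in the characteristic, so the case-by-case check you anticipated is unnecessary. It needs only $\xi\neq1$. In particular it also covers the quadratic case $\xi^2=1\neq\xi$, which the paper handles separately in the exceptional-column lemma via $J(\xi,\xi)=-\xi(-1)$. The paper's route costs little in context, since the same Gauss-sum lemmas are needed anyway to match $\T_\gamma D_\gamma$ with $A_\gamma$. It does, however, import Davenport--Hasse for a fact your count gives directly.

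\textbf{Your loose ends.} Both close as expected.
\begin{itemize}
\item $B_\gamma\ket{1/4}=\chi_\gamma(1/2)\bigl(\ket{\mathrm{sp},1}-\ket{\mathrm{ns},1}\bigr)/\sqrt2$ is exactly the ordered-root term $x=z=1/2$ on both branches, because $\theta(1/2)=\chi_\gamma(1/2)$ and $\No(1/2)=1/4$.
\item Substituting $a_{\mathrm{ns}}=J(\xi,\xi)/\sqrt{2(q-1)(q+1)}$ into the fixed rotation gives $0$ on $\det_\delta$ and $\sqrt q\,J(\xi,\xi)/\bigl((q-1)\sqrt{q+1}\bigr)$ on $\St_\delta$.
\end{itemize}
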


\begin{proof}
We refer to \cref{app: proofs of A matrix}.
\end{proof}

With the main technical lemma out of the way, we can show that adding a diagonal gate at the beginning gives the correct phases for most cases.
\begin{lemma}[Action on generic columns]
\label{lem: GL2 generic hard columns}
If $\alpha^2\ne \gamma$, then
\begin{equation}
 \T_\gamma D_\gamma\ket{c_\alpha}=A_\gamma\ket{c_\alpha}.
\end{equation}
The same equality holds for $\ket{c_\infty}$.
\end{lemma}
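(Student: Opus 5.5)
The plan is a direct comparison of coefficients. On $\ket{c_\infty}$ the gate $D_\gamma$ acts trivially, so $\T_\gamma D_\gamma\ket{c_\infty}=\T_\gamma\ket{c_\infty}$. The formula \cref{eq: GL2 relative identity column} of \cref{lem: GL2 relative torus coefficients} coincides term by term with \cref{eq: corrected A identity column}, which settles this case.

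For $\alpha^2\ne\gamma$ we have $\alpha\ne b=\alpha^{-1}\gamma$. Hence $D_\gamma\ket{c_\alpha}=\chi_b(-1)\frac{g_q(\alpha^{-1}b)}{\sqrt q}\ket{c_\alpha}$, and it suffices to show that for every output label $\mu$
\begin{equation*}
 \chi_b(-1)\frac{g_q(\alpha^{-1}b)}{\sqrt q}\bra{\mu}\T_\gamma\ket{c_\alpha}=\bra{\mu}A_\gamma\ket{c_\alpha}.
\end{equation*}
The unused antisymmetric labels have zero coefficient on both sides. For $\det_\delta$ with $\delta^2=\gamma$, both sides also vanish: $\T_\gamma$ gives zero by \cref{eq: GL2 relative fixed coefficient}, and \cref{eq: corrected A nontrivial column} contains no $\det$ term. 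This leaves three families.

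\textbf{Principal series.} Take $\{\delta,\eta\}$ with $\delta\eta=\gamma$. Then $\alpha^{-1}\delta\cdot\alpha^{-1}\eta=\alpha^{-2}\gamma=\alpha^{-1}b$, which is nontrivial. When $\alpha^{-1}\delta$ and $\alpha^{-1}\eta$ are also nontrivial, the Jacobi–Gauss identity of \cref{lem: Gauss sum property} gives
\begin{equation*}
 J(\alpha^{-1}\delta,\alpha^{-1}\eta)=\frac{g_q(\alpha^{-1}\delta)g_q(\alpha^{-1}\eta)}{g_q(\alpha^{-1}b)}.
\end{equation*}
Using $|g_q|^2=q$, multiplying \cref{eq: GL2 relative principal coefficient} by the $D_\gamma$ phase yields $\frac{\chi_b(-1)}{(q-1)\sqrt q}g_q(\alpha^{-1}\delta)g_q(\alpha^{-1}\eta)$, as required. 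The degenerate cases $\delta=\alpha$ or $\eta=\alpha$ need a separate check. They use the boundary convention for $J$ with one trivial argument, together with $g_q(1)=-1$. The Steinberg coefficients follow by the identical computation with $\delta=\eta$, applied to \cref{eq: GL2 relative fixed coefficient}.

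\textbf{Cuspidal.} This is the main obstacle. We must show
\begin{equation*}
 \sum_{\tr z=1}\theta(z)\chi_{\alpha^{-1}}(\No z)\cdot\chi_b(-1)\frac{g_q(\alpha^{-1}b)}{\sqrt q}=\frac{\chi_b(-1)}{\sqrt q}\,g_{q^2}\bigl(\theta\cdot(\chi_{\alpha^{-1}}\circ\No)\bigr).
\end{equation*}
This is a Hasse–Davenport/Jacobi-type factorization, and I would prove it directly. Write $\Theta=\theta\cdot(\chi_{\alpha^{-1}}\circ\No)$ and expand $g_{q^2}(\Theta)=\sum_z\Theta(z)\psi(\tr z)$. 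Group the terms by $s=\tr z$; the terms with $s=0$ cancel because $\Theta$ is nontrivial on $\ker\tr$. For $s\ne0$, substitute $z=sz'$, which gives $\Theta(s)=\chi_\gamma(s)\chi_{\alpha^{-2}}(s)=\chi_{\alpha^{-1}b}(s)$. The sum over $s$ then produces $g_q(\alpha^{-1}b)$, and the remaining factor is exactly $\sum_{\tr z'=1}\Theta(z')$.

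Finally, multiply both sides by $g_q(\alpha^{-1}b)$ and use $\overline{g_q(\chi)}=\chi(-1)g_q(\chi^{-1})$ together with $|g_q|^2=q$. The normalizations $\frac{1}{\sqrt{q^2-1}}$ and $-\frac{1}{\sqrt{q(q^2-1)}}$ then match, and the sign comes out correctly. I expect this step, and the consistency of the sign and conjugation conventions of \cref{def: Gauss sum} with \cref{eq: corrected cuspidal twiddle}, to need the most care.
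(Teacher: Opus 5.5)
Your proposal is correct and follows essentially the same route as the paper. You compare coefficients row by row, using $g_q(\alpha^{-1}\delta)g_q(\alpha^{-1}\eta)=g_q(\alpha^{-1}b)J(\alpha^{-1}\delta,\alpha^{-1}\eta)$ (with $J(1,\alpha^{-1}b)=-1=g_q(1)$ in the degenerate case) for the principal and Steinberg rows, and the trace-fiber splitting $g_{q^2}(\Theta)=g_q(\alpha^{-1}b)\sum_{\tr z=1}\Theta(z)$ for the cuspidal rows, which you re-derive inline rather than citing the paper's trace-fiber lemma.

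Once that splitting is in hand, your cuspidal display follows by direct substitution, because both coefficients carry the same minus sign; the closing step with $\overline{g_q(\chi)}=\chi(-1)g_q(\chi^{-1})$ and $|g_q|^2=q$ is therefore unnecessary. Also, the precise reason the trace-zero terms vanish is that the nonzero trace-zero elements form a single coset of $\F_q^*$, on which $\Theta$ restricts, up to a constant factor, to the nontrivial character $\chi_{\alpha^{-1}b}$.
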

\begin{proof}
Recall that the rows of $A_\gamma$ are indexed by the irreps of $G$.
For a principal row, put $\xi=\alpha^{-1}\delta$ and $\zeta=\alpha^{-1}\eta$.  Then $\xi\zeta=\alpha^{-1}b\ne1$.  We have
\begin{equation*}
 g_q(\xi)g_q(\zeta)
 =g_q(\alpha^{-1}b)J(\xi,\zeta).
\end{equation*}
If $\xi,\zeta\ne1$, this is \cref{lem: Gauss sum property}.  If one of them is trivial, say $\xi=1$, then $J(1,\zeta)=-1=g_q(1)$, so the same equation still holds.  Multiplying \cref{eq: GL2 relative principal coefficient} by the phase from $D_\gamma$ gives the principal coefficient in \cref{eq: corrected A nontrivial column}.

For a cuspidal row, put
\begin{equation*}
 \Theta=\theta\cdot(\chi_{\alpha^{-1}}\circ\No).
\end{equation*}
Its restriction to $\F_q^*$ is $\chi_{\alpha^{-1}b}$.  Applying \cref{lem: Gauss sum trace fibers} gives
\begin{equation*}
 g_{q^2}(\Theta)
 =g_q(\alpha^{-1}b)\sum_{\tr(z)=1}\Theta(z).
\end{equation*}
We used that the trace-zero contribution is zero because $\alpha^{-1}b\ne1$.
This identity together with \cref{eq: GL2 relative cuspidal coefficient} gives the cuspidal coefficient in \cref{eq: corrected A nontrivial column}.  The same Gauss--Jacobi equation and \cref{eq: GL2 relative fixed coefficient} give the Steinberg coefficient.  Finally, \cref{eq: GL2 relative identity column} is exactly \cref{eq: corrected A identity column}.
\end{proof}

The case $\alpha^2=\gamma$ is slightly different because $g_q(1)/\sqrt q$ is not a phase. We use the following two lemmas to study the action on the exceptional columns.

\begin{lemma}[Action on exceptional columns]
\label{lem: GL2 exceptional columns}
If $\alpha^2=\gamma$, then
\begin{equation}
 \T_\gamma\ket{c_\alpha}
 =-\frac{\chi_\alpha(-1)}{\sqrt q}A_\gamma\ket{c_\alpha}
  +\sqrt{\frac{q-1}{q}}\ket{\det_\alpha}.
 \label{eq: GL2 exceptional leakage}
\end{equation}
\end{lemma}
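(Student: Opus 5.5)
The plan is to compare, row by row, the coefficients of $\T_\gamma\ket{c_\alpha}$ with the column $A_\gamma\ket{c_\alpha}$ from \cref{eq: corrected A nontrivial column}. Everything reduces to the torus coefficients of \cref{lem: GL2 relative torus coefficients} together with Gauss--Jacobi identities. Throughout, $\alpha^2=\gamma$ means $b=\alpha^{-1}\gamma=\alpha$, so $\alpha^{-1}b=1$ and $\chi_b(-1)=\chi_\alpha(-1)$. The key observation is that the Gauss--Jacobi factorization used in \cref{lem: GL2 generic hard columns} still holds, but now with $g_q(\alpha^{-1}b)=g_q(1)=-1$. The prefactor relating $\T_\gamma$ to $A_\gamma$ is therefore $-1/(\chi_\alpha(-1)\cdot(-1)/\sqrt q)$ rather than a phase, which is exactly the scalar $-\chi_\alpha(-1)/\sqrt q$ in \cref{eq: GL2 exceptional leakage}. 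The missing norm must then sit on the one row that $A_\gamma$ does not use, namely $\det_\alpha$.

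First I treat the principal rows $I_{\delta,\eta}$ with $\delta\eta=\gamma$ and $\delta\ne\eta$. Put $\xi=\alpha^{-1}\delta$ and $\zeta=\alpha^{-1}\eta$, so that $\xi\zeta=1$ with $\xi\ne1$. Here I need the degenerate Jacobi identity $J(\xi,\xi^{-1})=-\xi(-1)$ and $g_q(\xi)g_q(\xi^{-1})=\xi(-1)q$; I would quote these from \cref{lem: Gauss sum property} or the appendix, or prove them directly by counting $x+y=1$. They give $g_q(\xi)g_q(\zeta)=-q\,J(\xi,\zeta)$. The $A_\gamma$ coefficient is $\chi_\alpha(-1)g_q(\xi)g_q(\zeta)/((q-1)\sqrt q)=-\chi_\alpha(-1)\sqrt q\,J/(q-1)$. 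Multiplying by $-\chi_\alpha(-1)/\sqrt q$ gives $J/(q-1)$, which is \cref{eq: GL2 relative principal coefficient}.

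Next I treat the cuspidal rows. With $\Theta=\theta\cdot(\chi_{\alpha^{-1}}\circ\No)$, the restriction $\Theta|_{\F_q^*}$ is trivial. The trace-fiber lemma \cref{lem: Gauss sum trace fibers} then has an extra trace-zero contribution, and I would use it in the form $g_{q^2}(\Theta)=-\sum_{\tr z=1}\Theta(z)\cdot q$ (up to the trace-zero term, which I expect to vanish because $\Theta$ is nontrivial on $\ker\tr$ modulo scalars). This step is where I expect the main obstacle: I have to get the exact constant and sign of the trivial-restriction case of that lemma right. Granting it, $-\chi_\alpha(-1)g_{q^2}(\Theta)/\sqrt{q(q^2-1)}$ times $-\chi_\alpha(-1)/\sqrt q$ matches \cref{eq: GL2 relative cuspidal coefficient}.

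Finally I treat the Steinberg and determinant rows for $\delta^2=\gamma$. The case $\delta\ne\alpha$ goes exactly as in the principal case. The case $\delta=\alpha$ is the genuinely exceptional one, since $\alpha\ne b$ fails and \cref{eq: GL2 relative fixed coefficient} does not apply. There I compute $\bra{\St_\alpha}\T_\gamma\ket{c_\alpha}$ and $\bra{\det_\alpha}\T_\gamma\ket{c_\alpha}$ directly from the definitions. $\Fou_\times^\dagger$ turns $\ket{c_\alpha}$ into a sum over $t$, and $B_\gamma$ routes it; the fixed-pair rotation \cref{eq: GL2 fixed torus rotation} then mixes $\ket{\mathrm{sp},\alpha}$ and $\ket{\mathrm{ns},\chi_\alpha\circ\No}$. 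The sums collapse to character sums of $\chi_\alpha$ over split and nonsplit $t$, and these evaluate to $\pm1$-type constants. I expect to obtain $\St_\alpha$ coefficient $-\chi_\alpha(-1)/\sqrt q\cdot\chi_\alpha(-1)g_q(1)^2/((q-1)\sqrt{q+1})$ and $\det_\alpha$ coefficient $\sqrt{(q-1)/q}$. As a consistency check I would verify the total norm: $\T_\gamma$ is unitary (each factor is), $A_\gamma\ket{c_\alpha}$ is a unit vector orthogonal to $\ket{\det_\alpha}$, and $1/q+(q-1)/q=1$. So once all the other rows agree, the $\det_\alpha$ coefficient is forced up to phase, and only its phase needs the direct computation.
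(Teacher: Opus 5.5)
Your overall route is the paper's: compare $\T_\gamma\ket{c_\alpha}$ with $A_\gamma\ket{c_\alpha}$ row by row, use $J(\xi,\xi^{-1})=-\xi(-1)$ and $g_q(\xi)g_q(\xi^{-1})=q\xi(-1)$ on principal rows, the trace-fiber lemma on cuspidal rows, and evaluate the fixed pair explicitly. Your principal-row computation is correct.

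\textbf{Gap in the fixed rows.} The hypothesis $\alpha\ne b$ of \cref{eq: GL2 relative fixed coefficient} does not involve $\delta$. Since $b=\alpha^{-1}\gamma=\alpha$, it fails for \emph{every} $\delta\in\mathcal S_\gamma$. For odd $q$ the second root $\delta\ne\alpha$ is therefore not covered by the torus lemma, and it does not ``go exactly as in the principal case''. With $\xi=\alpha^{-1}\delta$, the fixed rotation mixes the split amplitude $J(\xi,\xi)/(\sqrt2(q-1))$ with the nonsplit amplitude $-\sum_{\tr z=1}(\xi\circ\No)(z)/\sqrt{2(q-1)(q+1)}$. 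You must in particular show $\bra{\det_\delta}\T_\gamma\ket{c_\alpha}=0$.

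Here $\xi$ is the quadratic character, so the appendix route (which used $\xi^2\ne1$) is unavailable. What you need is $J(\xi,\xi)=-\xi(-1)$ and $\sum_{\tr z=1}(\xi\circ\No)(z)=\xi(-1)$. The latter follows from \cref{eq: Gauss sum trivial restriction fibers}, because for $\zeta\ne0$ with $\tr\zeta=0$ one has $\No\zeta=-\zeta^2$ with $\zeta^2$ a nonsquare in $\F_q$. Then \cref{eq: GL2 fixed torus rotation} gives zero on $\det_\delta$ and $-\xi(-1)\sqrt q/((q-1)\sqrt{q+1})$ on $\St_\delta$. This is the required multiple, since $g_q(\xi)^2=q\xi(-1)$, and it is exactly the extra case the paper treats. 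Your norm check cannot replace this computation: unitarity fixes only the total leftover weight $(q-1)/q$, not its split between $\det_\alpha$ and $\det_\delta$.

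\textbf{Cuspidal step.} The identity $g_{q^2}(\Theta)=-q\sum_{\tr z=1}\Theta(z)$ is right, but your stated reason would break it. Since $\Theta|_{\F_q^*}$ is trivial, $\Theta$ is constant on the trace-zero fiber $\F_q^*\zeta$, so $S_0(\Theta)=(q-1)\Theta(\zeta)\ne0$. Were it zero, \cref{eq: Gauss sum trace fiber splitting} would give $g_{q^2}(\Theta)=g_q(1)S_1(\Theta)=-S_1(\Theta)$, off by a factor $q$. The correct input is \cref{eq: Gauss sum trivial restriction fibers}: $g_{q^2}(\Theta)=q\Theta(\zeta)$ and $S_1(\Theta)=-\Theta(\zeta)$. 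This applies because $\Theta\ne1$: $\theta$ is regular while $\chi_{\alpha^{-1}}\circ\No$ is Frobenius-fixed.

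\textbf{Opening heuristic.} The same factor $q$ undermines it. The Gauss--Jacobi factorization does \emph{not} persist with $g_q(1)=-1$: for $\xi\ne1$ one has $g_q(\xi)g_q(\xi^{-1})=-qJ(\xi,\xi^{-1})$, not $-J(\xi,\xi^{-1})$. Also, your displayed prefactor $-1/(\chi_\alpha(-1)\cdot(-1)/\sqrt q)$ equals $\chi_\alpha(-1)\sqrt q$, not $-\chi_\alpha(-1)/\sqrt q$.

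\textbf{The row $\delta=\alpha$.} The pre-rotation amplitudes are not $\pm1$-type constants. They are $(q-2)/(\sqrt2(q-1))$ and $-q/\sqrt{2(q-1)(q+1)}$, from $J(1,1)=q-2$ and the $q$ elements of trace one. You can read them off the grouped split and nonsplit states in the appendix proof of \cref{lem: GL2 relative torus coefficients}, which hold for all $\alpha$. The rotation then yields exactly the values you predicted.
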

\begin{proof}
For a nonfixed principal series, put $\xi=\alpha^{-1}\delta$.  The other character is $\xi^{-1}$, and
\begin{equation*}
 J(\xi,\xi^{-1})=-\xi(-1),
 \qquad
 g_q(\xi)g_q(\xi^{-1})=q\xi(-1).
\end{equation*}
Thus the coefficient produced by $\T_\gamma$ for that principal series is $-\chi_\alpha(-1)/\sqrt q$ times the required coefficient.

For a cuspidal label, $\Theta=\theta(\chi_{\alpha^{-1}}\circ\No)$ is trivial on $\F_q^*$.
Choose $\zeta\ne0$ with $\tr(\zeta)=0$.  Splitting into trace fibers gives
\begin{equation*}
 g_{q^2}(\Theta)=q\Theta(\zeta),
 \qquad
 \sum_{\tr(z)=1}\Theta(z)=-\Theta(\zeta),
\end{equation*}
by \cref{eq: Gauss sum trivial restriction fibers}, so the same factor appears.

For the fixed row $\delta=\alpha$, the split and nonsplit amplitudes before the fixed rotation are
\begin{equation*}
 \frac{q-2}{\sqrt2(q-1)},
 \qquad
 -\frac q{\sqrt{2(q-1)(q+1)}}.
\end{equation*}
The rotation sends them to
\begin{equation*}
 -\frac1{\sqrt q(q-1)\sqrt{q+1}}\ket{\St_\alpha}
 +\sqrt{\frac{q-1}{q}}\ket{\det_\alpha}.
\end{equation*}
This gives the stated Steinberg factor and determinant term.  If there is a second fixed label $\delta$ (for odd characteristic), then $\xi=\alpha^{-1}\delta$ is quadratic, $J(\xi,\xi)=-\xi(-1)$, and
\begin{equation*}
 \sum_{\tr(z)=1}(\xi\circ\No)(z)=\xi(-1).
\end{equation*}
The determinant coefficient is then zero and the Steinberg coefficient has the same factor.  This proves the lemma.
\end{proof}

\begin{lemma}[Exceptional column rotation]
\label{lem: GL2 exceptional correction}
The gate $R_{\mathrm{exc},\gamma}$ is unitary, fixes the identity and generic columns, and satisfies
\begin{equation}
 R_{\mathrm{exc},\gamma}\T_\gamma\ket{c_\alpha}
 =A_\gamma\ket{c_\alpha}
 \qquad(\alpha^2=\gamma).
\end{equation}
\end{lemma}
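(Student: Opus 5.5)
The plan is to reduce the lemma to a two-dimensional computation, using \cref{lem: GL2 exceptional columns} as the main input. Fix $\alpha$ with $\alpha^2=\gamma$ and put $a:=A_\gamma\ket{c_\alpha}$ and $d:=\ket{\det_\alpha}$.

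\textbf{Step 1: the pair $(a,d)$ is orthonormal.} The vector $a$ is a unit vector, since it is a column of the unitary block $A_\gamma$ and unitarity follows from \cref{thm: Mackey induced circuit}. The vector $d$ is orthogonal to $a$ because no $\det$ label appears as a row of $A_{\rho_\gamma,\rho_\gamma}$: the row set is $\mathcal C_\gamma\cup\mathcal P_\gamma\cup\mathcal S_\gamma$, and $\mathcal S_\gamma$ labels only $\St_\delta$. In the basis $(a,d)$ the matrix in \cref{eq: GL2 exceptional reflection} has real entries $c=\chi_\alpha(-1)/\sqrt q\in\{\pm q^{-1/2}\}$ and $s=\sqrt{(q-1)/q}$, with $c^2+s^2=1$. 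Writing it as $\begin{pmatrix}-c&s\\ s&c\end{pmatrix}$ shows it is symmetric and squares to the identity, so it is a real reflection. Extended by the identity on the orthogonal complement, $R_{\mathrm{exc},\gamma}$ is unitary.

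\textbf{Step 2: different exceptional $\alpha$ act on orthogonal planes.} In odd characteristic there can be two values $\alpha,\alpha'$ with square $\gamma$, so the definition must be consistent. The vectors $a$ and $a'=A_\gamma\ket{c_{\alpha'}}$ are orthogonal as distinct columns of a unitary. The vectors $\ket{\det_\alpha}$ and $\ket{\det_{\alpha'}}$ are distinct labels, and neither is a row of $A_\gamma$. Hence the planes $\Span\{a,d\}$ and $\Span\{a',d'\}$ are mutually orthogonal, and the reflections commute and assemble into a single unitary.

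\textbf{Step 3: all other columns are fixed.} I need $R_{\mathrm{exc},\gamma}$ to fix $A_\gamma\ket{c_\infty}$ and $A_\gamma\ket{c_\beta}$ for $\beta^2\neq\gamma$. These columns are orthogonal to each exceptional column $a$ by unitarity of $A_\gamma$. They are orthogonal to each $\ket{\det_\alpha}$ because they are supported on rows of $A_\gamma$ only. They therefore lie in the orthogonal complement and are fixed. Since the circuit applies $\T_\gamma D_\gamma$ before $R_{\mathrm{exc},\gamma}$, I read "fixes the identity and generic columns" as fixing their images. By \cref{lem: GL2 generic hard columns}, $\T_\gamma D_\gamma\ket{c_\beta}=A_\gamma\ket{c_\beta}$ for these columns, so they are fixed as required.

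\textbf{Step 4: the identity on exceptional columns.} By \cref{lem: GL2 exceptional columns}, $\T_\gamma\ket{c_\alpha}=-c\,a+s\,d$. Applying the reflection gives
\[
-c(-c\,a+s\,d)+s(s\,a+c\,d)=(c^2+s^2)\,a+(-cs+sc)\,d=a.
\]
For this column $D_\gamma$ acts trivially, since $\alpha=b$ exactly when $\alpha^2=\gamma$. The full circuit therefore produces $A_\gamma\ket{c_\alpha}$.

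\textbf{Main obstacle.} The step needing care is the orthogonality in Steps 2 and 3, in particular confirming that $\det_\alpha$ is never a row of $A_\gamma$ and checking the case of two exceptional $\alpha$. The algebra in Step 4 is immediate.
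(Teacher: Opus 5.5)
Your proposal is correct and follows essentially the same argument as the paper. The columns of $A_\gamma$ and the determinant labels, which lie outside the row set of $A_\gamma$, are orthonormal, so the reflection planes are mutually orthogonal. The $2\times 2$ reflection sends the leakage vector $-c\,a+s\,d$ from the exceptional-columns lemma to $a$, and every other column lies in the orthogonal complement and is fixed. You spell out details the paper compresses into a few lines, namely the reflection structure, the case of two exceptional roots, and the use of the generic-columns lemma to identify the images being fixed.
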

\begin{proof}
The columns $A_\gamma\ket{c_\alpha}$ are orthonormal, the determinant states are orthonormal, and the two families belong to different output spaces.  Hence the planes in the definition of $R_{\mathrm{exc},\gamma}$ are orthogonal.  The matrix in \cref{eq: GL2 exceptional reflection} is orthogonal, and
\begin{equation*}
 \begin{pmatrix}
 -\chi_\alpha(-1)/\sqrt q & \sqrt{(q-1)/q}\\
 \sqrt{(q-1)/q} & \chi_\alpha(-1)/\sqrt q
 \end{pmatrix}
 \begin{pmatrix}
 -\chi_\alpha(-1)/\sqrt q\\[1mm]
 \sqrt{(q-1)/q}
 \end{pmatrix}
 =\begin{pmatrix}1\\0\end{pmatrix}.
\end{equation*}
Together with \cref{eq: GL2 exceptional leakage}, this proves the required action.  The other $A_\gamma$-columns are orthogonal to these planes and have no determinant component, so they are fixed.
\end{proof}

Putting everything together, we get the correctness of the circuit.

\begin{theorem}
\label{thm: GL2 hard A correctness}
For every input state $\ket{c}\in\Span\{\ket{c_\infty},\ket{c_\alpha}:\alpha\in X_q\}$,
\begin{equation}
 R_{\mathrm{exc}, \gamma}~T_\gamma~D_{\gamma} \bigl(\ket{c}\ket{0_{\mathrm{work}}}\bigr)
 =\bigl(A_\gamma\ket{c}\bigr)\ket{0_{\mathrm{work}}}.
 \label{eq: GL2 final hard A factorization}
\end{equation}
\end{theorem}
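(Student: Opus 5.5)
By linearity it suffices to verify \cref{eq: GL2 final hard A factorization} on the basis $\{\ket{c_\infty}\}\cup\{\ket{c_\alpha}:\alpha\in X_q\}$, with the work registers initialized to zero. I would split this basis into three classes and treat each with the lemmas already proved: the special input $\ket{c_\infty}$, the generic inputs $\ket{c_\alpha}$ with $\alpha^2\neq\gamma$, and the exceptional inputs $\ket{c_\alpha}$ with $\alpha^2=\gamma$.

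For the first two classes, \cref{lem: GL2 generic hard columns} gives $\T_\gamma D_\gamma\ket{c}=A_\gamma\ket{c}$. It then remains to show that $R_{\mathrm{exc},\gamma}$ fixes these vectors, which is part of \cref{lem: GL2 exceptional correction}. The argument is as follows. The generic and identity columns of $A_\gamma$ are orthogonal to the exceptional columns $A_\gamma\ket{c_\alpha}$, since $A_\gamma$ is unitary by \cref{thm: Mackey induced circuit}. They also have no $\ket{\det_\delta}$ component, because every $A_\gamma$-column lies in the span of the $\pi_\theta$, $I_{\delta,\eta}$ and $\St_\delta$ labels. Hence they lie in the orthogonal complement of every plane on which $R_{\mathrm{exc},\gamma}$ acts, and are fixed. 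For the identity column, $D_\gamma$ fixes $\ket{c_\infty}$, so this case reduces to \cref{eq: GL2 relative identity column}.

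For the exceptional class, $D_\gamma$ acts trivially, since here $\alpha=b$. \Cref{lem: GL2 exceptional columns} then expresses $\T_\gamma\ket{c_\alpha}$ as a vector in the plane spanned by $A_\gamma\ket{c_\alpha}$ and $\ket{\det_\alpha}$, with coefficients $-\chi_\alpha(-1)/\sqrt q$ and $\sqrt{(q-1)/q}$. \Cref{lem: GL2 exceptional correction} shows that the reflection \cref{eq: GL2 exceptional reflection} rotates this vector onto $A_\gamma\ket{c_\alpha}$. When two square roots of $\gamma$ exist, the two exceptional planes are mutually orthogonal, so the corrections do not interfere with each other.

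The main technical step is the bookkeeping of the work registers $\tau$ and $\mathsf a$. The lemmas above are stated on the labelled output space, so I would check that each gate returns its share of $\mathsf a$ to zero. This holds for $D_\gamma$ by \cref{thm: coherent Gauss phase}, and for $R_{\mathrm{fix},\gamma}$ and $R_{\mathrm{exc},\gamma}$ by their definitions. The register $\tau$ records the split/nonsplit branch and must be uncomputed by $R_{\mathrm{fix},\gamma}$. This works because every output label $\pi_\theta$, $I_{\delta,\eta}$, $\St_\delta$, $\det_\delta$ determines which torus it came from; the fixed labels mix both branches, but they are absorbed into output labels by the rotation \cref{eq: GL2 fixed torus rotation}. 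The antisymmetric combinations have zero amplitude by \cref{lem: GL2 relative torus coefficients}, so no weight escapes to unused labels. With these checks, each basis input maps to $(A_\gamma\ket{c})\ket{0_{\mathrm{work}}}$, and linearity completes the proof.
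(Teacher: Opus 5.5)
Your proposal is correct and follows the paper's proof. You use the same three-way split of the source basis into $\ket{c_\infty}$, generic $\ket{c_\alpha}$ with $\alpha^2\neq\gamma$, and exceptional $\ket{c_\alpha}$ with $\alpha^2=\gamma$, handled respectively by the generic-column lemma and the exceptional-correction lemma, and then conclude by linearity; your extra checks (why $R_{\mathrm{exc},\gamma}$ fixes the non-exceptional columns, orthogonality of the two exceptional planes, and clearing of the $\tau$ and $\mathsf a$ registers) are consistent with the paper, which leaves them to those lemmas and the gate definitions.
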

\begin{proof}
The identity and generic columns are correct by \cref{lem: GL2 generic hard columns}, and the correction gate fixes them.
For $\alpha^2=\gamma$, $D_\gamma$ is the identity and \cref{lem: GL2 exceptional correction} gives the required column.  These states form the source basis, so the equality holds on the whole source space.
\end{proof}

\subsubsection{Efficient implementation of the gates}

We now show that each gate in \cref{fig: GL2 hard A circuit} has a circuit of size $\operatorname{poly}(\log q,\log(1/\varepsilon))$.

\begin{lemma}[Implementation of $D_\gamma$]
The phase gate $D_\gamma$ can be implemented to operator-norm error at most $\varepsilon$ with clean ancillas and $\operatorname{poly}(\log q,\log(1/\varepsilon))$ gates.
\end{lemma}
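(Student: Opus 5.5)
The gate $D_\gamma$ is diagonal in the source basis, so I would build it from reversible classical arithmetic plus the coherent Gauss phase of \cref{thm: coherent Gauss phase} and the character phase of \cref{lem: controlled character phase}.

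On input $\ket{\gamma}\ket{c}$, first use the regime flag to detect $c=c_\infty$ (the $\ket{0}_+$ state) and do nothing on that branch. On the multiplicative branch, where $c=c_\alpha$ is stored by its exponent, compute $b=\alpha^{-1}\gamma$ into a clean register of $\mathsf a_D$ by subtracting exponents modulo $q-1$. Then compute $\xi=\alpha^{-1}b=\alpha^{-2}\gamma$ into another clean register, again by exponent arithmetic. Both steps are exact and use $\operatorname{poly}(\log q)$ gates.

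Next apply the two phases, controlled on $\xi\neq 1$ (a comparison of $\xi$ with the identity, computed into a flag qubit). The factor $\chi_b(-1)$ is obtained from \cref{lem: controlled character phase} applied to $\ket{b}\ket{-1}$, where $-1$ is prepared in exponent form as $(q-1)/2$ for odd $q$. For even $q$ we have $-1=1$, so this phase is trivial. The factor $g_q(\xi)/\sqrt q$ comes from applying $\Gamma_q$ of \cref{thm: coherent Gauss phase} to the $\xi$ register. Since $\xi\neq1$ on this branch, $\Gamma_q$ multiplies by exactly $g_q(\xi)/\sqrt q$ and does not move the basis state. On the branch $\xi=1$ (equivalently $\alpha=b$, i.e.\ $\alpha^2=\gamma$) no phase is applied, which matches the definition.

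Finally, uncompute the flag, $\xi$ and $b$ in reverse order, returning $\mathsf a_D$ to zero.

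The correctness check is essentially immediate, because every step either acts as a phase or reversibly computes and uncomputes a function of $(\gamma,\alpha)$. The net action on $\ket{c_\alpha}$ is therefore the stated diagonal entry.

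For the error bound, the only approximate primitives are one use of \cref{lem: controlled character phase} and one of \cref{thm: coherent Gauss phase}. The latter internally uses $\log_g$ from \cref{lem: add and mult of Fq}. Running each at accuracy $\varepsilon/2$ and using subadditivity of operator-norm error under composition gives total error at most $\varepsilon$ with $\operatorname{poly}(\log q,\log(1/\varepsilon))$ gates.

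The main subtlety is that $\Gamma_q$ is only $\varepsilon$-close to a diagonal unitary, so its ancillas are returned to zero only approximately. This has to be absorbed into the same operator-norm budget, and the controls must ensure that neither the zero sentinel nor the value $\xi=1$ ever reaches the logarithm routine.
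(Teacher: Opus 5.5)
Your proposal is correct and follows essentially the same route as the paper: skip the zero label, reversibly compute $b=\alpha^{-1}\gamma$ and $\alpha^{-1}b$, apply the Gauss phase of \cref{thm: coherent Gauss phase} together with the sign $\chi_b(-1)$ only when $\alpha^{-1}b\neq 1$, and then uncompute the character labels. Your extra details are a slightly more explicit version of the paper's short argument: obtaining $\chi_b(-1)$ via \cref{lem: controlled character phase} (it is also just $(-1)$ raised to the exponent of $b$ for odd $q$), and splitting the error budget between the approximate primitives.
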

\begin{proof}
On the zero label, do nothing.  On a nonzero label, compute from $(\gamma,\alpha)$ the characters $b=\alpha^{-1}\gamma$ and $\alpha^{-1}b$.  If $\alpha^{-1}b\ne1$, use \cref{thm: coherent Gauss phase} to apply $g_q(\alpha^{-1}b)/\sqrt q$, and use character arithmetic to apply $\chi_b(-1)$.  If $\alpha^{-1}b=1$ do nothing.  Uncomputing the two character labels returns the work registers to zero.
\end{proof}

\begin{lemma}[Implementation of the Fourier transforms]
The gates $\Fou_\times^\dagger$ and $\Fou_\times\oplus\Fou_{\ker\No}$ can be implemented to operator-norm error at most $\varepsilon$ with $\operatorname{poly}(\log q,\log(1/\varepsilon))$ gates.
\end{lemma}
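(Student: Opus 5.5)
The gate $\Fou_\times^\dagger$ needs only a small argument. By the definition above it is the adjoint of \cref{eq: efficient multiplicative Fq QFT}, and it fixes $\ket{c_\infty}=\ket0$. The regime flag separates the zero label, so we control the whole cyclic transform on the flag being in the multiplicative regime. \cref{lem: add and mult of Fq} already gives a uniform circuit for $\Fou_\times$ of size $\operatorname{poly}(\log q,\log(1/\varepsilon))$ with clean ancillas. Reversing that circuit gate by gate, and taking the adjoint of each gate, implements $\Fou_\times^\dagger$ with the same size and the same operator-norm error. Controlling on the flag multiplies the cost by only a constant factor.

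For $\Fou_\times\oplus\Fou_{\ker\No}$ I would control on the branch qubit $\mathrm{sp}/\mathrm{ns}$ written by $B_\gamma$, and apply one of two abelian QFTs.

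On the split branch, the register holds $r\in\F_q^*$ in the exponent encoding. The transform $\Fou_\times$ is again the cyclic QFT over $\Z/(q-1)\Z$, so \cref{lem: add and mult of Fq} applies directly.

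On the nonsplit branch, the register holds $r\in\ker\No$, a cyclic group of order $q+1$. It is generated by $h^{q-1}$, where $h$ is the fixed generator of $\F_{q^2}^*$. I would store $r=h^{(q-1)j}$ by its exponent $j\in\Z/(q+1)\Z$. If $B_\gamma$ outputs $r$ in a different encoding, I would insert a reversible conversion to this one: a discrete logarithm in $\F_{q^2}^*$ with Shor's algorithm \cite{Shor1994,MoscaZalka2004}, followed by exact division of the exponent by $q-1$. This step costs $\operatorname{poly}(\log q,\log(1/\varepsilon))$.

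With these exponent coordinates, the dual $\widehat{\ker\No}$ is also labelled by $\Z/(q+1)\Z$ through $\nu_k(h^{(q-1)j})=e^{2\pi i kj/(q+1)}$. The map $\Fou_{\ker\No}$ in \cref{eq: GL2 torus Fourier gates} is then exactly the cyclic QFT of order $q+1$, up to relabelling the output $\nu$ as $\theta_0\nu$. That relabelling is only the naming convention fixed before \cref{eq: GL2 affine Frobenius coordinate}, so it costs nothing. Cyclic QFTs of arbitrary order have approximate circuits of size $\operatorname{poly}(\log q,\log(1/\varepsilon))$ \cite{Kitaev1995,CleveWatrous2000}.

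The main obstacle is consistency of conventions. The coordinate for $\nu$ and the sign convention of the exponent must agree with those used in \cref{eq: GL2 affine Frobenius coordinate} and by $R_{\mathrm{fix},\gamma}$. If they do not, the gate implements a transform that differs by a fixed permutation or conjugation of labels. Any such discrepancy is itself a reversible arithmetic map on $\Z/(q+1)\Z$ and can be absorbed at polynomial cost.

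Finally, assigning error $\varepsilon/3$ to each of the conversion, the split-branch QFT and the nonsplit-branch QFT, the errors add by the triangle inequality for composed unitaries. This gives total operator-norm error at most $\varepsilon$ and total size $\operatorname{poly}(\log q,\log(1/\varepsilon))$.
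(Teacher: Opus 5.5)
Your proposal is correct and follows the paper's proof. The paper likewise takes the adjoint of the multiplicative QFT, reuses the same cyclic QFT on the split branch, and on the nonsplit branch converts to exponent coordinates of the order-$(q+1)$ cyclic group $\ker\No=\langle h^{q-1}\rangle$ via a discrete logarithm in $\F_{q^2}^*$ before applying a cyclic QFT of order $q+1$.

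One cosmetic difference: the paper explicitly adds the lifted exponent of $\theta_0$ to produce the label $\theta_0\nu$, whereas you treat this as an encoding convention relative to the retained $\gamma$ register; both are valid. That adjustment lands in $\Z/(q^2-1)\Z$ rather than being a map on $\Z/(q+1)\Z$, but it is still cheap reversible arithmetic.
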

\begin{proof}
The first gate is the adjoint of the multiplicative Fourier transform in \cref{lem: add and mult of Fq}.  The split transform is the same cyclic QFT.  The group $\ker\No$ is cyclic of order $q+1$ and is generated by $h^{q-1}$.  Thus the nonsplit transform is a cyclic QFT of order $q+1$.
We can convert between group elements and exponent labels by exponentiation and discrete logarithm in $\F_{q^2}^*$ by using \cref{lem: add and mult of Fq} with $q$ replaced by $q^2$.  Adding the lifted exponent of $\theta_0$ gives the label $\theta_0\nu$.
\end{proof}

\begin{lemma}[Implementation of $B_\gamma$]\label{lem: B implementation}
The unitary extension of $B_\gamma$ can be implemented to operator-norm error at most $\varepsilon$ with $\operatorname{poly}(\log q,\log(1/\varepsilon))$ gates.
\end{lemma}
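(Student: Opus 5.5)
I will implement $B_\gamma$ as a composition of reversible classical arithmetic over $\F_q$ and $\F_{q^2}$, a few controlled single-qubit rotations, and controlled phases built from \cref{lem: controlled character phase}. Two features make this possible. First, each input $\ket{t}$ is sent to a uniform superposition of at most two basis states. Second, the two labels in that superposition are computable from $t$ together with one classical bit.

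The plan proceeds as follows.

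\emph{Step 1: classification.} Controlled on $\ket{t}_+$, reversibly compute a case tag $\tau\in\{\mathrm{sp},\mathrm{ns},t{=}0,t{=}1/4\}$ into the register $\tau$. For odd $q$, the polynomial $X^2-X+t$ splits exactly when its discriminant $1-4t$ is a square. This is tested by Euler's criterion, $(1-4t)^{(q-1)/2}$, using modular exponentiation in $\F_q$. For even $q$, after the substitution $X\mapsto$ the standard form, splitting is equivalent to $\tr_{\F_q/\F_2}(t)=0$. This is a linear functional and is computable with $O(\log^2 q)$ gates. The special values $t=0$ and $t=1/4$ are detected by equality tests.

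\emph{Step 2: canonical root.} Compute a root deterministically and reversibly.
\begin{itemize}
\item In the split case with $q$ odd, the root is $x=(1+s)/2$, where $s$ is a canonical square root of $1-4t$. For definiteness, take $s$ to be the one whose integer representation is smaller.
\item In the nonsplit case, the root is $z\in\F_{q^2}$ with $z$ computed as a canonical square root in $\F_{q^2}$.
\item For even $q$, use the half-trace when $\log_2 q$ is odd, and a fixed trace-one element for the general Artin--Schreier solution.
\end{itemize}
Square roots can be computed by Tonelli--Shanks or Cipolla. These are randomized classically, but can be made deterministic given a precomputed quadratic nonresidue, which is part of the uniform classical description of the field. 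Alternatively, I would compute $\log_g$ of the discriminant, halve the exponent, and exponentiate, using \cref{lem: add and mult of Fq}. This is the route I would try first, since it reuses existing primitives and costs $\operatorname{poly}(\log q,\log(1/\varepsilon))$. The second root is then $y=1-x$ or $z^q$.

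\emph{Step 3: prepare the two-term superposition.} Apply a Hadamard to a fresh qubit $c$. Controlled on $c$, swap the roles of the two roots, so that branch $c$ holds the ordered pair $(x,y)$ or $(y,x)$. Then reversibly compute the output label $x/y$ (respectively $z/z^q=z^{1-q}$) into the output register. Apply the phase $\chi_\gamma(y)$ using $U_{cp}$ after $\log_g$. For the nonsplit branch, apply $-\theta_0(z)$ with a phase-kickback circuit analogous to \cref{lem: controlled character phase}, using $\theta_0(h^e)=e^{2\pi i ke/(q^2-1)}$ and a discrete logarithm in $\F_{q^2}^*$.

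The boundary cases $t=0$ and $t=1/4$ are handled separately. For these, a Hadamard-type rotation on the branch qubit $\tau$ is applied, the output label is set to $1$, and the phase $\chi_\gamma(1/2)$ is applied.

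\emph{Step 4: uncomputation.} This is the main obstacle: erasing the garbage (the qubit $c$, the roots, and $t$) coherently. The key observation is that $t=xy$ (respectively $t=\No(z)$) and the tag $\tau$ are recoverable from the output label $r$ alone. In the split case, $x/y=r$ and $x+y=1$ give $y=1/(1+r)$ and $x=r/(1+r)$. In the nonsplit case, $z^{1-q}=r$ together with $\tr z=1$ determines $z$ via $z=1/(1+r^{-1}\cdot\ldots)$, a linear solve in $\F_{q^2}$.

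Because the output label determines the ordered pair in each branch, I can recompute $c$, the roots, and $t$ from $(\mathrm{branch},r)$ and XOR them away, leaving only the output register. Orthogonality across different $t$ is guaranteed by \cref{lem: GL2 quadratic router isometry}. The extension to a full unitary is automatic, since the composite circuit is unitary and acts as required on inputs with clean ancillas.

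Error accumulates only from the constantly many calls to $\log_g$, $\exp_g$ and the phase gates, each run to precision $\varepsilon/c$.
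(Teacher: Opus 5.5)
Your proposal is correct and follows essentially the same route as the paper: compute a branch flag and a canonical ordered root pair (the paper uses the parity of $\log_g(1-4t)$ with $s=g^{e/2}$ or $s=h^{(q+1)e/2}$ for odd $q$, and a fixed $\F_2$-linear right inverse of $u\mapsto u^2+u$ for even $q$), create both orders with a Hadamard-controlled swap, write $r=u/v$, attach $\chi_\gamma(v)$ or $-\theta_0(u)$, erase the work via $u=r/(1+r)$, $v=1/(1+r)$, $t=r/(1+r)^2$, and treat the boundary columns by a separate two-dimensional map. Two small cleanups: in the nonsplit case no linear solve is needed, since the reconstruction is literally the same formula $z=r/(1+r)$; and on the boundary columns you should first compress $t\in\{0,1/4\}$ into a single bit and erase $t$ before applying the Hadamard, because the output ratio $r=1$ does not determine $t$.
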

\begin{proof}
We refer to \cref{app: proofs of A matrix}.
\end{proof}

\begin{lemma}[Implementation of $R_{\mathrm{fix},\gamma}$]
The labeling gate $R_{\mathrm{fix}, \gamma}$ can be implemented to operator-norm error at most $\varepsilon$ with $\operatorname{poly}(\log q,\log(1/\varepsilon))$ gates.
\end{lemma}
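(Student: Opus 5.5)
The gate $R_{\mathrm{fix},\gamma}$ acts on the branch flag and a single character label; its nontrivial action is on two-dimensional subspaces, plus a constant number of extra planes. The plan is to first compute a classical reversible tag, then apply a label-controlled rotation, and finally uncompute the tag. Concretely, controlled on $\gamma$ and the branch flag $\mathrm{sp}/\mathrm{ns}$, I compute the orbit partner. On the split branch the partner of $\delta$ is $\delta^{-1}\gamma$. On the nonsplit branch, in the coordinates $\theta=\theta_0\nu$, the partner is $\theta^q=\theta_0[(\theta_0^q\theta_0^{-1})\nu^{-1}]$ by \cref{eq: GL2 affine Frobenius coordinate}. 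In exponent labels both maps are affine involutions modulo $q-1$ and $q+1$ respectively, so they can be evaluated with reversible modular arithmetic in $\operatorname{poly}(\log q)$ gates. I then compare the label with its partner to obtain a fixed/nonfixed flag in $\mathsf a_{\mathrm{fix}}$. For the fixed case I additionally check whether the label is $\delta$ with $\delta^2=\gamma$ on the split branch, or $\chi_\delta\circ\No$ on the nonsplit branch; since the fixed points of the affine involution are explicit, this test is also arithmetic.

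For the nonfixed pairs, the symmetric/antisymmetric folding in \cref{eq: GL2 nonfixed orbit folding} is implemented by a standard pair-folding trick. First compute the bit $s=[\text{label}\succ\text{partner}]$ for a fixed total order on exponent labels. Next, controlled on $s=1$, replace the label by its partner; this step is reversible because the partner map is an involution. The register now holds the canonical representative $\min(\text{label},\text{partner})$ together with the bit $s$. Applying a Hadamard to $s$ sends $(\ket{\text{label}}+\ket{\text{partner}})/\sqrt2$ to $\ket{\min}\ket{0}$ and the antisymmetric combination to $\ket{\min}\ket{1}$. The register $\ket{\min}\ket{0}$ is then declared to be the encoding of $\ket{I_{\delta,\gamma\delta^{-1}}}$ or $\ket{\pi_\theta}$, which matches the canonical representative convention $\alpha\prec\beta$ used for $G$-irrep labels. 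The value $s=1$ denotes the unused orthogonal labels. A final reversible relabeling maps the canonical representative to the paper's output irrep encoding.

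For the fixed labels, at most two values of $\delta$ occur per branch. The gate acts on the two-dimensional space spanned by $\ket{\mathrm{sp},\delta}$ and $\ket{\mathrm{ns},\chi_\delta\circ\No}$. First I relabel both states by $\delta$, with the branch flag serving as a qubit. I then apply, controlled on the fixed tag, the real orthogonal $2\times2$ rotation of \cref{eq: GL2 fixed torus rotation}, whose entries $\sqrt{(q\pm1)/(2q)}$ are computable functions of $q$. Finally the qubit is reinterpreted as the $\St_\delta/\det_\delta$ flag. The rotation angle depends only on $q$, so it is synthesized to precision $\varepsilon$ with $\operatorname{poly}(\log q,\log(1/\varepsilon))$ gates. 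All tags are then uncomputed by recomputing them from the output labels, which is possible since fixedness and the partner are functions of the output label.

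I expect the main obstacle to be clean uncomputation and consistency. The fixed/nonfixed and branch tags must be recomputable from the output encoding. In addition, the Hadamard step must combine with the orbit-partner map on the nonsplit branch without leaving residual phases, because the Frobenius partner in $\theta_0$-coordinates involves the shift $\theta_0^q\theta_0^{-1}$. I would first check both points on exponent labels, verifying that the partner map is an exact involution on $\widehat{\ker\No}$. The error budget then follows by splitting $\varepsilon$ over the constant number of approximate primitives.
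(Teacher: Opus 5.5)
Your proposal is correct and follows essentially the same route as the paper. Both compute the orbit partner ($\delta\mapsto\gamma\delta^{-1}$ on the split branch, the Frobenius map in $\theta_0$-coordinates on the nonsplit branch), compare exponent labels to obtain a canonical representative and an order bit, and apply a Hadamard to that bit on nonfixed pairs; both then detect the fixed labels, recover the unique $\delta$ with $\theta=\chi_\delta\circ\No$, apply the fixed $2\times 2$ rotation, and reversibly clean the work registers, and your version simply spells out more of the reversible bookkeeping than the paper does.
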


\begin{proof}
The circuit to implement this gate is shown in \cref{fig: GL2 fixed label gate ancillas}. We explain each part.

On the split branch, cyclic arithmetic computes $\delta\mapsto\gamma\delta^{-1}$.  On the nonsplit branch, the character coordinate is sent to its Frobenius conjugate by \cref{eq: GL2 affine Frobenius coordinate}.  In either case, compare the two exponent labels to choose a representative and store the order in one bit. A Hadamard gate on this bit sends the symmetric combination to the physical representation label and the antisymmetric combination to an unused label.  On the split branch, the test $\delta^2=\gamma$ finds the fixed labels.  On the nonsplit branch, the fixed test is $\theta^q=\theta$. For such a label, cyclic exponent arithmetic recovers the unique $\delta$ with $\theta=\delta\circ\No$.  On these fixed pairs, apply the rotation in \cref{eq: GL2 fixed torus rotation}.  Its entries can be approximated with $\operatorname{poly}(\log q,\log(1/\varepsilon))$ gates.
\end{proof}

\begin{figure}[H]
\centering
{\scriptsize
\begin{quantikz}[row sep=0.25cm,column sep=0.38cm]
\lstick{$\ket\gamma$}
 & \ctrl{1}
 & \qw
 & \qw
 & \qw
 & \rstick{$\ket\gamma$}
\\
\lstick{$\ket{\tau,\lambda}$}
 & \gate[2]{U_\iota}
 & \gate[4]{U_{\mathrm{can}}}
 & \qw
 & \gate[5]{U_{\mathrm{out}}}
 & \rstick{$\ket\mu$}
\\
\lstick{$\ket0_{\lambda'}$}
 &
 &
 & \qw
 &
 & \rstick{$\ket0_{\lambda'}$}
\\
\lstick{$\ket0_f$}
 & \qw
 &
 & \qw
 &
 & \rstick{$\ket0_f$}
\\
\lstick{$\ket0_o$}
 & \qw
 &
 & \gate{H_{\,f=0}}
 &
 & \rstick{$\ket0_o$}
\\
\lstick{$\ket0_{\mathsf a_{\mathrm{fix}}}$}
 & \qw
 & \qw
 & \qw
 &
 & \rstick{$\ket0_{\mathsf a_{\mathrm{fix}}}$}
\end{quantikz}
}
\caption{Ancilla-level circuit for $R_{\mathrm{fix},\gamma}$.
Here $\tau\in\{\mathrm{sp},\mathrm{ns}\}$ is the branch and $\lambda$ is its character coordinate.
The gate $U_\iota$ computes the orbit partner $\lambda'=\gamma\lambda^{-1}$ on the split branch and $\lambda'=\lambda^q$ on the nonsplit branch.
The gate $U_{\mathrm{can}}$ computes the fixed-point flag $f=[\lambda=\lambda']$, a canonical orbit representative, and the order bit $o$.
When $f=0$, the Hadamard on $o$ separates the symmetric state, which receives the physical representation label, from the antisymmetric unused state.
The gate $U_{\mathrm{out}}$ performs this relabeling on nonfixed pairs, applies the rotation of \cref{eq: GL2 fixed torus rotation} on fixed pairs, and reversibly clears the branch and work registers.}
\label{fig: GL2 fixed label gate ancillas}
\end{figure}

To implement the last gate, we first prepare its exceptional target columns.

\begin{lemma}[Preparation of an exceptional target column]
\label{lem: GL2 exceptional target preparation}
If $\alpha^2=\gamma$, then $A_\gamma\ket{c_\alpha}$ can be prepared to error at most $\varepsilon$ with $\operatorname{poly}(\log q,\log(1/\varepsilon))$ gates.
\end{lemma}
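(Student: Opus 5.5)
The plan is to prepare $A_\gamma\ket{c_\alpha}$ without filtering $\T_\gamma\ket{c_\alpha}$. \cref{lem: GL2 exceptional leakage} gives $\T_\gamma\ket{c_\alpha}=-\tfrac{\chi_\alpha(-1)}{\sqrt q}A_\gamma\ket{c_\alpha}+\sqrt{\tfrac{q-1}{q}}\ket{\det_\alpha}$. So the wanted vector has weight only $1/q$ in a state we can already prepare. Removing the $\ket{\det_\alpha}$ component by amplitude amplification, or by reflections in the plane $\Span\{\T_\gamma\ket{c_\alpha},\ket{\det_\alpha}\}$, costs $\Theta(\sqrt q)$ rounds, since the relevant angle is about $1/\sqrt q$. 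Instead I will build the column coefficient by coefficient, reusing the torus machinery on an input whose dominant fixed Fourier mode has been removed.

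\textbf{Step 1: split off the fixed Steinberg rows.} From \cref{eq: corrected A nontrivial column} with $g_q(1)=-1$ (and $g_q(\xi)^2$ for a second quadratic label when $q$ is odd), the amplitudes on the at most two labels $\St_\delta$, $\delta^2=\gamma$, are explicit numbers of size $O(1/q)$. Their total weight $w$ is computable to $\operatorname{poly}(\log q)$ bits. I will first rotate a flag qubit to $\sqrt w\ket1+\sqrt{1-w}\ket0$. On $\ket1$, I write the normalized Steinberg part, using the Gauss phase of \cref{thm: coherent Gauss phase} for the phases.

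\textbf{Step 2: the bulk.} On $\ket0$ I need the normalized restriction of $A_\gamma\ket{c_\alpha}$ to the principal and cuspidal rows. By the proof of \cref{lem: GL2 exceptional columns}, these rows equal $-\sqrt q\,\chi_\alpha(-1)$ times the corresponding rows of $\T_\gamma\ket{c_\alpha}$. Before $R_{\mathrm{fix},\gamma}$, they are the Fourier coefficients of the explicit pre-state $\psi=B_\gamma\Fou_\times^\dagger\ket{c_\alpha}$ at all modes except $\delta=\alpha$ on the split branch and $\chi_\alpha\circ\No$ on the nonsplit branch. Since these are the constant modes after twisting by $\chi_\alpha$, this part is the image under $(\Fou_\times\oplus\Fou_{\ker\No})$ of the mean-zero part $\psi^\circ$ of $\chi_\alpha^{-1}\psi$ on each torus. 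By \cref{eq: GL2 split router}, $\psi^\circ$ has amplitudes of the form (character value) minus (explicit constant) on each of $\F_q^*$ and $\ker\No$.

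I will prepare $\psi^\circ$ directly. A register first holds a uniform superposition over the torus, and a second register holds a one-qubit state whose two amplitudes depend only on the already-computed character value and the known constant. The ratio defining the one-qubit state is computed reversibly, so the second register is produced by a controlled rotation. The branch weights between the split and nonsplit tori are computed in closed form. Each torus then needs one flattening or uncomputation step, implemented by a cyclic QFT on the torus. After that I apply $(\Fou_\times\oplus\Fou_{\ker\No})$ and the nonfixed part of $R_{\mathrm{fix},\gamma}$, as in the proofs of the previous lemmas. Finally I uncompute the flag: the Steinberg branch and the bulk branch carry disjoint label sets, so the flag is a function of the output label. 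The error bound follows by splitting $\varepsilon$ among constantly many approximate primitives.

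\textbf{Main obstacle.} The step I expect to be hardest is the preparation of $\psi^\circ$ with clean ancillas. The rotation amplitudes depend on the character value through a ratio, and the registers must be disentangled afterwards; in particular the one-qubit state may be hard to uncompute. If this direct construction fails, my fallback is to note that $\psi^\circ=\psi-\langle\text{const},\psi\rangle\,\text{const}$ involves only two explicitly known vectors with a closed-form overlap. The normalized difference of two preparable states with a known inner product can then be prepared exactly by a single two-dimensional rotation, in the style of exact amplitude amplification with one round.
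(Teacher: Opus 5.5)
Neither of your two routes closes the gap in Step~2, which is the step you flag yourself.

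\textbf{Your description of $\psi^\circ$ does not match the router.} With the paper's convention for $\Fou_\times$, the coefficient of the fixed mode $\delta=\alpha$ is proportional to the sum of $\chi_\alpha\psi$. Evaluate $B_\gamma\Fou_\times^\dagger\ket{c_\alpha}$ with $\gamma=\alpha^2$ and $K=1/\sqrt{2(q-1)}$. You get $\psi(\mathrm{sp},r)=K\chi_\alpha(r)^{-1}$ for every $r\in\F_q^*\setminus\{-1\}$ and $\psi(\mathrm{sp},-1)=0$. Likewise $\psi(\mathrm{ns},r)=-K\nu_0(r)$ on $\ker\No\setminus\{-1\}$ and $0$ at $r=-1$, where $\theta_0=(\chi_\alpha\circ\No)\,\nu_0$ with $\nu_0\in\widehat{\ker\No}$ pulled back along $z\mapsto z/z^q$. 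So after the correct twist, $\psi$ is a constant vector with a single hole at $r=-1$. On a torus of order $L$ its mean-zero part is $K(\tfrac1L\mathbf 1-\delta_{-1})$, and nearly all of its norm sits on the one point $r=-1$. (The opposite twist gives $\chi_\alpha^{-1}\psi=K\chi_{\gamma^{-1}}$ off the hole, a ``character value minus constant'' profile. But its mean is the coefficient of the mode $\alpha^{-1}$, not of the fixed mode $\alpha$.)

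\textbf{Your main route fails on this profile.} Loading it by a controlled rotation over a uniform superposition leaves the good branch with amplitude $\Theta(q^{-1/2})$. The qubit you worry about therefore cannot be cleaned without $\Theta(\sqrt q)$ rounds of amplification.

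\textbf{Your fallback fails for the same reason.} $\psi$ and its projection onto the fixed modes have overlap $1-\Theta(1/q)$. This is because the fixed modes are what $R_{\mathrm{fix},\gamma}$ turns into $\ket{\det_\alpha}$, which has weight $(q-1)/q$. Their normalized difference is orthogonal to one of the two preparable states and has overlap $\Theta(q^{-1/2})$ with the other. Preparation unitaries only give you reflections about these states, and a product of two such reflections rotates by $\Theta(q^{-1/2})$. The Gram--Schmidt (LCU) step succeeds with probability $(1-c)/(1+c)=\Theta(1/q)$, but one round of exact amplitude amplification needs success probability at least $1/4$. This is the $\Theta(\sqrt q)$ obstacle from your first paragraph, written in another basis.

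\textbf{The missing idea.} For $\alpha^2=\gamma$ the target coefficients have constant modulus within each family. For a nonfixed pair $\{\delta,\eta\}\in\mathcal P_\gamma$ one has $\alpha^{-1}\eta=(\alpha^{-1}\delta)^{-1}$, so $g_q(\xi)g_q(\xi^{-1})=q\chi_\xi(-1)$ with $\xi=\alpha^{-1}\delta$. The character $\Theta=\theta\cdot(\chi_{\alpha^{-1}}\circ\No)$ is trivial on $\F_q^*$ but nontrivial, so $|g_{q^2}(\Theta)|=q$.

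The paper's proof uses exactly this. It prepares a three-valued branch register with closed-form amplitudes $\sqrt{N_{\mathrm C}},\sqrt{N_{\mathrm P}},\sqrt{N_{\mathrm S}}$. It then prepares uniform superpositions over $\mathcal P_\gamma$ and $\mathcal C_\gamma$, using a cyclic QFT on the nonfixed exponents, a canonical orbit representative, and a Hadamard on the order bit. Finally it attaches unit-modulus phases by character arithmetic and the Gauss-phase circuit, and writes the at most two Steinberg labels directly. Equivalently, in your torus picture you can prepare $\tfrac1L\mathbf 1-\delta_{-1}$ exactly as an orthogonal combination of $\ket{-1}$ and the uniform state on the other $L-1$ points. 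The branch qubit is then a function of the register.

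\textbf{Two smaller repairs.} First, for odd $q$ each torus has two fixed modes $\delta^2=\gamma$, so a bulk that removes only $\delta=\alpha$ still contains the second one. You can either drop $\St_\delta$ for $\delta\ne\alpha$ from Step~1 and apply the fixed rotation there in the bulk, or remove both fixed modes from the bulk. As written, the two branches overlap and the flag is not a function of the output label. Second, after Fourier transforming a twisted state you must shift the mode labels back.
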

\begin{proof}
We refer to \cref{app: proofs of A matrix}.
\end{proof}

\begin{lemma}[Implementation of $R_{\mathrm{exc},\gamma}$]
The exceptional correction gate $R_{\mathrm{exc},\gamma}$ can be implemented to operator-norm error at most $\varepsilon$ with $\operatorname{poly}(\log q,\log(1/\varepsilon))$ gates.
\end{lemma}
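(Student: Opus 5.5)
The plan is to write $R_{\mathrm{exc},\gamma}$ as a controlled two-dimensional reflection conjugated by a state-preparation unitary. Fix $\alpha$ with $\alpha^2=\gamma$. Let $W_\alpha$ be the circuit of \cref{lem: GL2 exceptional target preparation} that maps a fixed reference state $\ket{r_\alpha}$ to $A_\gamma\ket{c_\alpha}$. We choose $\ket{r_\alpha}$ as an unused label in the output register, for instance $\ket{\mathrm{flag},\alpha}$, which is orthogonal to all physical representation labels.

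On the span of $\ket{r_\alpha}$ and $\ket{\det_\alpha}$ we apply the $2\times 2$ orthogonal matrix of \cref{eq: GL2 exceptional reflection}. This matrix is a rotation or reflection whose angle depends only on $q$ and on the sign $\chi_\alpha(-1)$. The sign is computed by character arithmetic, and the entries are approximated by a controlled single-qubit rotation. We then set
\[
R_{\mathrm{exc},\gamma}=\prod_{\alpha^2=\gamma} W_\alpha\, M_\alpha\, W_\alpha^\dagger .
\]
Here $M_\alpha$ denotes the two-level rotation between $\ket{r_\alpha}$ and $\ket{\det_\alpha}$. There are at most two such $\alpha$ (none when $\gamma$ is a non-square). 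They are found by computing square roots in the exponent encoding: $2a\equiv k \pmod{q-1}$ gives zero or two solutions in odd characteristic and exactly one in even characteristic. Each $\alpha$ is computed into a clean register and the corresponding factor is applied controlled on it.

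For correctness we must check that $W_\alpha M_\alpha W_\alpha^\dagger$ acts as required, namely as the matrix \cref{eq: GL2 exceptional reflection} on the pair $(A_\gamma\ket{c_\alpha},\ket{\det_\alpha})$ and as the identity elsewhere. This needs $W_\alpha$ to fix $\ket{\det_\alpha}$, or more generally to map $\ket{\det_\alpha}$ to itself up to the reference plane.

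The main obstacle lies in the preparation unitary. I would build $W_\alpha$ from the circuit of \cref{lem: GL2 exceptional target preparation}, controlled on the flag label, so that it acts only on the subspace where it prepares. The delicate point is that a general state-preparation unitary does not fix $\ket{\det_\alpha}$. I would avoid this requirement by noting that $A_\gamma\ket{c_\alpha}$ has no $\det$ component. Instead of $W_\alpha$ I then use the reflection $\mathbb{1}-2\ket{\phi}\bra{\phi}$ with $\ket{\phi}=(A_\gamma\ket{c_\alpha}-\ket{r_\alpha})/\sqrt2$. This reflection swaps $\ket{r_\alpha}$ and $A_\gamma\ket{c_\alpha}$ and fixes everything orthogonal to both, including $\ket{\det_\alpha}$ and the other columns. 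The reflection is implemented as $W\,(\mathbb{1}-2\ket{0}\bra{0})\,W^\dagger$, where $W$ prepares $\ket{\phi}$ from $\ket{0}$. Such a $W$ is obtained from the preparation circuit for $A_\gamma\ket{c_\alpha}$ by a controlled-Hadamard on a flag qubit.

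For the error analysis, the two preparation uses, the character arithmetic and the rotation each receive a constant fraction of $\varepsilon$, and the errors add by the triangle inequality. The ancillas are uncomputed at the end, so the gate count is $\operatorname{poly}(\log q,\log(1/\varepsilon))$.
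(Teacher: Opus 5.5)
Your construction is correct, but it is organized differently from the paper's. The paper observes that the $2\times 2$ matrix in \cref{eq: GL2 exceptional reflection} is symmetric with determinant $-1$, so it is itself a Householder reflection $I-2\ket{n}\bra{n}$ on the plane spanned by $A_\gamma\ket{c_\alpha}$ and $\ket{\det_\alpha}$. Its normal vector is $\ket{n}=-\sqrt{(1+c)/2}\,A_\gamma\ket{c_\alpha}+\sqrt{(1-c)/2}\,\ket{\det_\alpha}$ with $c=\chi_\alpha(-1)/\sqrt q$. The paper then implements $P Z_{\ket0}P^\dagger$, one per square root $\alpha$, where $P$ prepares $\ket{n}$ using \cref{lem: GL2 exceptional target preparation} plus one two-dimensional rotation.

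You instead move the problem onto an auxiliary computational label. The swap reflection $S_\alpha=I-2\ket{\phi}\bra{\phi}$ exchanges $\ket{r_\alpha}$ and $A_\gamma\ket{c_\alpha}$ and fixes $\ket{\det_\alpha}$. Hence $S_\alpha M_\alpha S_\alpha$ acts as the required matrix on the target plane, and as the identity on $\ket{r_\alpha}$ and on everything orthogonal to the plane.

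Both arguments rest on the same facts. The column $A_\gamma\ket{c_\alpha}$ has no $\det$-component, the planes for different $\alpha$ are orthogonal, and the column is efficiently preparable. Both give $\operatorname{poly}(\log q,\log(1/\varepsilon))$. Your version never uses that the matrix is a reflection (you call it ``a rotation or reflection''), so it would work for any $2\times2$ unitary on that plane. The price is an extra unused label, a separate two-level gate $M_\alpha$, and four calls to the preparation circuit per $\alpha$ instead of the paper's two.

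One point of precision: $W$ is not literally the preparation circuit plus a controlled Hadamard. You prepare $(\ket0-\ket1)/\sqrt2$ on a flag qubit, then conditionally prepare $A_\gamma\ket{c_\alpha}$ or write $\ket{r_\alpha}$. You then uncompute the flag from the label register, which is possible because $\ket{r_\alpha}$ is distinguishable from every physical label. The phase flip must also be taken about the joint all-zero state of the label register and all ancillas.
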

\begin{proof}
We implement this gate via a symmetric prepare--reflect--unprepare construction as shown in \cref{fig: GL2 exceptional correction ancillas}.
First, we solve $\alpha^2=\gamma$ in the cyclic exponent labels.  For odd $q$, there are zero or two solutions (two in the exceptional case), and exactly one for even $q$.  Then, for each solution, the matrix in \cref{eq: GL2 exceptional reflection} is the reflection with normalized normal vector
\begin{equation*}
 -\sqrt{\frac{1+\chi_\alpha(-1)/\sqrt q}{2}}
  A_\gamma\ket{c_\alpha}
 +\sqrt{\frac{1-\chi_\alpha(-1)/\sqrt q}{2}}
  \ket{\det_\alpha}.
\end{equation*}
This state can be prepared using \cref{lem: GL2 exceptional target preparation} and one additional two-dimensional rotation.  If $P$ is its preparation circuit, the sequence $P^\dagger$, a phase flip on $\ket{0}$, and $P$ implements the required reflection.  Applying this for all solutions gives $R_{\mathrm{exc},\gamma}$.
\end{proof}

\begin{figure}[H]
\centering
{\scriptsize
\begin{quantikz}[row sep=0.25cm,column sep=0.36cm]
\lstick{$\ket\gamma$}
 & \ctrl{1}
 & \qw
 & \qw
 & \qw
 & \ctrl{1}
 & \rstick{$\ket\gamma$}
\\
\lstick{$\ket0_{\alpha_j}$}
 & \gate[2]{U_{\mathrm{sol}}^{(j)}}
 & \qw
 & \qw
 & \qw
 & \gate[2]{(U_{\mathrm{sol}}^{(j)})^\dagger}
 & \rstick{$\ket0_{\alpha_j}$}
\\
\lstick{$\ket0_f$}
 &
 & \ctrl{1}
 & \ctrl{1}
 & \ctrl{1}
 &
 & \rstick{$\ket0_f$}
\\
\lstick{$\ket\mu$}
 & \qw
 & \gate[2]{P^\dagger}
 & \gate{Z_{\ket0}}
 & \gate[2]{P}
 & \qw
 & \rstick{$\ket{\mu'}$}
\\
\lstick{$\ket0_{\mathsf a_{\mathrm{exc}}}$}
 & \qw
 &
 & \qw
 &
 & \qw
 & \rstick{$\ket0_{\mathsf a_{\mathrm{exc}}}$}
\end{quantikz}
}
\caption{One controlled reflection used in $R_{\mathrm{exc},\gamma}$.
For $j\in\{0,1\}$, $U_{\mathrm{sol}}^{(j)}$ computes the $j$th solution $\alpha_j$ of $\alpha_j^2=\gamma$, when it exists, and sets the validity flag $f$.
Conditional on $f=1$, the circuit $P$ prepares the normalized normal vector displayed in the proof and $Z_{\ket{0}}$ applies a phase flip so that $PZ_{\ket{0}}P^\dagger$ is the reflection in the correct plane.
Applying this circuit for $j=0,1$ implements $R_{\mathrm{exc},\gamma}$.
When a solution does not exist, its validity flag is zero and the reflection is not triggered.}
\label{fig: GL2 exceptional correction ancillas}
\end{figure}

Putting everything together now proves the efficient implementation of the $A_\gamma$ matrix.
\begin{theorem}[Efficient hard \texorpdfstring{$A$}{A}-block]
\label{thm: efficient GL2 hard A circuit}
For every $\varepsilon>0$, the circuit for $A_\gamma$ in \cref{fig: GL2 hard A circuit} has a uniform implementation of size
\begin{equation}
 \operatorname{poly}\left(\log q,\log\frac1\varepsilon\right)
\end{equation}
whose action on the clean-work source space is within operator-norm error $\varepsilon$ of $A_\gamma$.  The implementation is coherent in $\gamma$.
\end{theorem}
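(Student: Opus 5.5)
The proof combines two ingredients that are already in place: exact correctness of the factorization, and efficiency of each gate. For correctness I invoke \cref{thm: GL2 hard A correctness}. It states that on the source space $\Span\{\ket{c_\infty},\ket{c_\alpha}\}$ with clean work registers, the product $R_{\mathrm{exc},\gamma}\,\T_\gamma\,D_\gamma$ equals $A_\gamma$ exactly and returns the work registers to zero. Expanding $\T_\gamma$ via \cref{eq: GL2 relative torus transform} shows that the circuit of \cref{fig: GL2 hard A circuit} is the ordered product
\[
R_{\mathrm{exc},\gamma}\,R_{\mathrm{fix},\gamma}\,\bigl(\Fou_\times\oplus\Fou_{\ker\No}\bigr)\,B_\gamma\,\Fou_\times^\dagger\,D_\gamma .
\]
This product has six factors, and it suffices to realize each one approximately.

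For efficiency, each factor has a lemma stated just before the theorem: the lemma for $D_\gamma$, the lemma for the two Fourier gates, \cref{lem: B implementation}, the lemma for $R_{\mathrm{fix},\gamma}$, and the lemma for $R_{\mathrm{exc},\gamma}$. Each gives a circuit of size $\operatorname{poly}(\log q,\log(1/\varepsilon'))$ with operator-norm error at most $\varepsilon'$ and clean ancillas. I set $\varepsilon'=\varepsilon/6$. Two standard facts then bound the total error. First, $\|U_1\cdots U_k-V_1\cdots V_k\|\le\sum_i\|U_i-V_i\|$ for unitaries, proved by telescoping. Second, restricting to a subspace, here the clean-work source space, cannot increase operator norm. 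Together these give total error at most $\varepsilon$. The sizes add, so the total is still $\operatorname{poly}(\log q,\log(1/\varepsilon))$. The $q=2$ case is the fixed $2\times2$ rotation \cref{eq: GL2 q2 hard A block} and costs constant size.

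Two points need more care than a plain product bound. The first concerns the isometry $B_\gamma$. The exact identity only holds on inputs whose extra registers are zero, so the telescoping argument must use the unitary extensions of the gates and apply the bound to the restricted operator. Each approximate gate returns its ancillas to zero only approximately. That residual is already counted in its $\varepsilon'$ operator-norm bound, because the lemmas are stated for the full unitaries including ancilla clean-up.

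The second point is coherence in $\gamma$ and uniformity, which I expect to be the only real obstacle, and it is mostly bookkeeping. Every sub-circuit computes its $\gamma$-dependent data reversibly from the $\gamma$ register and uncomputes it afterwards. This data consists of the exponent of $\theta_0$, the characters $b=\alpha^{-1}\gamma$, the partner $\gamma\delta^{-1}$, and the square roots of $\gamma$. So each gate is a single $\gamma$-controlled circuit rather than a loop over the $q-1$ values of $\gamma$. The circuit description is produced in time polynomial in $\log q$ and $\log(1/\varepsilon)$ from $q$ and the fixed generators $g$ and $h$, which gives uniformity.
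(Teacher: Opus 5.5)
Your proposal is correct and follows the paper's proof: the exact factorization from \cref{thm: GL2 hard A correctness}, combined with the per-gate implementation lemmas and an even error split over the constantly many factors. Your remarks on telescoping through the unitary extension of $B_\gamma$, on the separate $q=2$ case, and on coherence in $\gamma$ only make explicit what the paper's short proof leaves implicit.
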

\begin{proof}
Each factor in \cref{fig: GL2 hard A circuit} has the stated gate count by the previous lemmas.  There are only a constant number of factors, so we can divide the error among them.  Their product is $A_\gamma$ by \cref{thm: GL2 hard A correctness}.
\end{proof}

\subsection{Proof of the main theorem}

\begin{proof}[Proof of \Cref{thm: GL2 QFT}]
	Put $G=\GL_2(\F_q)$ and $H=B$.  The subgroup induction identity of \cref{lemma: induction step of first induction} writes
	\begin{equation}
		\Fou_G
		=\U_{\Ind}\bigl(I_{G/B}\otimes\Fou_B\bigr)\Enc_{G/B}.
		\label{eq: GL2 final QFT factorization}
	\end{equation}
	The subgroup transform $\Fou_B$ has the required uniform efficient implementation by \cref{thm: efficient B QFT}.

	We first check the remaining classical encodings.  For $g=\psm{a&b\\c&d}\in G$, the transversal in \cref{eq: transversal set G/B} gives
	\begin{equation*}
	g=\begin{cases}
		t_\infty g,&c=0,\\[1mm]
		t_{a/c}\psm{c&d\\0&-\det(g)/c},&c\ne0.
	\end{cases}
	\end{equation*}
	Thus $\Enc_{G/B}$ is computed reversibly using a constant number of finite-field operations.  It has size $\operatorname{poly}(\log q)$ and cleans its work registers.

	It remains to implement $\U_{\Ind}$, for which we use the Mackey circuit in \cref{alg: Mackey induced transform}.
	For $q=2$ we can use the constant-sized rotation in \cref{eq: GL2 q2 hard A block} and from now on we assume that $q>2$.
	For $B\subseteq G$, the double cosets have representatives $e$ and $w$, with $H_e=B$ and $H_w=T$.  By \cref{thm: Mackey induced circuit}, $\U_{\Ind}$ is the product of the Mackey encoding, the untwisting maps, the smaller induction maps, and the total $A$-matrix.  Each of these components is efficient for our choice of groups.
	\begin{itemize}
		\item The Mackey encoding uses $t_\infty=e$ and $t_x=u_xw$ for $x\in\F_q$, so it is a relabeling using $O(\log q)$ qubits.
		\item On the identity cell the untwisting and smaller induction maps are trivial.  On the $w$-cell, the untwisting convention in \cref{eq: GL2 untwisting convention} is reversible character arithmetic, and the induction map from $T$ to $B$ is part of the circuit in \cref{fig: GL2 B QFT decomposition} (the part after $\Fou_T$).
		\item The total $A$-matrix has an implementation of size $\operatorname{poly}(\log q,\log(1/\varepsilon))$ by \cref{thm: efficient GL2 total A matrix}.
	\end{itemize}
	Consequently the Mackey circuit implements $\U_{\Ind}$ with size $\operatorname{poly}(\log q,\log(1/\varepsilon))$, where the error can be split across the four components.  Its correctness, together with \cref{eq: GL2 final QFT factorization}, proves that the resulting circuit is the QFT of $G$ in the Gelfand--Tsetlin basis.

\end{proof}

\section{Inducing from normal subgroups and Clifford theory}\label{sec: clifford theory}

In this section, we consider a group $G$ with a normal subgroup $N\lhd G$ and develop a classification of the irreducible representations of $G$ from those of $N$. Using \emph{Mackey's little group method} we derive a QFT for $G$
from that of $N$,
provided our groups meet certain requirements and we are given black-box access to operations on $N$ and $G$. The relevant theory discussed here can be found in \cite[Chapter 2]{CecchFinitegroup}, but we provide proofs of most results for completeness.

\subsection{$G$-action on $\hatN$ and inertia groups}

Consider, for $N\lhd G$, the dual $\hatN$, whose elements are labels for equivalence classes of irreducible unitary representations of $N$.  For every $\la\in\hatN$, choose a unitary representative
\begin{equation*}
    \Rep_\la:N\longrightarrow \U(V_\la)
\end{equation*}
and an orthonormal basis
\begin{equation*}
    \mathcal B_\la
    =\{\ket{i_\la}\}_{i=1}^{d_\la},
    \qquad d_\la:=\dim V_\la.
\end{equation*}
We write $\la^*$ for the dual irrep and use the conventions
\begin{equation}\label{eq:dual-irrep-convention}
    V_{\la^*}:=V_\la^*,
    \qquad
    \Rep_{\la^*}(n):=\Rep_\la(n\inv)\tp,
    \qquad n\in N,
\end{equation}
with dual basis $\mathcal B_{\la^*}=\{\ket{k_{\la^*}}\}_{k=1}^{d_\la}$.

For $g\in G$ and $\la\in\hatN$, define the \emph{transported representation} by
\begin{equation}\label{eq:transported-representation}
    \Rep_\la^g(n)
    :=
    \Rep_\la(g\inv ng),
    \qquad n\in N,
\end{equation}
and define the action of $G$ on $\hatN$ by
\begin{equation}\label{eq:G-action-on-irrep-labels}
    \act{g}\la
    :=
    [\Rep_\la^g]
\end{equation}
where $[\Rep_\la^g] \in \hatN$ denotes the label of the transported irrep.
As $N\lhd G$, this is precisely the normalizer twist introduced in \cref{eq: normalized subgroup irrep twist}.  In the notation of \cref{sec: Mackey transform}, $\Rep_\la^g$ is $R_\kappa^\omega$, and $\act{g}\la$ is the label $\kappa^\omega$, with $\omega=g$ and $\kappa=\la$.
Thus $\act{g}\la$ is the equivalence-class label associated to the representation $\Rep_\la^g$, so that
\begin{equation*}
    \Rep_{\act{g}\la}\overset{N}{\cong}\Rep_\la^g.
\end{equation*}
More generally, if $(\Rep,V)$ is a $K$-module for $K\leq G$, then $V^g$ denotes the transported $gKg\inv$-module with the same underlying vector space and action
\begin{equation}\label{eq:transported-module}
    \Rep^g(x):=\Rep(g\inv xg),
    \qquad x\in gKg\inv.
\end{equation}
This is the twisted module denoted by $V^\omega$ in \cref{lemma: Mackey}.

\begin{definition}
For $\la\in\hatN$, we define:
\begin{itemize}
    \item $\orb_G(\la):=\{\act{g}\la: g\in G\} \subseteq \hatN$, the \emph{$G$-orbit} of $\la$ in $\hatN$,
    \item $\Inertia{\la}:=\Stab_G(\la):=\{g\in G: \act{g}\la=\la\} \subseteq G$, the \emph{inertia group} for $\la$ in $G$.
\end{itemize}
\end{definition}

\begin{definition}\label{def: representatives eq classes}
\
    \begin{enumerate}
    \item We denote by $\orbreps\subseteq \hatN$ a set of representative elements for distinct $G$-orbits in $\hatN$, so that
\begin{equation}
\hatN=\bigsqcup_{\sigma\in \orbreps} \orb_G(\sigma).
\end{equation}
We also define the representative function $\repO: \hatN\rightarrow \orbreps$, which sends $\la$ to the unique element in $\orbreps \cap \orb_G(\la)$.

\item Fix $\la\in \hatN$, and consider the following equivalence relation on $G$:
\begin{equation}
g_1\sim g_2
\iff
\act{g_1}\la=\act{g_2}\la.
\end{equation}
We denote by $\Transversal{\la}\subseteq G$ a set of representatives for this equivalence relation.
In particular, we write
\begin{equation}
\Transversal{\la}=\{\TransversalElt{\la,\mu}: \mu\in \orb_G(\la)\},
\end{equation}
where $\TransversalElt{\la,\mu}\in G$ is the unique element in $\Transversal{\la}$ such that
\begin{equation}
\act{\TransversalElt{\la,\mu}}\la=\mu.
\end{equation}
We normalize this choice by requiring
\begin{equation}
    \TransversalElt{\la,\la}=1.
\end{equation}
\end{enumerate}
\end{definition}

\begin{lemma}\label{lem: transv inertia}
    $\Transversal{\la}$ is a left transversal for $\Inertia{\la}$ in $G$.
\end{lemma}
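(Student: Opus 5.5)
The plan is to use the orbit--stabilizer correspondence for the action of $G$ on $\hatN$ defined in \cref{eq:G-action-on-irrep-labels}. A left transversal for $\Inertia{\la}$ must contain exactly one element of each left coset $g\Inertia{\la}$. So it suffices to show that the equivalence relation $g_1\sim g_2$ of \cref{def: representatives eq classes} coincides with the relation ``$g_1$ and $g_2$ lie in the same left coset of $\Inertia{\la}$''. Then $\Transversal{\la}$, being a set of representatives for $\sim$, is by definition a set of representatives for $G/\Inertia{\la}$.

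First, I will check that \cref{eq:G-action-on-irrep-labels} is a genuine left action on $\hatN$, so that $\act{g_1g_2}\la=\act{g_1}(\act{g_2}\la)$ and $\act{1}\la=\la$. Choose a unitary intertwiner $U$ with $U\Rep_\la^{g_2}(n)U^\dagger=\Rep_{\act{g_2}\la}(n)$. The same $U$ then intertwines $\Rep_\la^{g_2}(g_1\inv ng_1)$ with $\Rep_{\act{g_2}\la}(g_1\inv n g_1)$. The first of these is $\Rep_\la(g_2\inv g_1\inv n g_1 g_2)=\Rep_\la^{g_1g_2}(n)$, and the second is $\Rep_{\act{g_2}\la}^{g_1}(n)$. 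Hence the equivalence classes agree. Well-definedness on labels, meaning that the answer is independent of the chosen representative $\Rep_\la$, follows by the same transport-of-intertwiners argument already used in \cref{lem: normalizer permutes Homega irreps rewritten}. Normality of $N$ guarantees that $g\inv ng\in N$ throughout.

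With the action in hand, I compute
\begin{equation*}
    \act{g_1}\la=\act{g_2}\la
    \iff \act{g_2\inv g_1}\la=\la
    \iff g_2\inv g_1\in\Inertia{\la}
    \iff g_1\Inertia{\la}=g_2\Inertia{\la}.
\end{equation*}
Thus the $\sim$-classes are exactly the left cosets of $\Inertia{\la}$. Along the way this confirms that $\Inertia{\la}$ is a subgroup, since it is a stabilizer.

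From this, $\Transversal{\la}$ meets every left coset exactly once, which is the claim. The indexing $\TransversalElt{\la,\mu}$ by $\mu\in\orb_G(\la)$ is consistent with this, since the map $g\Inertia{\la}\mapsto\act{g}\la$ is a bijection $G/\Inertia{\la}\to\orb_G(\la)$. The normalization $\TransversalElt{\la,\la}=1$ is also consistent, because $1\in\Inertia{\la}$ represents the trivial coset.

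The only step needing real care is the first one, verifying that the transported-label assignment is a left action rather than a right one. With the convention $\Rep_\la(g\inv ng)$ the order works out as shown. I would double-check this by direct substitution, because a right action would instead give right cosets $\Inertia{\la}g$.
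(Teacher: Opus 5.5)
Your proof is correct and is essentially the paper's argument. The paper applies orbit--stabilizer to get the bijection $G/\Inertia{\la}\to\orb_G(\la)$, $g\Inertia{\la}\mapsto\act{g}\la$, and concludes at once; you unpack that same bijection by showing the $\sim$-classes are exactly the left cosets, and you correctly verify explicitly that the twist is a left action, a point the paper leaves implicit.
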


\begin{proof}
By the orbit--stabilizer theorem, the map
\begin{equation*}
G/\Inertia{\la}\longrightarrow \orb_G(\la),
\qquad
g\Inertia{\la}\longmapsto \act{g}\la,
\end{equation*}
is a bijection. Since $\Transversal{\la}$ contains exactly one element $\TransversalElt{\la,\mu}$ for each $\mu\in\orb_G(\la)$, it contains exactly one representative of each left coset of $\Inertia{\la}$ in $G$.
\end{proof}

The relevance of the $G$-orbits on $\hatN$ to the representation theory of $G$ is given by the following result:

\begin{lemma}[Clifford's theorem]\label{lem: cliff thm}
Let $G$ be a finite group, $N \lhd G$ a normal subgroup, and $(\Rep,V)$ an irreducible representation for $G$. Then, upon restriction to $N$, $V$ decomposes into a direct sum of irreducible $N$-modules that all belong to a single $G$-orbit in $\hatN$.
Furthermore, every irreducible representation in this orbit appears with the exact same multiplicity. That is, there exists a unique $\sigma \in \orbreps$ and an integer $m \ge 1$ such that
\begin{equation}
\Res_N^G(V) \cong
\bigoplus_{\la \in \orb_G(\sigma)} V_{\la}\otimes \C^{m}.
\end{equation}
\end{lemma}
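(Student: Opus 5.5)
We prove Clifford's theorem by first decomposing $\Res^G_N V$ into isotypic components and then showing that $G$ permutes these components transitively. Start with Maschke's theorem to write
\begin{equation*}
    \Res_N^G V\cong\bigoplus_{\la\in\hatN}\IsoComp{\la},
    \qquad
    \IsoComp{\la}\cong V_\la\otimes\C^{m_\la},
\end{equation*}
where $\IsoComp{\la}$ is the sum of all $N$-submodules of $V$ isomorphic to $V_\la$, and $m_\la\geq 0$. The key observation is how $\Rep(g)$ moves these components. Take $W\subseteq V$ an $N$-submodule isomorphic to $V_\la$, and $g\in G$. Since $N$ is normal, for $n\in N$ we have $\Rep(n)\Rep(g)W=\Rep(g)\Rep(g\inv ng)W=\Rep(g)W$, so $\Rep(g)W$ is again $N$-invariant. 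The map $\Rep(g):W\to\Rep(g)W$ intertwines the transported action $\Rep_\la^g$ of \cref{eq:transported-representation} on $W$ with the $N$-action on $\Rep(g)W$. Hence $\Rep(g)W\cong V_{\act{g}\la}$, and it is irreducible by \cref{lem: normalizer permutes Homega irreps rewritten} applied with $H_\omega=N$. It follows that $\Rep(g)\IsoComp{\la}\subseteq\IsoComp{\act{g}\la}$. Applying the same argument to $g\inv$ gives equality, and in particular $m_\la=m_{\act{g}\la}$.

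Next, pick any $\sigma$ with $m_\sigma>0$; such a $\sigma$ exists because $V\neq0$. Consider the subspace
\begin{equation*}
    V':=\bigoplus_{\la\in\orb_G(\sigma)}\IsoComp{\la}.
\end{equation*}
By the previous paragraph, $V'$ is stable under every $\Rep(g)$, so it is a nonzero $G$-submodule of $V$. Irreducibility of $V$ then forces $V'=V$. Therefore only the labels in $\orb_G(\sigma)$ occur, and all of them occur with the common multiplicity $m:=m_\sigma\geq1$. Replacing $\sigma$ by $\repO(\sigma)$ gives a representative in $\orbreps$. This representative is unique because distinct orbits are disjoint and the occurring labels form exactly one orbit. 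This yields the displayed decomposition.

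The main obstacle is not conceptual but notational: we must make sure the transport convention matches. The paper defines $\Rep_\la^g(n)=\Rep_\la(g\inv ng)$, and we need $\Rep(g)W$ to carry $\act{g}\la$ rather than $\act{g\inv}\la$. The intertwining relation $\Rep(n)\Rep(g)w=\Rep(g)\Rep(g\inv ng)w$ gives exactly $\act{g}\la$, so the convention matches. In any case, the orbit statement is insensitive to this choice, since both $\act{g}\la$ and $\act{g\inv}\la$ lie in the same $G$-orbit.
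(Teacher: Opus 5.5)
Your proof is correct and takes essentially the same route as the paper's. Normality makes $\Rep(g)$ carry a $\la$-type $N$-submodule to one of type $\act{g}\la$; irreducibility of $V$ forces the $G$-stable sum over a single orbit to be all of $V$; and $\Rep(g)$ maps isotypic components onto each other, which gives equal multiplicities. The only difference is that you sum the isotypic components $\IsoComp{\la}$ over the orbit instead of summing the translates $\Rep(g)V_\la$ of one constituent, and this lets you skip the paper's projection-plus-Schur step for identifying the occurring constituents.
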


\begin{proof}
$\Res_N^G(V)$ contains at least one irreducible representation for  $N$, $(\Rep_\la,V_{\la})$, for some $\la \in \hatN$.

For any $g \in G$, consider the translated subspace $\Rep(g)V_{\la}$. We first show that $\Rep(g)V_{\la}$ is $N$-invariant: for any $h\in N$,
\begin{equation*}
\Rep(h)\Rep(g)V_{\la}= \Rep(g)\Rep(g\inv h g)V_\la =\Rep(g)\Rep_\la(g\inv h g)V_{\la},
\end{equation*}
and the invariance follows from the normality of $N$ and the $N$-invariance of $V_{\la}$. Furthermore, this shows that the action of $h$ on $\Rep(g)V_{\la}$ is given by the transported representation $\Rep_\la^g$.

Now, consider the sum of all such translated subspaces:
\begin{equation*}
U := \sum_{g \in G} \Rep(g)V_{\la}\le V.
\end{equation*}
By construction, any element of $G$ acts on $U$ by permuting the terms in the sum and thus $U$ is a $G$-invariant subspace of $V$. Since $V$ is irreducible and $U$ is non-zero, we must have $U = V$. As a consequence, any $N$-irreducible submodule in $V$ must project nontrivially onto one of the $\Rep(g)V_{\la}$, and therefore by Schur's lemma we have that it must be of the form $\act{g}\la$ for some $g \in G$.

Finally, for any $g \in G$, the linear map $\ket{v} \mapsto \Rep(g) \ket{v}$ is an invertible transformation on the vector space $V$, which maps the $\la$-isotypic component of $N$ onto the $\act{g}\la$-isotypic component. Therefore, each isotypic component has the exact same dimension, meaning that they appear with the same multiplicity $m$.
\end{proof}

\begin{lemma}\label{lem: inertia successors}
Let $\la\in\hatN$ and $\eta\in\widehat{\Inertia{\la}}$. Then
\begin{equation}
\eta\in\N_{N\leq\Inertia{\la}}^{+}(\la)
\iff
\Res_N^{\Inertia{\la}}(V_\eta)
\cong V_\la\otimes \C^{m_\eta}
\end{equation}
for some $m_\eta\geq 1$.
\end{lemma}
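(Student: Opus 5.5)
The plan is to apply Clifford's theorem (\cref{lem: cliff thm}) to the pair $N\lhd \Inertia{\la}$ instead of $N\lhd G$. First I check that this is legitimate. Since $N\lhd G$ and $N\leq \Inertia{\la}$, we also have $N\lhd\Inertia{\la}$. The action of $\Inertia{\la}$ on $\hatN$ is the restriction of the $G$-action in \cref{eq:G-action-on-irrep-labels}, so the $\Inertia{\la}$-orbit of $\la$ is $\{\act{g}\la : g\in\Inertia{\la}\}=\{\la\}$ by the definition of the inertia group.

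For the forward direction, assume $\eta\in\N^{+}_{N\leq\Inertia{\la}}(\la)$. By definition this means $\la\in\N(\Res_N^{\Inertia{\la}}V_\eta)$. Apply \cref{lem: cliff thm} to the irreducible $\Inertia{\la}$-module $V_\eta$. It says $\Res_N^{\Inertia{\la}}V_\eta$ is a direct sum over a single $\Inertia{\la}$-orbit $\orb_{\Inertia{\la}}(\sigma)$, with a common multiplicity $m\geq 1$. Because $\la$ occurs in this restriction, the orbit must be $\orb_{\Inertia{\la}}(\la)=\{\la\}$. This gives $\Res_N^{\Inertia{\la}}V_\eta\cong V_\la\otimes\C^{m}$, and we set $m_\eta:=m$.

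For the converse, assume $\Res_N^{\Inertia{\la}}V_\eta\cong V_\la\otimes\C^{m_\eta}$ with $m_\eta\geq1$. Then $V_\la$ is an irreducible constituent of the restriction, so $\la\in\N(\Res_N^{\Inertia{\la}}V_\eta)$, which is exactly $\eta\in\N^{+}_{N\leq\Inertia{\la}}(\la)$.

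The only point needing care is that the orbit in \cref{lem: cliff thm} refers to the acting group. Here the acting group is $\Inertia{\la}$, not $G$, and $\la$ itself is taken as the orbit representative in place of an element of $\orbreps$. The proof of \cref{lem: cliff thm} uses only normality of $N$ in the ambient group, so it transfers verbatim with $G$ replaced by $\Inertia{\la}$.
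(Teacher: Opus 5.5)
Your proposal is correct and takes the same route as the paper. The paper also reads the left-hand side as the statement that $V_\la$ occurs in $\Res_N^{\Inertia{\la}}(V_\eta)$. It then invokes Clifford's theorem for $N\lhd\Inertia{\la}$, using that $\Inertia{\la}$ fixes $\la$, so the relevant orbit is $\{\la\}$. You spell out details the paper leaves implicit: the normality of $N$ in $\Inertia{\la}$, that Clifford's theorem transfers to the ambient group $\Inertia{\la}$, and the trivial converse.
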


\begin{proof}
By definition, the left-hand side means that $V_\la$ occurs in $\Res_N^{\Inertia{\la}}(V_\eta)$. Since $\Inertia{\la}$ stabilizes $\la$, the claim follows from \cref{lem: cliff thm}.
\end{proof}

Inertia groups play a central role in classifying irreducible representations of $G$, due to the following result:

\begin{lemma}[Clifford's correspondence]\label{lem: cliff corr}
    Let $\la\in \hatN$ and $\eta\in\N_{N\leq\Inertia{\la}}^{+}(\la)$. Then $\Ind_{\Inertia{\la}}^G(V_\eta)$ is irreducible as a $G$-module.

    Moreover, let $\mu\in \hatN$ and $\gamma\in\N_{N\leq\Inertia{\mu}}^{+}(\mu)$. Then
    \begin{equation}
    \Ind_{\Inertia{\la}}^G(V_\eta)
    \cong
    \Ind_{\Inertia{\mu}}^G(V_{\gamma})
    \end{equation}
    if and only if there exists $g\in G$ such that
    \begin{equation}
    \la=\act{g}\mu
    \end{equation}
    and
    \begin{equation}
    V_\eta\cong V_{\gamma}^g
    \end{equation}
    as $\Inertia{\la}=\Inertia{\act{g}\mu}$-modules.
\end{lemma}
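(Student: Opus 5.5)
The irreducibility claim follows from Mackey's irreducibility criterion, which I will derive from \cref{lemma: Mackey} and Frobenius reciprocity (\cref{thm: Frobenius reciprocity}). Put $I:=\Inertia{\la}$ and $W:=\Ind_I^G V_\eta$. Reciprocity followed by the Mackey decomposition with $H=K=I$ gives
\begin{equation*}
\operatorname{End}_G(W)\cong\Hom_I\!\left(V_\eta,\Res^G_I\Ind^G_I V_\eta\right)
\cong\bigoplus_{\omega\in\Omega}\Hom_I\!\left(V_\eta,\Ind^{I}_{I_\omega}V_\eta^\omega\right),
\end{equation*}
and a second application of reciprocity turns each summand into $\Hom_{I_\omega}(\Res^I_{I_\omega}V_\eta,V_\eta^\omega)$, where $I_\omega=I\cap\omega I\omega\inv$. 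The $\omega=1$ summand is $\operatorname{End}_I(V_\eta)=\C$. It therefore suffices to show that every summand with $\omega\notin I$ vanishes, since then $\dim\operatorname{End}_G(W)=1$ and $W$ is irreducible.

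For $\omega\notin I$ I restrict further to $N$, which lies in $I_\omega$ because $N$ is normal and contained in $I$. By \cref{lem: inertia successors}, $\Res^{I}_N V_\eta\cong V_\la\otimes\C^{m_\eta}$. The transported module $V_\eta^\omega$, restricted to $N$, is a multiple of $\Rep_\la^\omega$, which by \eqref{eq:transported-representation} is the irrep $\act{\omega}\la$. Since $\omega\notin\Inertia{\la}$, we have $\act{\omega}\la\neq\la$. Schur's lemma then gives $\Hom_N(V_\la,V_{\act{\omega}\la})=0$, and any $I_\omega$-homomorphism is in particular an $N$-homomorphism, so the summand is zero. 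The only care needed is that the paper's transport convention uses $\omega\inv x\omega$ while the Mackey twist uses $\omega\inv h\omega$. These agree, so the $N$-label of $V_\eta^\omega$ is $\act{\omega}\la$ up to the choice between $\omega$ and $\omega\inv$, and either choice lies outside $I$.

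For the isomorphism criterion, the ``if'' direction uses the fact that conjugation transports induced modules. If $\la=\act{g}\mu$, then $\Inertia{\la}=g\Inertia{\mu}g\inv$, and the map $\ket{x}\otimes v\mapsto\ket{xg\inv}\otimes v$ gives $\Ind^G_{\Inertia{\mu}}V_\gamma\cong\Ind^G_{g\Inertia{\mu}g\inv}V_\gamma^g$. The right-hand side is then isomorphic to $\Ind^G_{\Inertia{\la}}V_\eta$ by hypothesis.

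For the ``only if'' direction, I again compute $\Hom_G$ with reciprocity and Mackey for the pair $(\Inertia{\la},\Inertia{\mu})$. This gives a sum over double cosets $\omega$ of $\Hom_{\Inertia{\la}\cap\omega\Inertia{\mu}\omega\inv}(V_\eta,V_\gamma^\omega)$. Restricting to $N$ as before, a summand can be nonzero only if $\la=\act{\omega}\mu$. For such an $\omega$ we have $\omega\Inertia{\mu}\omega\inv=\Inertia{\la}$, so the intersection is all of $\Inertia{\la}$. Nonvanishing of the $\Hom$ between the irreducible modules $V_\eta$ and $V_\gamma^\omega$ then forces $V_\eta\cong V_\gamma^\omega$, so $g=\omega$ works. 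This step needs $V_\gamma^\omega$ to be irreducible, which holds because transport preserves irreducibility, exactly as in the proof of \cref{lem: normalizer permutes Homega irreps rewritten}.

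I expect the main obstacle to be bookkeeping of conventions rather than any mathematical step. One must check that the twisted module in \cref{lemma: Mackey} is a module for $H_\omega$ with exactly the orientation of $\omega$ used in \eqref{eq:transported-module}, and that the $N$-restriction of $V_\gamma^\omega$ is the label $\act{\omega}\mu$ rather than $\act{\omega\inv}\mu$. I would settle this first with a one-line computation from the definitions, and then run the Schur vanishing argument twice, once for each part of the statement.
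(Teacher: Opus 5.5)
Your proposal is correct and follows essentially the same argument as the paper: Frobenius reciprocity plus the Mackey decomposition, then restriction to $N\le \Inertia{\la}\cap\omega\Inertia{\la}\omega^{-1}$ and Schur's lemma to kill every double coset with $\act{\omega}\la\neq\la$, which is exactly how the paper proves irreducibility. The only variation is in the ``only if'' half of the isomorphism criterion. You run Mackey directly for the pair $(\Inertia{\la},\Inertia{\mu})$, whereas the paper first separates distinct $G$-orbits by their $N$-support and then, for $\la=\act{g}\mu$, conjugates $\Ind_{\Inertia{\mu}}^G V_\gamma$ to $\Ind_{\Inertia{\la}}^G V_\gamma^g$ and reuses the first computation. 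Both routes work; note that your second use of reciprocity, $\Hom_I(V,\Ind W)\cong\Hom_{I_\omega}(\Res V,W)$, is the other-sided adjunction, which holds for finite groups and is also used implicitly in the paper.
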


\begin{proof}
Fix $\la\in\hatN$ and let $\eta,\gamma\in\N_{N\leq\Inertia{\la}}^{+}(\la)$. Applying \cref{thm: Frobenius reciprocity,lemma: Mackey} using $\Inertia{\la}\backslash G/\Inertia{\la}$ gives
\begin{align*}
&\Hom_G\!\left(
\Ind_{\Inertia{\la}}^G V_\eta,
\Ind_{\Inertia{\la}}^G V_\gamma
\right) \\
&\qquad\cong
\Hom_{\Inertia{\la}}\!\left(
V_\eta,
\Res_{\Inertia{\la}}^G\Ind_{\Inertia{\la}}^G V_\gamma
\right) \\
&\qquad\cong
\bigoplus_{x\in\Inertia{\la}\backslash G/\Inertia{\la}}
\Hom_{\Inertia{\la}}\!\left(
V_\eta,
\Ind_{\Inertia{\la}\cap x\Inertia{\la}x\inv}^{\Inertia{\la}}
V_\gamma^x
\right) \\
&\qquad\cong
\bigoplus_{x\in\Inertia{\la}\backslash G/\Inertia{\la}}
\Hom_{\Inertia{\la}\cap x\Inertia{\la}x\inv}\!\left(
V_\eta,V_\gamma^x
\right),
\end{align*}
where we left the restrictions to the intersection groups implicit. Since $N\lhd G$ and $N\leq\Inertia{\la}$, we have $N\leq\Inertia{\la}\cap x\Inertia{\la}x\inv$. By \cref{lem: inertia successors}, the restrictions of $V_\eta$ and $V_\gamma^x$ to $N$ are isotypic of types $\la$ and $\act{x}\la$, respectively. If $\act{x}\la\neq\la$, every homomorphism in the $x$-summand is in particular an $N$-homomorphism between two isotypic modules of inequivalent types, and is therefore zero. A summand can thus be nonzero only when $x\in\Inertia{\la}$, which is the identity double coset.
Consequently,
\begin{equation*}
\Hom_G\!\left(
\Ind_{\Inertia{\la}}^G V_\eta,
\Ind_{\Inertia{\la}}^G V_\gamma
\right)
\cong
\Hom_{\Inertia{\la}}(V_\eta,V_\gamma).
\end{equation*}
which by Schur's lemma it is $\C$ if $\eta=\gamma$ and $0$ otherwise. This proves that $\Ind_{\Inertia{\lambda}}^G V_\eta$ is irreducible, and that for $\eta\ncong \gamma$  the two corresponding irreducible modules are inequivalent.

Now let $\mu\in\hatN$ and $\gamma\in\N_{N\leq\Inertia{\mu}}^{+}(\mu)$. If $\la$ and $\mu$ lie in different $G$-orbits, the restrictions to $N$ of the two induced modules are supported on the disjoint orbits $\orb_G(\la)$ and $\orb_G(\mu)$, so the modules are inequivalent. If $\la=\act{g}\mu$, then $\Inertia{\la}=g\Inertia{\mu}g\inv$ and $\Ind_{\Inertia{\mu}}^G V_\gamma \cong\Ind_{\Inertia{\la}}^G V_\gamma^g$. Therefore
\begin{align*}
&\Hom_G\!\left(
\Ind_{\Inertia{\la}}^G V_\eta,
\Ind_{\Inertia{\mu}}^G V_\gamma
\right)\\
&\qquad\cong
\Hom_{\Inertia{\la}}(V_\eta,V_\gamma^g).
\end{align*}
Since the induced modules are irreducible, they are equivalent exactly when this space is nonzero, equivalently when $V_\eta\cong V_\gamma^g$.
\end{proof}

\begin{theorem}\label{thm: hatG from hatinertia}
    Let $N\lhd G$. Then $\hatG$ is in bijection with the set
    \begin{equation}
    L_N
    :=
    \left\{
    (\sigma,\eta)
    \ \middle|\
    \sigma\in \orbreps
    \text{ and }
    \eta\in\N_{N\leq\Inertia{\sigma}}^{+}(\sigma)
    \right\}.
    \end{equation}
\end{theorem}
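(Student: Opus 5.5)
The plan is to define the map $\Phi: L_N\to\hatG$ by
\begin{equation*}
\Phi(\sigma,\eta):=\bigl[\Ind_{\Inertia{\sigma}}^G V_\eta\bigr],
\end{equation*}
and to show that it is well defined, injective and surjective. Well-definedness is the first part of \cref{lem: cliff corr}: for $\eta\in\N^+_{N\leq\Inertia{\sigma}}(\sigma)$ the induced module is irreducible, so $\Phi(\sigma,\eta)\in\hatG$.

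For injectivity, suppose $\Phi(\sigma,\eta)=\Phi(\sigma',\eta')$ with $\sigma,\sigma'\in\orbreps$. By the second part of \cref{lem: cliff corr} there is $g\in G$ with $\sigma=\act{g}\sigma'$. Both $\sigma$ and $\sigma'$ lie in $\orbreps$ and in the same orbit, so $\sigma=\sigma'$ and hence $g\in\Inertia{\sigma}$. The lemma also gives $V_\eta\cong V_{\eta'}^g$ as $\Inertia{\sigma}$-modules. Because $g\in\Inertia{\sigma}$, conjugation by $g$ is an inner automorphism of $\Inertia{\sigma}$. The map $\Rep_{\eta'}(g)$ therefore intertwines $V_{\eta'}^g$ with $V_{\eta'}$, and so $\eta=\eta'$.

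Surjectivity is the main step. Let $(\Rep,V)$ be an irreducible $G$-module. By \cref{lem: cliff thm}, $\Res^G_N V$ is supported on a single orbit $\orb_G(\sigma)$ with $\sigma\in\orbreps$. Let $W\subseteq V$ be the $\sigma$-isotypic component for $N$.
\begin{itemize}
\item First, $W$ is $\Inertia{\sigma}$-invariant. The computation in the proof of \cref{lem: cliff thm} shows that $\Rep(g)$ maps the $\la$-isotypic component to the $\act{g}\la$-isotypic component, and $g\in\Inertia{\sigma}$ fixes $\sigma$.
\item Choose an irreducible $\Inertia{\sigma}$-submodule $V_\eta\subseteq W$. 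Its restriction to $N$ is $\sigma$-isotypic, so $\eta\in\N^+_{N\leq\Inertia{\sigma}}(\sigma)$ by \cref{lem: inertia successors}.
\item The inclusion $V_\eta\hookrightarrow\Res^G_{\Inertia{\sigma}}V$ is a nonzero $\Inertia{\sigma}$-homomorphism. By \cref{thm: Frobenius reciprocity} this gives a nonzero $G$-homomorphism $\Ind_{\Inertia{\sigma}}^G V_\eta\to V$.
\item Both modules are irreducible, the source by \cref{lem: cliff corr} and the target by assumption, so Schur's lemma makes this map an isomorphism. Hence $[V]=\Phi(\sigma,\eta)$.
\end{itemize}

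The delicate point is the invariance of $W$ under $\Inertia{\sigma}$. If that argument proved awkward, I would instead decompose $\Res^G_{\Inertia{\sigma}}V$ into irreducibles and choose one whose $N$-restriction contains $\sigma$, applying \cref{lem: inertia successors} directly; this bypasses $W$ altogether. As a final sanity check one could compare $\sum d^2$ on both sides, but the bijection argument above already suffices.
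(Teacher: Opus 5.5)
Your proposal is correct and follows the paper's proof essentially step for step: the same map $\Phi(\sigma,\eta)=[\Ind_{\Inertia{\sigma}}^G V_\eta]$, well-definedness and injectivity from Clifford's correspondence, and surjectivity via the $\sigma$-isotypic component, an irreducible $\Inertia{\sigma}$-submodule, Frobenius reciprocity and Schur's lemma. Your extra injectivity step is valid and only spells out what the paper leaves implicit when it cites the lemma: since $g\in\Inertia{\sigma}$, the map $\Rep_{\eta'}(g)$ intertwines $V_{\eta'}^g$ with $V_{\eta'}$, so $V_{\eta'}^g\cong V_{\eta'}$.
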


\begin{proof}
    Define
    \begin{equation*}
    \Phi:L_N\to\hatG,
    \qquad
    \Phi(\sigma,\eta)
    :=
    \Ind_{\Inertia{\sigma}}^G(V_\eta).
    \end{equation*}
   By \cref{lem: cliff corr}, $\Phi(\sigma,\eta)$ is irreducible and inequivalent to $\Phi(\mu,\gamma)$ unless $\mu= \sigma$ and $\eta=\gamma$, therefore $\Phi$ is well defined and injective.

    To prove surjectivity, let $V$ be an irreducible $G$-module. By \cref{lem: cliff thm} $\Res_N^G(V)$ is supported on a single $G$-orbit in $\hatN$. Let $\sigma\in \orbreps$ be the fixed representative of this orbit and let $\IsoComp{\sigma}\subseteq V$ be the $\sigma$-isotypic component of $\Res_N^G(V)$. This subspace is invariant under $\Inertia{\sigma}$.
    Choose an irreducible $\Inertia{\sigma}$-submodule $V_\eta\subseteq \IsoComp{\sigma}$. Then $\eta\in\N_{N\leq\Inertia{\sigma}}^{+}(\sigma)$.

    Consider now the inclusion
    \begin{equation*}
    V_\eta\hookrightarrow \Res_{\Inertia{\sigma}}^G(V).
    \end{equation*}
    By \cref{thm: Frobenius reciprocity}, this corresponds to a nonzero $G$-map
    \begin{equation*}
    \Ind_{\Inertia{\sigma}}^G(V_\eta)\longrightarrow V.
    \end{equation*}
    As both modules are irreducible, this map is an isomorphism, hence $V\cong \Phi(\sigma,\eta)$ and $\Phi$ is surjective.
\end{proof}

Given a group $G$ with a normal subgroup $N$ we have therefore derived an approach to construct any element in $\hatG$:
\begin{enumerate}
    \item pick an element $\la\in \hatN$,
    \item find $\sigma=\repO(\la)$
    \item find $\Inertia{\sigma}<G$,
    \item pick an element $\eta\in\N_{N\leq\Inertia{\sigma}}^{+}(\sigma)$,
    \item construct $\Ind_{\Inertia{\sigma}}^G(V_\eta)$ with transversal $\Transversal{\sigma}$.
\end{enumerate}
According to the description of $\hatG$ made in \cref{thm: hatG from hatinertia}, we will have constructed the irreducible representation labeled by $(\sigma, \eta)$. We have thus essentially reduced the problem of describing $\hatG$ to that of describing $\hatN$ and the successor sets $\N_{N\leq\Inertia{\sigma}}^{+}(\sigma)$ for each $\sigma\in\orbreps$. In the following subsection we will see how in certain cases, including many useful ones, these successor sets admit a particularly simple description.

\subsection{Mackey's little group method}

We begin by observing that the action of $G$ on $\hatN$ factors through the quotient group
\begin{equation*}
H:=G/N.
\end{equation*}
Indeed, for $h\in N$ and $\la\in\hatN$, we have
\begin{equation*}
\Rep_\la^h(n)
=\Rep_\la(h\inv n h)
=\Rep_\la(h)\inv \Rep_\la(n)\Rep_\la(h),
\end{equation*}
for all $n\in N$. Thus $\Rep_\la^h$ is unitarily equivalent to $\Rep_\la$, with intertwiner $\Rep_\la(h)$. Therefore $\act{h}\la=\la$ for every $h\in N$ and $\la\in\hatN$, so that
\begin{equation*}
N\lhd \Inertia{\la}\qquad \forall\la\in\hatN.
\end{equation*}
For each $\la\in\hatN$ we may thus define the little group
\begin{equation*}
\Little{\la}:=\Inertia{\la}/N\leq H,
\end{equation*}
or equivalently,
\begin{equation*}
\Little{\la}=\Stab_H(\la).
\end{equation*}

Mackey's little group method describes the successor set $\N_{N\leq\Inertia{\la}}^{+}(\la)$ in terms of irreducible representations of the little group $\Little{\la}$, provided that $\la$ \emph{extends} to a representation of $\Inertia{\la}$.

In order to state it precisely we need to introduce the notions of extension and inflation of a representation.

\begin{definition}
Let $K$ be a group, $L\leq K$, and let
\begin{equation}
\Rep_\la:L\to \U(V_\la)
\end{equation}
be a representation. We say that $\Rep_\la$ extends to $K$ if there exists a representation
\begin{equation}
\ExtRep_\la:K\to \U(V_\la)
\end{equation}
such that
\begin{equation}
\Res_L^K(\ExtRep_\la)=\Rep_\la.
\end{equation}
For such an extension, we use the contragredient extension for the action on the dual representation
\begin{equation}\label{eq:dual-extension-convention}
    \ExtRep_{\la^*}:K\longrightarrow\U(V_{\la^*}),
    \qquad
    \ExtRep_{\la^*}(k):=\ExtRep_\la(k\inv)\tp,
    \qquad k\in K.
\end{equation}
\end{definition}
\begin{definition}
Let $K$ be a finite group, $L\lhd K$, and let
\begin{equation}
\Rep: K/L\to \U(V)
\end{equation}
be a representation. We define its inflation to $K$
\begin{equation}\InfRep: K\to \U(V)\quad \text{ by }\quad
\InfRep(k):= \Rep(kL),
\end{equation}
where $kL\in K/L$ is the coset (left and right) of $L$ in $K$ containing $k$.
\end{definition}

In our setting, the relevant case is when
\begin{equation*}
\Rep_\la:N\to \U(V_\la)\quad \text{ extends to }\quad
\ExtRep_\la:\Inertia{\la}\to \U(V_\la).
\end{equation*}
for all $\la\in\hatN$.
Notice that this does not happen generally for $N\lhd G$, we show some relevant examples at the end of the subsection.

\begin{lemma}[Gallagher's theorem, inertia-group form]\label{thm: Gallagher}
    Let $N\lhd G$. Suppose that, for $\la\in \hatN$,
\begin{equation}
\Rep_\la: N\to \U(V_\la)\quad\text{ extends to }\quad \ExtRep_\la: \Inertia{\la}\to \U(V_\la).
\end{equation}
    Then there is a bijection \begin{equation}
    \N_{N\leq\Inertia{\la}}^{+}(\la)
    \leftrightarrow
    \widehat{\Little{\la}}.
    \end{equation}
    Namely, $\eta\in\N_{N\leq\Inertia{\la}}^{+}(\la)$ corresponds uniquely to $\eta'\in\widehat{\Little{\la}}$ such that
    \begin{equation}
    (\Rep_\eta, V_\eta)\cong(\ExtRep_\la\otimes \InfRep_{\eta'}, V_\la \otimes V_{\eta'}),
    \end{equation}
    where $(\Rep_{\eta'}, V_{\eta'})$ is an irreducible representation for $\Little{\la}$, and $(\InfRep_{\eta'}, V_{\eta'})$ denotes its inflation to $\Inertia{\la}$.
\end{lemma}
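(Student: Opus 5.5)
The plan is to construct the map $\eta'\mapsto\eta$ directly and check three things: it lands in $\N_{N\leq\Inertia{\la}}^{+}(\la)$, it is injective, and it is surjective. Given $\eta'\in\widehat{\Little{\la}}$, consider the $\Inertia{\la}$-module $W_{\eta'}:=(\ExtRep_\la\otimes\InfRep_{\eta'},V_\la\otimes V_{\eta'})$. Its restriction to $N$ is $\Rep_\la\otimes\mathrm{I}_{V_{\eta'}}$, because the inflation is trivial on $N$. This is $\la$-isotypic with multiplicity $d_{\eta'}$. So once $W_{\eta'}$ is known to be irreducible, it lies over $\la$ by \cref{lem: inertia successors}.

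\textbf{Irreducibility and injectivity.} I will compute the intertwiner spaces $\Hom_{\Inertia{\la}}(W_{\eta'},W_{\gamma'})$.
\begin{itemize}
\item First restrict to $N$. Any $N$-intertwiner $V_\la\otimes V_{\eta'}\to V_\la\otimes V_{\gamma'}$ has the form $\mathrm{I}_{V_\la}\otimes M$ with $M\in\Hom(V_{\eta'},V_{\gamma'})$. This follows from Schur's lemma, since $\Rep_\la$ is irreducible, so its commutant is scalar.
\item Next impose equivariance for $k\in\Inertia{\la}$: $(\ExtRep_\la(k)\otimes\InfRep_{\gamma'}(k))(\mathrm{I}\otimes M)=(\mathrm{I}\otimes M)(\ExtRep_\la(k)\otimes\InfRep_{\eta'}(k))$.
\item Cancelling the invertible factor $\ExtRep_\la(k)\otimes\mathrm{I}$ leaves $\InfRep_{\gamma'}(k)M=M\InfRep_{\eta'}(k)$.
\item Hence $\Hom_{\Inertia{\la}}(W_{\eta'},W_{\gamma'})\cong\Hom_{\Little{\la}}(V_{\eta'},V_{\gamma'})$.
\end{itemize}
By Schur's lemma for $\Little{\la}$, this space is $\C$ when $\eta'=\gamma'$ and $0$ otherwise. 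Taking $\eta'=\gamma'$ gives irreducibility of $W_{\eta'}$; taking $\eta'\neq\gamma'$ shows distinct $\eta'$ give inequivalent $\eta$, i.e. the map is injective.

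\textbf{Surjectivity.} There are two routes, and I would use the first.
\begin{itemize}
\item \emph{Structural route.} Take $\eta\in\N^{+}_{N\leq\Inertia{\la}}(\la)$. By \cref{lem: inertia successors}, $\Res_N V_\eta\cong V_\la\otimes\C^{m}$. Consider the multiplicity space $M:=\Hom_N(V_\la,V_\eta)$, and let $k\in\Inertia{\la}$ act by $\phi\mapsto\Rep_\eta(k)\circ\phi\circ\ExtRep_\la(k)\inv$.
\begin{itemize}
\item This map is again $N$-equivariant, because $\ExtRep_\la$ extends $\Rep_\la$ and $N$ is normal.
\item For $k\in N$ the action is trivial: $\Rep_\eta(k)\phi\Rep_\la(k)\inv=\phi\Rep_\la(k)\Rep_\la(k)\inv=\phi$. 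So the action factors through $\Little{\la}$.
\item The evaluation map $V_\la\otimes M\to V_\eta$, $v\otimes\phi\mapsto\phi(v)$, is an $\Inertia{\la}$-isomorphism onto $\ExtRep_\la\otimes\InfRep_M$.
\item By the previous paragraph, irreducibility of $V_\eta$ forces $M$ to be irreducible, which gives $\eta'$.
\end{itemize}
\item \emph{Dimension route.} Alternatively, one can count. Frobenius reciprocity (\cref{thm: Frobenius reciprocity}) gives $\sum_\eta d_\eta\cdot(d_\eta/d_\la)=\dim\Ind_N^{\Inertia{\la}}V_\la=d_\la|\Little{\la}|$. Combining this with $\sum_{\eta'}d_{\eta'}^2=|\Little{\la}|$ shows the images of injectivity already exhaust the set.
\end{itemize}

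The main obstacle is the surjectivity step, specifically checking that the action on $M$ is well defined, trivial on $N$, and that the evaluation map is an isomorphism. These checks are routine but need care with the extension hypothesis. Everything else is Schur's lemma applied twice.
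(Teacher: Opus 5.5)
Your proof is correct. It differs from the paper mainly in that the paper gives no argument of its own: it refers to \cite[Theorem~2.18]{CecchFinitegroup} for a more general version.

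You instead give a self-contained derivation that uses only Schur's lemma, normality of $N$, and \cref{lem: inertia successors} (hence Clifford's theorem, which the paper does prove). Your central computation, $\Hom_{\Inertia{\la}}(\ExtRep_\la\otimes\InfRep_{\eta'},\ExtRep_\la\otimes\InfRep_{\gamma'})\cong\Hom_{\Little{\la}}(V_{\eta'},V_{\gamma'})$, first restricts to $N$ and then cancels $\ExtRep_\la(k)\otimes\mathrm{I}$. It handles irreducibility and injectivity at once, and it matches the Hom-space style the paper itself uses for \cref{lem: cliff corr}.

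Your structural surjectivity step also yields more than bare existence. It identifies the $\Little{\la}$-module intrinsically as the multiplicity space $\Hom_N(V_\la,V_\eta)$, with evaluation as an explicit $\Inertia{\la}$-intertwiner $V_\la\otimes\Hom_N(V_\la,V_\eta)\to V_\eta$. This is exactly the tensor-factor structure on which the little-group basis of \cref{def:little-group-basis} is built. The citation, in exchange, buys brevity and a more general statement.

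Two points are worth making explicit when you write it up. First, the evaluation map goes out of the tensor product, so it exhibits $V_\eta\cong\ExtRep_\la\otimes\InfRep_M$; it is not an isomorphism ``onto'' the latter. Second, the inference ``$V_\eta$ irreducible $\Rightarrow$ $M$ irreducible'' needs your Hom computation for an arbitrary, not necessarily irreducible, $\Little{\la}$-module $M$, namely $\mathrm{End}_{\Inertia{\la}}(V_\la\otimes M)\cong\mathrm{End}_{\Little{\la}}(M)$. Your argument covers this case verbatim, but you should say so.

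Your dimension-count alternative via Frobenius reciprocity is also valid.
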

\begin{proof}
See  \cite[Theorem 2.18]{CecchFinitegroup} for a more general version.
\end{proof}

Combining \cref{thm: hatG from hatinertia} and \cref{thm: Gallagher}, we finally get

\begin{theorem}[Little group method]\label{thm: little group}

Let $N\lhd G$. Suppose that, for all $\la\in \hatN$,
\begin{equation}
\Rep_\la: N\to \U(V_\la)\quad\text{ extends to }\quad \ExtRep_\la: \Inertia{\la}\to \U(V_\la).
\end{equation}
Then there is a bijection
\begin{equation}
\hatG\leftrightarrow\{(\sigma,\eta'): \sigma\in\orbreps,\ \eta'\in \widehat{\Little{\sigma}}\}.
\end{equation}
Namely the $G$-irreducible representation labeled by $(\sigma, \eta')$, which we denote by $(\Rep_{(\sigma,\eta')},V_{(\sigma,\eta')})$, can be constructed as
\begin{equation}
\Rep_{(\sigma, \eta')}\cong \Ind_{\Inertia{\sigma}}^G(\ExtRep_\sigma\otimes\InfRep_{\eta'})\qquad V_{(\sigma, \eta')}\cong \Ind_{\Inertia{\sigma}}^G(V_\sigma\otimes V_{\eta'}).
\end{equation}
An explicit basis for $V_{(\sigma,\eta')}$ is
\begin{equation}
\{\ket{\TransversalElt{\sigma,\mu}}\otimes \ket{i_\sigma}\otimes\ket{j_{\eta'}}: \TransversalElt{\sigma,\mu}\in \Transversal{\sigma},\  i=1,2,\dots,\dim V_\sigma,\ j=1,2,\dots,\dim V_{\eta'}\}.
\end{equation}
\end{theorem}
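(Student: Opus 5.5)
The statement follows by composing \cref{thm: hatG from hatinertia} with \cref{thm: Gallagher}, and then reading off a basis from \cref{def: induced representation}. The plan has four steps.

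\textbf{Step 1: define the map.} By \cref{thm: hatG from hatinertia}, $\hatG$ is in bijection with $L_N=\{(\sigma,\eta):\sigma\in\orbreps,\ \eta\in\N_{N\leq\Inertia{\sigma}}^{+}(\sigma)\}$, via $(\sigma,\eta)\mapsto\Ind_{\Inertia{\sigma}}^G V_\eta$. Since by hypothesis every $\la\in\hatN$ extends to its inertia group, this holds in particular for each orbit representative $\sigma\in\orbreps$. So \cref{thm: Gallagher} applies for each fixed $\sigma$ and gives a bijection $\N_{N\leq\Inertia{\sigma}}^{+}(\sigma)\leftrightarrow\widehat{\Little{\sigma}}$, $\eta\leftrightarrow\eta'$, with $\Rep_\eta\cong\ExtRep_\sigma\otimes\InfRep_{\eta'}$.

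\textbf{Step 2: the bijection.} For a fixed $\sigma$ the second coordinate ranges over a bijectively reindexed set, so taking the disjoint union over $\sigma\in\orbreps$ gives a bijection $L_N\leftrightarrow\{(\sigma,\eta')\}$. Composing with $\Phi$ yields $\hatG\leftrightarrow\{(\sigma,\eta')\}$.

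\textbf{Step 3: the construction formula.} This follows from $\Rep_\eta\cong\ExtRep_\sigma\otimes\InfRep_{\eta'}$ as $\Inertia{\sigma}$-modules, because induction preserves isomorphism. Explicitly, if $U:V_\eta\to V_\sigma\otimes V_{\eta'}$ is an intertwiner, then $I_{\C[\T]}\otimes U$ intertwines the two induced actions of \cref{def: induced representation}, since those actions only use the $H$-action through $R(h_g)$. Hence $\Rep_{(\sigma,\eta')}\cong\Ind_{\Inertia{\sigma}}^G(\ExtRep_\sigma\otimes\InfRep_{\eta'})$.

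\textbf{Step 4: the basis.} Choose $\Transversal{\sigma}$ as the transversal in \cref{def: induced representation}; this is legitimate by \cref{lem: transv inertia}. The induced space is then $\C[\Transversal{\sigma}]\otimes V_\sigma\otimes V_{\eta'}$. Tensoring the orthonormal bases $\{\ket{\TransversalElt{\sigma,\mu}}\}$, $\mathcal B_\sigma$, and a chosen orthonormal basis of $V_{\eta'}$ gives the stated basis.

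The only real subtlety is whether the choice of the extension $\ExtRep_\sigma$ affects the labeling. Different extensions differ by a one-dimensional character of $\Little{\sigma}$, which just permutes the labels $\eta'$ and so keeps the map a bijection. I will therefore fix one extension per $\sigma$ and note that the labeling depends on this choice. Apart from that point, the argument is a direct composition of the cited results.
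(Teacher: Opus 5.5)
Your proposal is correct and matches the paper's proof, which simply cites \cref{thm: hatG from hatinertia,thm: Gallagher,def: induced representation} and composes them as you do. Your extra remark is also correct: the labeling $\eta'$ depends on the fixed choice of extension $\ExtRep_\sigma$, and different extensions differ by a linear character of $\Little{\sigma}$, which only permutes the labels. The paper leaves this point implicit.
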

\begin{proof}
    This follows from \cref{thm: hatG from hatinertia,thm: Gallagher,def: induced representation}.
\end{proof}

Several families of normal subgroups $N\lhd G$ satisfy the extension hypothesis of \cref{thm: little group}. We mention in particular semidirect products $G=N\rtimes H$ with $N$ abelian, extensions $N\lhd G$ with $G/N$ cyclic, wreath products $F\wr H= F^n\rtimes H$, with $H\leq S_n$ a permutation group, and finite Jacobi groups $H(V)\rtimes \mathrm{Sp}(V)$ over a finite field of odd characteristic. In \cref{sec: wreath} we consider in detail wreath products of the form $F\wr S_n$.

\begin{remark}
The Borel subgroup $B\subseteq\GL_2(\F_q)$ considered in \cref{sec: GL2} can be expressed as a semidirect product
\begin{equation}
    B\cong U\rtimes T,
    \qquad U\cong(\F_q,+)
\end{equation}
with abelian normal subgroup $U$, and therefore satisfies the hypothesis of \cref{thm: little group}. See \cref{eq: B} for the definitions. The action of $T$ on $\widehat U$ has one trivial orbit and one nontrivial orbit. The trivial orbit gives the one-dimensional representations $\chi_{\alpha,\beta}$, while the nontrivial orbit has stabilizer $Z$ and gives the representations $\rho_\gamma$. Thus the little group method recovers the classification of $\widehat B$ used in the GL$_2$ application.
\end{remark}

\subsection{Deriving QFTs from the little group method}

Throughout this subsection we assume that $N\lhd G$ and that, for every $\la\in\hatN$, the representation
\begin{equation*}
\Rep_\la:N\to \U(V_\la)
\end{equation*}
extends to a representation
\begin{equation*}
\ExtRep_\la:\Inertia{\la}\to \U(V_\la).
\end{equation*}
Thus \cref{thm: little group} applies.

\begin{definition}[The ``little group'' Fourier basis]\label{def:little-group-basis}
Combining the general Fourier decomposition in \cref{eq: group algebra Peter-Weyl decomposition} with the classification in \cref{thm: little group} gives
\begin{equation}\label{eq: petweyl little group}
\C[G]\cong
\bigoplus_{\sigma\in\orbreps,\ \eta'\in\widehat{\Little{\sigma}}}
V_{(\sigma,\eta')}\otimes V_{(\sigma,\eta')}^*.
\end{equation}

With the induced bases fixed in \cref{thm: little group}, we call the basis vectors
\begin{equation}
\ket{\sigma,\eta'}
\otimes
\ket{t_L,i_\sigma,j_{\eta'}}
\otimes
\ket{t_R,k_{\sigma^*},l_{\eta'^*}}
\end{equation}
the ``little group'' basis of $\C[G]$, where
\begin{equation}
\sigma\in\orbreps,
\qquad
\eta'\in\widehat{\Little{\sigma}},
\qquad
t_L,t_R\in\Transversal{\sigma},
\end{equation}
and
\begin{equation}
i,k=1,\dots,\dim V_\sigma,
\qquad
j,l=1,\dots,\dim V_{\eta'}.
\end{equation}
\end{definition}

We now define the unitaries that we will use in the construction of a QFT circuit for $G$.

\begin{definition}[Orbit encoding]\label{def:orbit-data-encoding}
Define
\begin{equation}
    \mathcal O_N
    :=
    \bigsqcup_{\sigma\in\orbreps}
    \{\sigma\}\times\Transversal{\sigma}.
\end{equation}
The representative choices in \cref{def: representatives eq classes} give, for every $\la\in\hatN$, unique elements $\sigma=\repO(\la)$ and $t_L=\TransversalElt{\sigma,\la}\in\Transversal{\sigma}$ such that $\la=\act{t_L}\sigma$. The corresponding classical encoding is
\begin{equation}\label{eq:orbit-data-encoding}
    \Enc_{\hatN,\mathcal O_N}\ket{\la}
    =
    \ket{\repO(\la)}
    \otimes
    \ket{\TransversalElt{\repO(\la),\la}}
    =
    \ket{\sigma}\otimes\ket{t_L}.
\end{equation}
\end{definition}

\begin{definition}[Orbit transport gates]\label{def:orbit-transport-gates}
Let $\sigma\in\orbreps$ and $t\in\Transversal{\sigma}$, and set $\la=\act{t}\sigma$. Since $N\lhd G$, every $t\in G$ normalizes $N$, so the normalizer twisting construction of \cref{subsec: omega normalizes} applies with $H_\omega=N$, $\omega=t$, and $\kappa=\sigma$. In the notation of \cref{eq: normalizer block intertwiner}, its unitary block is
\begin{equation}\label{eq:clifford-normalizer-block-intertwiner}
    U_{t,\sigma}:V_\sigma\longrightarrow V_{\act{t}\sigma}=V_\la,
    \qquad
    U_{t,\sigma}\Rep_\sigma^t(n)U_{t,\sigma}^\dagger
    =\Rep_\la(n),
    \qquad n\in N.
\end{equation}
We normalize $U_{1,\sigma}=\id_{V_\sigma}$. Equivalently, for every $n\in N$,
\begin{equation}\label{eq:orbit-frame-intertwining-relation}
    U_{t,\sigma}^\dagger\Rep_\la(n)U_{t,\sigma}
    =\Rep_\sigma^t(n)
    =\Rep_\sigma(t\inv nt).
\end{equation}

Define the controlled left transport gate
\begin{equation}\label{eq:U-orb-left-unitary}
    U_{\mathrm{orb}}^L:=
    \bigoplus_{\sigma\in\orbreps}
    \bigoplus_{t\in\Transversal{\sigma}}
    \ket{\sigma,t}\bra{\sigma, t}\otimes
    U_{t,\sigma}^\dagger
\end{equation}
and the controlled right transport gate
\begin{equation}\label{eq:U-orb-right-unitary}
    U_{\mathrm{orb}}^R:=
    \bigoplus_{\sigma\in\orbreps}
    \bigoplus_{t\in\Transversal{\sigma}}
    \ket{\sigma,t}\bra{\sigma,t}\otimes
    U_{t,\sigma}\tp
\end{equation}
\end{definition}

\begin{definition}[Factorization encoding]\label{def:factorization-encoding}
For $\sigma\in\orbreps$, $t_L\in\Transversal{\sigma}$, and $h\in\Transversal{N}$, let $a\in\Inertia{\sigma}$ and $t_R\in\Transversal{\sigma}$ be the unique elements satisfying
\begin{equation}
    h=t_Lat_R\inv.
\end{equation}
For fixed $\sigma$ and $t_L$, define the valid output set
\begin{equation}
    \mathcal F_{\sigma,t_L}
    :=
    \left\{
    (a,t_R)\in\Inertia{\sigma}\times\Transversal{\sigma}
    \ \middle|\
    t_Lat_R\inv\in\Transversal{N}
    \right\}.
\end{equation}
The controlled classical encoding is given by
\begin{equation}\label{eq:factorization-encoding-action}
    \Enc_{\Transversal{N},\mathcal F_{\sigma,t_L}}
    \bigl(\ket{\sigma}\otimes\ket{t_L}\otimes\ket{h}\bigr)
    =
    \ket{\sigma}\otimes\ket{t_L}\otimes\ket{a}\otimes\ket{t_R}.
\end{equation}
For fixed $\sigma$ and $t_L$, the map $h\mapsto(a,t_R)$ is a bijection onto $\mathcal F_{\sigma,t_L}$.
\end{definition}

\begin{observation}\label{obs:factor-coset-suffices}
Fix $\sigma\in\orbreps$ and $t_L,t_R\in\Transversal{\sigma}$. Then for any $h\in\Transversal{N}$, the unique element $a\in \Inertia{\sigma}$ such that
\begin{equation}
h=t_Lat_R\inv,
\end{equation}
is fully determined by its coset $aN\in\Little{\sigma}$.

Indeed, suppose we had
\begin{equation}
    h_1=t_La_1t_R\inv,
    \qquad
    h_2=t_La_2t_R\inv,
    \qquad \text{with}\quad
    a_1N=a_2N,\quad h_1,h_2\in\Transversal{N}.
    \end{equation}
    Then
\begin{equation}
    h_1N
    =
    t_La_1t_R\inv N
    =
    t_La_2t_R\inv N
    =
    h_2N.
\end{equation}
Since $\Transversal{N}$ contains one representative for each coset in $G/N$, it follows that $h_1=h_2$, and hence $a_1=a_2$.
\end{observation}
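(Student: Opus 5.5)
The plan is to reduce the statement to the uniqueness of the factorization in \cref{def:factorization-encoding} together with the fact that $\Transversal{N}$ meets each coset of $N$ exactly once. Fix $\sigma\in\orbreps$ and $t_L,t_R\in\Transversal{\sigma}$. The claim has two parts. First, the coset $aN$ pins down $h$. Second, once $h$ is known, $a=t_L\inv h t_R$ is determined. So it is enough to show that $aN$ determines $h$.

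For the first step, take $h_1,h_2\in\Transversal{N}$ with $h_i=t_La_it_R\inv$, where $a_i\in\Inertia{\sigma}$ and $a_1N=a_2N$. Normality of $N$ in $G$ is the key input. It gives $t_R\inv N=Nt_R\inv$, and hence
\begin{equation*}
    h_1N=t_La_1t_R\inv N=t_La_1Nt_R\inv=t_La_2Nt_R\inv=t_La_2t_R\inv N=h_2N .
\end{equation*}
The middle equality uses $a_1N=a_2N$. The same computation shows that $hN=t_L(aN)t_R\inv$ depends only on $aN\in\Little{\sigma}$, with $\Little{\sigma}=\Inertia{\sigma}/N$ viewed as a subset of $G/N$.

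For the second step, $h_1N=h_2N$ combined with $h_1,h_2\in\Transversal{N}$ forces $h_1=h_2$, because $\Transversal{N}$ contains exactly one representative of each coset in $G/N$. Cancelling $t_L$ on the left and $t_R\inv$ on the right then gives $a_1=a_2$.

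There is no real obstacle here. The one point needing care is moving $N$ past $t_R\inv$, and that is exactly where normality of $N$ is used. A secondary remark is that, conversely, every coset $aN\in\Little{\sigma}$ gives a valid pair $(a,t_R)\in\mathcal F_{\sigma,t_L}$. Indeed, the coset $t_L(aN)t_R\inv$ has a unique representative $h\in\Transversal{N}$, and $a=t_L\inv ht_R$ lies in $aN\subseteq\Inertia{\sigma}$. This remark is not needed for the statement, but it explains why the encoding can later be indexed by $\Little{\sigma}$.
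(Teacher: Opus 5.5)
Your proof is correct and follows the paper's argument. You derive $h_1N=h_2N$ from $a_1N=a_2N$, use that $\Transversal{N}$ meets each coset of $N$ exactly once to get $h_1=h_2$, and cancel $t_L$ and $t_R\inv$ to get $a_1=a_2$; the only difference is that you spell out the normality step $t_R\inv N=Nt_R\inv$ in the middle equality, which the paper leaves implicit.
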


We can now properly define the general QFT algorithm. A circuit representation is shown in \cref{fig: little group qft}.
\begin{algorithm}[H]
\caption{Little group QFT for $N\lhd G$}
\label{alg:little-group-qft}
\small
\begin{enumerate}
    \item Apply the right-coset version of $\Enc_{N,G}$ to the group register:
    \begin{equation*}
    \ket{g}
    \longmapsto
    \ket{n}\otimes\ket{h},
    \qquad g=nh,\qquad
    n\in N,\quad h\in\Transversal{N}.
    \end{equation*}

    \item Apply $\Fou_N$ to the $\C[N]$ register. Denote its output registers by $\ket{\la}\ket{i_\la}\ket{k_{\la^*}}$, where $\la\in\hatN$ and $i,k=1,\dots,d_\la$.

    \item Apply $\Enc_{\hatN,\mathcal O_N}$ to the irrep-label register. If $\la=\act{t_L}\sigma$, then
    \begin{equation*}
    \ket{\la}
    \longmapsto
    \ket{\sigma}\otimes\ket{t_L}.
    \end{equation*}

    \item Apply the two orbit transport gates, both controlled on $\sigma$ and $t_L$:
    \begin{align*}
        U_{\mathrm{orb}}^L:
        \quad
        \ket{i_\la}
        &\longmapsto
        U_{t_L,\sigma}^\dagger\ket{i_\la},
        \\
        U_{\mathrm{orb}}^R:
        \quad
        \ket{k_{\la^*}}
        &\longmapsto
        U_{t_L,\sigma}\tp\ket{k_{\la^*}}.
    \end{align*}
    Equivalently, the two applications give
    \begin{equation*}
    \ket{\sigma}\otimes\ket{t_L}
    \otimes\ket{i_\la}\otimes\ket{k_{\la^*}}
    \longmapsto
    \ket{\sigma}\otimes\ket{t_L}
    \otimes U_{t_L,\sigma}^\dagger\ket{i_\la}
    \otimes U_{t_L,\sigma}\tp\ket{k_{\la^*}}.
    \end{equation*}
    We denote the output basis vectors by $\ket{i_\sigma}$ and $\ket{\widetilde{k_{\sigma^*}}}$, with $i,k=1,\dots,d_\sigma=d_\la$.

    \item Apply $\Enc_{\Transversal{N},\mathcal F_{\sigma,t_L}}$ to the $h$-register, controlling on the $\sigma$- and $t_L$-registers:
    \begin{equation*}
    \ket{h}
    \longmapsto
    \ket{a}\otimes\ket{t_R},
    \qquad
    h=t_L a t_R\inv,
    \qquad
    a\in\Inertia{\sigma},\quad t_R\in\Transversal{\sigma}.
    \end{equation*}

    \item Apply $\ExtRep_{\sigma^*}(a\inv)$ to the $\widetilde{k_{\sigma^*}}$-register, controlling on the $\sigma$- and $a$-registers.
    Write $\ket{k_{\sigma^*}} :=\ExtRep_{\sigma^*}(a\inv)\ket{\widetilde{k_{\sigma^*}}}$ for the resulting basis vector. On basis elements, this is
    \begin{align*}
    &\ket{\sigma}\otimes\ket{t_L}\otimes\ket{i_\sigma}
    \otimes\ket{\widetilde{k_{\sigma^*}}}\otimes\ket{a}\otimes\ket{t_R}
    \\
    &\qquad\longmapsto
    \ket{\sigma}\otimes\ket{t_L}\otimes\ket{i_\sigma}
    \otimes\ket{k_{\sigma^*}}\otimes\ket{a}\otimes\ket{t_R}.
    \end{align*}

    \item Apply $\Enc_{N,\Inertia{\sigma}}$ to the $a$-register, controlling on the $\sigma$-register:
    \begin{equation*}
    \ket{a}
    \longmapsto
    \ket{h_\sigma}\otimes\ket{d},
    \qquad
    a=h_\sigma d\qquad h_\sigma\in \Inertia{\sigma}\cap \Transversal{N},\ d\in N.
    \end{equation*}
    By \cref{obs:factor-coset-suffices}, the $d$-register contains no independent information and can be uncomputed reversibly. Relabel the remaining representative register by its coset, $\ket{h_\sigma}\equiv\ket{aN}$, where $aN=h_\sigma N\in\Little{\sigma}$.

    \item Apply $\Fou_{\Little{\sigma}}$ to the $aN$-register, controlled on the $\sigma$-register, again using \cref{eq: Fourier transform}.
    Denote its output registers by $\ket{\eta'}\ket{j_{\eta'}}\ket{l_{\eta'^*}}$, where $\eta'\in\widehat{\Little{\sigma}}$ and $j,l=1,\dots,d_{\eta'}$.
\end{enumerate}
\end{algorithm}

\begin{figure}[H]
\centering
\resizebox{\textwidth}{!}{\begin{quantikz}[wire types={n,n,n,q,n,n,n,n}]
&\gate[8]{\Enc_{N,G}}&\gate[3]{\Fou_N}&
\wire[l][1]["\ket{i_\la}"{above,pos=0.2}]{q}\setwiretype{q}
&\gate{U_{\mathrm{orb}}^L}&&&&&&&\rstick{$\ket{i_\sigma}$}
\\
&&\wire[l][1]["\ket{n}"{above,pos=0.5}]{q}\setwiretype{q}
&\wire[l][1]["\ket{k_{\la^*}}"{above,pos=0.2}]{q}&
&\gate{U_{\mathrm{orb}}^R}&&\gate{\ExtRep_{\sigma^*}(a\inv)}&&&&\rstick{$\ket{k_{\sigma^*}}$}
\\
&&&\gate[2]{\Enc_{\hatN,\mathcal O_N}}\wire[l][1]["\ket{\la}"{above}]{q}
&\ctrl{-2}\wire[l][1]["\ket{\sigma}"{above,pos=0.5}]{q}\setwiretype{q}
&\ctrl{-1}&\ctrl{3}&\ctrl{-1}&\ctrl{2}&&\ctrl{2}&\rstick{$\ket{\sigma}$}
\\
\lstick{$\ket{g}$}&&\setwiretype{n}&&
\ctrl{-3}\wire[l][1]["\ket{t_L}"{above,pos=0.5}]{q}\setwiretype{q}
&\ctrl{-2}&\ctrl{2}&&&&&\rstick{$\ket{t_L}$}
\\
&&&&&&&&\gate[2]{\Enc_{N,\Inertia{\sigma}}}
&\wire[l][1]["\ket{aN}"{above,pos=0.4}]{q}\setwiretype{q}
&\gate[3]{\Fou_{\Little{\sigma}}}&\rstick{$\ket{j_{\eta'}}$}
\\
&&&&&&\gate[3]{\Enc_{\Transversal{N},\mathcal F_{\sigma,t_L}}}
&\ctrl{-4}\wire[l][1]["\ket{a}"{above,pos=0.8}]{q}\setwiretype{q}
&&\ground{}\setwiretype{q}&\setwiretype{n}
&\setwiretype{q}\rstick{$\ket{l_{\eta'^*}}$}
\\
&&\wire[l][1]["\ket{h}"{above,pos=0.5}]{q}\setwiretype{q}
&&&&&\setwiretype{n}&&&&\rstick{$\ket{\eta'}$}\setwiretype{q}
\\
&&&&&&&\setwiretype{q}&&&&\rstick{$\ket{t_R}$}
\end{quantikz}}
\caption{The circuit for $\Fou_G$, the ``little group'' QFT for $N\lhd G$. The $\Enc$ unitaries are classical permutation matrices.}
\label{fig: little group qft}
\end{figure}

\begin{theorem}\label{thm:little-group-qft-correctness}
\cref{alg:little-group-qft} implements the Fourier transform from the group-element basis of $\C[G]$ to the ``little group'' Fourier basis given in \cref{def:little-group-basis}.
\end{theorem}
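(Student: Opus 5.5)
The plan is a direct coefficient comparison. Every step of \cref{alg:little-group-qft} is a classical permutation (the $\Enc$ maps) or a controlled unitary, so the composite is unitary. It therefore suffices to show that for every $g\in G$ the composite sends $\ket g$ to
\begin{equation*}
\sum_{\sigma,\eta'}\sqrt{\tfrac{d_{(\sigma,\eta')}}{|G|}}\sum [\Rep_{(\sigma,\eta')}(g)]^{t_L,i,j}_{t_R,k,l}\,\ket{\sigma,\eta'}\ket{t_L,i_\sigma,j_{\eta'}}\ket{t_R,k_{\sigma^*},l_{\eta'^*}},
\end{equation*}
which is \cref{eq: Fourier transform} written in the induced basis of \cref{thm: little group}. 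Here $d_{(\sigma,\eta')}=|\Transversal{\sigma}|\,d_\sigma d_{\eta'}$.

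First, I would write out this target coefficient explicitly from \cref{def: induced representation}. Take $V_{(\sigma,\eta')}=\Ind_{\Inertia\sigma}^G(V_\sigma\otimes V_{\eta'})$ with transversal $\Transversal\sigma$ (\cref{lem: transv inertia}). Its matrix entry between column $\ket{t_R,k,l}$ and row $\ket{t_L,i,j}$ is nonzero iff $b:=t_L\inv g t_R\in\Inertia\sigma$. In that case it equals $[\ExtRep_\sigma(b)]^i_k\,[\Rep_{\eta'}(bN)]^j_l$.

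Second, I would track the algorithm's amplitude on a basis input. Write $g=nh$. Step 2 produces $\sqrt{d_\la/|N|}\,[\Rep_\la(n)]^i_k$ with $\la=\act{t_L}\sigma$. Steps 3--4 conjugate by $U_{t_L,\sigma}$. By \cref{eq:orbit-frame-intertwining-relation} this turns the coefficient matrix into $\Rep_\sigma(t_L\inv n t_L)$, which equals $\ExtRep_\sigma(t_L\inv n t_L)$ because $t_L\inv n t_L\in N$. The $U\tp$ on the dual register is exactly what is needed to make the dual index transform consistently under the convention \cref{eq:dual-irrep-convention}. Step 5 writes $h=t_L a t_R\inv$. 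Step 6 applies $\ExtRep_{\sigma^*}(a\inv)=\ExtRep_\sigma(a)\tp$ on the column index, which right-multiplies the matrix by $\ExtRep_\sigma(a)$. Using that $\ExtRep_\sigma$ is a homomorphism on $\Inertia\sigma$, the coefficient matrix becomes
\begin{equation*}
\ExtRep_\sigma(t_L\inv n t_L\, a)=\ExtRep_\sigma(t_L\inv g t_R)=\ExtRep_\sigma(b).
\end{equation*}
Since $N$ is normal, $bN=aN$. Steps 7--8 then produce $\sqrt{d_{\eta'}/|\Little\sigma|}\,[\Rep_{\eta'}(aN)]^j_l$, using \cref{obs:factor-coset-suffices} to discard the $d$-register.

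Third, I would check the normalisation and the support. The product of prefactors is $\sqrt{d_\sigma d_{\eta'}/(|N||\Little\sigma|)}$. This matches the target prefactor because $|G|=|N|\,|\Little\sigma|\,|\Transversal\sigma|$ and $d_\la=d_\sigma$. For the support, a fixed $g$ reaches $(\sigma,t_L,t_R)$ exactly when $h=t_L a t_R\inv$ for some $a\in\Inertia\sigma$, i.e.\ exactly when $b\in\Inertia\sigma$. So the algorithm's support coincides with the support of the induced matrix.

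The main obstacle I expect is getting the dual and transpose conventions right in steps 4 and 6. The question is whether $U_{t_L,\sigma}\tp$ on the $\la^*$-register and $\ExtRep_{\sigma^*}(a\inv)$ really act as right multiplication by $U_{t_L,\sigma}$ and $\ExtRep_\sigma(a)$ respectively, given the conventions of \cref{eq: Fourier transform} and \cref{eq:dual-extension-convention}. I would settle this first by a one-line index computation before assembling the proof. If a conjugate turns up instead, the fix is to reinterpret the dual basis labels, without changing the overall argument.
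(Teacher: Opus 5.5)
Your proposal is correct, but it distributes the work differently from the paper. The paper states the amplitude claim in one sentence without deriving it: that $g t_R=t_L b$ produces the coefficient $[\ExtRep_\sigma(b)]_{ik}[\Rep_{\eta'}(bN)]_{jl}$. The substance of its proof, in \cref{app: proofs of clifford}, is a step-by-step transport of the left and right regular actions $L(x)$ and $R(x)$ through all eight steps. This uses auxiliary factorizations such as $h_x t_L=t_L' b$ and $h_x t_R=t_R' r$, until the output registers are seen to carry $\Ind_{\Inertia{\sigma}}^G(\ExtRep_\sigma\otimes\InfRep_{\eta'})$ and its dual.

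You instead derive the amplitude formula explicitly and stop there, and that suffices. The lemma establishing \cref{eq: Fourier transform} shows that this formula is unitary and block-diagonalizes both regular representations, for any unitary irreps written in orthonormal bases. So equivariance is automatic once the amplitudes agree.

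Each route buys something different. Yours is shorter, and it also pins down the phase on each block. An equivariance argument alone fixes the transform only up to a unimodular scalar on each $V_{(\sigma,\eta')}\otimes V_{(\sigma,\eta')}^*$, which is why the paper still needs its separate normalization and coefficient remark. The paper's route instead gives structural information about every intermediate register. For instance, it shows that the $aN$ register carries the regular actions of $\Little{\sigma}$ through $(t_L')\inv x t_L N$ and $(t_R')\inv x t_R N$.

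The obstacle you anticipated does not arise. If the $(i,k)$ amplitudes form a matrix $M$, a gate $A\otimes B$ sends $M$ to $AMB\tp$. Hence $U_{t_L,\sigma}\tp$ and $\ExtRep_{\sigma^*}(a\inv)=\ExtRep_\sigma(a)\tp$ act as right multiplication by $U_{t_L,\sigma}$ and $\ExtRep_\sigma(a)$. No complex conjugate appears, because the dual conventions \cref{eq:dual-irrep-convention,eq:dual-extension-convention} use the inverse transpose.
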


\begin{proof}
We need to show that the left and right regular actions of $G$ act correctly on the basis we create with the circuit and that the basis states are correctly normalized and produce the correct matrix coefficients.

The two Fourier transforms occurring in the circuit give the required normalization. Indeed, for a representation label $(\sigma, \eta') \in \hatG$,
\begin{equation*}
d_{(\sigma,\eta')}
=[G:\Inertia{\sigma}]d_\sigma d_{\eta'},
\qquad
|G|=[G:\Inertia{\sigma}]|N||\Little{\sigma}|,
\end{equation*}
and hence
\begin{equation*}
\sqrt{\frac{d_\sigma}{|N|}}
\sqrt{\frac{d_{\eta'}}{|\Little{\sigma}|}}
=
\sqrt{\frac{d_{(\sigma,\eta')}}{|G|}}.
\end{equation*}
Moreover, if the factorization steps give $g t_R=t_Lb$ with $b\in\Inertia{\sigma}$, then the two Fourier transforms and the basis changes on the left and right $F^n$ Fourier registers produce the matrix coefficient
\begin{equation*}
\bigl[\ExtRep_\sigma(b)\bigr]_{ik}
\bigl[\Rep_{\eta'}(bN)\bigr]_{jl},
\end{equation*}
which is exactly the $(t_L,i_\sigma,j_{\eta'};t_R,k_{\sigma^*},l_{\eta'^*})$ matrix coefficient of
\begin{equation*}
\Rep_{(\sigma,\eta')}
=\Ind_{\Inertia{\sigma}}^G
\bigl(\ExtRep_\sigma\otimes\InfRep_{\eta'}\bigr).
\end{equation*}
It remains to verify that the left and right actions transform these registers according to their representation-theoretic meaning.  For this purpose, we transport the left and right regular actions through the circuit one step at a time, recording after each step the current basis and determining how the regular representations $L(x)$ and $R(x)$ act at each step.  For these calculations, we refer to the full proof in \cref{app: proofs of clifford}.

\end{proof}

\section{Application 2: the wreath product QFT}\label{sec: wreath}

In this section we apply the little group construction of \cref{sec: clifford theory} to wreath products of the form  $F\wr S_n$.
In particular, following \cite[Chapter~2]{CecchWreathProducts} and \cite[Section~2.10]{CecchFinitegroup}, we describe all Clifford-theoretic data for wreath products $F\wr S_n$ and use them to implement a specialization of \cref{alg:little-group-qft}.

\begin{theorem}[Main theorem: QFT over a wreath product]
\label{thm:wreath-main}
Let $F$ be a finite group, let $n\geq 1$, and let $\varepsilon>0$. There is a
uniform quantum circuit implementing the Fourier transform over $F\wr S_n$,
in the basis of \cref{eq:wreath-target-fourier-basis}, with operator-norm
error at most $\varepsilon$, gate count
\begin{equation}
    C_{F\wr S_n}(\varepsilon)
    \leq
    nC_F\!\left(\frac{\varepsilon}{2n}\right)
    +
    \widetilde O\!\left(n^3+n^2L_F\right),
    \label{eq:wreath-main-gate-bound}
\end{equation}
and depth
\begin{equation}
   D_{F\wr S_n}(\varepsilon)
\leq
D_F\!\left(\frac{\varepsilon}{2n}\right)
+
\widetilde O\!\left(n^3+n^2L_F\right).
\end{equation}
Here
\[
    L_F
    :=
    \left\lceil
        \log_2
        \max\bigl\{|\widehat F|,d_{\max}(F)\bigr\}
    \right\rceil,
    \qquad
    d_{\max}(F)
    :=
    \max_{\rho\in\widehat F}d_\rho,
\]
and $C_F(\delta)$ denotes the gate count of a uniform implementation of
$\Fou_F$ with operator-norm error at most $\delta$. The notation
$\widetilde O$ suppresses polylogarithmic factors in $n$, $L_F$, and
$1/\varepsilon$.
\end{theorem}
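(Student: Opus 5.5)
We instantiate \cref{alg:little-group-qft} with $G=F\wr S_n=F^n\rtimes S_n$ and $N=F^n$. Then \cref{thm:little-group-qft-correctness} gives correctness, and only the cost of each step remains to be bounded.

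\textbf{Clifford data.} We have $\widehat{N}=\widehat F^{\,n}$, so a label is a tuple $\la=(\rho_1,\dots,\rho_n)$, and $S_n$ acts on it by permuting coordinates. As orbit representatives $\orbreps$ we take the sorted tuples with respect to a fixed total order on $\widehat F$. For a sorted $\sigma$ with multiplicity composition $(n_\rho)_{\rho\in\widehat F}$, the inertia group is $\Inertia{\sigma}=F^n\rtimes S_{\mathbf n}$ and the little group is the Young subgroup $\Little{\sigma}=S_{\mathbf n}=\prod_\rho S_{n_\rho}$. As transversal $\Transversal{\sigma}$ we take the minimal-length coset representatives of $S_n/S_{\mathbf n}$, equivalently the stable sorting permutations.

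The extension is the standard tensor-induced one: $\ExtRep_\sigma(f,\pi)=\bigl(\bigotimes_i \Rep_{\sigma_i}(f_i)\bigr)\cdot P_\pi$, where $P_\pi$ permutes the $n$ tensor factors. This is well defined because $\pi$ permutes only equal labels. We choose $U_{t,\sigma}$ to be the tensor-factor permutation $P_t$, which makes $U_{\mathrm{orb}}^{L}$ and $U_{\mathrm{orb}}^{R}$ controlled permutations of $n$ registers, each holding $L_F$ qubits.

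\textbf{Cost of the steps.} We bound the steps of \cref{alg:little-group-qft} in order.
\begin{enumerate}
\item $\Enc_{N,G}$ splits $(f,\pi)$ into $f\in F^n$ and $h=\pi\in S_n$. This is classical, costing $\widetilde O(nL_F+n\log n)$; we assume group elements of $F$ are stored in $O(L_F)$-type registers, or otherwise absorb the encoding into $C_F$.
\item $\Fou_N=\Fou_F^{\otimes n}$ costs $nC_F(\varepsilon/2n)$ and depth $D_F(\varepsilon/2n)$. The error is allotted by the union bound across the $n$ parallel copies.
\item $\Enc_{\hatN,\mathcal O_N}$ is a reversible stable sort of $n$ labels of $L_F$ bits, which also outputs the sorting permutation $t_L$. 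A sorting network gives size $\widetilde O(nL_F)$ and polylogarithmic depth.
\item The orbit transports apply a controlled permutation $P_{t_L}$ to $n$ blocks of $L_F$ qubits. Decomposing into $O(n^2)$ controlled swaps (or a routing network) gives $O(n^2L_F)$ gates.
\item $\Enc_{\Transversal{N},\mathcal F_{\sigma,t_L}}$ is arithmetic in $S_n$: compute $t_L^{-1}h$, then split it into $S_{\mathbf n}$ and a coset representative. This is $\widetilde O(n^2)$.
\item The controlled $\ExtRep_{\sigma^*}(a^{-1})$ applies, for $a\in S_{\mathbf n}$, again a controlled block permutation at cost $O(n^2L_F)$.
\item Step 7 is classical.
\item $\Fou_{\Little{\sigma}}$ is a controlled product of $S_{n_\rho}$ QFTs. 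Using the $\widetilde O(n^3)$ symmetric-group QFT of \cite{BruinsmaGrinkoOzols2026} on the block structure of the composition, coherently in $\mathbf n$, costs $\widetilde O(n^3)$ gates and depth with error $\varepsilon/2$.
\end{enumerate}
Summing these costs gives the claimed gate and depth bounds, and since the total error is at most $n\cdot\varepsilon/2n+\varepsilon/2=\varepsilon$, this proves the theorem.

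\textbf{Main obstacle.} The hardest step is step 8. It requires implementing the QFT over a Young subgroup $\prod_\rho S_{n_\rho}$ coherently controlled on the composition $\mathbf n$, whose number of parts can be as large as $|\widehat F|$. I would try to embed every $S_{n_\rho}$ on a contiguous slice of $n$ wires and run a single Beals-style $S_n$ circuit restricted to the Young subgroup chain. In that circuit, the adjacent transpositions that cross block boundaries are controlled off by comparing sorted labels, so the cost stays $\widetilde O(n^3)$ independently of $|\widehat F|$. A secondary check is that the chosen $U_{t,\sigma}$ and extension satisfy \cref{eq:clifford-normalizer-block-intertwiner} with the dual conventions \cref{eq:dual-extension-convention}, which should be a direct computation.
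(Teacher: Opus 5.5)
Your proposal is correct. Apart from the Young-subgroup transform, it is the paper's proof: the same sorted orbit representatives, the same stable (minimal-length) transversal, the same tensor-permutation extension and orbit transports, the same step-by-step cost accounting, and the same error split $n\cdot\varepsilon/(2n)+\varepsilon/2$. Your $\widetilde O(nL_F)$ sorting-network bound for the stable sort is sharper than the paper's $\widetilde O(n^2L_F)$, but this does not affect the total, since the controlled tensor-factor permutations already cost $\widetilde O(n^2L_F)$.

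The genuine difference is step 8.

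\emph{The paper's route.} \cref{lem:wreath-coherent-young-qft} uses the stages of the $S_m$ Beals circuit of \cite{BruinsmaGrinkoOzols2026} as black boxes. For each $k=1,\dots,n$ it applies stage $k$ exactly $\lfloor n/k\rfloor$ times. It uses $\vsig$-controlled swaps to route the at most $\lfloor n/k\rfloor$ blocks with $n_r\ge k$ into fixed slots. The total is $\sum_k\lfloor n/k\rfloor\,\widetilde O(k^2)=\widetilde O(n^3)$, plus $\widetilde O(n^3)$ for the swaps.

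\emph{Your route.} You make a single sweep over the global positions $m=1,\dots,n$ along the Young-subgroup chain $S_1\subset\cdots\subset S_{n_1}\subset S_{n_1}\times S_1\subset\cdots$. This cannot literally be the $S_n$ circuit with boundary-crossing transpositions switched off. Stage $m$ of the $S_n$ circuit branches over partitions of $m$ with amplitudes $\sqrt{d_\mu/(m d_\lambda)}$. At a block start, where the transversal register holds $e$, it would still create a superposition of shapes. For example, with $n=2$ and $\vsig=(\rho_1,\rho_2)$ it sends $\ket{e}$ to $\frac{1}{\sqrt2}(\ket{(2)}+\ket{(1,1)})$, whereas the little group is trivial and the correct map is the identity.

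The stage at position $m$ in block $r$ must instead be the local step $S_{k-1}\subset S_k$, with $k=m-o_r$, acting on block $r$'s own path register. So the $U$ and $V_j$ gates, and not only the $R(\tau_j)$ gates, must depend coherently on the offset $o_r$ read off from $\vsig$. With that adjustment each global stage costs at most $\widetilde O(n^2)$, and your sweep also totals $\widetilde O(n^3)$.

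\emph{Trade-off.} Your route avoids the $\lfloor n/k\rfloor$ repetitions and the routing layer, but requires opening up the Beals stage and re-parametrizing it by local index and block offset. The paper's route is modular: it reuses the $S_m$ stages unchanged, at the price of the controlled swaps.
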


The general construction of \cite{GenQFT} gives an efficient QFT for $F\wr S_n$ under the assumption $|F|=\operatorname{poly}(n)$. The bound above instead applies to families of groups $F$ that admit efficient QFTs even when their order is not polynomial in $n$.

As an example, for $F=S_m$, Ref.~\cite{BruinsmaGrinkoOzols2026} gives $L_{S_m}=O(m\log m)$ and $C_{S_m}(\delta)=\widetilde O(m^3)$. Hence
\begin{equation}\label{eq:iterated-symmetric-wreath-bound}
    C_{S_m\wr S_n}(\varepsilon)
    =
    \widetilde O\!\left(nm^3+n^3+n^2m\right),
\end{equation}
where the suppressed factors also depend polylogarithmically on $1/\varepsilon$. In particular, the complexity is polynomial in the two independent parameters $m$ and $n$.

We retain the notation of \cref{sec: clifford theory} whenever possible. For the symmetric-group Fourier transform and irreducible modules we use the Young--Yamanouchi path labels and the relative path encoding of Ref.\cite{BruinsmaGrinkoOzols2026}.

\subsection{Wreath products and their Clifford-theoretic data}\label{subsec:wreath-conventions}
\begin{definition}[Wreath product]\label{def:wreath-product}
Let $F$ be a finite group. The symmetric group $S_n$ acts on $F^n$ by permuting coordinates:
\begin{equation}\label{eq:wreath-action-on-base}
    \pi\cdot(f_1,\dots,f_n)
    :=
    \bigl(f_{\pi\inv(1)},\dots,f_{\pi\inv(n)}\bigr).
\end{equation}
The  \emph{wreath product} $F\wr S_n$ is
\begin{equation}\label{eq:general-wreath-product}
    F\wr S_n:=F^n\rtimes S_n,
\end{equation}
with multiplication and inversion given by
\begin{align}
    (\mathbf f,\pi)(\mathbf f',\pi')
    &=
    \bigl(\mathbf f(\pi\cdot\mathbf f'),\pi\pi'\bigr),
    \label{eq:general-wreath-multiplication}
    \\
    (\mathbf f,\pi)\inv
    &=
    \bigl(\pi\inv\cdot\mathbf f\inv,\pi\inv\bigr).
    \label{eq:general-wreath-inverse}
\end{align}
Here multiplication and inversion in $F^n$ are coordinatewise.
\end{definition}

We denote $G:=F\wr S_n$, and study its Clifford theory via the normal subgroup
\begin{equation}\label{eq:wreath-base-group}
    N:=F^n\times\{\idSn\}\cong F^n,
\end{equation}
 which we call the \emph{base group}. We naturally have $G/N\cong S_n$, and can choose as a transversal for $N$ in $G$ the subgroup
\begin{equation}\label{eq:wreath-transversal-base}
    \Transversal{N}
    :=
    \{(\idN,\pi):\pi\in S_n\}.
\end{equation}
Every element in $G$ then has the two factorizations
\begin{equation}\label{eq:wreath-base-factorization}
    (\mathbf f,\pi)
    =
    (\mathbf f,\idSn)(\idN,\pi)
    =
    (\idN,\pi)(\pi\inv\cdot\mathbf f,\idSn).
\end{equation}
For later use, we observe that conjugation of the base group by a pure permutation satisfies
\begin{equation}\label{eq:wreath-conjugation-pure-permutation}
    (\idN,\pi)\inv(\mathbf f,\idSn)(\idN,\pi)
    =
    (\pi\inv\cdot\mathbf f,\idSn).
\end{equation}
As $N\cong F^n$, its irreducible representations are tensor products of irreducible representations of $F$, so that $\hatN=\widehat F^{\,n}$.
For
\begin{equation*}
    \vla=(\lambda_1,\dots,\lambda_n)\in\widehat F^{\,n},
\end{equation*}
set
\begin{align}
    V_\vla
    &:=
    \bigotimes_{r=1}^nV_{\lambda_r},
    \label{eq:wreath-base-irrep}
    \\
    \Rep_\vla(\mathbf f)
    &:=
    \bigotimes_{r=1}^n\Rep_{\lambda_r}(f_r),
    \qquad
    \mathbf f=(f_1,\dots,f_n)\in F^n.
    \label{eq:wreath-base-representation}
\end{align}
We use the tensor-product bases
\begin{equation}\label{eq:wreath-base-index-bases}
    \ket{i_\vla}
    :=
    \bigotimes_{r=1}^n\ket{i_{\lambda_r}},
    \qquad
    \ket{k_{\vla^*}}
    :=
    \bigotimes_{r=1}^n\ket{k_{\lambda_r^*}}.
\end{equation}

Conjugation by $(\mathbf f,\idSn)\in F^n$ induces an inner automorphism on each factor of $\Rep_\vla$, and hence fixes the equivalence-class label $\vla$. Thus the $G$-action on $\hatN$ factors through $S_n$. By \cref{eq:wreath-conjugation-pure-permutation}, this action is
\begin{equation}\label{eq:wreath-action-on-dual}
    \act{\pi}\vla
    =
    \bigl(
        \lambda_{\pi\inv(1)},\dots,
        \lambda_{\pi\inv(n)}
    \bigr).
\end{equation}

Fix a total order on $\widehat F$ and the induced lexicographic order on $\widehat F^{\,n}$. We choose $\repO(\vla)$ to be the nonincreasingly sorted tuple in the orbit of $\vla$. Thus every $\vsig\in\orbreps$ has the form
\begin{equation}\label{eq:wreath-multiplicity-profile}
    \vsig
    =
    \bigl(\rho_1^{n_1},\dots,\rho_\ell^{n_\ell}\bigr),
    \qquad
    \rho_1>\cdots>\rho_\ell,
    \qquad
    \sum_{r=1}^{\ell}n_r=n.
\end{equation}
We call $\mathbf n(\vsig):=(n_1,\dots,n_\ell)$ its multiplicity profile.

\begin{definition}[Stable orbit transversal]\label{def:wreath-representatives}
Let $\vsig\in\orbreps$ and $\vla\in\orb_G(\vsig)$. For every $\rho\in\widehat F$, match the $s$-th occurrence of $\rho$ in $\vsig$ with the $s$-th occurrence of $\rho$ in $\vla$. Denote the resulting stable permutation by $\pi_{\vsig,\vla}$, so that
\begin{equation}\label{eq:wreath-stable-permutation-action}
    \act{\pi_{\vsig,\vla}}\vsig=\vla.
\end{equation}
Set
\begin{equation}\label{eq:wreath-transversal-element}
    \TransversalElt{\vsig,\vla}
    :=
    (\idN,\pi_{\vsig,\vla})
\end{equation}
and
\begin{equation}\label{eq:wreath-transversal-inertia}
    \Transversal{\vsig}
    :=
    \left\{
        \TransversalElt{\vsig,\vla}:
        \vla\in\orb_G(\vsig)
    \right\}.
\end{equation}
\end{definition}
By \cref{lem: transv inertia}, $\Transversal{\vsig}$ is a left transversal for $\Inertia{\vsig}$ in $G$.

\begin{definition}[Little group and inertia group]\label{def:wreath-little-inertia}
For $\vla\in\widehat F^{\,n}$ define the corresponding little group by
\begin{equation}\label{eq:wreath-permutation-stabilizer}
    \Little{\vla}
    :=
    \Stab_{S_n}(\vla).
\end{equation}
The full inertia group for $\vla$ is
\begin{equation}\label{eq:wreath-inertia-group}
    \Inertia{\vla}
    =
    N\rtimes\Little{\vla}.
\end{equation}
\end{definition}

\begin{observation}[Little groups of sorted representatives]
\label{obs:wreath-young-subgroup}
For the sorted representative in \cref{eq:wreath-multiplicity-profile}, the stabilizer is the standard Young subgroup
\begin{equation}\label{eq:wreath-little-group-young}
    \Little{\vsig}
    =
    S_{n_1}\times\cdots\times S_{n_\ell},
\end{equation}
where each factor permutes one consecutive block of equal labels among themselves.
\end{observation}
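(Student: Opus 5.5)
The statement to prove is \cref{obs:wreath-young-subgroup}: the stabilizer in $S_n$ of a sorted representative is the standard Young subgroup. The plan is to compute $\Stab_{S_n}(\vsig)$ directly from the action \cref{eq:wreath-action-on-dual}. By \cref{def:wreath-little-inertia}, $\pi\in\Little{\vsig}$ iff $\act{\pi}\vsig=\vsig$. Coordinatewise this reads $\sigma_{\pi\inv(r)}=\sigma_r$ for all $r\in\{1,\dots,n\}$, or equivalently (replacing $r$ by $\pi(r)$) $\sigma_{\pi(r)}=\sigma_r$ for every $r$. So $\pi$ stabilizes $\vsig$ exactly when it preserves every fibre of the label map $r\mapsto\sigma_r$.

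Next I would identify these fibres using the sorted form \cref{eq:wreath-multiplicity-profile}. Set $N_0=0$ and $N_s=n_1+\cdots+n_s$. Because the labels $\rho_1>\cdots>\rho_\ell$ are pairwise distinct, the fibre of $\rho_s$ is exactly the consecutive block
\[
B_s=\{N_{s-1}+1,\dots,N_s\}.
\]
Hence $\pi$ stabilizes $\vsig$ iff $\pi(B_s)\subseteq B_s$ for every $s$. Since $\pi$ is injective and each $B_s$ is finite, this inclusion is automatically an equality $\pi(B_s)=B_s$.

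The set of permutations preserving each block of the set partition $B_1\sqcup\cdots\sqcup B_\ell=\{1,\dots,n\}$ is the internal direct product $S_{B_1}\times\cdots\times S_{B_\ell}$. Here each $S_{B_s}\cong S_{n_s}$ permutes its own block and fixes the other blocks pointwise. This is by definition the standard Young subgroup $S_{n_1}\times\cdots\times S_{n_\ell}$, which gives \cref{eq:wreath-little-group-young}.

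There is no real obstacle in this argument. The one point needing care is the index convention $\pi\inv(r)$ in \cref{eq:wreath-action-on-dual}. It does not affect the conclusion, since the stabilizer is a subgroup and is therefore closed under inverses.
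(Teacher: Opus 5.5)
Your proposal is correct and matches the paper's reasoning. The paper gives no separate proof and justifies the observation only by noting that each factor permutes one consecutive block of equal labels; your fibre argument (stabilizing $\vsig$ means preserving each fibre of $r\mapsto\sigma_r$, and for the sorted representative these fibres are the consecutive blocks $B_s$, giving $S_{B_1}\times\cdots\times S_{B_\ell}\cong S_{n_1}\times\cdots\times S_{n_\ell}$) is exactly that reasoning made explicit, and your treatment of the $\pi\inv$ index convention is sound.
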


\subsection{Applying the little group method to wreath products}
In order to apply the little group method to $G$, we need to show that each element $\Rep_\vla$ for $\vla \in \hatN$ extends to a representation of $\Inertia{\vla}$.
In order to do so, we begin by noticing that the orbit-transport unitaries of \cref{def:orbit-transport-gates} have a particularly simple form when specialized to wreath products. For $\vla\in\widehat F^{\,n}$ and $t=(\idN,\pi)$, define
\begin{equation}\label{eq:wreath-U-tensor-definition}
    U_{t,\vla}:V_\vla\longrightarrow V_{\act{\pi}\vla}
\end{equation}
on elementary tensors by
\begin{equation}\label{eq:wreath-U-tensor-action}
    U_{t,\vla}
    \left(v_1\otimes\cdots\otimes v_n\right)
    :=
    v_{\pi\inv(1)}\otimes\cdots\otimes v_{\pi\inv(n)}.
\end{equation}
Thus $U_{t,\vla}$ is the unitary that permutes the tensor factors according to $\pi$.
It is easy to see that $U_{t,\vla}$ is precisely an orbit-transport unitary, as defined in \cref{def:orbit-transport-gates}. In fact, if $s=(\idN,\tau)$, then the following properties hold:
\begin{align}
    U_{t,\act{\tau}\vla}U_{s,\vla}
    &=U_{ts,\vla},
    \label{eq:wreath-U-tensor-composition}
    \\
    U_{t,\vla}^\dagger
    &=U_{t\inv,\act{\pi}\vla},
    \label{eq:wreath-U-tensor-adjoint}
    \\
    U_{t,\vla}\Rep_\vla(\mathbf f)U_{t,\vla}^\dagger
    &=\Rep_{\act{\pi}\vla}(\pi\cdot\mathbf f).
    \label{eq:wreath-U-tensor-covariance}
\end{align}

When $\pi\in\Little{\vla}$, we abbreviate
\begin{equation}\label{eq:wreath-little-transport-abbreviation}
    U_{\pi,\vla}:=U_{(\idN,\pi),\vla}.
\end{equation}
For $(\mathbf f,\pi)\in\Inertia{\vla}$, define
\begin{equation}\label{eq:general-wreath-extension}
    \ExtRep_\vla(\mathbf f,\pi)
    :=
    \Rep_\vla(\mathbf f)U_{(\idN,\pi),\vla}.
\end{equation}
Here $\pi\in\Little{\vla}$, so $\act{\pi}\vla=\vla$ and the transport unitary is an endomorphism of $V_\vla$.

\begin{lemma}\label{lem:general-wreath-extension}
The map $\ExtRep_{\vla}:\Inertia{\vla}\rightarrow U(V_\vla)$ in \cref{eq:general-wreath-extension} is a unitary representation of $\Inertia{\vla}$ that extends $\Rep_\vla$.
\end{lemma}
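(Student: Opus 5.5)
The plan is to check three things in turn: that $\ExtRep_\vla$ takes values in $\U(V_\vla)$, that it restricts to $\Rep_\vla$ on $N$, and that it is a homomorphism on $\Inertia{\vla}$. Unitarity is immediate. For $(\mathbf f,\pi)\in\Inertia{\vla}$ we have $\pi\in\Little{\vla}$, so $\act{\pi}\vla=\vla$ by \cref{eq:wreath-action-on-dual} and $U_{(\idN,\pi),\vla}$ is an endomorphism of $V_\vla$. It is unitary because it only permutes tensor factors, by \cref{eq:wreath-U-tensor-action}; the adjoint formula \cref{eq:wreath-U-tensor-adjoint} confirms this. Since $\Rep_\vla(\mathbf f)$ is a tensor product of unitaries, the product $\Rep_\vla(\mathbf f)U_{(\idN,\pi),\vla}$ is unitary.

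For the extension property, take $\pi=\idSn$. Then $U_{(\idN,\idSn),\vla}$ acts by the identity permutation of tensor factors and is therefore $\id_{V_\vla}$. Hence $\ExtRep_\vla(\mathbf f,\idSn)=\Rep_\vla(\mathbf f)$, which gives $\Res_N^{\Inertia{\vla}}\ExtRep_\vla=\Rep_\vla$.

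The homomorphism property is the main step. Take $(\mathbf f,\pi),(\mathbf f',\pi')\in\Inertia{\vla}$. By \cref{eq:general-wreath-multiplication} their product is $(\mathbf f(\pi\cdot\mathbf f'),\pi\pi')$, and $\pi\pi'\in\Little{\vla}$, so
\[
\ExtRep_\vla(\mathbf f,\pi)\ExtRep_\vla(\mathbf f',\pi')=\Rep_\vla(\mathbf f)\,U_{\pi,\vla}\Rep_\vla(\mathbf f')\,U_{\pi',\vla}.
\]
The key move is to commute the transport unitary past $\Rep_\vla(\mathbf f')$ using the covariance relation \cref{eq:wreath-U-tensor-covariance} with $\act{\pi}\vla=\vla$:
\[
U_{\pi,\vla}\Rep_\vla(\mathbf f')=\Rep_\vla(\pi\cdot\mathbf f')\,U_{\pi,\vla}.
\]
Next I would apply multiplicativity of $\Rep_\vla$ on $F^n$, which holds coordinatewise. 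This gives $\Rep_\vla(\mathbf f)\Rep_\vla(\pi\cdot\mathbf f')=\Rep_\vla(\mathbf f(\pi\cdot\mathbf f'))$. The composition rule \cref{eq:wreath-U-tensor-composition}, with $\act{\pi'}\vla=\vla$, then gives $U_{\pi,\vla}U_{\pi',\vla}=U_{\pi\pi',\vla}$. Combining these yields $\ExtRep_\vla((\mathbf f,\pi)(\mathbf f',\pi'))$, as required.

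The only real obstacle is checking that \cref{eq:wreath-U-tensor-composition,eq:wreath-U-tensor-covariance} hold, since the paper states them but does not verify them. I would verify both on elementary tensors. For covariance, $\Rep_\vla(\mathbf f)$ applies $\Rep_{\lambda_r}(f_r)$ in slot $r$. Permuting slots by $\pi$ moves this factor to slot $\pi(r)$, which carries label $\lambda_{\pi\inv(\pi(r))}=\lambda_r$ and group element $(\pi\cdot\mathbf f)_{\pi(r)}=f_r$. For composition, applying $\tau$ and then $\pi$ sends slot $r$ to $\pi(\tau(r))$, which is exactly the permutation $\pi\tau$. The only point needing care is the index convention $\pi\inv$ in \cref{eq:wreath-action-on-base}, which makes the slot action a left action.
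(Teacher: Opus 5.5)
Your proof is correct and is essentially the paper's argument: you commute $U_{\pi,\vla}$ past $\Rep_\vla(\mathbf f')$ using the covariance relation, then merge the transport unitaries with the composition rule to obtain $\ExtRep_\vla(\mathbf f(\pi\cdot\mathbf f'),\pi\pi')$, and set $\pi=\idSn$ for the restriction to $N$. Your explicit unitarity check and your elementary-tensor verification of the composition and covariance identities, which the paper only asserts, are sound and make the argument self-contained.
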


\begin{proof}
For $\pi,\pi'\in\Little{\vla}$, \cref{eq:wreath-U-tensor-composition,eq:wreath-U-tensor-covariance} give
\begin{equation*}
    \ExtRep_\vla(\mathbf f,\pi)\ExtRep_\vla(\mathbf f',\pi')
    =
    \Rep_\vla\bigl(\mathbf f(\pi\cdot\mathbf f')\bigr)
    U_{(\idN,\pi\pi'),\vla},
\end{equation*}
which agrees with \cref{eq:general-wreath-multiplication}. Restriction to $N$ is obtained by setting $\pi=\idSn$.
\end{proof}

Therefore every base-group irreducible representation has an extension to its inertia group, and we can apply the little group method to $G$.

Fix now
\begin{equation*}
    \vsig=(\rho_1^{n_1},\dots,\rho_\ell^{n_\ell})\in\orbreps.
\end{equation*}
By \cref{obs:wreath-young-subgroup}, $\Little{\vsig}=S_{n_1}\times\cdots\times S_{n_\ell}$.
Following the representation theory of symmetric groups, write
\begin{equation*}
    \widehat{S_m}=\{\nu:\nu\pt m\},
\end{equation*}
and, for $\nu\pt m$, let $\Rep_\nu:S_m\to\mathrm U(V_\nu)$ be the unitary irreducible representation on the corresponding simple $\C[S_m]$-module $V_\nu$.
Consequently, an irrep of $\Little{\vsig}$ is labeled by a tuple of partitions

\begin{equation}\label{eq:wreath-little-irrep-label}
    \vnu=(\nu_1,\dots,\nu_\ell),
    \qquad
    \nu_r\pt n_r\quad(r=1,\dots,\ell),
\end{equation}
and acts on
\begin{equation}\label{eq:wreath-little-irrep-space}
    V_\vnu
    :=
    \bigotimes_{r=1}^{\ell}V_{\nu_r}
\end{equation}
through
\begin{equation}\label{eq:wreath-little-irrep}
    \Rep_\vnu
    :=
    \bigotimes_{r=1}^{\ell}\Rep_{\nu_r},\text{ by } \quad
       \Rep_\vnu(\alpha_1,\dots,\alpha_\ell)
    =
    \bigotimes_{r=1}^{\ell}\Rep_{\nu_r}(\alpha_r).
\end{equation}
Following \cite{BruinsmaGrinkoOzols2026}, for each $r$, we choose the Young--Yamanouchi path basis $\{\ket{Q_r}:Q_r\in\mathcal P(\nu_r)\}$ of $V_{\nu_r}$ and its dual basis $\{\ket{P_r}:P_r\in\mathcal P(\nu_r)\}$ of $V_{\nu_r}^*$.
Their tensor-product bases are denoted by
\begin{equation}\label{eq:wreath-path-index-identification}
    \ket{\mathbf Q}
    :=
    \bigotimes_{r=1}^{\ell}\ket{Q_r},
    \qquad
    \ket{\mathbf P}
    :=
    \bigotimes_{r=1}^{\ell}\ket{P_r}.
\end{equation}
We are now ready to provide the full classification of irreducible representations for $G$.
\begin{theorem}[Irreducible representations of $F\wr S_n$]
\label{thm:wreath-irreps}
Every irreducible representation of $F\wr S_n$ can be labeled by a unique pair $(\vsig,\vnu)$, where
\begin{equation}
    \vsig=(\rho_1^{n_1},\dots,\rho_\ell^{n_\ell})\in\orbreps,
    \qquad
    \vnu=(\nu_1,\dots,\nu_\ell),
    \qquad
    \nu_r\pt n_r.
\end{equation}
A representative for the irreducible representation labeled by $(\vsig,\vnu)$ is
\begin{equation}\label{eq:wreath-irrep-classification}
    \Rep_{(\vsig,\vnu)}
    \cong
    \Ind_{\Inertia{\vsig}}^G
    \left(
        \ExtRep_\vsig\otimes\InfRep_\vnu
    \right).
\end{equation}
\end{theorem}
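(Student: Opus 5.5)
The classification follows by specializing \cref{thm: little group} to $N=F^n\lhd G=F\wr S_n$, so the plan is to check its hypotheses and then read off the label sets. First I will confirm that the $G$-action on $\hatN=\widehat F^{\,n}$ factors through $S_n$ and is given by coordinate permutation, \cref{eq:wreath-action-on-dual}. Next, I will use \cref{lem:general-wreath-extension} to supply the required extension hypothesis: every $\Rep_\vla$ extends to $\ExtRep_\vla$ on $\Inertia{\vla}=N\rtimes\Little{\vla}$. With these two facts, \cref{thm: little group} applies directly. It gives a bijection between $\hatG$ and the pairs $(\vsig,\eta')$ with $\vsig\in\orbreps$ and $\eta'\in\widehat{\Little{\vsig}}$, where the irrep attached to $(\vsig,\eta')$ is $\Ind_{\Inertia{\vsig}}^G(\ExtRep_\vsig\otimes\InfRep_{\eta'})$.

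It then remains to identify the two label sets concretely. For the orbit representatives, the choice $\repO(\vla)$ = nonincreasingly sorted tuple picks exactly one element per $S_n$-orbit, because two tuples lie in the same orbit if and only if they have the same sorted rearrangement. Hence $\orbreps$ is precisely the set of sorted tuples $(\rho_1^{n_1},\dots,\rho_\ell^{n_\ell})$ with $\rho_1>\cdots>\rho_\ell$. For the little group, \cref{obs:wreath-young-subgroup} gives $\Little{\vsig}=S_{n_1}\times\cdots\times S_{n_\ell}$. I will justify this by noting that a permutation fixes the sorted tuple exactly when it preserves each block of equal labels, since the $\rho_r$ are distinct.

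Finally I will invoke the standard fact that the irreducibles of a direct product are the tensor products of irreducibles of the factors. This gives $\widehat{\Little{\vsig}}=\{\vnu=(\nu_1,\dots,\nu_\ell):\nu_r\pt n_r\}$ with $\Rep_\vnu=\bigotimes_r\Rep_{\nu_r}$ as in \cref{eq:wreath-little-irrep}. Substituting this into the bijection yields the uniqueness of the label $(\vsig,\vnu)$ and the formula \cref{eq:wreath-irrep-classification}.

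I expect no real obstacle, since the substantive work sits in \cref{thm: little group} and \cref{lem:general-wreath-extension}. The only point needing care is that the extension must be a genuine homomorphism on $\Inertia{\vsig}$, which requires the covariance relation \cref{eq:wreath-U-tensor-covariance}. The proof of \cref{lem:general-wreath-extension} already covers this, so I will simply cite it.
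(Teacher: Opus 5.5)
Your proposal is correct and follows the same route as the paper. The paper's proof is a one-line application of \cref{thm: little group}, using the inertia group $N\rtimes\Little{\vla}$, the Young-subgroup description of $\Little{\vsig}$, and the extension from \cref{lem:general-wreath-extension}. Your additional checks spell out details the paper leaves implicit: that sorted tuples give one representative per $S_n$-orbit, that the stabilizer of a sorted tuple is the block-preserving Young subgroup, and that $\widehat{\Little{\vsig}}$ consists of tuples of partitions via tensor products.
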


\begin{proof}
Apply \cref{thm: little group} using \cref{eq:wreath-inertia-group,eq:wreath-little-group-young,lem:general-wreath-extension}.
\end{proof}

The induced basis from \cref{thm: little group} gives the target wreath-product Fourier basis
\begin{equation}\label{eq:wreath-target-fourier-basis}
    \ket{\vsig,\vnu}
    \otimes
    \ket{t_L,i_\vsig,\mathbf Q}
    \otimes
    \ket{t_R,k_{\vsig^*},\mathbf P}.
\end{equation}
\Cref{tab:wreath-clifford-dictionary} gives a dictionary of the Clifford-theoretic data developed above.

\begin{table}[H]
\centering
\small
\renewcommand{\arraystretch}{1.18}
\begin{tabular}{p{0.28\textwidth}p{0.64\textwidth}}
\hline
General notation& Wreath product notation\\
\hline
$G$, $N$, $G/N$
& $F^n\rtimes S_n$, $F^n$, $S_n$ \\
$\la\in\hatN$
& $\vla=(\lambda_1,\dots,\lambda_n)\in\widehat F^{\,n}$ \\
$V_\la$, $\Rep_\la$
& $V_\vla=\bigotimes_rV_{\lambda_r}$,
  \quad$\Rep_\vla=\bigotimes_r\Rep_{\lambda_r}$ \\
$i_\la$, $k_{\la^*}$
& $i_\vla=(i_{\lambda_1},\dots,i_{\lambda_n})$,
  \quad$k_{\vla^*}=(k_{\lambda_1^*},\dots,k_{\lambda_n^*})$ \\
$\act{g}\la$
& $\act{\pi}\vla=(\lambda_{\pi\inv(1)},\dots,\lambda_{\pi\inv(n)})$ \\
$\sigma\in\orbreps$
& $\vsig=(\rho_1^{n_1},\dots,\rho_\ell^{n_\ell})$ \\
$\Transversal{\sigma}$
& $\{(\idN,\pi_{\vsig,\vla}):\vla\in\orb_G(\vsig)\}$ \\
$t_L$, $t_R$
& $(\idN,\pi_L)$, $(\idN,\pi_R)$ \\
$\Inertia{\sigma}$
& $F^n\rtimes\Little{\vsig}$ \\
$\Little{\sigma}$
& $S_{n_1}\times\cdots\times S_{n_\ell}$ \\
$a$, $aN$
& $(\idN,\alpha)$, $\alpha\in\Little{\vsig}$ \\
$U_{t,\sigma}$, $U_{t,\sigma}^\dagger$, $U_{t,\sigma}\tp$
& Tensor factor permutation $U_{t,\vsig}$,
  \quad$t=(\idN,\pi)$ \\
$\ExtRep_\sigma$
& $\ExtRep_\vsig(\mathbf f,\pi)
  =\Rep_\vsig(\mathbf f)U_{(\idN,\pi),\vsig}$ \\
$\eta'\in\widehat{\Little{\sigma}}$
& $\vnu=(\nu_1,\dots,\nu_\ell)\in \widehat{S_{n_1}}\times\cdots\times\widehat{S_{n_\ell}}$, \quad$\nu_r\pt n_r$ \\
$V_{(\sigma,\eta')}$
& $V_{(\vsig,\vnu)}$ \\
$j_{\eta'}$, $l_{\eta'^*}$
& $\mathbf Q=(Q_1,\dots,Q_\ell)$,
  \quad$\mathbf P=(P_1,\dots,P_\ell)$ \\
\hline
\end{tabular}
\caption{Specialization of the Clifford-theory data to $F\wr S_n$.}
\label{tab:wreath-clifford-dictionary}
\end{table}

\subsection{The specialized wreath product circuit}\label{subsec:wreath-qft-circuit}

We now specialize the construction in \cref{alg:little-group-qft} to the wreath product case.
We encode permutations as disjoint-cycle lists. Write
\begin{equation}\label{eq:wreath-disjoint-cycle-encoding}
    \pi=c_1c_2\cdots c_s,
    \qquad
    c_r=(c_{r,1},\dots,c_{r,m_r}),
\end{equation}
where fixed points are omitted, the smallest entry of each cycle is written first, and the cycles are ordered by increasing smallest entry. The identity is represented by the empty cycle list.

Inversion, multiplication, conversion to or from a value table and factorization into adjacent transpositions can each be implemented with $\widetilde O(n^2)$ one- and two-qubit gates.  We use this common upper bound for all elementary permutation operations and do not explicitly compute the overall cost, as it does not impact the final asymptotic.

The complete gate specialization is summarized in \cref{tab:wreath-circuit-dictionary}.

\begin{table}[H]
\centering
\small
\renewcommand{\arraystretch}{1.18}
\begin{tabular}{p{0.08\textwidth}p{0.28\textwidth}p{0.56\textwidth}}
\hline
Step & General operation & Wreath-product operation \\
\hline
1 & $\Fou_N$ & $\Fou_F^{\otimes n}$ \\
2 & $\Enc_{\hatN,\mathcal O_N}$
& $\vla\mapsto\bigl(\vsig,t_L=(\idN,\pi_L)\bigr)$ \\
3 & $U_{\mathrm{orb}}^L,\ U_{\mathrm{orb}}^R$
& $U_{t_L,\vsig}^\dagger,\quad U_{t_L,\vsig}\tp$ \\
4 & $\Enc_{\Transversal{N},\mathcal F_{\vsig,t_L}}$
& $\pi\mapsto\bigl(\alpha,t_R=(\idN,\pi_R)\bigr)$,
  \quad$\alpha=\pi_L\inv\pi\pi_R$ \\
5 & $\ExtRep_{\vsig^*}((\idN,\alpha)\inv)$
& $U_{\alpha,\vsig}\tp$ \\
6 & $\Fou_{\Little{\vsig}}$ & $\bigotimes_{r=1}^{\ell}\Fou_{S_{n_r}}$ \\
\hline
\end{tabular}
\caption{Gate-by-gate specialization of \cref{alg:little-group-qft}.}
\label{tab:wreath-circuit-dictionary}
\end{table}

An input group element is represented by
\begin{equation}\label{eq:wreath-input-encoding}
    \ket{\mathbf f}\otimes\ket{\pi}
    =
    \ket{f_1,\dots,f_n}\otimes\ket{\pi}.
\end{equation}
Here $\ket{\pi}$ uses the canonical disjoint-cycle encoding of \cref{eq:wreath-disjoint-cycle-encoding}.

\begin{algorithm}[H]
\caption{Little group QFT for $F\wr S_n$}
\label{alg:wreath-little-group-qft}
\footnotesize
\setlength{\abovedisplayskip}{4pt}
\setlength{\belowdisplayskip}{4pt}
\setlength{\abovedisplayshortskip}{2pt}
\setlength{\belowdisplayshortskip}{2pt}
On input $\ket{\mathbf f}\ket{\pi}$, perform the following operations.
\begin{enumerate}
    \setlength{\itemsep}{0.2em}
    \item Apply $\Fou_N=\Fou_F^{\otimes n}$ to obtain
    \begin{equation*}
        \ket{\vla}\otimes
        \ket{i_\vla}\otimes
        \ket{k_{\vla^*}}\otimes
        \ket{\pi}.
    \end{equation*}

    \item Stable sort $\vla$: define
    \begin{equation*}
        \vsig:=\repO(\vla),
        \qquad
        \pi_L:=\pi_{\vsig,\vla},
        \qquad
        t_L:=(\idN,\pi_L),
    \end{equation*}
    apply
    \begin{equation*}
        \Enc_{\hatN,\mathcal O_N}
    \end{equation*}
    on the $\ket{\vla}$ register to get
    \begin{equation*}
    \ket{\vsig}\ket{t_L}.
    \end{equation*}

    \item Apply the controlled orbit transport gates
\begin{align}
    U_{\mathrm{orb}}^L
    &:=
    \sum_{\vsig\in\orbreps}
    \sum_{t\in\Transversal{\vsig}}
    \ket{\vsig,t}\bra{\vsig,t}
    \otimes U_{t,\vsig}^\dagger,
    \label{eq:wreath-U-orb-left}
    \\
    U_{\mathrm{orb}}^R
    &:=
    \sum_{\vsig\in\orbreps}
    \sum_{t\in\Transversal{\vsig}}
    \ket{\vsig,t}\bra{\vsig,t}
    \otimes U_{t,\vsig}\tp.
    \label{eq:wreath-U-orb-right}
\end{align}
on the left and right $F^n$-Fourier register respectively, both controlled on the $\vsig$ and $t_L$ registers. As $t_L=(\idN,\pi_L)$ and the transport matrix is a real permutation matrix, both gates implement the coordinate permutation $\pi_L\inv$:
    \begin{align*}
        (i_{\lambda_r})_{r=1}^n
        &\longmapsto
        (i_{\lambda_{\pi_L(r)}})_{r=1}^n
        =:i_\vsig,
        \\
        (k_{\lambda_r^*})_{r=1}^n
        &\longmapsto
        (k_{\lambda_{\pi_L(r)}^*})_{r=1}^n
        =:\widetilde{k_{\vsig^*}}.
    \end{align*}

    \item Compute the factorization of the permutation register. Define
    \begin{equation*}
        \vla_R:=\act{\pi\inv}\vla,
        \qquad
        \pi_R:=\pi_{\vsig,\vla_R},
        \qquad
        t_R:=(\idN,\pi_R),
    \end{equation*}
    and
    \begin{equation*}
        \alpha
        :=
        \pi_L\inv\pi\pi_R
        =
        (\alpha_1,\dots,\alpha_\ell)
        \in
        \Little{\vsig}.
    \end{equation*}
     The factorization gate $\Enc_{S_n, H \times \Transversal{}}$, controlled on $\vsig$ and $t_L$, acts as
    \begin{equation*}
        \ket{\pi}
        \longmapsto
        \ket{\alpha}\ket{t_R}.
    \end{equation*}

    \item Controlling on $\alpha$ and $\sigma$, apply the extend representation gate
    \begin{equation*}
        \ExtRep_{\vsig^*}\bigl((\idN,\alpha)\inv\bigr) =    \ExtRep_\vsig(\idN,\alpha)\tp
        =
        U_{\alpha, \vsig}\tp
    \end{equation*}
    on the right $F^n$-Fourier register.
    Write
    \begin{equation*}
        \ket{k_{\vsig^*}}
        :=
        U_{\alpha,\vsig}\tp
        \ket{\widetilde{k_{\vsig^*}}}.
    \end{equation*}

    \item Use
    \begin{equation*}
        \Little{\vsig}
        =
        S_{n_1}\times\cdots\times S_{n_\ell}
    \end{equation*}
    to decompose $\alpha=(\alpha_1,\dots,\alpha_\ell)$ and apply
    \begin{equation*}
        \Fou_{\Little{\vsig}}
        =
        \Fou_{S_{n_1}}\otimes\cdots\otimes\Fou_{S_{n_\ell}}.
    \end{equation*}
    The output registers are $\ket{\vnu}\ket{\mathbf Q}\ket{\mathbf P}$ as in \cref{eq:wreath-path-index-identification}.
    Altogether, the output is the target basis vector
    \begin{equation*}
        \ket{\vsig,\vnu}
        \otimes
        \ket{t_L,i_\vsig,\mathbf Q}
        \otimes
        \ket{t_R,k_{\vsig^*},\mathbf P}
    \end{equation*}
    from \cref{eq:wreath-target-fourier-basis}.
\end{enumerate}
\end{algorithm}

\begin{figure}[H]
\centering
\resizebox{\textwidth}{!}{\begin{quantikz}[wire types={n,q,n,n,n,q,n,n}]
&\gate[3]{\Fou_F^{\otimes n}}
    &\wire[l][1]["\ket{i_\vla}"{above,pos=0.2}]{q}\setwiretype{q}
    &\gate{U_{\orb}^L}
    &
    &
    &
    &
    &\rstick{$\ket{i_\vsig}$}\qw
\\
\lstick{$\ket{\mathbf{f}}$}&
    &\wire[l][1]["\ket{k_{\vla^*}}"{above,pos=0.2}]{q}
    &
    &\gate{{U_{\orb}^R}}
    &\wire[l][1]["\ket{\widetilde{k_{\vsig*}}}"{above,pos=0.2}]{q}
    &\gate{U_{\alpha,\vsig}\tp}
    &
    &\rstick{$\ket{k_{\vsig^*}}$}
\\
&
    &\gate[2]{\Enc_{\hatN,\mathcal O_N}}\wire[l][1]["\ket{\vla}"{above,pos=0.5}]{q}\setwiretype{q}
    &\ctrl{-2}
    &\ctrl{-1}
    &\ctrl{1}
    &\ctrl{-1}
    &\ctrl{2}
    &\rstick{$\ket{\vsig}$}
\\
&
    &
    &\ctrl{-1}\setwiretype{q}
    &\ctrl{-1}
    &\ctrl{2}
    &
    &
    &\rstick{$\ket{t_L}$}\qw
\\
&&&&&&&\gate[3]{     \bigotimes_{r=1}^{\ell}\Fou_{S_{n_r}}}&\rstick{$\ket{\mathbf Q}$}\setwiretype{q}
\\
\lstick{$\ket{\pi}$}
    &
    &
    &
    &
    &\gate[3]{\Enc_{S_n, H \times \Transversal{}}}
    &\ctrl{-3}\wire[l][1]["\ket{\alpha}"{above,pos=0.2}]{q}
    &
    &\rstick{$\ket{\mathbf P}$}\qw
\\
&&&&&&&&\rstick{$\ket{\vnu}$}\setwiretype{q}
\\
&&&&&&\setwiretype{q}&&\rstick{$\ket{t_R}$}
\end{quantikz}}
\caption{The wreath product QFT circuit. It is a specialization of \cref{fig: little group qft}, in which general gates have been replaced or removed according to \cref{tab:wreath-circuit-dictionary}. As the little group is a product of symmetric group, the QFT over the little group can be efficiently implemented}
\label{fig:wreath-product-qft-circuit}
\end{figure}

\begin{lemma}[Coherent Young-subgroup Fourier transform]
\label{lem:wreath-coherent-young-qft}
For
\[
    \vsig=(\rho_1^{n_1},\dots,\rho_\ell^{n_\ell})\in\orbreps,
    \qquad
    \Little{\vsig}=S_{n_1}\times\cdots\times S_{n_\ell},
\]
the transform
\[
    \Fou_{\Little{\vsig}}
    =
    \Fou_{S_{n_1}}\otimes\cdots\otimes\Fou_{S_{n_\ell}}
\]
can be implemented coherently, controlled on $\ket{\vsig}$, with
operator-norm error at most $\delta$ using $\widetilde O(n^3)$ one- and
two-qubit gates.
\end{lemma}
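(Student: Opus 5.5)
The plan is to reduce the controlled Young-subgroup transform to a single coherent symmetric-group QFT on $n$ points, applied in a way that respects the block structure of $\vsig$. I would not implement each $\Fou_{S_{n_r}}$ as a separately controlled circuit, since the number of blocks $\ell$ and their sizes $n_r$ are data and not fixed in advance. Instead I would use the recursive Beals construction of \cref{thm: first induction} along the chain $S_1\subset S_2\subset\cdots\subset S_n$. The Young--Yamanouchi implementation of \cite{BruinsmaGrinkoOzols2026} realizes this construction with $\widetilde O(n^3)$ gates and polylogarithmic dependence on $1/\delta$. The key observation is that $S_{n_1}\times\cdots\times S_{n_\ell}$ is a product of symmetric groups acting on consecutive intervals of $\{1,\dots,n\}$, whose boundaries are determined by the multiplicity profile $\mathbf n(\vsig)$. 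In the canonical chain, the step from $S_{m-1}$ to $S_m$ adjoins the point $m$ through the transversal of transposition cycles $(k\ m)$. For the Young subgroup, the step that adds point $m$ inside block $r$ only uses transpositions with $k$ in that same block.

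The circuit proceeds in three steps. First, controlled on $\ket{\vsig}$, I reversibly compute the profile $(n_1,\dots,n_\ell)$ and the block boundaries $b_0=0<b_1<\dots<b_\ell=n$ with $b_r=n_1+\dots+n_r$. This takes $\widetilde O(nL_F)$ gates, which is within budget, or $\widetilde O(n)$ if the labels are compared by index. Second, I convert $\alpha=(\alpha_1,\dots,\alpha_\ell)$ into the transversal encoding of the full chain, $\alpha=t^{(n)}\cdots t^{(2)}$. Because $\alpha$ preserves blocks, each $t^{(m)}$ is a transposition $(k\ m)$ with $b_{r-1}<k\le m$, where $m$ lies in block $r$. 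Third, I run the $n-1$ induction steps $\IndMap[m]$ in sequence. At step $m$, a flag bit records whether $m=b_{r-1}+1$, meaning $m$ opens a new block. When the flag is set, the step starts a fresh trivial path for the new factor instead of inducing, and the Bratteli path register begins a new partition $\nu_r$. When the flag is clear, the step applies the ordinary $S_{m-b_{r-1}}$ induction to the relabeled points. Correctness then follows blockwise from \cref{thm: first induction}, because the blocks involve disjoint registers and the product transform is the tensor product.

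For the cost, each induction step $\IndMap[m]$ of \cite{BruinsmaGrinkoOzols2026} uses $\widetilde O(m^2)$ gates, or $\widetilde O(n^2)$ after being made uniform in the block offset, and there are $n$ steps. Classical offset arithmetic adds only $\widetilde O(n)$ per step. This gives $\widetilde O(n^3)$ in total. Allocating error $\delta/n$ to each step gives operator-norm error at most $\delta$ with only polylogarithmic overhead.

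I expect the main obstacle to be making the Young--Yamanouchi induction step coherent in the offset $b_{r-1}$ and in the register layout. The $S_m$ circuit of \cite{BruinsmaGrinkoOzols2026} hard-codes which path register is being extended and which positions the transpositions act on. My first attempt would be a uniform register layout: one length-$n$ path/content register with a marker at each block boundary. Each step then reads its local level $m-b_{r-1}$ as a classical controlled parameter, and the adjacent-transposition rotations use contents relative to the current block. Since these rotation angles are already computed reversibly from the contents, shifting by a computed offset should cost only polylogarithmic overhead.
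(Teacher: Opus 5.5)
Your proposal is correct, but it organizes the computation differently from the paper.

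\emph{Your route.} You apply \cref{thm: first induction} directly to the Young-adapted chain $Y_m=\Little{\vsig}\cap S_m$, $m=1,\dots,n$. This chain is multiplicity-free and has trivial (index-one) steps exactly at the block openings $m=b_{r-1}+1$. You run one induction step per global index $m$, and each step must be coherent in the block offset $b_{r-1}$. This gives $n\cdot\widetilde O(n^2)$.

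\emph{The paper's route.} The paper keeps every Beals stage at a fixed level. It applies stage $k$ of the $S_n$ circuit $\lfloor n/k\rfloor$ times, using controlled swaps to route the at most $\lfloor n/k\rfloor$ factors with $n_r\geq k$ into fixed slots. The bound $\widetilde O(n^3)$ then comes from $\sum_k\lfloor n/k\rfloor\,\widetilde O(k^2)\leq\widetilde O(n\sum_k k)$, plus $\widetilde O(n^2)$ of swaps per $k$.

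\emph{What each buys.} The paper's route reuses the circuit of \cite{BruinsmaGrinkoOzols2026} as a black box and confines all dependence on $\vsig$ to routing. Yours keeps every register in place and is a single subgroup-chain recursion. However, its cost bound rests on the offset-uniform induction step, which you only sketch. You can close that step as follows:
\begin{itemize}
\item the Young--Yamanouchi matrix of an adjacent transposition $s_i$ depends only on the axial distance of boxes $i$ and $i+1$ in the current block's diagram;
\item the amplitudes $\sqrt{d_\mu/(j\,d_\lambda)}$ of the gate $U$ depend only on the current shape, since $j=|\mu|$.
\end{itemize}
Padding the product over $k$ in \cref{thm:beals-induction-transform} to $n$ iterations, each controlled on $k\leq m-b_{r-1}$, then yields the $\widetilde O(n^2)$ per step that you claim.

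One minor point: your $\widetilde O(nL_F)$ for extracting the profile is not literally inside the lemma's $\widetilde O(n^3)$. The paper's proof needs the same information to decide which factors have $n_r\geq k$, and in both cases this cost is absorbed by the $n^2L_F$ term of \cref{thm:wreath-main}.
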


\begin{proof}
The Beals circuit analyzed in \cite{BruinsmaGrinkoOzols2026}
constructs $\Fou_{S_m}$ through stages
$1,\dots,m$, with stage $k$ being required for the subgroup $S_k\leq S_m$ permuting only the first $k$ symbol, and using $\widetilde O(k^2)$ elementary gates.

We construct a single circuit controlled on $\ket{\vsig}$.  For each
$k=1,\dots,n$, the circuit will apply stage $k$ of the Beals circuit for the $S_n$ QFT a total of $\lfloor n/k\rfloor$ times. When controlling on $\vsig$, precisely the factors with $n_r\geq k$ require
this stage, and there are at most $\lfloor n/k\rfloor$ of them, since
$\sum_r n_r=n$. We use controlled swaps to isolate these factor, we then apply stage $k$ of the symmetric group QFT algorithm, and undo each swap.  Consequently, the
$r$th factor receives exactly stages $1,\dots,n_r$ of the symmetric group $QFT$, and hence undergoes
$\Fou_{S_{n_r}}$.  Since the same controlled circuit is used on every branch,
this also holds when $\vsig$ is in superposition.  The stages use
\[
    \sum_{k=1}^{n}
    \left\lfloor\frac{n}{k}\right\rfloor
    \widetilde O(k^2)
    \leq
    \widetilde O\!\left(n\sum_{k=1}^{n}k\right)
    =\widetilde O(n^3).
\]
For each $k$, the controlled swaps cost $\widetilde O(n^2)$, since they act on
at most $n$ factor registers of width $\widetilde O(n)$.  Their total cost is
therefore also $\widetilde O(n^3)$.

Choose fixed phases and operator-norm error at most $\delta/n$ for each
$\Fou_{S_{n_r}}$.  There are at most $n$ nonzero factors, so replacing the
factors one at a time and applying the triangle inequality gives error at most
$\delta$ for every fixed $\vsig$.  The complete controlled circuit is block
diagonal in $\vsig$, so its operator-norm error is the largest error of any
such block and is therefore also at most $\delta$.
\end{proof}

We can now give a proof of the main theorem of this section:
\begin{proof}[Proof of \cref{thm:wreath-main}]

Correctness follows from \cref{thm:little-group-qft-correctness}, since \cref{lem:general-wreath-extension} supplies the required extension and \cref{alg:wreath-little-group-qft} specializes the steps of \cref{alg:little-group-qft}. It remains to prove the gate-count bound.

Let
\[
    d_{\max}(F):=\max_{\rho\in\widehat F}d_\rho,
    \qquad
    L_F
    :=
    \left\lceil
        \log_2\max\bigl\{|\widehat F|,d_{\max}(F)\bigr\}
    \right\rceil.
\]
Thus $L_F$ bounds the number of qubits required for an $F$-irrep label or for
one of its matrix indices.  We first record the cost of applying a permutation
to the tensor factors.  Each factor is stored in $O(L_F+\log n)$ qubits, so
exchanging two factors costs $O(L_F+\log n)$ elementary gates.  Every
permutation of $n$ factors is a
product of at most $n(n-1)/2$ adjacent transpositions.  Consequently, a
coherently controlled tensor-factor permutation costs
\[
    O\!\left(n^2(L_F+\log n)\right)
    =
    \widetilde O\!\left(n^2L_F\right)
\]
one- and two-qubit gates.

Step~1 applies $n$ copies of $\Fou_F$, each with error at most $\varepsilon/(2n)$, and contributes $nC_F(\varepsilon/(2n))$ gates.

Step~2, applies a reversible stable sort to the pairs $(\lambda_r,r)$. The
sorted labels form $\repO(\vla)=\vsig$, while the position indices
determine $\pi_{\repO(\vla),\vla}=\pi_L$. This contributes
$\widetilde O(n^2L_F)$ gates.

Step~3 implements the two orbit transports by applying $\pi_L$ to the two
$F^n$ Fourier registers. This contributes $\widetilde O(n^2L_F)$ gates.

Step~4 computes $\vla_R=\act{\pi\inv}\vla$, performs the second stable sort,
and writes the resulting $\pi_R$ as a cycle list. It then computes
$\alpha=\pi_L\inv\pi\pi_R$. This contributes
$\widetilde O(n^2(L_F+1))$ gates.

Step~5 implements $U_{\alpha,\vsig}\tp$ and contributes
$\widetilde O(n^2L_F)$ gates.

Step~6 applies the $S_{n_r}$-QFTs. By \cref{lem:wreath-coherent-young-qft} this step contributes
$\widetilde O(n^3)$.

Adding the six contributions gives

\begin{align*}
    C_{F\wr S_n}(\varepsilon)
    &\leq
    nC_F\!\left(\frac{\varepsilon}{2n}\right)
    +4\widetilde O\!\left(n^2L_F\right)
    +\widetilde O(n^3)
    \\
    &=
    nC_F\!\left(\frac{\varepsilon}{2n}\right)
    +\widetilde O\!\left(n^3+n^2L_F\right),
\end{align*}
which is \cref{eq:wreath-main-gate-bound}.
The same calculation, running the QFTs for the copies of $F$ and $S_{n_r}$ on disjoint registers in parallel, gives the depth bound
\[
D_{F\wr S_n}(\varepsilon)
\leq
D_F\!\left(\frac{\varepsilon}{2n}\right)
+
\widetilde O\!\left(n^3+n^2L_F\right).
\]

Choose now operator-norm error at most $\varepsilon/(2n)$ for each of the $n$
copies of $\Fou_F$ in Step~1 and for each of the at most $n$ symmetric-group
QFTs in Step~6.  Replacing the factors in a tensor product and
applying the triangle inequality gives
\[
    \left\|
        \bigotimes_{r=1}^m\widetilde U_r
        -
        \bigotimes_{r=1}^m U_r
    \right\|_{\mathrm{op}}
    \leq
    \sum_{r=1}^m
    \|\widetilde U_r-U_r\|_{\mathrm{op}}.
\]
It follows that the $n$ copies of $\Fou_F$ in Step~1 contribute error at most
$\varepsilon/2$.  For every fixed $\vsig$, Step~6 contains at most $n$ tensor
factors $\Fou_{S_{n_r}}$, so its error is also at most $\varepsilon/2$.  Since
the controlled Step~6 gate is block diagonal in $\vsig$, its operator-norm
error obeys the same bound.  The remaining gates are exact, and the sorting, cycle-manipulation, and swap circuits are reversible, with all their work registers being uncomputed, so they introduce no additional approximation. Hence
\[
    \left\|
        \widetilde{\Fou}_{F\wr S_n}-\Fou_{F\wr S_n}
    \right\|_{\mathrm{op}}
    \leq \varepsilon.
\]
The resulting precision overhead is polylogarithmic in $n/\varepsilon$ by
\cite{BruinsmaGrinkoOzols2026} and is absorbed by the $\widetilde O$ notation.
This proves the claim.

\end{proof}

\section{Acknowledgments}
The authors would like to thank Peter Bruin, Matthias Christandl, and Eric Opdam for useful discussions. We are also thankful to the organizers of the joint scientific retreat of QMATH, QuSoft, and the LMU quantum information theory group, where part of this research collaboration started.

CB was supported by a QDNL CAT-1 Phase 3 grant.
PMP was supported by the European Research Council (ERC Grant Agreement No.~818761), VILLUM FONDEN via the QMATH Centre of Excellence (Grant No.~10059), the Novo Nordisk Foundation (grant NNF20OC0059939 `Quantum for Life'), a PhD scholarship on Quantum Algorithms from the Danish e-infrastructure Consortium (DeiC), and INdAM-GNFM.
JS was supported by a Dutch Research Council (NWO) grant 024.003.037 as part of the Quantum Software Consortium program, and a QDNL grant (NGF.1623.23.023).
DG and MO were supported by National Growth Fund grant (NGF.1623.23.025) ``Qudits in theory and experiment''.

\printbibliography[heading=bibintoc]

\appendix

\zcsetup{countertype={section=appendix}}

\section{Finite fields} \label{app: finite fields}
Let $q=p^r$ be a prime power and use $\F_q$ to denote the finite field of $q$ elements. Write $\F_q^* = \F_q \setminus\{0\}$ for its unit group. For any two fields $\F_q$ and $\F_{q^n}$ there exists a \emph{trace map}
\begin{equation}\label{eq: tr}
	\tr_{q^n}^q: \F_{q^n} \to \F_q: x \mapsto x+x^q+\dots + x^{q^{n-1}},
\end{equation}
and a \emph{norm map}
\begin{equation}\label{eq: N}
	\No_{q^n}^{q}: \F_{q^n} \to \F_q: x\mapsto x^1\cdot x^q\cdot \ldots \cdot x^{q^{n-1}}.
\end{equation}
We will often omit the sub- and superscript from notation. Both $\tr_{q^2}^q$ and $\No_{q^2}^q$ are surjective.

Consider a quadratic extension $\F_{q^2}$ of $\F_q$. As it is a degree 2 extension there is a unique conjugation on $\F_{q^2}$ given by $\bar{z} = z^q$, also known as the \emph{Frobenius action}. The trace and norm of this extension are given by $\tr(z)= z+ \bar{z}$ and $\No(z)= z\bar{z}$.
\begin{itemize}
	\item The multiplicative characters of $\F_q^*$ form the cyclic group $X_q:=\widehat{\F_q^*}$.  After fixing a generator $g$ of $\F_q^*$ we label them by $\chi_\alpha$, with $\alpha$ an exponent label so that the character is given by $\chi_{\alpha}(g^d)= e^{2\pi i \alpha d/(q-1)}$.  Similarly write $X_{q^2}:=\widehat{\F_{q^2}^*}$.
	\item The additive characters of $\F_q$ are indexed by $\psi_t(x)=e^{2\pi i\tr_q^p(tx)/p}$ for $t\in\F_q$.  Throughout we fix the non-trivial character $\psi=\psi_1$ and define $\psi_b(x)=\psi(bx)$ for any $b\in \F_q$.
\end{itemize}
\begin{definition}
	A character $\theta \in \widehat{\F_{q^2}^*}$ is \emph{decomposable} if it factors through $\F_q^*$, i.e. if there exists a $\mu\in \widehat{\F_q^*}$ such that $\theta(z) = \mu(\No(z))$ for all $z\in \F_{q^2}^*.$
\end{definition}
\begin{lemma}
	A character $\theta$ is decomposable if and only if $\theta(z) = \theta(\bar{z})$ for all $z \in \F_{q^2}^*$.
\end{lemma}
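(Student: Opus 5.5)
The forward direction is immediate from the definitions, and the converse follows by showing that a Frobenius-invariant character is trivial on $\ker\No$ and therefore factors through the surjective norm map.

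\emph{Forward direction.} Suppose $\theta(z)=\mu(\No(z))$ for some $\mu\in\widehat{\F_q^*}$. Since $\No(\bar z)=\bar z\,\bar{\bar z}=\bar z z=\No(z)$ (using $\bar{\bar z}=z^{q^2}=z$), we get $\theta(\bar z)=\mu(\No(\bar z))=\mu(\No(z))=\theta(z)$.

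\emph{Converse, step 1: $\theta$ kills the image of $z\mapsto z^{1-q}$.} Assume $\theta(z)=\theta(\bar z)$ for all $z\in\F_{q^2}^*$. By multiplicativity this says $\theta(z\bar z^{-1})=1$, i.e.\ $\theta(z^{1-q})=1$ for every $z$. So $\theta$ is trivial on the subgroup $K:=\{z^{1-q}:z\in\F_{q^2}^*\}$.

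\emph{Converse, step 2: $K=\ker\No$.} Use that $\F_{q^2}^*$ is cyclic of order $q^2-1=(q-1)(q+1)$. Since $\gcd(1-q,\,q^2-1)=q-1$, the image of $z\mapsto z^{1-q}$ is the unique subgroup of order $q+1$. The map $\No(z)=z^{q+1}$ is surjective onto $\F_q^*$, so $\ker\No$ also has order $(q^2-1)/(q-1)=q+1$. In a cyclic group there is exactly one subgroup of each order, so $K=\ker\No$. Alternatively, one can check directly that $\No(z^{1-q})=z^{(1-q)(1+q)}=z^{1-q^2}=1$, which gives $K\subseteq\ker\No$, and then compare orders.

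\emph{Converse, step 3: factor through the norm.} Since $\theta$ is trivial on $\ker\No$ and $\No:\F_{q^2}^*\to\F_q^*$ is a surjective homomorphism, $\theta$ descends to a character $\mu$ of $\F_{q^2}^*/\ker\No\cong\F_q^*$. Concretely, define $\mu(\No(z)):=\theta(z)$. This is well defined because $\No(z)=\No(z')$ forces $z'/z\in\ker\No$, on which $\theta$ is trivial. It is multiplicative because both $\No$ and $\theta$ are, and it is defined on all of $\F_q^*$ because $\No$ is surjective. Hence $\theta=\mu\circ\No$, so $\theta$ is decomposable.

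The only non-formal step is the identification $K=\ker\No$ in step 2, and the cyclic-group order count above settles it.
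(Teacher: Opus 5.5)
Your proposal is correct and follows essentially the same route as the paper: the forward direction uses $\No(\bar z)=\No(z)$, and the converse shows that $\theta$ is trivial on $\ker\No=\{z^{q-1}\}$ and then descends along the surjective norm. The only difference is that you justify, via the order count in the cyclic group $\F_{q^2}^*$, why every element of $\ker\No$ is a $(1-q)$-th (equivalently $(q-1)$-th) power, whereas the paper simply asserts this from cyclicity.
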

\begin{proof}
	If $\theta=\mu\circ\No$, then $\theta(z)=\mu(z\bar z)=\theta(\bar z)$.  Conversely, suppose that $\theta(z)=\theta(z^q)$ for all $z$.  Since $\F_{q^2}^*$ is cyclic, every element of $\ker(\No)$ has the form $u=z^{q-1}$.  Hence $\theta(u)=\theta(z^q)\theta(z)^{-1}=1$.  The norm is surjective, so $\mu(\No(z)):=\theta(z)$ is a well-defined multiplicative character of $\F_q^*$ and $\theta=\mu\circ\No$.
\end{proof}
In other words, a character is decomposable (also known as \emph{irregular}) if and only if it is fixed by the Frobenius action.

\subsection{Gauss sums}\label{subsec: Gauss sums}
\begin{definition}[Gauss sum]\label{def: Gauss sum}
	For $\alpha\in X_q$ and $b\in\F_q$, their \emph{Gauss sum} is given by
	\begin{equation}
		G_q(\alpha,b)=\sum_{x\in\F_q^*}\chi_\alpha(x)\psi(bx),
		\qquad g_q(\alpha):=G_q(\alpha,1).
	\end{equation}
	For a character $\theta\in X_{q^2}$, write
	\begin{equation}
		g_{q^2}(\theta)
		:=\sum_{z\in\F_{q^2}^*}\theta(z)\psi(\tr(z)).
	\end{equation}
\end{definition}

The definitions of $g_q(\alpha)$ and $g_{q^2}(\theta)$ are consistent, since the generating additive character over $\F_{q^2}$ can by written as $\psi_{q^2}(z) = \psi(\tr(z))$. We give some properties of Gauss sums.

\begin{lemma}\label{lem: Gauss sum property}
	For $\alpha,\beta\in X_q$ and $b\in\F_q^*$,
	\begin{align}
		G_q(\alpha,b)&=\chi_{\alpha^{-1}}(b)g_q(\alpha),\\
		\overline{g_q(\alpha)}&=\chi_\alpha(-1)g_q(\alpha^{-1}),\\
		g_q(\alpha)g_q(\beta)&=J(\alpha,\beta)g_q(\alpha\beta),
		&&\alpha\beta\neq1,\\
		g_q(1)=-1, \quad |g_q(\alpha)|&=\sqrt q, &&\alpha\neq1,
	\end{align}
	where $J(\alpha,\beta)=\sum_{x\in\F_q^*\setminus\{1\}} \chi_\alpha(x)\chi_\beta(1-x)$ is known as the \emph{Jacobi sum}.
	Moreover, $|g_{q^2}(\theta)|=q$ for non-trivial $\theta$.
\end{lemma}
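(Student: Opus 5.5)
We treat the four groups of identities separately, using only the definitions of $G_q$, $g_q$, $J$ and orthogonality of additive and multiplicative characters.

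\textbf{Twisting by $b$.} For the first identity we substitute $y=bx$ in $G_q(\alpha,b)=\sum_{x}\chi_\alpha(x)\psi(bx)$. Since $b\neq0$, $x\mapsto bx$ permutes $\F_q^*$, so the sum becomes $\chi_\alpha(b^{-1})\sum_y\chi_\alpha(y)\psi(y)=\chi_{\alpha^{-1}}(b)g_q(\alpha)$.

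\textbf{Conjugation.} For the second identity we use $\overline{\chi_\alpha(x)}=\chi_{\alpha^{-1}}(x)$ and $\overline{\psi(x)}=\psi(-x)$. This gives $\overline{g_q(\alpha)}=G_q(\alpha^{-1},-1)$, and the first identity with $b=-1$ turns this into $\chi_\alpha(-1)g_q(\alpha^{-1})$.

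\textbf{The value $g_q(1)$ and the modulus.} The value $g_q(1)=\sum_{x\neq0}\psi(x)=-1$ follows because the full sum of the nontrivial additive character $\psi$ over $\F_q$ vanishes.

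For $|g_q(\alpha)|^2$ with $\alpha\neq1$, we expand the product as a double sum
\[
\sum_{x,y\neq0}\chi_\alpha(xy^{-1})\psi(x-y)
\]
and substitute $x=uy$. The inner sum over $y$ of $\psi(y(u-1))$ equals $q-1$ if $u=1$ and $-1$ otherwise. Hence
\[
|g_q(\alpha)|^2=(q-1)-\sum_{u\neq1}\chi_\alpha(u)=q-1+1=q,
\]
using $\sum_u\chi_\alpha(u)=0$. The same computation over $\F_{q^2}$, with additive character $\psi\circ\tr$, which is nontrivial because $\tr$ is surjective, gives $|g_{q^2}(\theta)|^2=q^2$.

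\textbf{Jacobi relation.} This is the main step. We write
\[
g_q(\alpha)g_q(\beta)=\sum_{x,y\neq0}\chi_\alpha(x)\chi_\beta(y)\psi(x+y)
\]
and group terms by $s=x+y$.

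For $s\neq0$ we set $x=sv$, $y=s(1-v)$ with $v\neq0,1$. The contribution is
\[
\sum_{s\neq0}\chi_{\alpha\beta}(s)\psi(s)\cdot J(\alpha,\beta)=g_q(\alpha\beta)J(\alpha,\beta).
\]
The terms with $s=0$, that is $y=-x$, contribute $\chi_\beta(-1)\sum_x\chi_{\alpha\beta}(x)$, which vanishes exactly because $\alpha\beta\neq1$.

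This step is where the hypothesis $\alpha\beta\neq1$ is used, and it is also where one must check that the summation range of $J$ (excluding $x=0$ and $x=1$, with the convention $\chi(0)=0$) matches $v\notin\{0,1\}$. Note that no assumption $\alpha,\beta\neq1$ is needed here.
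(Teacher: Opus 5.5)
Your proof is correct and complete. Each step holds as written:
\begin{itemize}
\item the substitution $y=bx$ gives the first identity;
\item conjugation, together with that identity at $b=-1$, gives the second;
\item the orthogonality computation gives $|g_q(\alpha)|^2=q$, and it carries over verbatim to $\F_{q^2}$ because $\psi\circ\tr$ is a nontrivial additive character;
\item in the Jacobi relation, the change of variables $(x,y)\mapsto(x+y,\,x/(x+y))$ is a bijection from $\{(x,y)\in(\F_q^*)^2: x+y\neq0\}$ onto $\F_q^*\times(\F_q\setminus\{0,1\})$, which is exactly the summation range of $J$.
\end{itemize}

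The paper states this lemma without proof, treating these as standard facts. Your arguments are the standard textbook ones, so there is no alternative route to compare against.

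Your remark that the Jacobi relation needs only $\alpha\beta\neq1$, and not $\alpha,\beta\neq1$, is worth keeping. In its proof of the lemma on the generic columns of $A_\gamma$, the paper splits off the case where one character is trivial and checks $J(1,\zeta)=-1=g_q(1)$ separately. By your argument that split is unnecessary: the case is already covered by the lemma as stated, since there $\xi\zeta=\alpha^{-1}b\neq1$.
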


\begin{lemma}
\label{lem: Gauss sum trace fibers}
	Let $\theta\in X_{q^2}$, and write $\theta|_{\F_q^*}=\chi_\alpha$.  For $t\in\F_q$, set
	\begin{equation}
		S_t(\theta):=
		\sum_{\substack{z\in\F_{q^2}^*\\ \tr(z)=t}}\theta(z).
	\end{equation}
	Then
	\begin{equation}
		S_t(\theta)=\chi_\alpha(t)S_1(\theta)
		\quad(t\in\F_q^*),
		\qquad
		g_{q^2}(\theta)=S_0(\theta)+g_q(\alpha)S_1(\theta).
		\label{eq: Gauss sum trace fiber splitting}
	\end{equation}
	If $\zeta\in\F_{q^2}^*$ has $\tr(\zeta)=0$, then
	\begin{equation}
		S_0(\theta)=
		\begin{cases}
			0,&\alpha\neq1,\\
			(q-1)\theta(\zeta),&\alpha=1.
		\end{cases}
	\end{equation}
	Moreover, if $\theta\neq1$ and $\alpha=1$, then
	\begin{equation}
		S_1(\theta)=-\theta(\zeta),
		\qquad
		g_{q^2}(\theta)=q\theta(\zeta).
		\label{eq: Gauss sum trivial restriction fibers}
	\end{equation}
\end{lemma}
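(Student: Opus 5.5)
The plan is to work with the trace fibers $S_t(\theta)$ and use the multiplicative scaling of $\F_{q^2}^*$ by elements of $\F_q^*$. For $c\in\F_q^*$ and $z\in\F_{q^2}^*$ we have $\tr(cz)=c\,\tr(z)$, because $c^q=c$. Hence $z\mapsto tz$ is a bijection from the fiber $\{\tr=1\}$ onto the fiber $\{\tr=t\}$ whenever $t\in\F_q^*$. Since $\theta(tz)=\theta(t)\theta(z)=\chi_\alpha(t)\theta(z)$, this gives $S_t(\theta)=\chi_\alpha(t)S_1(\theta)$ at once.

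Next I would split $g_{q^2}(\theta)=\sum_{z}\theta(z)\psi(\tr z)$ according to the value $t=\tr(z)$:
\[
g_{q^2}(\theta)=S_0(\theta)+\sum_{t\in\F_q^*}\psi(t)\,\chi_\alpha(t)\,S_1(\theta)=S_0(\theta)+g_q(\alpha)S_1(\theta).
\]
This is the splitting formula \cref{eq: Gauss sum trace fiber splitting}.

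For the trace-zero fiber, the kernel of $\tr$ on $\F_{q^2}$ is a one-dimensional $\F_q$-subspace. Therefore its nonzero elements are exactly $c\zeta$ with $c\in\F_q^*$, for any fixed nonzero $\zeta$ of trace zero. Such a $\zeta$ exists because $\tr$ is surjective, so its kernel has $q$ elements. It follows that
\[
S_0(\theta)=\theta(\zeta)\sum_{c\in\F_q^*}\chi_\alpha(c).
\]
By orthogonality of characters this equals $0$ if $\alpha\neq1$ and $(q-1)\theta(\zeta)$ if $\alpha=1$.

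Finally, take $\theta\neq1$ and $\alpha=1$. Summing $\theta$ over all of $\F_{q^2}^*$ gives $0$. Decomposing this total by fibers gives
\[
0=S_0+\sum_{t\neq0}S_t=(q-1)\theta(\zeta)+(q-1)S_1,
\]
so $S_1=-\theta(\zeta)$. Substituting into the splitting formula with $g_q(1)=-1$ from \cref{lem: Gauss sum property} gives
\[
g_{q^2}(\theta)=(q-1)\theta(\zeta)+\theta(\zeta)=q\,\theta(\zeta).
\]

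No step is a serious obstacle. The only point needing care is the description of the kernel of the trace as a line $\F_q^*\zeta\cup\{0\}$, which also covers characteristic $2$, where $\zeta=1$ works.
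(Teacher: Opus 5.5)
Your proof is correct and follows the paper's argument step for step. The steps are the scaling bijection between trace fibers, regrouping $g_{q^2}(\theta)$ by the value of the trace, and describing the nonzero trace-zero elements as $\F_q^*\zeta$. The only cosmetic difference is in the last part: you sum $\theta$ over all of $\F_{q^2}^*$, whereas the paper sums the induced character over $\F_{q^2}^*/\F_q^*$, so your relation is the paper's $\theta(\zeta)+S_1(\theta)=0$ multiplied by $q-1$.
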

\begin{proof}
	For $t\neq0$, multiplication by $t$ maps the fiber above $1$ bijectively onto the fiber above $t$.  Hence $S_t(\theta)=\theta(t)S_1(\theta)=\chi_\alpha(t)S_1(\theta)$.
	Regrouping the defining sum for $g_{q^2}(\theta)$ according to the value of the trace now gives
	\begin{equation*}
		g_{q^2}(\theta)
		=S_0(\theta)+S_1(\theta)
		  \sum_{t\in\F_q^*}\chi_\alpha(t)\psi(t),
	\end{equation*}
	which is \cref{eq: Gauss sum trace fiber splitting}.

	The nonzero elements of the trace-zero fiber are precisely $\{t\zeta:t\in\F_q^*\}$.  Therefore
	\begin{equation*}
		S_0(\theta)=\theta(\zeta)
		\sum_{t\in\F_q^*}\chi_\alpha(t),
	\end{equation*}
	which proves the stated formula for $S_0(\theta)$.  Finally, suppose that $\theta\neq1$ and $\alpha=1$.  Then $\theta$ induces a nontrivial character of $\F_{q^2}^*/\F_q^*$.  The coset $\zeta\F_q^*$ is the unique trace-zero coset, while every other coset has a unique representative of trace $1$.  Summing the induced character over the quotient gives $\theta(\zeta)+S_1(\theta)=0$.  The last identity follows from \cref{eq: Gauss sum trace fiber splitting} and $g_q(1)=-1$.
\end{proof}

\begin{lemma}[Davenport--Hasse lifting equation]
\label{lem: degree two Davenport Hasse}
	For every $\alpha\in X_q$,
	\begin{equation}
		g_{q^2}(\chi_\alpha\circ\No)
		=-\bigl(g_q(\alpha)\bigr)^2.
		\label{eq: appendix Davenport Hasse}
	\end{equation}
\end{lemma}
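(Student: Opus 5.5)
The plan is to reduce the identity to the trace-fiber decomposition of \cref{lem: Gauss sum trace fibers}, applied to the decomposable character $\theta=\chi_\alpha\circ\No$, and then evaluate the single unknown fiber sum $S_1(\theta)$ by counting norms on trace fibers.

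First I will compute the restriction of $\theta$ to $\F_q^*$. For $t\in\F_q^*$ we have $\No(t)=t^2$, so $\theta|_{\F_q^*}=\chi_{\alpha^2}$. By \cref{eq: Gauss sum trace fiber splitting},
\[
g_{q^2}(\chi_\alpha\circ\No)=S_0(\theta)+g_q(\alpha^2)S_1(\theta).
\]
This form is awkward because it mixes $g_q(\alpha^2)$ with $g_q(\alpha)^2$. So instead I will parametrize each trace fiber directly by the norm. Elements $z\in\F_{q^2}$ with $\tr(z)=t$ and $\No(z)=n$ are exactly the roots of $X^2-tX+n$. For fixed $(t,n)$, the number of such $z$ in $\F_{q^2}$ is $1+\eta(t^2-4n)$ in odd characteristic, where $\eta$ is the quadratic character of $\F_q$ extended by $\eta(0)=0$. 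Indeed, a split polynomial with distinct roots contributes no elements of $\F_{q^2}\setminus\F_q$ but has both roots in $\F_q$, and every monic quadratic over $\F_q$ has exactly two roots in $\F_{q^2}$ counted with multiplicity. Hence the count of $z\in\F_{q^2}$ with given trace and norm is $2$ for distinct roots (whether split or irreducible) and $1$ for a repeated root. Therefore
\[
g_{q^2}(\theta)=\sum_{t\in\F_q}\psi(t)\sum_{n\in\F_q^*}\chi_\alpha(n)\,N(t,n),
\qquad N(t,n)=2-\mathbf 1_{\{t^2=4n\}}.
\]
When the discriminant is zero, the root is $t/2$, which is counted once.

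The cleaner route, which I will take to handle all characteristics uniformly, is to write $N(t,n)=\#\{(x,y)\in\F_{q^2}^2 \ldots\}$. More concretely, the count of $z$ with $(\tr z,\No z)=(t,n)$ should equal $\#\{(x,y)\in\F_q^2:x+y=t,\ xy=n\}$ minus a correction that vanishes after summing against a nontrivial character. The concrete check is that split pairs give $2$ on both sides while irreducible polynomials give $2$ versus $0$. So the correction is the number of irreducible-quadratic roots minus zero. Rather than chase this, I will use the Hasse--Davenport style comparison: expand $g_q(\alpha)^2=\sum_{x,y\in\F_q^*}\chi_\alpha(xy)\psi(x+y)$, which is a sum over split pairs $(t,n)$ weighted by $\#\{(x,y):x+y=t,xy=n\}$, i.e.\ $1+\eta(t^2-4n)$. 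The $\F_{q^2}$ sum has weight $1-\eta(t^2-4n)$, since irreducible polynomials give $2$ and split ones give $0$ off $\F_q$, plus the $\F_q^*$ part. Adding the two sums therefore gives weight $2$ times $\sum_{t,n}\chi_\alpha(n)\psi(t)$, plus the contribution of $z\in\F_q^*$ counted carefully. The sum over $n$ of $\chi_\alpha(n)$ vanishes for $\alpha\neq1$, which yields $g_{q^2}(\theta)+g_q(\alpha)^2=0$. The cases $\alpha=1$ and characteristic $2$ (where $\eta$ must be replaced by the Artin--Schreier trace criterion) I will verify separately: the trivial case follows from $g_{q^2}(1)=-1$ and $g_q(1)^2=1$.

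The main obstacle is the bookkeeping of the weights $1\pm\eta(\cdot)$. In particular, the repeated-root and $z\in\F_q$ terms must cancel exactly, giving the minus sign rather than an extra $\F_q$ contribution. I will first verify this weight identity pointwise in $(t,n)$ before summing.
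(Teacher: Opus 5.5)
Your strategy of comparing both sides as sums over $(t,n)=(\text{trace},\text{norm})$ is sound. It is a genuinely different, elementary route from the paper, which for $\alpha\neq1$ simply cites the Davenport--Hasse lifting theorem with $n=2$ and checks $\alpha=1$ by hand. However, the counting step everything rests on is misstated. The bookkeeping you postpone as ``the main obstacle'' is exactly where the proposal currently fails.

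An element $z\in\F_{q^2}$ has $(\tr z,\No z)=(t,n)$ if and only if $z$ and $z^q$ are the two roots, with multiplicity, of $X^2-tX+n$; merely being a root is not enough. So when $X^2-tX+n$ splits over $\F_q$ with distinct roots $x\neq y$, no $z$ has this trace and norm (for instance $\tr x=2x\neq x+y$). The correct count is therefore $2$ in the irreducible case, $0$ in the split case with distinct roots, and $1$ in the repeated-root case, where $z$ is the double root and lies in $\F_q^*$. In odd characteristic this is $1-\eta(t^2-4n)$.

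Consequently three of your statements are wrong: your first formula $1+\eta(t^2-4n)$ for the $\F_{q^2}$ count, the formula $N(t,n)=2-\mathbf 1_{\{t^2=4n\}}$, and the claim that split pairs give $2$ on both sides. Your later weight $1-\eta(t^2-4n)$ is correct, but it already accounts for every $z\in\F_q^*$, since these are exactly the repeated-root fibers. Adding a separate ``$\F_q^*$ part'' would double count. It would leave the spurious term $\sum_{r\in\F_q^*}\chi_\alpha(r^2)\psi(2r)$, which in odd characteristic has absolute value $\sqrt q$ whenever $\alpha^2\neq1$.

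With the correct counts the proof closes at once and uniformly. Put $N_{q^2}(t,n)=\#\{z\in\F_{q^2}:\tr z=t,\ \No z=n\}$ and $N_q(t,n)=\#\{(x,y)\in\F_q^2:x+y=t,\ xy=n\}$. In the irreducible, split-distinct and repeated-root cases one gets $(N_{q^2},N_q)=(2,0),(0,2),(1,1)$ respectively. Hence $N_{q^2}(t,n)+N_q(t,n)=2$ for all $t\in\F_q$ and $n\in\F_q^*$, in every characteristic; in characteristic $2$ the repeated-root case is exactly $t=0$. No quadratic character $\eta$ or Artin--Schreier criterion is needed.

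Note that $z\neq0$ iff $\No z\neq0$, and $x,y\neq0$ iff $xy\neq0$. So expanding $g_q(\alpha)^2=\sum_{x,y\in\F_q^*}\chi_\alpha(xy)\psi(x+y)$ and grouping both sums by $(t,n)$ gives
\[
g_{q^2}(\chi_\alpha\circ\No)+g_q(\alpha)^2
=2\Bigl(\sum_{t\in\F_q}\psi(t)\Bigr)\Bigl(\sum_{n\in\F_q^*}\chi_\alpha(n)\Bigr)=0,
\]
since $\psi$ is a nontrivial additive character. This also covers $\alpha=1$, so no case split is needed.

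The corrected argument is self-contained and elementary but specific to degree $2$. The paper's proof is a one-line appeal to an external theorem that holds for all extension degrees.
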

\begin{proof}
	For $\alpha\neq1$, the Davenport--Hasse lifting theorem, with the lifted additive character $\psi_{q^n}=\psi\circ\tr_{q^n}^q$, states that
	\begin{equation*}
		\sum_{z\in\F_{q^n}^*}
		\chi_\alpha\bigl(\No_{q^n}^q(z)\bigr)
		\psi\bigl(\tr_{q^n}^q(z)\bigr)
		=(-1)^{n-1}\bigl(g_q(\alpha)\bigr)^n,
	\end{equation*}
	see \cite{DavenportHasse1935}.  Taking $n=2$ proves the claim.  If $\alpha=1$, then the left-hand side is $-1$, while the right-hand side is $-g_q(1)^2=-1$, so the identity holds in that case as well.
\end{proof}

\section{Additional proofs of \cref{sec: GL2}}

\subsection{Action of transversals}\label{app: twiddle calcs}

As the Gelfand--Tsetlin basis for the principal series representation uses two Fourier transforms, computing the irrep action at $w=\psm{0&1\\1&0}$ becomes quite tedious. We do it here and use the results in \cref{par: twiddle principal,par: twiddle Steinberg}.

Following \cref{eq: g action projective 1,eq: g action projective 2}, the action of $I_{\alpha, \beta}(w)$ on the Gelfand--Tsetlin basis is given by
\begin{align*}
	&I_{\alpha, \beta} (w)\,\ket{0} = \ket{\infty}, \\
	&I_{\alpha, \beta} (w) \,\ket{\infty} = \ket{0}, \\
	&I_{\alpha, \beta} (w) \,\ket{x} = \chi_{\alpha}(x)\chi_{\beta}(-1/x)\ket{1/x} \quad \text{ for } x\in \F_q^*\\
	&I_{\alpha, \beta} (w) \ket{v_t} = \frac{1}{\sqrt{q}}\left(\sum_{x\in \F_q^*} \overline{\psi(tx)} \chi_{\alpha}(x) \chi_{\beta}(-1/x) \ket{1/x} + \ket{\infty} \right)\\
	&I_{\alpha,\beta}(w)\ket{w_\delta} = \frac{1}{\sqrt{q(q-1)}} \sum_{t\in \F_q^*} \chi_{\delta\beta^{-1}}(t)\left(\sum_{x\in \F_q^*} \overline{\psi(tx)} \chi_{\alpha}(x) \chi_{\beta}(-1/x) \ket{1/x} + \ket{\infty}\right).\\
\end{align*}
We compute the following inner products
\begin{align*}
\braket{\infty}{I_{\alpha, \beta} (w) | \infty} &=0\\
\braket{v_0}{I_{\alpha, \beta}(w) | \infty} &= \frac{1}{\sqrt{q}}\\
\braket{w_{\delta}}{I_{\alpha, \beta} (w) | \infty}
	&= \frac{1}{\sqrt{q(q-1)}}\sum_{t\in \F_q^*} \chi_{\delta^{-1}\beta}(t) = \sqrt{\frac{q-1}{q}}\; \delta_{\delta=\beta}\\
\braket{v_{0}}{I_{\alpha, \beta} (w) | v_0} &= \frac{1}{q}\sum_{x,y\in \F_q^*} \chi_{\alpha}(x)\chi_{\beta}(-1/x) \braket{y}{1/x}= \frac{1}{q}\chi_{\beta}(-1)\sum_{x\in \F_q^*}\chi_{\alpha\beta^{-1}}(x)\\
    &= \frac{q-1}{q}\chi_{\beta}(-1)\delta_{\alpha=\beta}.\\
\end{align*}
They result in the following factors
\begin{align*}
\braket{w_{\delta}}{I_{\alpha, \beta} (w) | v_0}
	&=\frac{1}{\sqrt{q-1}}\sum_{t\in \F_q^*} \chi_{\delta^{-1}\beta}(t) \braket{v_t}{I_{\alpha, \beta} (w) | v_0} \\
	&= \frac{1}{\sqrt{q-1}}\sum_{t\in \F_q^*} \chi_{\delta^{-1}\beta}(t)\frac{1}{q}\sum_{x,y\in \F_q^*} \psi(ty) \chi_{\alpha}(x)\chi_{\beta}(-1/x) \braket{y}{1/x} \\
	&=\frac{1}{q\sqrt{q-1}}\sum_{t\in \F_q^*}\chi_{\delta^{-1} \beta}(t)\chi_{\beta}(-1)\sum_{x\in \F_q^*}\psi(t/x)\chi_{\alpha\beta^{-1}}(x)\\
	&=\frac{1}{q\sqrt{q-1}} \chi_{\beta}(-1)\sum_{x\in \F_q^*}\chi_{\alpha \beta^{-1}}(x) G_q(\delta^{-1}\beta, 1/x)\\
	&=\frac{1}{q\sqrt{q-1}} \chi_{\beta}(-1)\sum_{x\in \F_q^*}\chi_{\alpha \beta^{-1}}(x) \chi_{\delta^{-1}\beta}(x)G_q(\delta^{-1}\beta, 1)\\
	&= \frac{\sqrt{q-1}}{q} \chi_{\beta}(-1) g_q(\alpha^{-1}\beta) \delta_{\delta=\alpha}\\
\braket{w_{\delta}}{I_{\alpha, \beta} (w) | w_{\epsilon}}
	&=\frac{1}{q-1}\sum_{s, t\in \F_q^*} \chi_{\delta^{-1}\beta}(s)\chi_{\epsilon \beta^{-1}}(t)\braket{v_{s}}{I_{\alpha, \beta} (w) | v_t} \\
	&=  \frac{1}{q-1}\sum_{s, t\in \F_q^*} \chi_{\delta^{-1}\beta}(s)\chi_{\epsilon \beta^{-1}}(t)\frac{1}{q}\sum_{x,y\in \F_q^*} \psi(sy)\overline{\psi(tx)} \chi_{\alpha}(x)\chi_{\beta}(-1/x) \braket{y}{1/x}\\
	&=\frac{1}{q(q-1)}\sum_{s, t\in \F_q^*} \chi_{\delta^{-1}\beta}(s)\chi_{\epsilon \beta^{-1}}(t)\chi_{\beta}(-1)\sum_{x\in \F_q^*}\psi(s/x)\psi(-tx) \chi_{\alpha\beta^{-1}}(x)\\
	&=\frac{1}{q(q-1)}\chi_{\beta}(-1) \sum_{x\in \F_q^*} \chi_{\alpha\beta^{-1}}(x) \,G_q(\delta^{-1}\beta, 1/x)\, G_q(\epsilon\beta^{-1}, -x)\\
	&= \frac{1}{q(q-1)}\chi_{\beta}(-1) \, g_q(\delta^{-1}\beta) g_q(\epsilon\beta^{-1}) \sum_{x\in \F_q^*} \chi_{\alpha\beta^{-1}}(x) \chi_{\delta^{-1}\beta}(x) \chi_{\epsilon^{-1}\beta}(-x)\\
	&= \frac{1}{q}\chi_{\epsilon^{-1}}(-1) g_q(\delta^{-1}\beta)g_q(\epsilon \beta^{-1}) \delta_{\delta = \alpha\beta\epsilon^{-1}}.
\end{align*}
Whenever $\alpha=\beta$ the following holds:
\begin{align*}
	&I_{\alpha, \alpha} (w)\,\ket{0} = \ket{\infty}, \\
	&I_{\alpha, \alpha} (w) \,\ket{\infty} = \ket{0}, \\
	&I_{\alpha, \alpha} (w) \,\ket{x} = \chi_{\alpha}(-1)\ket{1/x},
\end{align*}
and
\begin{align*}
I_{\alpha, \alpha}(w)\ket{v^\perp} &=I_{\alpha, \alpha}(w)\frac{1}{\sqrt{q+1}}(\sqrt{q}\,\chi_\alpha(-1) \ket{\infty} - \ket{v_0})\\
	&= \frac{1}{\sqrt{q+1}}\Bigg( \chi_{\alpha}(-1)\ket{v_0} + \chi_{\alpha}(-1)\sqrt{q-1}\ket{w_{\alpha}} \\
	&\qquad -\left(\frac{1}{\sqrt{q}}\ket{\infty} + \frac{q-1}{q}\chi_{\alpha}(-1)\ket{v_0} + \frac{\sqrt{q-1}}{q}\chi_{\alpha}(-1)g_q(1)\ket{w_{\alpha}}\right)\Bigg)\\
	&=\frac{1}{\sqrt{q+1}}\left(-\frac{1}{q}\chi_{\alpha}(-1)(\sqrt{q}\,\chi_{\alpha}(-1)\ket{\infty} - \ket{v_0}) + \chi_{\alpha}(-1)\left(\sqrt{q-1} + \frac{\sqrt{q-1}}{q}\right)\ket{w_\alpha}\right)\\
	&=-\frac{1}{q}\chi_{\alpha}(-1)\ket{v^\perp}+ \chi_{\alpha}(-1)\frac{\sqrt{q^2-1}}{q}\ket{w_{\alpha}} \\
I_{\alpha, \alpha}(w)\ket{w_{\alpha}} &= \frac{\sqrt{q-1}}{\sqrt{q}}\ket{\infty} + \frac{\sqrt{q-1}}{q}\chi_{\alpha}(-1)g_q(1)\ket{v_0} + \frac{1}{q}\chi_{\alpha}(-1)g_q(1)g_q(1)\ket{w_{\alpha}} \\
	&=\frac{\sqrt{q^2-1}}{q}\chi_{\alpha}(-1)\ket{v^\perp} + \frac{1}{q}\chi_{\alpha}(-1)\ket{w_{\alpha}}\\
\St_{\alpha}(w)\ket{w_{\delta}}
	&=\frac{1}{q}\chi_{\delta^{-1}}(-1)
	g_q(\delta\alpha^{-1})^2
	\ket{w_{\alpha^2\delta^{-1}}}.
\end{align*}

\subsection{Proofs of \cref{subsec: polylog GL2 A}}
\label{app: proofs of A matrix}

\begin{lemma}[Torus action, \protect{\cref{lem: GL2 relative torus coefficients}}]
	On the special input symbol, the torus circuit acts as
	\begin{align}
		\T_\gamma\ket{c_\infty}
		={}&\frac1{\sqrt{q+1}}
		\sum_{\{\theta,\theta^q\}\in\mathcal C_\gamma}\ket{\pi_\theta}
		+\frac1{\sqrt{q-1}}
		\sum_{\{\delta,\eta\}\in\mathcal P_\gamma}
		\ket{I_{\delta,\eta}}\nonumber\\
		&+\sqrt{\frac q{(q-1)(q+1)}}
		\sum_{\delta\in\mathcal S_\gamma}\ket{\St_\delta}.
		\label{eq: appendix GL2 relative identity column}
	\end{align}
	For input $\alpha\in X_q$ and output labels $I_{\delta,\eta}, \pi_{\theta}$, the coefficients are
	\begin{align}
		\bra{I_{\delta,\eta}}\T_\gamma\ket{c_\alpha}
		&=\frac1{q-1}J(\alpha^{-1}\delta,\alpha^{-1}\eta),
		\label{eq: appendix GL2 relative principal coefficient}\\
		\bra{\pi_\theta}\T_\gamma\ket{c_\alpha}
		&=-\frac1{\sqrt{(q-1)(q+1)}}
		\sum_{\tr(z)=1}\theta(z)\chi_{\alpha^{-1}}(\No z).
		\label{eq: appendix GL2 relative cuspidal coefficient}
	\end{align}
	If $\delta^2=\gamma$ and $\alpha\ne b$, then
	\begin{equation}
		\bra{\St_\delta}\T_\gamma\ket{c_\alpha}
		=\frac{\sqrt q}{(q-1)\sqrt{q+1}}
		J(\alpha^{-1}\delta,\alpha^{-1}\delta),
		\qquad
		\bra{\det_\delta}\T_\gamma\ket{c_\alpha}=0.
		\label{eq: appendix GL2 relative fixed coefficient}
	\end{equation}
	The coefficients on the unused antisymmetric labels are zero.
\end{lemma}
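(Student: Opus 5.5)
The plan is to follow an input basis vector through the four factors of $\T_\gamma=R_{\mathrm{fix},\gamma}(\Fou_\times\oplus\Fou_{\ker\No})B_\gamma\Fou_\times^\dagger$ and read off the output amplitudes. The key reparametrization is to write every $t\in\F_q$ through the roots of $X^2-X+t$: a root pair $(x,1-x)$ with $x(1-x)=t$ in the split case, and $(z,z^q)$ with $\tr z=1$, $\No z=t$ in the nonsplit case. Summing over $t$ then becomes summing over $x\in\F_q$ or over the trace-one fiber of $\F_{q^2}$, and the two terms in each line of \cref{eq: GL2 split router} account for the two orderings of the roots.

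\emph{The special column.} $\Fou_\times^\dagger$ fixes $\ket{c_\infty}=\ket0$, and $B_\gamma$ sends it to $(\ket{\mathrm{sp},1}+\ket{\mathrm{ns},1})/\sqrt2$. The two torus transforms \cref{eq: GL2 torus Fourier gates} then spread this uniformly: amplitude $1/\sqrt{2(q-1)}$ on each $\ket{\mathrm{sp},\delta}$ and $1/\sqrt{2(q+1)}$ on each $\ket{\mathrm{ns},\theta_0\nu}$. This state is symmetric under $\delta\leftrightarrow\gamma\delta^{-1}$ and under Frobenius.
\begin{itemize}
\item On nonfixed pairs, $R_{\mathrm{fix},\gamma}$ picks up a factor $\sqrt2$, which gives the coefficients $1/\sqrt{q-1}$ and $1/\sqrt{q+1}$. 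The antisymmetric labels get amplitude zero.
\item On a fixed label $\delta^2=\gamma$, the rotation \cref{eq: GL2 fixed torus rotation} gives the Steinberg amplitude
\begin{equation*}
\sqrt{\tfrac{q+1}{2q}}\tfrac{1}{\sqrt{2(q-1)}}+\sqrt{\tfrac{q-1}{2q}}\tfrac{1}{\sqrt{2(q+1)}}=\sqrt{\tfrac{q}{q^2-1}}.
\end{equation*}
The determinant amplitude is the corresponding difference, which vanishes.
\end{itemize}
This proves \cref{eq: appendix GL2 relative identity column}.

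\emph{Generic split rows.} $\Fou_\times^\dagger\ket{c_\alpha}=(q-1)^{-1/2}\sum_{t\neq0}\chi_{\alpha^{-1}}(t)\ket t$. For a split $t=x(1-x)$ with $x\ne 1-x$, applying $B_\gamma$ and then $\Fou_\times$ gives, at output $\ket{\mathrm{sp},\delta}$, the contributions $\chi_\gamma(y)\chi_\delta(x/y)=\chi_\delta(x)\chi_{\gamma\delta^{-1}}(y)$ with $y=1-x$, together with the same expression with the roles of $x$ and $y$ swapped. Summing over $t$ therefore runs over ordered roots $x\in\F_q\setminus\{0,1\}$ exactly once. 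Multiplying by $\chi_{\alpha^{-1}}(x(1-x))$ gives
\begin{equation*}
\tfrac{1}{\sqrt2(q-1)}\sum_x\chi_{\alpha^{-1}\delta}(x)\chi_{\alpha^{-1}\eta}(1-x)=\tfrac{1}{\sqrt2(q-1)}J(\alpha^{-1}\delta,\alpha^{-1}\eta),\qquad \eta=\gamma\delta^{-1}.
\end{equation*}
The amplitude at $\eta$ is identical, since the Jacobi sum is symmetric. In odd characteristic the double root $t=1/4$ contributes the $x=1/2$ term. The boundary line of \cref{eq: GL2 split router} is designed exactly so that its $\chi_\gamma(1/2)$ factor reproduces $\chi_\delta(1/2)\chi_\eta(1/2)$; I would check this explicitly. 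Folding the symmetric pair with $R_{\mathrm{fix},\gamma}$ turns $1/\sqrt2$ into $1$, which proves \cref{eq: appendix GL2 relative principal coefficient}, and the antisymmetric amplitude is zero.

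\emph{Generic nonsplit rows.} With $\theta=\theta_0\nu$, the product $\theta_0(z)\nu(z/z^q)$ equals $\theta(z)$, so the same regrouping over the trace-one fiber (each Frobenius pair contributing both $z$ and $z^q$) gives $-\bigl(\sqrt2\sqrt{(q-1)(q+1)}\bigr)^{-1}\sum_{\tr z=1}\theta(z)\chi_{\alpha^{-1}}(\No z)$. This value is Frobenius-invariant, so folding yields \cref{eq: appendix GL2 relative cuspidal coefficient}.

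\emph{Fixed rows and the main obstacle.} I expect the fixed rows $\delta^2=\gamma$ to be the main obstacle. There the split amplitude is $J(\xi,\xi)/(q-1)$ with $\xi=\alpha^{-1}\delta$, and the nonsplit amplitude at $\chi_\delta\circ\No$ is $-\bigl((q-1)(q+1)\bigr)^{-1/2}\sum_{\tr z=1}\xi(\No z)$. The first thing I would try is to relate the trace-one norm sum to $J(\xi,\xi)$: count solutions of $\No z=t$ on the line $\tr z=1$ (equivalently, the polynomial $X^2-X+t$), which gives $\sum_{\tr z=1}\xi(\No z)=-J(\xi,\xi)$ when $\xi\neq1$. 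This uses the fact that $\sum_t\xi(t)$ vanishes. Substituting this relation into the fixed rotation should make the determinant component cancel identically and make the Steinberg component equal $\sqrt q\,J(\xi,\xi)/\bigl((q-1)\sqrt{q+1}\bigr)$, as in \cref{eq: appendix GL2 relative fixed coefficient}. The hypothesis $\alpha\ne b$ is exactly $\xi\ne1$. Tracking signs and the even-characteristic boundary (where there is no $t=1/4$ line and a unique square root) is the main bookkeeping risk.
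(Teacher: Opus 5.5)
Your treatment of the special column and of the generic split and nonsplit rows follows the paper's argument. It uses the same reparametrization of $t$ by the ordered roots of $X^2-X+t$, reads the two terms in each line of $B_\gamma$ as the two root orderings in the same way, and uses the $t=1/4$ line in the same way to supply the repeated root $x=1/2$.

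The genuine difference is how you prove the fixed-row identity $\sum_{\tr z=1}(\chi_\xi\circ\No)(z)=-J(\xi,\xi)$. The paper derives it from Gauss sums, via three ingredients:
\begin{itemize}
\item the trace-fiber splitting $g_{q^2}(\Theta)=S_0(\Theta)+g_q(\xi^2)S_1(\Theta)$ for $\Theta=\chi_\xi\circ\No$, where $S_0=0$ because $\Theta$ restricts to $\chi_{\xi^2}$ on $\F_q^*$;
\item the Davenport--Hasse lift $g_{q^2}(\chi_\xi\circ\No)=-g_q(\xi)^2$;
\item the Gauss--Jacobi relation $g_q(\xi)^2=J(\xi,\xi)g_q(\xi^2)$.
\end{itemize}
This route requires $\xi^2\neq1$. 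Your point count is more elementary. For every $t\in\F_q^*$, the number of $x\in\F_q$ with $x(1-x)=t$ plus the number of $z\in\F_{q^2}$ with $\tr z=1$ and $\No z=t$ equals $2$. The two contributions are either two distinct split roots, or one Frobenius pair, or, at $t=1/4$ in odd characteristic, $x=1/2$ and $z=1/2$ once each. Hence $J(\xi,\xi)+\sum_{\tr z=1}\xi(\No z)=2\sum_{t\in\F_q^*}\xi(t)=0$. This needs only $\xi\neq1$ and no lifting theorem. It also gives at once the quadratic-$\xi$ value $\sum_{\tr z=1}(\xi\circ\No)(z)=\xi(-1)$, which the paper asserts separately in the exceptional-column lemma. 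In exchange, the paper's route reuses the trace-fiber and Gauss--Jacobi relations it needs anyway to match $\T_\gamma D_\gamma$ with $A_\gamma$.

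There are three corrections to make.
\begin{enumerate}
\item In your fixed-row paragraph you dropped the factor $1/\sqrt2$ that your own generic computation produced. Before the rotation the amplitudes are $J(\xi,\xi)/(\sqrt2(q-1))$ and $-\sum_{\tr z=1}\xi(\No z)/\sqrt{2(q-1)(q+1)}$. With the values as you wrote them, the determinant component still cancels, but the Steinberg component comes out $\sqrt2$ times too large.
\item Since $\xi^2=\alpha^{-2}\gamma=\alpha^{-1}b$, the hypothesis $\alpha\neq b$ is equivalent to $\xi^2\neq1$, not to $\xi\neq1$. This is harmless for your argument, because it implies $\xi\neq1$.
\item In odd characteristic the fiber $\{\tr z=1\}$ contains $z=1/2\in\F_q$. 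Its term in the cuspidal coefficient comes from the nonsplit component $-\chi_\gamma(1/2)\ket{\mathrm{ns},1}/\sqrt2$ of $B_\gamma\ket{1/4}$, using $\theta(1/2)=\chi_\gamma(1/2)$ and $\No(1/2)=1/4$. State this explicitly, since both the generic cuspidal formula and your count at $t=1/4$ depend on it.
\end{enumerate}
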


\begin{proof}
    The proof consists of a direct calculations of all the cases mentioned in the lemma.

	First consider the special input $\ket{c_\infty}$.  The inverse multiplicative Fourier transform fixes this state and identifies it with $\ket0$, so that
	\begin{equation*}
	B_\gamma\Fou_\times^\dagger\ket{c_\infty}
	=\frac{\ket{\mathrm{sp},1}+\ket{\mathrm{ns},1}}{\sqrt2}.
	\end{equation*}
	After the two  Fourier transforms, every split label has amplitude $1/\sqrt{2(q-1)}$ and every nonsplit label has amplitude $1/\sqrt{2(q+1)}$.  For a nonfixed split pair, the folding therefore gives
	\begin{equation*}
	\frac1{\sqrt2}
	\left(\frac1{\sqrt{2(q-1)}}+
	\frac1{\sqrt{2(q-1)}}\right)
	=\frac1{\sqrt{q-1}},
	\end{equation*}
	and the same calculation on a nonfixed nonsplit pair gives a factor of $1/\sqrt{q+1}$.  If $\delta^2=\gamma$, the coefficient of $\ket{\St_\delta}$ after the fixed rotation is
	\begin{equation*}
	\frac1{\sqrt{2(q-1)}}\sqrt{\frac{q+1}{2q}}
	+\frac1{\sqrt{2(q+1)}}\sqrt{\frac{q-1}{2q}}
	=\sqrt{\frac q{(q-1)(q+1)}},
	\end{equation*}
	while the two contributions to $\ket{\det_\delta}$ cancel. This proves \cref{eq: appendix GL2 relative identity column}.

	We next calculate the coefficients for $\ket{c_\alpha}$.  The Fourier transform gives
	\begin{equation}
		\Fou_\times^\dagger\ket{c_\alpha}
		=\frac1{\sqrt{q-1}}
		\sum_{t\in\F_q^*}\chi_{\alpha^{-1}}(t)\ket t
		\label{eq: GL2 inverse multiplicative input expansion}
	\end{equation}
	and we first calculate the split part of the state. For split values $t$ having distinct roots $x,y\in\F_q^*$, we have $x+y=1$ and $xy=t$.  Applying $B_\gamma$ and then the split Fourier transform to all these terms at once gives
	\begin{align*}
		\frac1{\sqrt2(q-1)}
		\sum_{\substack{\{x,y\}\subset\F_q^*\\x+y=1,\;x\ne y}}
		\sum_{\kappa\in X_q}
		\chi_{\alpha^{-1}}(xy)
		\bigl(&\chi_\gamma(y)\chi_\kappa(x/y)
		+\chi_\gamma(x)\chi_\kappa(y/x)\bigr)
		\ket{\mathrm{sp},\kappa}.
	\end{align*}
	The first sum is over unordered root pairs.  The two terms in its summand satisfy
	\begin{align*}
		\chi_{\alpha^{-1}}(xy)\chi_\gamma(y)\chi_\kappa(x/y)
		&=\chi_{\alpha^{-1}\kappa}(x)
		\chi_{\alpha^{-1}\gamma\kappa^{-1}}(y),\\
		\chi_{\alpha^{-1}}(xy)\chi_\gamma(x)\chi_\kappa(y/x)
		&=\chi_{\alpha^{-1}\kappa}(y)
		\chi_{\alpha^{-1}\gamma\kappa^{-1}}(x).
	\end{align*}
	Because $y=1-x$, the two terms for the unordered pair $\{x,y\}$ are exactly the summands indexed by the two ordered roots $x$ and $1-x$.

	When $q$ is odd, the remaining split input is $t=1/4$, with repeated root $x=1/2$.  Its contribution after the split Fourier transform is the full state
	\begin{equation*}
	\frac{\chi_{\alpha^{-1}}(1/4)\chi_\gamma(1/2)}
	{\sqrt2(q-1)}
	\sum_{\kappa\in X_q}\ket{\mathrm{sp},\kappa}.
	\end{equation*}
	For every $\kappa\in X_q$, the coefficient appearing here satisfies
	\begin{equation*}
	\chi_{\alpha^{-1}}(1/4)\chi_\gamma(1/2)
	=\chi_{\alpha^{-1}\kappa}(1/2)
	\chi_{\alpha^{-1}\gamma\kappa^{-1}}(1/2),
	\end{equation*}
	so this is precisely the missing $x=1/2$ term.  In even characteristic there is no repeated root.  Combining the distinct-root terms with this boundary term gives, in every characteristic, the complete split state before $R_{\mathrm{fix},\gamma}$ as
	\begin{align}
		&\frac1{\sqrt2(q-1)}
		\sum_{\kappa\in X_q}
		\sum_{x\in\F_q\setminus\{0,1\}}
		\chi_{\alpha^{-1}\kappa}(x)
		\chi_{\alpha^{-1}\gamma\kappa^{-1}}(1-x)
		\ket{\mathrm{sp},\kappa}
		\nonumber\\
		&\qquad=
		\frac1{\sqrt2(q-1)}
		\sum_{\kappa\in X_q}
		J(\alpha^{-1}\kappa,
		\alpha^{-1}\gamma\kappa^{-1})
		\ket{\mathrm{sp},\kappa}.
		\label{eq: GL2 full split state before folding}
	\end{align}
	The values $x=0,1$ are absent because both correspond to $t=0$, which does not occur in \cref{eq: GL2 inverse multiplicative input expansion}.

	We now group the complete sum under the involution $\kappa\mapsto\gamma\kappa^{-1}$.  Since $J$ is symmetric, \cref{eq: GL2 full split state before folding} becomes
	\begin{align}
		\frac1{\sqrt2(q-1)}\bigg(&
		\sum_{\{\delta,\eta\}\in\mathcal P_\gamma}
		J(\alpha^{-1}\delta,\alpha^{-1}\eta)
		\bigl(\ket{\mathrm{sp},\delta}
		+\ket{\mathrm{sp},\eta}\bigr)
		\nonumber\\
		&+\sum_{\delta\in\mathcal S_\gamma}
		J(\alpha^{-1}\delta,\alpha^{-1}\delta)
		\ket{\mathrm{sp},\delta}\bigg).
		\label{eq: GL2 split state grouped into orbits}
	\end{align}
	Applying $R_{\mathrm{fix},\gamma}$ to the first sum gives
	\begin{equation*}
	\frac1{q-1}
	\sum_{\{\delta,\eta\}\in\mathcal P_\gamma}
	J(\alpha^{-1}\delta,\alpha^{-1}\eta)
	\ket{I_{\delta,\eta}}.
	\end{equation*}
	This proves \cref{eq: appendix GL2 relative principal coefficient} for all non-fixed split orbits simultaneously. We also obtain that the antisymmetric split amplitudes are zero.

	Next we calculate the nonsplit part of the state, which follows the same pattern as the split case.
	For every nonsplit value $t$, the roots of $X^2-X+t$ form an unordered Frobenius pair $\{z,z^q\}$ in $\F_{q^2}\setminus\F_q$, with
	\begin{equation*}
	z+z^q=1,
	\qquad
	\No(z)=t.
	\end{equation*}
	Applying $B_\gamma$ and then $\Fou_{\ker\No}$ to all irreducible terms in \cref{eq: GL2 inverse multiplicative input expansion} gives
	\begin{align*}
		-\frac1{\sqrt{2(q-1)(q+1)}}
		\sum_{\substack{\{z,z^q\}\subset\F_{q^2}\setminus\F_q\\
				\tr(z)=1}}
		\sum_{\nu\in\widehat{\ker\No}}
		\chi_{\alpha^{-1}}(\No z)
		\bigl(&\theta_0(z)\nu(z/z^q)
		+\theta_0(z^q)\nu(z^q/z)\bigr)
		\ket{\mathrm{ns},\theta_0\nu}.
	\end{align*}
	The first sum is over unordered Frobenius pairs.  By the definition of the nonsplit characters, if $\theta=\theta_0\nu$ then
	\begin{equation*}
	\theta_0(z)\nu(z/z^q)=\theta(z),
	\qquad
	\theta_0(z^q)\nu(z^q/z)=\theta(z^q).
	\end{equation*}
	Thus the two terms belonging to $\{z,z^q\}$ are exactly the summands indexed by its two ordered roots.

	When $q$ is odd, the nonsplit component of the boundary input $t=1/4$ contributes the complete state
	\begin{equation*}
	-\frac{\chi_{\alpha^{-1}}(1/4)\chi_\gamma(1/2)}
	{\sqrt{2(q-1)(q+1)}}
	\sum_{\nu\in\widehat{\ker\No}}
	\ket{\mathrm{ns},\theta_0\nu}.
	\end{equation*}
	Every character $\theta=\theta_0\nu$ restricts to $\chi_\gamma$ on $\F_q^*$.  Since $\No(1/2)=1/4$, its coefficient can therefore be written as
	\begin{equation*}
	\chi_{\alpha^{-1}}(1/4)\chi_\gamma(1/2)
	=\theta(1/2)\chi_{\alpha^{-1}}(\No(1/2)).
	\end{equation*}
	This is precisely the missing ordered-root term $z=1/2$.  There is no Frobenius-fixed element of trace one in even characteristic.  Consequently, in every characteristic the complete nonsplit state before $R_{\mathrm{fix},\gamma}$ is
	\begin{align}
		-\frac1{\sqrt{2(q-1)(q+1)}}
		\sum_{\substack{\theta\in X_{q^2}\\
				\theta|_{\F_q^*}=\chi_\gamma}}
		\left(
		\sum_{\tr(z)=1}
		\theta(z)\chi_{\alpha^{-1}}(\No z)
		\right)
		\ket{\mathrm{ns},\theta}.
		\label{eq: GL2 full nonsplit state before folding}
	\end{align}

	We group this sum under the Frobenius involution $\theta\mapsto\theta^q$.  The substitution $z\mapsto z^q$ shows that
	\begin{equation*}
	\sum_{\tr(z)=1}\theta^q(z)\chi_{\alpha^{-1}}(\No z)
	=\sum_{\tr(z)=1}\theta(z)\chi_{\alpha^{-1}}(\No z).
	\end{equation*}
	The nonfixed orbits are the elements of $\mathcal C_\gamma$.  The fixed characters are $\chi_\delta\circ\No$ for $\delta\in\mathcal S_\gamma$.  Therefore \cref{eq: GL2 full nonsplit state before folding} becomes
	\begin{align}
		-\frac1{\sqrt{2(q-1)(q+1)}}\bigg(&
		\sum_{\{\theta,\theta^q\}\in\mathcal C_\gamma}
		\left(\sum_{\tr(z)=1}
		\theta(z)\chi_{\alpha^{-1}}(\No z)\right)
		\bigl(\ket{\mathrm{ns},\theta}
		+\ket{\mathrm{ns},\theta^q}\bigr)
		\nonumber\\
		&+\sum_{\delta\in\mathcal S_\gamma}
		\left(\sum_{\tr(z)=1}
		(\chi_{\alpha^{-1}\delta}\circ\No)(z)\right)
		\ket{\mathrm{ns},\chi_\delta\circ\No}\bigg).
		\label{eq: GL2 nonsplit state grouped into orbits}
	\end{align}
	Applying $R_{\mathrm{fix},\gamma}$ to the first sum gives
	\begin{equation*}
	-\frac1{\sqrt{(q-1)(q+1)}}
	\sum_{\{\theta,\theta^q\}\in\mathcal C_\gamma}
	\left(\sum_{\tr(z)=1}
	\theta(z)\chi_{\alpha^{-1}}(\No z)\right)
	\ket{\pi_\theta}.
	\end{equation*}
	This proves \cref{eq: appendix GL2 relative cuspidal coefficient} for all nonfixed nonsplit orbits simultaneously.  It also shows directly that the antisymmetric nonsplit amplitudes are zero.
	Finally suppose that $\delta^2=\gamma$ and put $\xi=\alpha^{-1}\delta$.  Before the fixed rotation, the split and nonsplit amplitudes obtained from \cref{eq: GL2 split state grouped into orbits,eq: GL2 nonsplit state grouped into orbits} are
	\begin{equation*}
	\frac{J(\xi,\xi)}{\sqrt2(q-1)} \text{ and}
	\qquad
	-\frac1{\sqrt{2(q-1)(q+1)}}
	\sum_{\tr(z)=1}(\chi_\xi\circ\No)(z),
	\end{equation*}
	respectively with $\xi^2=\alpha^{-1}b\ne1$.
	By \cref{lem: Gauss sum property,lem: Gauss sum trace fibers,lem: degree two Davenport Hasse} it follows that
	\begin{equation*}
	\sum_{\tr(z)=1}(\chi_\xi\circ\No)(z)=-J(\xi,\xi).
	\end{equation*}
	Substitution into \cref{eq: GL2 fixed torus rotation} now gives
	\begin{align*}
		\bra{\St_\delta}\T_\gamma\ket{c_\alpha}
		&=\frac{\sqrt q}{(q-1)\sqrt{q+1}}J(\xi,\xi),\\
		\bra{\det_\delta}\T_\gamma\ket{c_\alpha}&=0.
	\end{align*}
	This is \cref{eq: appendix GL2 relative fixed coefficient}.  The sign in \cref{lem: degree two Davenport Hasse} uses the convention in \cref{def: Gauss sum}.
\end{proof}

\begin{lemma}[Implementation of $B_\gamma$, \protect{\cref{lem: B implementation}}]
	The unitary extension of $B_\gamma$ can be implemented to operator-norm error at most $\varepsilon$ with $\operatorname{poly}(\log q,\log(1/\varepsilon))$ gates.
\end{lemma}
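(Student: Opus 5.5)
We implement $B_\gamma$ as a short sequence of reversible field computations followed by a controlled superposition step on two candidate root labels. On input $\ket t_+$ (with $\gamma$ as control), we first compute coherently into clean ancillas the discriminant $\Delta=1-4t$ for odd $q$, or the absolute trace $\tr_q^2(t)$ of the normalized equation for even $q$. From this we get the branch flag $\tau\in\{\mathrm{sp},\mathrm{ns}\}$ and the boundary flags $t=0$ and $t=1/4$. Testing whether $\Delta$ is a square is an Euler-criterion exponentiation $\Delta^{(q-1)/2}$. In even characteristic, $X^2-X+t$ splits if and only if $\tr_q^2(t)=0$. Both tests are $\operatorname{poly}(\log q)$ finite-field arithmetic.

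Next we compute the roots. In odd characteristic we take $x=(1+\sqrt\Delta)/2$ with $\sqrt\Delta\in\F_q$ on the split branch and $\sqrt\Delta\in\F_{q^2}$ on the nonsplit branch. In even characteristic we solve the Artin--Schreier equation by a half-trace type formula over $\F_q$ or $\F_{q^2}$. Square roots in finite fields are computed by Tonelli--Shanks or Cipolla; these are randomized classically, so we use a deterministic reversible variant that takes a fixed quadratic nonresidue as a hard-wired constant. This is polynomial in $\log q$. We make the root canonical by always taking the root with the smaller encoding, call it $x$ (or $z$), and put $y=1-x$ (or $z^q$).

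The superposition step is the following. Apply a Hadamard to a fresh bit $s$. Controlled on $s$, swap the roles of the two roots, so that the register holds $\ket{x/y}$ or $\ket{y/x}$ (respectively $z/z^q$ or $z^q/z$, mapped into $\ker\No$ and then to exponent coordinates by the discrete logarithm of \cref{lem: add and mult of Fq} over $\F_{q^2}$). Then apply the phases $\chi_\gamma(\text{other root})$ or $-\theta_0(\text{root})$ via \cref{lem: controlled character phase}; the $\F_{q^2}$ analogue follows by the same kickback argument. The boundary cases $t=0$ and $t=1/4$ are handled by a separate flagged branch that prepares $(\ket{\mathrm{sp},1}\pm\ket{\mathrm{ns},1})/\sqrt2$ with the phase $\chi_\gamma(1/2)$.

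\textbf{Main obstacle: uncomputation.} The garbage registers $t$, the roots, and the bit $s$ must all be erased. The output label $r=x/y$ determines the unordered root pair: from $x/y=r$ and $x+y=1$ we get $x=r/(1+r)$, and likewise $z=r/(1+r)$ in $\F_{q^2}$. So we can recompute $t=xy$ from $r$ and erase $t$. The bit $s$ is then recovered by comparing $x$ with the canonical ordering, and erased. We will verify that $r\ne-1$ on every generic branch: $r=-1$ would force $x+y=0\neq1$.

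Since all garbage is removed, the isometry of \cref{lem: GL2 quadratic router isometry} extends to a unitary on the full register. The total error is then a constant number of approximate discrete-log and phase calls, each with precision $\varepsilon/c$, and the overall size is $\operatorname{poly}(\log q,\log(1/\varepsilon))$.
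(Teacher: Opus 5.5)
Your proposal is correct and follows essentially the same route as the paper. Both compute a canonical root pair, create the two orderings with a Hadamard and a controlled swap, write the ratio $r=u/v$, attach the phase $\chi_\gamma(v)$ or $-\theta_0(u)$, clean up via $u=r/(1+r)$ and $t=r/(1+r)^2$, and treat the boundary inputs by a separate two-dimensional map.

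The only real difference is root finding. The paper gets both the split test and the square root from a single discrete logarithm $\Delta=g^e$, using the parity of $e$ and then $g^{e/2}$ or $h^{(q+1)e/2}$. You use Euler's criterion and deterministic Tonelli--Shanks. Your version makes root finding exact, but you still need discrete logarithms for the character phases.
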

\begin{proof}
    We claim that $B_\gamma$ is implemented by the circuit in \cref{fig: GL2 branching gate ancillas}.
    In short, it consists of root finding, a Hadamard gate, swapping, taking ratios of roots, adding phases and uncomputing. We explain all the steps, why they combine to the branching map and why they are efficiently implementable.

	We first treat the case in which $X^2-X+t$ has two distinct, nonzero roots and do the boundary cases ($t=0,1/4$ for odd $q$, $t=0$ for even $q$) later.
	Suppose first that $q$ is odd.  Compute the discriminant $\Delta=1-4t$. As $\Delta$ is non-zero, write $\Delta=g^e$ using the fixed generator $g$ of $\F_q^*$.  Reversible discrete logarithm computes $e$ in a work register.  The low bit of $e$  gives its parity, determines wheter $\Delta$ is a square and thus marks the split/nonsplit flag.

	On the split branch, write $e=2m$ and compute $s=g^m$.  Then $s^2=\Delta$, and a fixed ordering of the two roots is
	\begin{equation*}
	x=\frac{1+s}{2},
	\qquad
	y=\frac{1-s}{2}.
	\end{equation*}
	On the nonsplit branch, $e$ is odd.  Recall that $g=h^{q+1}$ for the fixed generator $h$ of $\F_{q^2}^*$.  Hence $ \Delta=h^{(q+1)e}$ ,and $s:=h^{(q+1)e/2}$ satisfies $s^2=\Delta$.  Its Frobenius conjugate is $s^q=-s$, and
	\begin{equation*}
	z=\frac{1+s}{2},
	\qquad
	z^q=\frac{1-s}{2}
	\end{equation*}
	are the two roots in $\F_{q^2}\setminus\F_q$. This root finding process uses exponentiation, discrete logarithm, and finite field arithmetic.

	Now suppose that $q$ is even. Then finding a root is the same as solving $u^2+u=t$.  The map
	\begin{equation*}
	L:u\longmapsto u^2+u
	\end{equation*}
	is $\F_2$-linear on a fixed polynomial-basis representation of the field.
	Its image in $\F_q$ is precisely the absolute-trace-zero subspace.  Thus
	\begin{equation*}
	\operatorname{Tr}_{\F_q/\F_2}(t)=0
	\end{equation*}
	is exactly the split condition.  On this branch, a fixed linear right inverse of $L$ computes one root $x\in\F_q$, and the other root is $y=x+1$.  If the absolute trace is one, there is no root in $\F_q$.
	However,
	\begin{equation*}
	\operatorname{Tr}_{\F_{q^2}/\F_2}(t)
	=2\operatorname{Tr}_{\F_q/\F_2}(t)=0,
	\end{equation*}
	so a fixed linear right inverse of $L$ on $\F_{q^2}$ computes a root $z\in\F_{q^2}$.  This root cannot lie in $\F_q$; its conjugate $z^q$ is the other root.  Fixed linear maps over $\F_2$ have reversible circuits of size polynomial in $\log q$.

	In either characteristic we now have a branch flag and a fixed ordering of two distinct roots.  Applying a Hadamard on an ancilla and conditionally swapping the two root registers gives
	\begin{equation*}
	\frac{\ket0\ket{x}\ket{y}+
		\ket1\ket{y}\ket{x}}{\sqrt2},
	\end{equation*}
	for the split branch. On the nonsplit branch it gives the same expression with $(x,y)$ replaced by $(z,z^q)$.  For each ordered pair $(u,v)$, compute the ratio
	\begin{equation*}
	r=u/v
	\end{equation*}
	in an extra register.
	The roots are nonzero because $t\ne0$.  On the split branch $r\in\F_q^*$; on the nonsplit branch,
	\begin{equation*}
	r^q=\frac{z^q}{z}=r^{-1},
	\end{equation*}
	so $r\in\ker\No$.

	It remains to add the phases in the definition of $B_\gamma$.  On the split branch, apply $\chi_\gamma$ to the second root $v$.  The two ordered terms then become
	\begin{equation*}
	\frac{\chi_\gamma(y)\ket{\mathrm{sp},x/y}
		+\chi_\gamma(x)\ket{\mathrm{sp},y/x}}{\sqrt2}.
	\end{equation*}
	On the nonsplit branch, apply $\theta_0$ to the first root $u$ and also apply the constant phase $-1$.  This gives
	\begin{equation*}
	-\frac{\theta_0(z)\ket{\mathrm{ns},z/z^q}
		+\theta_0(z^q)\ket{\mathrm{ns},z^q/z}}{\sqrt2}.
	\end{equation*}
	These character phases are implemented by \cref{lem: controlled character phase}. Note that the nonsplit branch works over $\F_{q^2}$.
	The exponent label of $\theta_0$ is computed reversibly from the label of $\gamma$ using the lift in the definition of $B_\gamma$.

	Lastly, we explain why all work registers can be returned to zero.  For an ordered pair of nonboundary roots $(u,v)$, we have $u+v=1$ and $r=u/v$.
	Therefore
	\begin{equation}
		u=\frac r{1+r},
		\qquad
		v=\frac1{1+r},
		\qquad
		t=uv=\frac r{(1+r)^2}.
		\label{eq: GL2 router reconstruction}
	\end{equation}
	The denominator $1+r$ is nonzero because $u+v=1$.  Thus the output branch and ratio determine the oriented roots and the original input $t$. Using field arithmetic the ordered roots pairs can be returned to zero and only the ratio $r$ remains. Similarly the square, trace, and branch test of $t$ can be undone.

	It remains to implement the boundary columns.  For odd $q$, set the ratio register to $1$ and apply the two-dimensional transformation
	\begin{align*}
		\ket0&\longmapsto
		\frac{\ket{\mathrm{sp},1}+\ket{\mathrm{ns},1}}{\sqrt2},\\
		\ket{1/4}&\longmapsto
		\frac{\chi_\gamma(1/2)}{\sqrt2}
		\bigl(\ket{\mathrm{sp},1}-\ket{\mathrm{ns},1}\bigr).
	\end{align*}
	The two output vectors are orthonormal.  For even $q$, only the first column is required; complete it arbitrarily on the unused orthogonal boundary state.  These are constant-dimensional operations.

	The whole circuit uses a constant number of reversible field operations, discrete logarithms, exponentiations, character phases, and fixed linear maps.  Assigning error at most a constant fraction of $\varepsilon$ to each approximate subroutine gives total operator-norm error at most $\varepsilon$, with $\operatorname{poly}(\log q,\log(1/\varepsilon))$ gates.
\end{proof}

\begin{figure}[H]
\centering
{\scriptsize
\begin{quantikz}[row sep=0.24cm,column sep=0.34cm]
\lstick{$\ket\gamma$}
 & \qw
 & \qw
 & \qw
 & \qw
 & \ctrl{2}
 & \qw
 & \rstick{$\ket\gamma$}
\\
\lstick{$\ket t$}
 & \gate[6]{U_{\mathrm{roots}}}
 & \qw
 & \qw
 & \qw
 & \qw
 & \gate[6]{U_{\mathrm{clean}}}
 & \rstick{$\ket0$}
\\
\lstick{$\ket0_\tau$}
 &
 & \qw
 & \qw
 & \qw
 & \gate[4]{\Phi_\gamma}
 &
 & \rstick{$\ket{\tau(t)}$}
\\
\lstick{$\ket0_{u,v}$}
 &
 & \qw
 & \gate{\operatorname{swap}}
 & \gate[3]{U_r}
 &
 &
 & \rstick{$\ket0_{u,v}$}
\\
\lstick{$\ket0_o$}
 &
 & \gate{H}
 & \ctrl{-1}
 &
 &
 &
 & \rstick{$\ket0_o$}
\\
\lstick{$\ket0_r$}
 &
 & \qw
 & \qw
 &
 &
 &
 & \rstick{$\ket{r(t,o)}$}
\\
\lstick{$\ket0_{\mathsf a_B}$}
 &
 & \qw
 & \qw
 & \qw
 & \qw
 &
 & \rstick{$\ket0_{\mathsf a_B}$}
\end{quantikz}
}
\caption{Ancilla-level circuit for the nonboundary columns of $B_\gamma$.
The gate $U_{\mathrm{roots}}$ computes the split/nonsplit branch (the $\tau$-register), a canonical root pair $(u,v)$, and the square-root or trace data used in that computation.
The order qubit $o$ creates the two root orders, $U_r$ writes $r=u/v$, and $\Phi_\gamma$ applies $\chi_\gamma(v)$ on the split branch or $-\theta_0(u)$ on the nonsplit branch.
Finally, $U_{\mathrm{clean}}$ uses $t=r/(1+r)^2$ to erase $t$, the roots, the order qubit, and the remaining work.
The lowest register are ancillas for the root computations.
The boundary columns $t=0$ and, for odd $q$, $t=1/4$ are handled separately by the two-dimensional boundary operation in the proof.}
\label{fig: GL2 branching gate ancillas}
\end{figure}

\begin{lemma}[Preparation of an exceptional target column, \protect{\cref{lem: GL2 exceptional target preparation}}]
	If $\alpha^2=\gamma$, then $A_\gamma\ket{c_\alpha}$ can be prepared to error at most $\varepsilon$ with $\operatorname{poly}(\log q,\log(1/\varepsilon))$ gates.
\end{lemma}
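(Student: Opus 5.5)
The plan is to reuse the torus circuit $\T_\gamma$ rather than build a new preparation routine. By \cref{lem: GL2 exceptional columns}, for $\alpha^2=\gamma$ we have
\begin{equation*}
 \T_\gamma\ket{c_\alpha}
 =-\frac{\chi_\alpha(-1)}{\sqrt q}A_\gamma\ket{c_\alpha}
  +\sqrt{\frac{q-1}{q}}\ket{\det_\alpha},
\end{equation*}
and the two summands are orthogonal, since $A_\gamma\ket{c_\alpha}$ has no determinant component. We first compute $\alpha$ from $\gamma$ by cyclic exponent arithmetic; for odd $q$ we do this separately for each of the two solutions. We then prepare $\ket{c_\alpha}$ and apply the efficient circuit for $\T_\gamma$, which is efficient by the implementation lemmas for $\Fou_\times^\dagger$, $B_\gamma$, $\Fou_\times\oplus\Fou_{\ker\No}$ and $R_{\mathrm{fix},\gamma}$. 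This yields a known superposition of the target column and the single basis state $\ket{\det_\alpha}$.

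To remove the $\ket{\det_\alpha}$ component deterministically, we use one step of exact (fixed-point) amplitude amplification. The amplitude on the good subspace is $1/\sqrt q$, which is known and not of the special form, so we instead use the following variant. Adjoin an ancilla qubit and rotate it, controlled on the output being $\ket{\det_\alpha}$ (a reversible label test), so that the overall good amplitude becomes exactly $1/2$. Then a single Grover iterate, built from $\T_\gamma$, its inverse, the reflection about the flagged subspace and the reflection about $\ket{c_\alpha}\ket0$, maps the state exactly onto the good subspace. After one iterate we obtain $-\chi_\alpha(-1)$ times $A_\gamma\ket{c_\alpha}$ with the ancilla clean, or possibly with a known ancilla state, which we uncompute. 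The leftover phase $\pm\chi_\alpha(-1)$ is global for a fixed $\alpha$ and is removed by character arithmetic, since $\chi_\alpha(-1)$ is computable from the exponent label.

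An alternative, which may be simpler, is to note that the circuit only needs $A_\gamma\ket{c_\alpha}$ inside the reflection of $R_{\mathrm{exc},\gamma}$. One can therefore prepare the normal vector there directly. Up to a two-dimensional rotation, that vector is spanned by $\T_\gamma\ket{c_\alpha}$ and $\ket{\det_\alpha}$, and both of these are efficiently preparable. We would fall back on this route if the amplification bookkeeping becomes messy.

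The main obstacle is making the amplification exact and clean. This requires checking that the reflections are implementable coherently in $\gamma$, that the control detecting $\ket{\det_\alpha}$ is reversible, and that the rotation angles, which depend on $q$, are synthesized to precision $\varepsilon$ with $\operatorname{poly}(\log q,\log(1/\varepsilon))$ gates. The error then follows by splitting $\varepsilon$ over a constant number of approximate components.
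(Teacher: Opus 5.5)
Your proposal has a genuine gap: both of your routes cost $\Theta(\sqrt q)$ calls rather than $\operatorname{poly}(\log q)$.

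\textbf{The amplification route.} After $\T_\gamma$, the target $A_\gamma\ket{c_\alpha}$ has amplitude of modulus $1/\sqrt q$, which is below $1/2$ for every $q\geq 5$. Adjoining an ancilla and rotating it can only \emph{lower} the amplitude of a good subspace whose normalized good state is the target. Suppose you instead reach amplitude $1/2$ by adding part of the determinant branch to the good subspace, say $\ket{\det_\alpha}\ket{1}$. Then the amplified state is a superposition of $A_\gamma\ket{c_\alpha}\ket{0}$ and $\ket{\det_\alpha}\ket{1}$. The determinant component survives, now entangled with the ancilla, and cannot be uncomputed. So a single Grover iterate works only for $q\leq 4$. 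In general, exact amplitude amplification needs about $\pi/(4\arcsin(1/\sqrt q))=\Theta(\sqrt q)$ iterates, each calling $\T_\gamma$ and $\T_\gamma^\dagger$. That gives $\Theta(\sqrt q)\operatorname{poly}(\log q)$ gates, which is exponential in $\log q$. This is intrinsic, not bookkeeping. In the plane spanned by $\ket{\det_\alpha}$ and $A_\gamma\ket{c_\alpha}$, the vector $\T_\gamma\ket{c_\alpha}$ makes angle $\arcsin(1/\sqrt q)$ with $\ket{\det_\alpha}$. Each generalized reflection about $\T_\gamma\ket{c_\alpha}$ changes a state's angle to $\ket{\det_\alpha}$ by only $O(1/\sqrt q)$, and reflections about $\ket{\det_\alpha}$ do not change it at all. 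Hence $\Omega(\sqrt q)$ calls are needed to reach the orthogonal vector $A_\gamma\ket{c_\alpha}$.

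\textbf{The fallback route.} It has the same obstruction, and it also sidesteps the statement instead of proving it, since the paper uses exactly this lemma to build the reflection in $R_{\mathrm{exc},\gamma}$. The states $\T_\gamma\ket{c_\alpha}$ and $\ket{\det_\alpha}$ have overlap $\sqrt{(q-1)/q}$, so they are nearly parallel. Lying in the span of two cheaply preparable states does not make a vector cheaply preparable. Any orthonormal basis of that plane, once available, yields $A_\gamma\ket{c_\alpha}$ by a known rotation, so constructing one is the original problem. Writing the normal vector in terms of $\T_\gamma\ket{c_\alpha}$ and $\ket{\det_\alpha}$ needs nearly cancelling coefficients of size $\Theta(\sqrt q)$. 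A linear-combination-of-unitaries preparation therefore succeeds only with probability $\Theta(1/q)$.

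\textbf{The missing idea.} In the exceptional case the column is explicit and has constant modulus on each family, so it can be prepared directly.
With $b=\alpha$, the character $\Theta=\theta\cdot(\chi_{\alpha^{-1}}\circ\No)$ is nontrivial, so $|g_{q^2}(\Theta)|=q$.
On principal labels, $\alpha^{-1}\eta=(\alpha^{-1}\delta)^{-1}$, so with $\xi=\alpha^{-1}\delta$ the product $g_q(\xi)g_q(\xi^{-1})=q\chi_\xi(-1)$ is $q$ times a sign.
There are at most two Steinberg labels.

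The paper therefore proceeds as follows:
\begin{itemize}
\item It prepares a three-valued branch register with amplitudes $\sqrt{N_{\mathrm C}},\sqrt{N_{\mathrm P}},\sqrt{N_{\mathrm S}}$, which depend only on $q$ and $s_\gamma$.
\item It prepares uniform superpositions over $\mathcal C_\gamma$ and $\mathcal P_\gamma$, using a cyclic QFT on the nonfixed exponents and then folding each involution pair with an order bit and a Hadamard.
\item It attaches the phases using the coherent Gauss phase over $\F_{q^2}$ and the character phases $\chi_\alpha(-1)$ and $\chi_{\alpha^{-1}\delta}(-1)$.
\item It prepares the Steinberg part with one rotation.
\end{itemize}
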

\begin{proof}
	We construct the state from the three sums in \cref{eq: corrected A nontrivial column}.  Since $\alpha^2=\gamma$, the label $b=\alpha^{-1}\gamma$ occurring there is $b=\alpha$.

	By the known norms of Gauss sums, the squared norms of the cuspidal, principal, and Steinberg parts are
	\begin{equation}
		N_{\mathrm C}=|\mathcal C_\gamma|
		\frac q{(q-1)(q+1)}, \qquad
		N_{\mathrm P}=|\mathcal P_\gamma|
		\frac q{(q-1)^2}, \qquad
		N_{\mathrm S}=  \frac{1+(s_\gamma-1)q^2}{(q-1)^2(q+1)},
		\label{eq: GL2 exceptional branch norms}
	\end{equation}
	where $s_\gamma$ is the number of roots of $\gamma$ in $\F_q$. The three terms sum to one. We prepare a three-valued branch register with amplitudes $\sqrt{N_{\mathrm C}},\sqrt{N_{\mathrm P}},\sqrt{N_{\mathrm S}}$.
	These numbers depend only on $q$ and $s_\gamma$ and can be computed and synthesized to precision $O(\varepsilon)$ with $\operatorname{poly}(\log q,\log(1/\varepsilon))$ gates.

	We next prepare the label superpositions within the cuspidal and principal branches.  Both arise from a cyclic set with an involution.  For the principal branch the cyclic set has order $q-1$ and the involution is
	\begin{equation*}
	\delta\longmapsto\gamma\delta^{-1}.
	\end{equation*}
	For the cuspidal branch the character coordinate has order $q+1$ and the involution is Frobenius, expressed in exponent labels by \cref{eq: GL2 affine Frobenius coordinate}.  In both cases there are $s_\gamma$ fixed exponents.

	For a cyclic set of order $L$, compute these fixed exponents. There are zero or two when $q$ is odd and exactly one when $q$ is even.
	If $L-s_\gamma>0$, use a cyclic QFT to prepare the uniform state on $\Z/(L-s_\gamma)\Z$.  Map its $j$th element reversibly to the $j$th exponent in $\Z/L\Z$ that is not fixed.  Since at most two values are skipped, this uses only reversible comparisons and additions.  For every resulting exponent, compute its involutive partner, choose a canonical representative of the pair, and record in one bit which member was present.
	The uniform state on nonfixed elements then has the form
	\begin{equation*}
	\frac1{\sqrt{\#\text{ orbits}}}
	\sum_{\text{nonfixed orbits}}
	\ket{\text{representative}}
	\frac{\ket0+\ket1}{\sqrt2}.
	\end{equation*}
	A Hadamard clears the order bit.  Reversible character arithmetic then attaches the physical label $\ket{I_{\delta,\eta}}$ or $\ket{\pi_\theta}$.  This gives uniform superpositions over $\mathcal P_\gamma$ and $\mathcal C_\gamma$.  A branch with no nonfixed orbits is omitted; in particular, the principal branch is empty for $q=3$ and $s_\gamma=2$.

	It remains to attach the coefficients from \cref{eq: corrected A nontrivial column}.  On the cuspidal branch, compute $\Theta=\theta\cdot(\chi_{\alpha^{-1}}\circ\No)$ from the orbit representative.  Apply \cref{thm: coherent Gauss phase} over $\F_{q^2}$, with additive character $\psi\circ\tr$, to add the phase
	\begin{equation*}
	\frac{g_{q^2}(\Theta)}q.
	\end{equation*}
	Also apply the branch phase $-\chi_\alpha(-1)$.  The resulting amplitude of each cuspidal label is
	\begin{equation*}
	\frac{\sqrt{N_{\mathrm C}}}{\sqrt{|\mathcal C_\gamma|}}
	\left(-\chi_\alpha(-1)\frac{g_{q^2}(\Theta)}q\right)
	=-\frac{\chi_\alpha(-1)}{\sqrt{q(q^2-1)}}
	g_{q^2}(\Theta),
	\end{equation*}
	which is the required coefficient.  This phase is independent of the chosen member of the Frobenius orbit because the substitution $z\mapsto z^q$ leaves the Gauss sum unchanged.

	On the principal branch, apply the branch phase $\chi_\alpha(-1)$ and the additional phase $\chi_{\alpha^{-1}\delta}(-1)$.  Since $g_q(\xi)g_q(\xi^{-1})=q\chi_\xi(-1)$ for any $\xi$, the resulting amplitude is
	\begin{align*}
		&\frac{\sqrt{N_{\mathrm P}}}{\sqrt{|\mathcal P_\gamma|}}
		\chi_\alpha(-1)\chi_{\alpha^{-1}\delta}(-1)\\
		&\qquad=
		\frac{\chi_\alpha(-1)}{(q-1)\sqrt q}
		g_q(\alpha^{-1}\delta)g_q(\alpha^{-1}\eta),
	\end{align*}
	again the coefficient in \cref{eq: corrected A nontrivial column}.  This phase is well defined on the unordered pair because a character and its inverse have the same value at $-1$.

	The Steinberg branch contains at most two labels and can be prepared directly.  Its normalized internal state is
	\begin{equation*}
	\frac1{\sqrt{1+(s_\gamma-1)q^2}}
	\left(
	\ket{\St_\alpha}
	+\sum_{\delta\in\mathcal S_\gamma\setminus\{\alpha\}}
	q\frac{g_q(\alpha^{-1}\delta)^2}{q}
	\ket{\St_\delta}
	\right).
	\end{equation*}
	If the second term exists, one two-dimensional rotation prepares the relative magnitudes $1$ and $q$.  Its phase $g_q(\alpha^{-1}\delta)^2/q$ is obtained by applying the normalized Gauss phase twice.  Multiplying this state by the branch amplitude $\sqrt{N_{\mathrm S}}$ and the branch phase $\chi_\alpha(-1)$ gives exactly the Steinberg sum in \cref{eq: corrected A nontrivial column}.

	All orbit partners, character products, fixed-point tests, and phase labels are reversible functions of the output labels and can therefore be uncomputed.  The construction uses a constant number of approximate rotations, cyclic QFTs, character operations, and Gauss-phase circuits.  Dividing the error among them gives total error at most $\varepsilon$ and gate count $\operatorname{poly}(\log q,\log(1/\varepsilon))$.

\end{proof}

\section{Additional proofs of \cref{sec: clifford theory}}\label{app: proofs of clifford}

\begin{theorem}[\protect{\cref{thm:little-group-qft-correctness}}]
	\cref{alg:little-group-qft} implements the Fourier transform from the group-element basis of $\C[G]$ to the ``little group'' Fourier basis given in \cref{def:little-group-basis}.
\end{theorem}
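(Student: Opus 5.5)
The proof amounts to computing, for each group element $g\in G$, the output state of \cref{alg:little-group-qft} and comparing it coefficient by coefficient with the Fourier expansion \cref{eq: Fourier transform} of $\ket{g}$. The irreducible representation used in that expansion is $\Rep_{(\sigma,\eta')}=\Ind_{\Inertia{\sigma}}^G(\ExtRep_\sigma\otimes\InfRep_{\eta'})$, written in the induced basis of \cref{thm: little group}. Since every step of the algorithm is unitary, and the little group basis is orthonormal by \cref{thm: little group} together with \cref{eq: petweyl little group}, it suffices to show that the amplitude of each basis vector $\ket{\sigma,\eta'}\ket{t_L,i,j}\ket{t_R,k,l}$ in the output equals
\[
\sqrt{d_{(\sigma,\eta')}/|G|}\;\bigl[\Rep_{(\sigma,\eta')}(g)\bigr]^{t_L,i,j}_{t_R,k,l}.
\]
The normalization is the one already computed in the main text: $d_{(\sigma,\eta')}=[G:\Inertia{\sigma}]d_\sigma d_{\eta'}$ and $|G|=[G:\Inertia{\sigma}]|N||\Little{\sigma}|$. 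So all the remaining work is on the matrix entries.

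First I will evaluate the induced-representation coefficient directly from \cref{def: induced representation}. The entry $[\Rep_{(\sigma,\eta')}(g)]^{t_L,\cdot}_{t_R,\cdot}$ is nonzero exactly when $t_L^{-1}gt_R=:b\in\Inertia{\sigma}$, and in that case it equals $[\ExtRep_\sigma(b)]_{ik}[\Rep_{\eta'}(bN)]_{jl}$.

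Next I will trace $\ket g$, with $g=nh$, through the algorithm. Steps 1 and 2 give $\sum_{\la}\sqrt{d_\la/|N|}\,[\Rep_\la(n)]^{i}_{k}\ket{\la,i,k}\ket h$. Step 3 writes $\la=\act{t_L}\sigma$. Step 4 conjugates by $U_{t_L,\sigma}$, and \cref{eq:orbit-frame-intertwining-relation} turns the coefficient into $[\Rep_\sigma(t_L^{-1}nt_L)]$ in the $\sigma$-basis; here the transpose on the dual register is exactly what is needed to act on the column index. Step 5 factors $h=t_Lat_R^{-1}$, so that $t_L^{-1}gt_R=(t_L^{-1}nt_L)a=b\in\Inertia{\sigma}$. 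Step 6 applies $\ExtRep_{\sigma^*}(a^{-1})=\ExtRep_\sigma(a)^{\mathsf T}$ on the column index, which multiplies $\ExtRep_\sigma(t_L^{-1}nt_L)$ on the right by $\ExtRep_\sigma(a)$ and so produces $\ExtRep_\sigma(b)$, because $\ExtRep_\sigma$ restricts to $\Rep_\sigma$ on $N$. Steps 7 and 8 use \cref{obs:factor-coset-suffices} to replace $a$ by $aN=bN$ and then Fourier transform it, which produces $\sqrt{d_{\eta'}/|\Little{\sigma}|}[\Rep_{\eta'}(bN)]_{jl}$. I also need every pair $(t_L,t_R)$ to arise exactly once for fixed $n$, and \cref{def:factorization-encoding} gives this bijection.

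The main obstacle is that all of these steps are controlled and act on superpositions over $\la$ and $h$. Because of this I have to track carefully that the orbit transport and the extension step compose into $\ExtRep_\sigma(b)$ with the correct index placement (rows versus columns, and transpose versus adjoint). The cleanest way to handle this is to verify the two regular-action intertwining relations of \cref{eq: Fourier transform block-diagononalizes} step by step. For $L(x)$, I will check that after each gate, left multiplication acts only on the $(t_L,i,j)$ registers. For $R(x)$, I will check the same for the $(t_R,k,l)$ registers, using the covariance relation for $U_{t,\sigma}$ and the homomorphism property of $\ExtRep_\sigma$. I will do this bookkeeping only as a consistency check and rely on the explicit coefficient computation above as the actual proof.
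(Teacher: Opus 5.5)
Your proof is correct, but it reverses the paper's emphasis. The paper states the amplitude $\bigl[\ExtRep_\sigma(b)\bigr]_{ik}\bigl[\Rep_{\eta'}(bN)\bigr]_{jl}$ (with $gt_R=t_Lb$) in one sentence without derivation. It then spends nearly all its effort transporting $L(x)$ and $R(x)$ through the eight steps. That requires auxiliary factorizations such as $h_xh=n_0h'$, $h_xt_L=t_L'b$, $h_xt_R=t_R'r$, and tracking how the inertia register changes to $d_Lba$ or $d_Rar^{-1}$.

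You make the amplitude computation itself the proof, and your chain is right. Read the output of $\Fou_N$ as the coefficient matrix $\Rep_\la(n)$; an operator $A$ on the row register and $B$ on the column register sends it to $A\,\Rep_\la(n)\,B^{\mathsf T}$. So Step~4 gives $U_{t_L,\sigma}^\dagger\Rep_\la(n)U_{t_L,\sigma}=\Rep_\sigma(t_L^{-1}nt_L)$, Step~6 right-multiplies by $\ExtRep_\sigma(a)$ to give $\ExtRep_\sigma(b)$, and $aN=bN$ by normality of $N$.

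This route is shorter and self-contained. The lemma establishing \cref{eq: Fourier transform} already shows that formula is unitary and block-diagonalizes both regular representations, so matching the image of each $\ket g$ suffices. Your planned intertwining check is therefore redundant. The worry about controlled gates acting on superpositions also disappears, because everything is linear and is checked branch by branch on each $\ket g$.

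Conversely, the paper's transport argument on its own fixes the circuit only up to one phase per irrep block (Schur's lemma for $G\times G$), so it still leans on the amplitude sentence. What it buys is a register-by-register explanation of each gate, e.g.\ that Step~6 exactly cancels the residual left action on the dual index.

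One small correction: the fact you need about $t_R$ is not that $h\mapsto(a,t_R)$ is a bijection; that only makes the encoding a permutation. What you need is that, for fixed $g$ and $t_L$, exactly one $t_R\in\Transversal{\sigma}$ satisfies $t_L^{-1}gt_R\in\Inertia{\sigma}$, by \cref{lem: transv inertia}. Hence the algorithm's $t_R$ is the only column with a nonzero target entry, and all other entries vanish on both sides.
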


\begin{proof}
	For a correct implementation of the Fourier transform, we need to show that \cref{eq: Fourier transform} is satisfied. To this end we need to show that the left and right regular actions of $G$ act correctly on the basis we create with the circuit and that the basis states are correctly normalized and produce the correct matrix coefficients.

	The two Fourier transforms occurring in the circuit give the required normalization.  Indeed, for a representation label $(\sigma, \eta') \in \hatG$,
	\begin{equation*}
	d_{(\sigma,\eta')}
	=[G:\Inertia{\sigma}]d_\sigma d_{\eta'},
	\qquad
	|G|=[G:\Inertia{\sigma}]|N||\Little{\sigma}|,
	\end{equation*}
	and hence
	\begin{equation*}
	\sqrt{\frac{d_\sigma}{|N|}}
	\sqrt{\frac{d_{\eta'}}{|\Little{\sigma}|}}
	=
	\sqrt{\frac{d_{(\sigma,\eta')}}{|G|}}.
	\end{equation*}
	Moreover, if the factorization steps give $g t_R=t_Lb$ with $b\in\Inertia{\sigma}$, then the two Fourier transforms and the intervening frame changes produce the matrix coefficient
	\begin{equation*}
	\bigl[\ExtRep_\sigma(b)\bigr]_{ik}
	\bigl[\Rep_{\eta'}(bN)\bigr]_{jl},
	\end{equation*}
	which is exactly the $(t_L,i_\sigma,j_{\eta'};t_R,k_{\sigma^*},l_{\eta'^*})$ matrix coefficient of
	\begin{equation*}
	\Rep_{(\sigma,\eta')}
	=\Ind_{\Inertia{\sigma}}^G
	\bigl(\ExtRep_\sigma\otimes\InfRep_{\eta'}\bigr).
	\end{equation*}
	At the end of the calculation, the left regular action will act on $\ket{t_L,i_\sigma,j_{\eta'}}$ as $V_{(\sigma,\eta')}$, while the right regular action will act on $\ket{t_R,k_{\sigma^*},l_{\eta'^*}}$ as $V_{(\sigma,\eta')}^*$.  The labels $(\sigma,\eta')$ are fixed by both actions.  This identifies the final basis with the little-group Fourier basis in \cref{def:little-group-basis}.

	It remains to verify that the output registers have precisely this representation-theoretic meaning.  For this purpose, rather than expanding all amplitudes again, we transport the left and right regular actions through the circuit one step at a time.  After each step we record the current basis and determine how the regular representations $L(x)$ and $R(x)$ act on its registers.

	We use right-coset notation for the \(N\)-encoding.  Fix \(x\in G\) and write
	\begin{align*}
		g&=nh,
		&n&\in N,
		&h&\in\Transversal{N},
		\\
		x&=n_xh_x,
		&n_x&\in N,
		&h_x&\in\Transversal{N}.
	\end{align*}
	For the left action, factor
	\begin{equation*}
		h_xh=n_0h',
		\qquad
		n_0\in N,
		\qquad
		h'\in\Transversal{N},
	\end{equation*}
	so that
	\begin{equation*}
		xg
		=
		n_xh_xnh
		=
		n_x(h_xnh_x\inv)n_0h'.
	\end{equation*}
	For the right action, factor
	\begin{equation*}
		hh_x\inv=n_1h'',
		\qquad
		n_1\in N,
		\qquad
		h''\in\Transversal{N},
	\end{equation*}
	so that
	\begin{equation*}
		gx\inv
		=
		nhh_x\inv n_x\inv
		=
		nn_1(h''n_x\inv (h'')\inv)h''.
	\end{equation*}

	\paragraph{Before step \(1\).}
	The basis elements are
	\begin{equation*}
		\ket{g}.
	\end{equation*}

	\emph{Left.} The left action of \(x\) is
	\begin{equation*}
		\ket{g}
		\longmapsto
		\ket{xg}.
	\end{equation*}

	\emph{Right.} The right action of \(x\) is
	\begin{equation*}
		\ket{g}
		\longmapsto
		\ket{gx\inv}.
	\end{equation*}

	\paragraph{After step \(1\).}
	The basis elements are
	\begin{equation*}
		\ket{n}\otimes\ket{h},
		\qquad
		g=nh.
	\end{equation*}

	\emph{Left.} By the preceding factorization of \(xg\), the left action of \(x\) is
	\begin{equation*}
		\ket{n}\otimes\ket{h}
		\longmapsto
		\ket{n_x(h_xnh_x\inv)n_0}\otimes\ket{h'}.
	\end{equation*}

	\emph{Right.} By the preceding factorization of \(gx\inv\), the right action of \(x\) is
	\begin{equation*}
		\ket{n}\otimes\ket{h}
		\longmapsto
		\ket{nn_1(h''n_x\inv (h'')\inv)}\otimes\ket{h''}.
	\end{equation*}

	\paragraph{After step \(2\).}
	The basis elements are
	\begin{equation*}
		\ket{\la}\otimes
		\ket{i_\la}\otimes
		\ket{k_{\la^*}}\otimes
		\ket{h}.
	\end{equation*}

	\emph{Left.} We postpone the left-action analysis until after step \(3\), when the factorization \(\la=\act{t_L}\sigma\) allows the required basis change to be written directly in terms of the maps \(U_{t,\sigma}^\dagger\).

	\emph{Right.} Under right multiplication, the \(N\)-register is sent from \(n\) to \(nn_1(h''n_x\inv(h'')\inv)\).  The right action of \(x\) is
	\begin{multline}\label{eq:right-action-after-step-2}
		\ket{\la}\otimes\ket{i_\la}\otimes
		\ket{k_{\la^*}}\otimes\ket{h}
		\longmapsto
		\ket{\la}
		\otimes\ket{i_\la}
		\otimes
		\Rep_{\la^*}
		\left(h''n_x(h'')\inv n_1\inv\right)
		\ket{k_{\la^*}}
		\otimes\ket{h''}.
	\end{multline}

	\paragraph{After step \(3\).}
	The basis elements are
	\begin{equation*}
		\ket{\sigma}\otimes
		\ket{t_L}\otimes
		\ket{i_\la}\otimes
		\ket{k_{\la^*}}\otimes
		\ket{h},
		\qquad
		\la=\act{t_L}\sigma.
	\end{equation*}

	\emph{Left.} Factor
	\begin{equation*}
		h_xt_L=t_L'b,
		\qquad
		t_L'\in\Transversal{\sigma},
		\qquad
		b\in\Inertia{\sigma}.
	\end{equation*}
	Then
	\begin{equation*}
		\act{h_x}\la
		=\act{h_xt_L}\sigma
		=\act{t_L'b}\sigma
		=\act{t_L'}\sigma.
	\end{equation*}
	The unitary change from \(V_\la\) to \(V_{\act{h_x}\la}\) is
	\begin{equation*}
		U_{t_L',\sigma}\ExtRep_\sigma(b)U_{t_L,\sigma}^\dagger.
	\end{equation*}
	The left action of \(x\) is
	\begin{multline}\label{eq:left-action-after-step-3-factorization}
		\ket{\sigma}\otimes\ket{t_L}
		\otimes\ket{i_\la}\otimes\ket{k_{\la^*}}\otimes\ket{h}
		\\
		\longmapsto
		\ket{\sigma}\otimes\ket{t_L'}
		\otimes
		\Rep_{\act{h_x}\la}(n_x)
		U_{t_L',\sigma}\ExtRep_\sigma(b)U_{t_L,\sigma}^\dagger
		\ket{i_\la}
		\otimes
		\Rep_{(\act{h_x}\la)^*}(n_0\inv)
		\left(
		U_{t_L',\sigma}\ExtRep_\sigma(b)U_{t_L,\sigma}^\dagger
		\right)^{-\mathsf T}
		\ket{k_{\la^*}}
		\otimes\ket{h'}.
	\end{multline}

	\emph{Right.} The relabeling does not change the right action obtained in step \(2\), hence
	\begin{multline*}
		\ket{\sigma}\otimes\ket{t_L}
		\otimes\ket{i_\la}\otimes\ket{k_{\la^*}}\otimes\ket{h}
		\longmapsto
		\ket{\sigma}\otimes\ket{t_L}
		\otimes\ket{i_\la}
		\otimes
		\Rep_{\la^*}
		\left(h''n_x(h'')\inv n_1\inv\right)
		\ket{k_{\la^*}}\otimes\ket{h''}.
	\end{multline*}

	\paragraph{After step \(4\).}
	At this step $U_{\mathrm{orb}}^L$ applies $U_{t_L,\sigma}^\dagger$ to the left matrix-index register, while $U_{\mathrm{orb}}^R$ applies $U_{t_L,\sigma}\tp$ to the right matrix-index register. Each gate is applied exactly once.
	The basis elements are
	\begin{equation*}
		\ket{\sigma}\otimes
		\ket{t_L}\otimes
		\ket{i_\sigma}\otimes
		\ket{\widetilde{k_{\sigma^*}}}\otimes
		\ket{h}.
	\end{equation*}

	\emph{Left.} Set
	\begin{equation*}
		d_L:=(t_L')\inv n_0\inv t_L'\in N.
	\end{equation*}
	Conjugating the two matrix-index operators in \cref{eq:left-action-after-step-3-factorization} by the input and output frame changes gives
	\begin{multline}\label{eq:action-step4-primal-frame}
		U_{t_L',\sigma}^\dagger
		\Rep_{\act{h_x}\la}(n_x)
		U_{t_L',\sigma}
		\ExtRep_\sigma(b)
		\\
		=
		\Rep_\sigma((t_L')\inv n_xt_L')\ExtRep_\sigma(b)
		=
		\ExtRep_\sigma((t_L')\inv xt_L),
	\end{multline}
	and
	\begin{multline}\label{eq:left-action-step4-dual-frame}
		U_{t_L',\sigma}\tp
		\Rep_{(\act{h_x}\la)^*}(n_0\inv)
		(U_{t_L',\sigma}\tp)\inv
		\ExtRep_{\sigma^*}(b)
		\\
		=
		\Rep_{\sigma^*}(d_L)\ExtRep_{\sigma^*}(b).
	\end{multline}
	Here the first equality uses \((t_L')\inv n_xt_L'b=(t_L')\inv n_xh_xt_L=(t_L')\inv xt_L\).
	The left action of \(x\) is
	\begin{multline*}
		\ket{\sigma}\otimes\ket{t_L}
		\otimes\ket{i_\sigma}\otimes\ket{\widetilde{k_{\sigma^*}}}\otimes\ket{h}
		\\
		\longmapsto
		\ket{\sigma}\otimes\ket{t_L'}
		\otimes
		\ExtRep_\sigma((t_L')\inv xt_L)\ket{i_\sigma}
		\otimes
		\Rep_{\sigma^*}(d_L)\ExtRep_{\sigma^*}(b)
		\ket{\widetilde{k_{\sigma^*}}}
		\otimes\ket{h'}.
	\end{multline*}

	\emph{Right.} Using \cref{eq:orbit-frame-intertwining-relation}, the operator in \cref{eq:right-action-after-step-2} becomes
	\begin{multline*}
		U_{t_L,\sigma}\tp
		\Rep_{\la^*}
		\left(h''n_x(h'')\inv n_1\inv\right)
		(U_{t_L,\sigma}\tp)\inv
		\\
		=
		\Rep_{\sigma^*}
		\left(
		t_L\inv h''n_x(h'')\inv n_1\inv t_L
		\right)
		=
		\Rep_{\sigma^*}(t_L\inv h''xh\inv t_L).
	\end{multline*}
	The last equality follows from \(hh_x\inv=n_1h''\) and \(x=n_xh_x\).
	The right action of \(x\) is
	\begin{multline*}
		\ket{\sigma}\otimes\ket{t_L}
		\otimes\ket{i_\sigma}\otimes\ket{\widetilde{k_{\sigma^*}}}\otimes\ket{h}
		\\
		\longmapsto
		\ket{\sigma}\otimes\ket{t_L}
		\otimes\ket{i_\sigma}
		\otimes
		\Rep_{\sigma^*}(t_L\inv h''xh\inv t_L)
		\ket{\widetilde{k_{\sigma^*}}}
		\otimes\ket{h''}.
	\end{multline*}

	\paragraph{After step \(5\).}
	The basis elements are
	\begin{equation*}
		\ket{\sigma}
		\otimes\ket{t_L}
		\otimes\ket{i_\sigma}
		\otimes\ket{\widetilde{k_{\sigma^*}}}
		\otimes\ket{a}
		\otimes\ket{t_R},
		\qquad
		h=t_Lat_R\inv.
	\end{equation*}

	\emph{Left.} Using \(h=t_Lat_R\inv\) and \(h_xt_L=t_L'b\),
	\begin{equation*}
		h_xh
		=
		h_xt_Lat_R\inv
		=
		t_L'bat_R\inv.
	\end{equation*}
	On the other hand, the initial right-coset factorization gives
	\begin{equation*}
		h_xh=n_0h'.
	\end{equation*}
	Hence
	\begin{equation*}
		h'
		=
		n_0\inv t_L'bat_R\inv
		=
		t_L'd_Lbat_R\inv.
	\end{equation*}
	Therefore \(\Enc_{\Transversal{N},\mathcal F_{\sigma,t_L}}\) sends
	\begin{equation*}
		t_R\longmapsto t_R,
		\qquad
		a\longmapsto d_Lba.
	\end{equation*}
	Using \cref{eq:action-step4-primal-frame,eq:left-action-step4-dual-frame}, the left action of \(x\) is
	\begin{multline*}
		\ket{\sigma}\otimes\ket{t_L}
		\otimes\ket{i_\sigma}\otimes\ket{\widetilde{k_{\sigma^*}}}\otimes\ket{a}
		\otimes\ket{t_R}
		\\
		\longmapsto
		\ket{\sigma}\otimes\ket{t_L'}
		\otimes
		\ExtRep_\sigma\left((t_L')\inv xt_L\right)\ket{i_\sigma}
		\otimes
		\Rep_{\sigma^*}(d_L)\ExtRep_{\sigma^*}(b)
		\ket{\widetilde{k_{\sigma^*}}}
		\otimes\ket{d_Lba}
		\otimes\ket{t_R}.
	\end{multline*}

	\emph{Right.} Factor
	\begin{equation*}
		h_xt_R=t_R'r,
		\qquad
		t_R'\in\Transversal{\sigma},
		\qquad
		r\in\Inertia{\sigma}.
	\end{equation*}
	Then
	\begin{equation*}
		hh_x\inv
		=
		t_Lat_R\inv h_x\inv
		=
		t_Lar\inv (t_R')\inv.
	\end{equation*}
	Comparing this with \(hh_x\inv=n_1h''\) gives
	\begin{equation*}
		h''
		=
		n_1\inv t_Lar\inv (t_R')\inv
		=
		t_Ld_Rar\inv (t_R')\inv,
	\end{equation*}
	where
	\begin{equation*}
		d_R:=t_L\inv n_1\inv t_L\in N.
	\end{equation*}
	Therefore \(\Enc_{\Transversal{N},\mathcal F_{\sigma,t_L}}\) sends
	\begin{equation*}
		t_L\longmapsto t_L,
		\qquad
		t_R\longmapsto t_R',
		\qquad
		a\longmapsto d_Rar\inv.
	\end{equation*}
	The right action of \(x\) is
	\begin{multline*}
		\ket{\sigma}\otimes\ket{t_L}
		\otimes\ket{i_\sigma}\otimes\ket{\widetilde{k_{\sigma^*}}}
		\otimes\ket{a}\otimes\ket{t_R}
		\\
		\longmapsto
		\ket{\sigma}\otimes\ket{t_L}
		\otimes\ket{i_\sigma}
		\otimes
		\Rep_{\sigma^*}(t_L\inv h''xh\inv t_L)
		\ket{\widetilde{k_{\sigma^*}}}
		\otimes\ket{d_Rar\inv}
		\otimes\ket{t_R'}.
	\end{multline*}

	\paragraph{After step \(6\).}
	Step~6 applies $\ExtRep_{\sigma^*}(a\inv)$ to $\ket{\widetilde{k_{\sigma^*}}}$ and denotes the resulting basis vector by $\ket{k_{\sigma^*}}$. The basis elements are
	\begin{equation*}
		\ket{\sigma}
		\otimes\ket{t_L}
		\otimes\ket{i_\sigma}
		\otimes\ket{k_{\sigma^*}}
		\otimes\ket{a}
		\otimes\ket{t_R}.
	\end{equation*}

	\emph{Left.} After the action, the new inertia element is \(d_Lba\), so the operator induced on the right matrix-index state is
	\begin{multline}\label{eq:left-step6-dual-cancellation}
		\ExtRep_{\sigma^*}((d_Lba)\inv)
		\Rep_{\sigma^*}(d_L)
		\ExtRep_{\sigma^*}(b)
		\ExtRep_{\sigma^*}(a)
		\\
		=
		\ExtRep_{\sigma^*}(a\inv b\inv d_L\inv)
		\ExtRep_{\sigma^*}(d_Lba)
		=
		I_{V_{\sigma^*}}.
	\end{multline}
	Here \(d_L\in N\), so \(\Rep_{\sigma^*}(d_L)=\ExtRep_{\sigma^*}(d_L)\).  The left action of \(x\) is
	\begin{multline*}
		\ket{\sigma}\otimes\ket{t_L}
		\otimes\ket{i_\sigma}
		\otimes\ket{k_{\sigma^*}}\otimes\ket{a}\otimes\ket{t_R}
		\\
		\longmapsto
		\ket{\sigma}\otimes\ket{t_L'}
		\otimes
		\ExtRep_\sigma((t_L')\inv xt_L)\ket{i_\sigma}
		\otimes\ket{k_{\sigma^*}}\otimes\ket{d_Lba}
		\otimes\ket{t_R}.
	\end{multline*}

	\emph{Right.} The updated correction on the \(k_{\sigma^*}\)-register is
	\begin{equation*}
	\ExtRep_{\sigma^*}((d_Rar\inv)\inv).
	\end{equation*}
	Thus the full operator on that register is
	\begin{equation*}
		\ExtRep_{\sigma^*}((d_Rar\inv)\inv)
		\Rep_{\sigma^*}(t_L\inv h''xh\inv t_L)
		\ExtRep_{\sigma^*}(a).
	\end{equation*}
	Using
	\begin{equation*}
		a=t_L\inv h t_R,
		\qquad
		d_Rar\inv=t_L\inv h''t_R',
	\end{equation*}
	we obtain
	\begin{align*}
		(d_Rar\inv)\inv
		\left(t_L\inv h''xh\inv t_L\right)
		a
		&=
		((t_R')\inv (h'')\inv t_L)
		\left(t_L\inv h''xh\inv t_L\right)
		t_L\inv h t_R
		\\
		&=
		(t_R')\inv x t_R.
	\end{align*}
	Therefore the preceding operator is
	\begin{equation}\label{eq:right-action-corrected-k-register-final}
		\ExtRep_{\sigma^*}((t_R')\inv x t_R).
	\end{equation}
	The right action of \(x\) is
	\begin{multline*}
		\ket{\sigma}\otimes\ket{t_L}
		\otimes\ket{i_\sigma}
		\otimes\ket{k_{\sigma^*}}
		\otimes\ket{a}\otimes\ket{t_R}
		\\
		\longmapsto
		\ket{\sigma}\otimes\ket{t_L}
		\otimes\ket{i_\sigma}
		\otimes
		\ExtRep_{\sigma^*}((t_R')\inv x t_R)\ket{k_{\sigma^*}}
		\otimes\ket{d_Rar\inv}
		\otimes\ket{t_R'}.
	\end{multline*}

	\paragraph{After step \(7\).}
	The basis elements are
	\begin{equation*}
		\ket{\sigma}
		\otimes\ket{t_L}
		\otimes\ket{i_\sigma}
		\otimes\ket{k_{\sigma^*}}
		\otimes\ket{aN}
		\otimes\ket{t_R}.
	\end{equation*}

	\emph{Left.} Since \(d_L\in N\), the coset register transforms as
	\begin{equation*}
		aN
		\longmapsto
		d_LbaN
		=
		bN\,aN.
	\end{equation*}
	Moreover,
	\begin{equation*}
		bN
		=
		(t_L')\inv h_xt_LN
		=
		(t_L')\inv x t_LN.
	\end{equation*}
	Hence the left action of \(x\) is
	\begin{multline}\label{eq:left-action-after-step-7}
		\ket{\sigma}\otimes\ket{t_L}\otimes\ket{i_\sigma}
		\otimes\ket{k_{\sigma^*}}\otimes\ket{aN}\otimes\ket{t_R}
		\\
		\longmapsto
		\ket{\sigma}\otimes\ket{t_L'}\otimes
		\ExtRep_\sigma\left((t_L')\inv x t_L\right)\ket{i_\sigma}
		\otimes\ket{k_{\sigma^*}}
		\otimes\ket{(t_L')\inv x t_LN\,aN}\otimes\ket{t_R}.
	\end{multline}
	Thus the coset register carries the left regular action of \((t_L')\inv x t_LN\in\Little{\sigma}\).

	\emph{Right.} Since \(d_R\in N\), the coset register transforms as
	\begin{equation*}
		aN
		\longmapsto
		d_Rar\inv N
		=
		aN\,(rN)\inv.
	\end{equation*}
	Moreover,
	\begin{equation*}
		rN
		=
		(t_R')\inv h_xt_RN
		=
		(t_R')\inv x t_RN.
	\end{equation*}
	Hence the right action of \(x\) is
	\begin{multline}\label{eq:right-action-after-step-7}
		\ket{\sigma}\otimes\ket{t_L}\otimes\ket{i_\sigma}
		\otimes\ket{k_{\sigma^*}}\otimes\ket{aN}\otimes\ket{t_R}
		\\
		\longmapsto
		\ket{\sigma}\otimes\ket{t_L}\otimes\ket{i_\sigma}
		\otimes
		\ExtRep_{\sigma^*}((t_R')\inv x t_R)\ket{k_{\sigma^*}}
		\otimes
		\ket{aN((t_R')\inv x t_RN)\inv}
		\otimes\ket{t_R'}.
	\end{multline}
	Thus the coset register carries the right regular action of \((t_R')\inv x t_RN\in\Little{\sigma}\).

	\paragraph{After step \(8\).}
	The basis elements are
	\begin{equation}\label{eq:basis-after-step-8}
		\begin{aligned}
			\ket{\sigma}
			\otimes\ket{\eta'}
			\otimes\ket{t_L}
			\otimes\ket{i_\sigma}
			\otimes\ket{j_{\eta'}}
			\otimes\ket{k_{\sigma^*}}
			\otimes\ket{l_{\eta'^*}}
			\otimes\ket{t_R}.
		\end{aligned}
	\end{equation}

	\emph{Left.} The left little-group Fourier register is acted upon by
	\begin{equation*}
		\Rep_{\eta'}((t_L')\inv x t_LN)
		=
		\InfRep_{\eta'}((t_L')\inv x t_L).
	\end{equation*}
	Therefore the left action of \(x\) is
	\begin{multline}\label{eq:left-action-after-step-8}
		\ket{\sigma}\otimes\ket{\eta'}\otimes\ket{t_L}
		\otimes\ket{i_\sigma}\otimes\ket{j_{\eta'}}
		\otimes\ket{k_{\sigma^*}}
		\otimes\ket{l_{\eta'^*}}\otimes\ket{t_R}
		\\
		\longmapsto
		\ket{\sigma}\otimes\ket{\eta'}\otimes\ket{t_L'}
		\otimes
		\ExtRep_\sigma((t_L')\inv x t_L)\ket{i_\sigma}
		\otimes
		\InfRep_{\eta'}((t_L')\inv x t_L)\ket{j_{\eta'}}
		\otimes\ket{k_{\sigma^*}}
		\otimes\ket{l_{\eta'^*}}\otimes\ket{t_R}.
	\end{multline}
	This is the action of \(x\) on
	\begin{equation*}
	V_{(\sigma,\eta')}
	\cong
	\Ind_{\Inertia{\sigma}}^G
	\left(V_\sigma\otimes V_{\eta'}\right).
	\end{equation*}

	\emph{Right.} The right little-group Fourier register is acted upon by
	\begin{equation*}
		\Rep_{\eta'^*}((t_R')\inv x t_RN)
		=
		\InfRep_{\eta'^*}((t_R')\inv x t_R).
	\end{equation*}
	Therefore the right action of \(x\) is
	\begin{multline}\label{eq:right-action-after-step-8}
		\ket{\sigma}\otimes\ket{\eta'}\otimes\ket{t_L}
		\otimes\ket{i_\sigma}\otimes\ket{j_{\eta'}}
		\otimes\ket{k_{\sigma^*}}
		\otimes\ket{l_{\eta'^*}}\otimes\ket{t_R}
		\\
		\longmapsto
		\ket{\sigma}\otimes\ket{\eta'}\otimes\ket{t_L}
		\otimes\ket{i_\sigma}\otimes\ket{j_{\eta'}}
		\otimes
		\ExtRep_{\sigma^*}((t_R')\inv x t_R)\ket{k_{\sigma^*}}
		\otimes
		\InfRep_{\eta'^*}((t_R')\inv x t_R)\ket{l_{\eta'^*}}
		\otimes\ket{t_R'}.
	\end{multline}
	This is the action of \(x\) on
	\begin{equation*}
	V^*_{(\sigma,\eta')}
	\cong
	\Ind_{\Inertia{\sigma}}^G
	\left(V_\sigma^*\otimes V_{\eta'}^*\right).
	\end{equation*}

	We have therefore proven that the achieved basis is precisely the ``little group'' Fourier basis for $G$ given in \cref{def:little-group-basis}.
\end{proof}

\end{document}